\documentclass[12pt]{article}

\usepackage{amsfonts,amssymb,amsmath,exscale,relsize,enumitem,amsthm,mathtools, physics}
\usepackage{graphicx}
\usepackage{float}
\usepackage[thinlines]{easytable}
\usepackage{csquotes}
\usepackage{setspace}
\usepackage{booktabs}
\usepackage{longtable}
\usepackage{pdflscape}
\usepackage{needspace}

\usepackage[authoryear,round]{natbib}

\usepackage[
colorlinks=true,
citecolor=blue,
linkcolor=blue,
urlcolor=blue,
pagebackref=true
]{hyperref}

\usepackage{varioref}
\usepackage{subcaption}
\usepackage{xcolor}
\usepackage{setspace}
\usepackage{appendix}
\usepackage[margin=1.35in]{geometry}
\usepackage{setspace} 

\DeclareMathOperator*{\argmax}{argmax}

\newtheorem{remark}{Remark}

\mathchardef\mhyphen="2D

\newtheorem{thm}{Theorem}[section]
\newtheorem{lem}{Lemma}[section]

\renewcommand{\vec}[1]{\boldsymbol{#1}}
\numberwithin{equation}{section}

\title{Bayesian Model Pursuit and Near-Oracle Sparse Signal Discovery Under Dependence}

\author{%
	Prasenjit Ghosh\thanks{Department of Statistics, Texas A\&M University, College Station, TX 77843, USA. Email: \texttt{prasenjit@stat.tamu.edu}}%
	\hspace{1em} 
	Arijit Chakrabarti\thanks{Applied Statistics Unit, Indian Statistical Institute, Kolkata - 700108, India. Email: \texttt{arc@isical.ac.in}}%
}

\date{} 

\begin{document}
	
	\maketitle
	
\begin{abstract}
	
Sparse signal discovery is a fundamental problem in large-scale inference, where the goal is to identify a relatively small number of active signals hidden among a large collection of null effects. Despite the ubiquity of dependence in modern applications, comparatively little is known about the extent to which dependence information can be exploited for efficient sparse signal recovery from a Bayes-risk perspective. In this paper, we develop a Bayesian Step-Down (BSD) procedure for sparse signal discovery under arbitrary known covariance dependence. BSD adopts a Bayesian model-pursuit strategy that sequentially accumulates evidence regarding competing sparse signal configurations while explicitly incorporating the covariance structure of the data.
	
To assess its effectiveness, we introduce a Bayes Oracle under a class of sparse one-factor dependence structures and compare BSD with the Oracle, the recently proposed MRD--GBS procedure of \citet{ghosh2026covariance}, the MRD procedure of \citet{COHEN_SACK_XU_2009}, and the Benjamini--Hochberg method. Our simulation studies reveal a striking phenomenon: across a wide variety of dimensions, sparsity levels, and one-factor dependence structures, BSD exhibits near-oracle behavior and often becomes virtually indistinguishable from the Bayes Oracle in terms of Bayes risk and support recovery performance. Moreover, the remarkably close empirical agreement between BSD and MRD--GBS persists across both Oracle-benchmark and general covariance settings, despite the fundamentally different Bayesian and frequentist principles underlying the two procedures. These findings provide new insight into the attainable Bayes-risk frontier for sparse signal discovery under dependence and suggest that BSD may serve as a practically useful benchmark for evaluating competing multiple testing procedures in dependent settings where exact Oracle calculations are unavailable. Finally, we show that the proposed BSD method admits a residual representation leading to admissibility under arbitrary covariance dependence together with substantial computational simplifications.
	
\end{abstract}

\medskip

\noindent \textbf{Keywords and phrases.}
Bayesian multiple testing; sparse signal discovery; Bayes Oracle; Bayesian Step-Down procedure; dependence; covariance structure; model pursuit; Bayes risk; large-scale inference; multiple testing under dependence.

\medskip

\noindent \textbf{MSC 2020 subject classifications.}
Primary 62C10, 62F03, 62F15; Secondary 62C25, 62J15, 62H15.
	
\section{Introduction}
\label{sec:introduction}

Sparse signal discovery is a fundamental problem in modern statistical inference, where the objective is to identify a relatively small number of active signals hidden among a large collection of null effects. Such problems arise naturally in genomics, bioinformatics, neuroimaging, astronomy, finance, economics, medicine, environmental science, and numerous other scientific disciplines. The resulting inferential task is inherently high-dimensional and has led to extensive developments in large-scale multiple testing methodology. A central challenge is the control of erroneous discoveries under multiplicity, motivating a substantial literature on procedures controlling global error measures such as the family-wise error rate (FWER) and the false discovery rate (FDR); see, for example, \citet{BH1995}, \citet{BL1999}, \citet{BY2001}, \citet{STO_2002}, \citet{STS_2004}, \citet{BKY2006}, \citet{BLROQ2009}, \citet{SUN_CAI_2009}, and the references therein. While FWER-controlling procedures often become overly conservative in large-scale settings, FDR-based methodologies typically provide a more attractive balance between false discoveries and false non-discoveries and have consequently become central to modern multiple testing theory and practice.

Alongside developments in error-rate control, a substantial body of work has investigated optimality properties of multiple testing procedures under sparsity from both Bayesian and frequentist perspectives. Important themes include asymptotic Bayes optimality under sparsity (ABOS), Oracle risk approximations, local false discovery rate methods, and minimaxity considerations; see, for example, \citet{ABDJ2006}, \citet{SUN_CAI_2007}, \citet{BGT2008}, \citet{BCFG2011}, \citet{NR2012}, \citet{DG2013}, \citet{GTGC2015}, \citet{GC2016}, \citet{PaulChakrabarti2025AISM}, and references therein. More recently, attention has shifted toward understanding the fundamental limits of sparse multiple testing through sharp asymptotic risk characterizations. Notable developments include the sharp multiple testing boundary established by \citet{ACR2024} and the sharp asymptotic minimaxity results for testing procedures induced by one-group global--local shrinkage priors obtained by \citet{PGC2025}. Collectively, these investigations have significantly advanced our understanding of Bayes-optimal sparse signal recovery under independence and clarified the extent to which computationally feasible procedures can approximate ideal Oracle performance.

A substantial portion of the theoretical development in multiple testing has relied on the assumption that the underlying test statistics are independent. In many scientific applications, however, substantial dependence arises naturally through spatial, temporal, biological, financial, or network-driven interactions. Ignoring such dependence may substantially distort the behavior of multiple testing procedures, and several authors have demonstrated that traditional marginal (p)-value based methods can exhibit highly unstable behavior under strong dependence; see, for example, \citet{QBKY2005}, \citet{QKY2005}, \citet{GGQY2007}, \citet{KY2007}, and \citet{QXGY2007}. Moreover, \citet{Efron_2007} argued that failure to incorporate dependence may lead to substantially misleading inferential conclusions in highly correlated settings. At the same time, dependence may contain valuable information regarding the underlying signal configuration and can therefore be exploited to improve inferential efficiency; see \citet{BH2000}, \citet{GRW2006}, and \citet{HJ2010}. Consequently, dependence may be viewed not merely as a nuisance requiring adjustment but also as a potential source of information for sparse signal recovery.

These observations have motivated a substantial literature on multiple testing procedures that explicitly account for dependence among the test statistics; see, for example, \citet{BY2001}, \citet{ROM_SHK_WOLF}, \citet{SKS2008}, \citet{LEEK_STO_2008}, \citet{BLROQ2009}, \citet{FRI_KLO_CAU_2009}, and \citet{SUN_CAI_2009}. Nevertheless, despite considerable progress on controlling global error criteria under dependence, comparatively little is known about the attainable limits of sparse signal discovery from a Bayes-risk perspective. Most existing developments focus primarily on validity properties such as FWER or FDR control, whereas much less attention has been devoted to understanding how closely a dependence-aware multiple testing procedure can approach Bayes-optimal sparse signal recovery.

A natural benchmark is provided by the Bayes Oracle. Under a specified probabilistic model, the Bayes Oracle minimizes the corresponding Bayes risk and therefore represents the ideal performance achievable by a procedure possessing complete knowledge of the underlying data-generating mechanism. Although such procedures are generally unavailable in practice, they provide a useful benchmark for assessing the effectiveness of implementable methodologies and for understanding the attainable limits of sparse signal recovery. Closely related questions concerning optimal sparse signal recovery under independence have been investigated through asymptotic Bayes optimality under sparsity, oracle-risk approximations, and sharp asymptotic minimaxity analyses; see, for example, \citet{ABDJ2006}, \citet{BGT2008}, \citet{BCFG2011}, \citet{DG2013}, \citet{GTGC2015}, \citet{GC2016}, \citet{ACR2024}, \citet{PaulChakrabarti2025AISM}, \citet{PGC2025}. By contrast, analogous questions become considerably more challenging in the presence of dependence. Once the test statistics become correlated, the likelihood no longer factorizes, posterior inclusion probabilities depend on the full observation vector, and both the structure of the Bayes Oracle and the associated risk calculations become substantially more complicated. Consequently, a comparable understanding of Bayes-optimal sparse signal recovery under dependence remains largely undeveloped. This gap provides one of the principal motivations for the present investigation.

Motivated by these considerations, we study sparse signal discovery under dependence through a Bayesian framework and revisit the Bayesian Step-Down (BSD) procedure originally proposed by \citet{GhoshChakrabarti2015}. The methodology is formulated within a two-group Bayesian model in which each hypothesis is associated with a latent indicator specifying whether the corresponding signal is active or inactive. Rather than relying solely on marginal evidence, BSD explicitly exploits the joint dependence structure of the data and proceeds sequentially through a step-down mechanism that adaptively updates the available evidence at each stage of the testing process. The resulting procedure is applicable under arbitrary known covariance dependence within the Gaussian two-groups framework and provides a natural mechanism for incorporating dependence information into the signal-discovery process.

A central premise underlying BSD is that dependence is not merely a nuisance requiring adjustment. Rather, the covariance structure itself may contain valuable information regarding the underlying active signal configuration, and sequential incorporation of this information can substantially improve sparse signal recovery. Viewed from this perspective, dependence becomes a potential source of inferential strength rather than simply an obstacle to valid inference. More broadly, sparse signal discovery may be viewed as a model search problem over a vast collection of candidate active configurations. Such a perspective is closely related to the Bayesian model selection and stochastic search literature; see, for example, \citet{GM1993,GM1997,OS2009}. In a problem involving \(n\) hypotheses, there are \(2^n\) possible configurations of active and inactive signals, rendering exhaustive exploration infeasible even for moderately large dimensions. BSD may therefore be interpreted as a posterior-guided model-pursuit procedure that sequentially accumulates evidence regarding competing sparse configurations while explicitly accounting for the dependence structure among the observations. Under this perspective, covariance information influences not only the evidence associated with individual hypotheses but also the manner in which information propagates across coordinates as the testing process evolves. Consequently, BSD may be viewed as a dependence-aware mechanism for navigating a high-dimensional configuration space. Although our theoretical and empirical investigations are conducted within a Gaussian framework, the underlying principles of model pursuit and sequential evidence accumulation are considerably more general and suggest potential extensions to a broader class of dependence-aware multiple testing and signal recovery problems.


A striking empirical phenomenon emerges from our numerical investigations. Across a broad collection of sparse one-factor dependence structures, BSD exhibits performance remarkably close to that of the corresponding Bayes Oracle. In many such settings, the two procedures become nearly indistinguishable in terms of Bayes risk, signal recovery accuracy, and overall classification performance. Remarkably, this phenomenon is already evident in relatively small and moderate dimensions, suggesting that the near-Oracle behavior of BSD is not merely an asymptotic artifact but may persist in practically relevant finite-sample regimes. This observation is particularly noteworthy because the Bayes Oracle represents the optimal decision rule under the assumed probabilistic model, whereas BSD is a computationally tractable sequential procedure motivated by Bayesian model pursuit. The empirical proximity between the two suggests that highly efficient sparse signal recovery under dependence may be achievable through careful exploitation of covariance geometry and motivates a deeper investigation into the relationship between dependence-aware model-pursuit strategies and Bayes-optimal decision rules.

Interestingly, our numerical investigations also reveal that the recently proposed MRD--GBS procedure developed by \citet{ghosh2026covariance} often exhibits performance remarkably similar to that of BSD and the Bayes Oracle in sparse regimes. This behavior is substantially less pronounced for the original MRD procedure of \citet{COHEN_SACK_XU_2009}. Although MRD--CSX often attains power comparable to, and sometimes exceeding, that of BSD and MRD--GBS, these gains are frequently accompanied by substantially inflated false discovery rates, particularly in sparse regimes, resulting in a less favorable overall risk profile. This suggests that near-oracle performance cannot be attributed solely to residual-based dependence adjustment, but also depends critically on an appropriate calibration of the sequential residual search. Since BSD and MRD--GBS arise from fundamentally different inferential paradigms, their close empirical agreement suggests that successful sparse signal recovery under dependence may be governed less by the Bayesian--frequentist distinction itself than by the ability of a procedure to effectively exploit dependence information. These findings provide additional insight into the role of covariance geometry in large-scale inference and raise the broader question of what structural features enable near-oracle performance under dependence.

Taken together, these observations motivate a substantially broader perspective on BSD than the one originally envisioned in \citet{GhoshChakrabarti2015}. Rather than viewing BSD solely as a dependence-aware Bayesian multiple testing procedure arising from the classical two-groups formulation, we investigate its role as a computationally tractable surrogate for the Bayes Oracle under dependence and as a benchmark procedure for studying sparse signal recovery in correlated settings. In particular, we establish explicit connections between BSD and covariance-adaptive residual-based testing procedures, provide a rigorous decision-theoretic justification through recent admissibility results of \citet{GC_ADMISSIBILITY_2026} under arbitrary covariance dependence, and study empirically the extent to which BSD can approximate Bayes-optimal sparse signal recovery across a broad collection of dependence structures. Viewed from this perspective, BSD serves not merely as a Bayesian testing algorithm, but as a statistically principled framework for understanding information accumulation, covariance adaptation, and near-oracle behavior in large-scale multiple testing problems under dependence.

The contributions of this paper are multifaceted.

\begin{enumerate}
	
\item We revisit, reformulate, and reinterpret the Bayesian Step-Down (BSD) procedure for sparse signal discovery under arbitrary covariance dependence. A central contribution of the paper is the development of a posterior-guided model-pursuit perspective for BSD. Rather than viewing BSD merely as a Bayesian multiple testing algorithm, we show that it may be understood as a dependence-aware sequential search mechanism that explores candidate sparse configurations by repeatedly extending the currently identified active configuration using local posterior evidence.
	
\item We introduce a Bayes Oracle under a class of sparse one-factor dependence structures and use it to investigate the attainable limits of sparse signal recovery under dependence. To the best of our knowledge, systematic Oracle-based investigations of sparse signal discovery under dependence remain relatively scarce in the multiple testing literature.
	
\item We establish an explicit residual representation of the BSD statistics and connect BSD to covariance-adaptive residual-based step-down procedures such as MRD of \citet{COHEN_SACK_XU_2009}. This representation reveals that the Bayesian posterior-odds statistics driving BSD are completely characterized by an underlying covariance-adjusted residual system, thereby providing a geometric interpretation of posterior evidence accumulation under dependence.
	
\item Combining the residual representation with recent admissibility results of \citet{GC_ADMISSIBILITY_2026} for locally monotone residual-based procedures, we provide a rigorous decision-theoretic justification for BSD under arbitrary covariance dependence. In particular, we show that BSD is admissible with respect to the vector loss function associated with the multiple testing problem. This establishes one of the strongest forms of decision-theoretic optimality available in classical statistical decision theory and demonstrates that the admissibility of BSD derives from the covariance-adaptive residual geometry underlying the procedure rather than from its Bayesian formulation alone.
	

\item We show that the residual representation also yields substantial computational simplifications. While the Bayes Oracle requires posterior evaluation over an exponentially large configuration space, BSD replaces exhaustive model exploration with a covariance-adaptive sequential search strategy whose complexity grows only quadratically with the number of hypotheses. Consequently, BSD provides a computationally tractable surrogate for Bayesian model exploration in large-scale dependence-aware multiple testing problems.
	
\item We conduct extensive simulation studies comparing BSD with the Bayes Oracle, the recently proposed MRD--GBS \citep{ghosh2026covariance} and the MRD--CSX \citep{COHEN_SACK_XU_2009} procedures, and the Benjamini--Hochberg method. Our numerical investigations reveal a striking empirical phenomenon: across a broad collection of sparsity levels, dimensions, and one-factor dependence structures, BSD consistently exhibits near-oracle behavior and often becomes virtually indistinguishable from the Bayes Oracle in terms of Bayes risk and signal recovery performance.
	
\item We demonstrate that BSD provides a useful empirical benchmark for evaluating multiple testing procedures under dependence. While exact Oracle calculations rapidly become infeasible under general covariance structures, the Bayes risk of BSD can often be estimated accurately through simulation, thereby providing valuable information regarding the attainable performance frontier in dependence-aware sparse signal discovery problems. 

\item Our numerical investigations reveal an additional and potentially surprising phenomenon. Although MRD–GBS was developed from a fundamentally different frequentist perspective and was not derived through Bayes-risk considerations or posterior model-pursuit arguments, its performance is often remarkably close to that of BSD and the Bayes Oracle across a broad collection of sparse one-factor dependence settings. This observation suggests that successful sparse signal recovery under dependence may be governed less by the Bayesian–frequentist distinction itself than by the ability of a procedure to effectively exploit covariance structure and accumulate information throughout the sequential testing process.
	
\end{enumerate}

The remainder of the paper is organized as follows. Section~\ref{MODEL_PROBLEM_FORMULATION} introduces the multiple-testing framework, the underlying Bayesian formulation, and the Bayes Oracle benchmark. Section~\ref{sec:BSD_METHOD} develops the Bayesian Step-Down procedure and discusses its interpretation from a model-pursuit perspective. Section~\ref{sec:THEORETICAL_PROPERTIES} establishes several theoretical properties of BSD, including its connection with the MRD framework of \citet{COHEN_SACK_XU_2009}, the admissibility of the resulting testing procedure, and computational complexity considerations. Section~\ref{sec:SIMULATION} presents an extensive simulation study under a broad collection of dependence structures and compares the performance of BSD with the Bayes Oracle and several competing procedures. Section~\ref{sec:DISCUSSION} concludes with a discussion of the findings and several directions for future research. Additional technical details and supplementary simulation results are deferred to the Appendix.

\section{Problem Formulation and the Bayes Oracle}
\label{MODEL_PROBLEM_FORMULATION}

We consider the problem of simultaneously testing the means of a collection of jointly normal random variables. Let $\boldsymbol{X} = (X_1,\ldots,X_n)$ denote an observed random vector satisfying
\begin{equation}\label{CONDITIONAL_DIST_X}
\boldsymbol{X}\mid\boldsymbol{\theta}\sim N_n(\boldsymbol{\theta}, \boldsymbol{\Sigma}),
\end{equation}

where $\boldsymbol{\theta} = (\theta_1,\ldots,\theta_n)\in\mathbb{R}^{n}$ is an $n\times 1$ unknown mean vector and $\boldsymbol{\Sigma} = ((\sigma_{ij}))
$ is an $n\times n$ known positive definite covariance matrix with an arbitrary dependence structure. Our objective is to simultaneously test

\begin{equation}
	\label{eq:TESTING_PROBLEM_Bayesian}
	H_{0i}:\theta_i=0
	\qquad
	\mbox{against}
	\qquad
	H_{Ai}:\theta_i\neq 0,
	\qquad
	i=1,\ldots,n.
\end{equation}


Throughout the paper, we view (\ref{eq:TESTING_PROBLEM_Bayesian}) not merely as a collection of individual testing problems, but as a structured signal recovery problem in which the primary objective is to identify the unknown configuration of true and false null hypotheses while accounting for the dependence structure among the observations. This perspective naturally motivates the use of latent indicator variables and Bayesian model-selection ideas, which we introduce next.


%
%
%
%
%
%
%
%
%
%
%

\subsection{A Bayesian Formulation}

To study the attainable limits of sparse signal recovery under dependence, we adopt a standard two-groups formulation for the unknown mean vector $\boldsymbol{\theta}$. For each coordinate, let $\nu_i$ denote an unobserved indicator variable taking the value $1$ if the corresponding signal is active and $0$ otherwise. The latent indicators are assumed to satisfy

$$\nu_i
\stackrel{\mathrm{i.i.d.}}{\sim}
\mathrm{Bernoulli}(p),
\qquad
i=1,\ldots,n,$$

where $p\in(0,1)$ denotes the prior probability that a signal is present which is also interpreted as the theoretical proportion of active signals.

Conditional on $\nu_i=0$, we assume that $\theta_i=0$, whereas conditional on $\nu_i=1$,

$$\theta_i \sim N(0,\psi^2),$$

where $\psi^2>0$ controls the signal strength that is typically assumed to be large. Equivalently, the individuals $\theta_i$'s are modeled as independent and identically distributed observations from the following two-groups prior distribution:

\begin{equation}
	\label{eq:two_group_prior}
	\theta_i
	\stackrel{\mathrm{i.i.d.}}{\sim}
	(1-p)\delta_0
	+
	pN(0,\psi^2),
	\qquad
	i=1,\ldots,n.
\end{equation}

The above two-groups formulation provides a probabilistic description of sparse signal configurations and serves as the foundation for the Bayes Oracle considered throughout this paper. Closely related hierarchical two-group formulations have played a central role in Bayesian multiple testing, multiplicity adjustment, and sparse model selection; see, for example, \citet{SB2006,SB2010}.


Let $\boldsymbol{\nu} = (\nu_1,\ldots,\nu_n)$ denote the vector of latent indicators and define $\boldsymbol{B}_{\boldsymbol{\nu}}
= \mathrm{diag}(\nu_1,\ldots,\nu_n)$.

Under \eqref{CONDITIONAL_DIST_X} and (\ref{eq:two_group_prior}), the conditional distribution of $\boldsymbol{X}$ given $\boldsymbol{\nu} = (\nu_1,\ldots,\nu_n)$ is given by
\begin{equation}
	\label{eq:conditional_x}
	\boldsymbol{X}
	\mid
	\boldsymbol{\nu}
	\sim
	N_n
	\left(
	\boldsymbol{0},
	\,
	\boldsymbol{\Sigma}
	+
	\psi^2 \boldsymbol{B}_{\boldsymbol{\nu}}
	\right).
\end{equation}

The corresponding prior probability of a configuration
$\boldsymbol{\nu}\in\{0,1\}^{n}$
is

\begin{equation}\label{eq:prior_latent_configuration}
\pi(\boldsymbol{\nu})
=
\prod_{i=1}^{n}
p^{\nu_i}
(1-p)^{1-\nu_i}.
\end{equation}

Consequently, by averaging over all possible configurations $\boldsymbol{\nu}\in\{0,1\}^{n}$, we obtain the marginal distributions of $	\boldsymbol{X}$ as

\begin{equation}
	\label{eq:marginal_x}
	\boldsymbol{X}
	\sim
	\sum_{\boldsymbol{\nu}\in\{0,1\}^{n}}
	\pi(\boldsymbol{\nu})
	N_n
	\left(
	\boldsymbol{0},
	\,
	\boldsymbol{\Sigma}
	+
	\psi^2 \boldsymbol{B}_{\boldsymbol{\nu}}
	\right).
\end{equation}

The testing problem (\ref{eq:TESTING_PROBLEM_Bayesian}) is therefore equivalent to simultaneously testing

\begin{equation}
	\label{eq:testing_nu}
	H_{0i}:\nu_i=0
	\qquad
	\mbox{against}
	\qquad
	H_{Ai}:\nu_i=1,
	\qquad
	i=1,\ldots,n.
\end{equation}

\subsection{A Decision-Theoretic Formulation}

The two-groups formulation naturally induces a multiple decision problem concerning the latent signal configuration $$
\boldsymbol{\nu} = (\nu_1,\ldots,\nu_n)$$. Let
$$
\boldsymbol{\phi}(\boldsymbol{X})
=
\left(
\phi_1(\boldsymbol{X}),
\ldots,
\phi_n(\boldsymbol{X})
\right)
$$

denote a multiple testing procedure, where

\[
\phi_i(\boldsymbol{X})
=
\begin{cases}
	1, & \text{if } H_{0i} \text{ is rejected},\\
	0, & \text{if } H_{0i} \text{ is not rejected}.
\end{cases}
\]

Following the standard compound decision framework, we consider an additive loss function obtained by summing the losses incurred by the individual testing decisions. Specifically, for each coordinate, define

\begin{equation}
	\label{eq:individual_loss}
	L_i\!\left(\nu_i,\phi_i(\boldsymbol{X})\right)
	=
	\delta_0(1-\nu_i)\phi_i(\boldsymbol{X})
	+
	\delta_A\nu_i\left\{1-\phi_i(\boldsymbol{X})\right\},
\end{equation}

where $\delta_0>0$ and $\delta_A>0$ denote the loss associated with a false discovery (Type I error), and a missed discovery (Type II error), respectively.


The overall loss of a multiple testing procedure is then defined as

\begin{equation}
	\label{eq:additive_loss}
	L\!\left(
	\boldsymbol{\nu},
	\boldsymbol{\phi}
	\right)
	=
	\sum_{i=1}^{n}
	L_i\!\left(
	\nu_i,
	\phi_i(\boldsymbol{X})
	\right).
\end{equation}

The corresponding Bayes risk of a multiple testing procedure
$\boldsymbol{\phi}$
is given by

\begin{equation}
	\label{eq:bayes_risk}
	R(\boldsymbol{\phi})
	=
	E\!\left[
	L\!\left(
	\boldsymbol{\nu},
	\boldsymbol{\phi}
	\right)
	\right],
\end{equation}

where the expectation is taken with respect to the joint distribution induced by the two-groups prior and the sampling model. When $\delta_0=\delta_A=1$, the additive loss reduces to the total number of classification errors, and the corresponding Bayes risk equals the expected sum of false discoveries and false non-discoveries. This special case will play a central role in the numerical investigations reported later in the paper. The Bayes Oracle considered in the next subsection is the multiple testing procedure that minimizes the risk \eqref{eq:bayes_risk}.

\subsection{The Bayes Oracle}

%
%
%
%
%
%
%
%
%

Under the additive loss formulation (\ref{eq:additive_loss}), the Bayes Oracle is the multiple testing procedure that minimizes the Bayes risk (\ref{eq:bayes_risk}) under the assumed two-groups model. Standard Bayesian decision theory implies that the Oracle rejects the $i$th null hypothesis whenever the posterior expected loss of rejection is smaller than that of acceptance. Consequently, the Bayes Oracle rejects the $i$-th null hypothesis $H_{0i}$ whenever the corresponding posterior inclusion probability

\begin{equation}
	\label{eq:oracle_rule}
	P(\nu_i=1 \mid \boldsymbol{x})
	>
	\frac{\delta_0}{\delta_0+\delta_A},
	\qquad i=1,\ldots,n.
\end{equation}

Thus, the Bayes Oracle is completely characterized by the posterior inclusion probabilities defined as
\begin{equation}
	\label{eq:posterior_prob}
	P(\nu_i=1|\boldsymbol{x})
	=
	\frac{
		\sum\limits\limits_{\boldsymbol{\nu}\in\{0,1\}^{n}:\nu_i=1}
		\pi(\boldsymbol{\nu})
		f(\boldsymbol{x}\mid\boldsymbol{\nu})
	}
	{
		\sum\limits_{\boldsymbol{\nu}\in\{0,1\}^{n}}
		\pi(\boldsymbol{\nu})
		f(\boldsymbol{x}\mid\boldsymbol{\nu})
	}, \qquad i=1,\ldots,n.
\end{equation}
where $f(\boldsymbol{x}\mid\boldsymbol{\nu})$ denotes the conditional density of $\boldsymbol{X}$ given $\boldsymbol{\nu}$.




%
%
%


The Bayes risk of the Oracle is given by

\begin{equation}
	\label{eq:oracle_risk}
	R_{\mathrm{Oracle}}
	=
	\sum\limits_{\boldsymbol{\nu}\in\{0,1\}^{n}}
	\pi(\boldsymbol{\nu})
	\int_{\mathbb{R}^{n}}
	L\!\left(
	\boldsymbol{\nu},
	\boldsymbol{\phi}^{\mathrm{Oracle}}(\boldsymbol{x})
	\right)
	f(\boldsymbol{x}\mid\boldsymbol{\nu})
	\,d\boldsymbol{x},
\end{equation}

where
\(
\pi(\boldsymbol{\nu})
\)
denotes the prior probability distribution \eqref{eq:prior_latent_configuration} of the latent configuration
\(
\boldsymbol{\nu}
\)
and
\(
f(\boldsymbol{x}\mid\boldsymbol{\nu})
\)
denotes the corresponding conditional density \eqref{eq:conditional_x}. 

Since the Oracle minimizes the Bayes risk among all multiple testing procedures under the additive loss (\ref{eq:additive_loss}), it represents the ideal benchmark for assessing the attainable limits of sparse signal recovery under the assumed probabilistic model. The corresponding Bayes risk is given by (\ref{eq:oracle_risk}). In the next subsection, we discuss the computational challenges associated with implementation of the Oracle and evaluation of its risk under general covariance dependence.

%

\subsection{The Bayes Oracle Under General Dependence: Computational Challenges}

The Bayes Oracle provides the ideal benchmark for sparse signal recovery under the additive loss formulation introduced in Section~\ref{MODEL_PROBLEM_FORMULATION}.2. As shown in Section~\ref{MODEL_PROBLEM_FORMULATION}.3, the Oracle rejects the $i$th null hypothesis whenever the posterior probability of the corresponding alternative exceeds a threshold determined by the relative costs of false discoveries and missed discoveries. Consequently, the Oracle is completely characterized by the collection of posterior inclusion probabilities
\[
P(\nu_i=1\mid\boldsymbol{X}),
\qquad
i=1,\ldots,n.
\]

Under the dependent two-groups model, however, evaluation of these posterior inclusion probabilities becomes computationally challenging. Indeed, for each coordinate $i$,
\[
P(\nu_i=1\mid\boldsymbol{X})
=
\frac{
	\sum\limits_{\boldsymbol{\nu}\in\{0,1\}^{n}:\nu_i=1}
	\pi(\boldsymbol{\nu})
	f(\boldsymbol{X}\mid\boldsymbol{\nu})
}{
	\sum\limits_{\boldsymbol{\nu}\in\{0,1\}^{n}}
	\pi(\boldsymbol{\nu})
	f(\boldsymbol{X}\mid\boldsymbol{\nu})
},
\]
where the summations extend over all possible latent signal configurations. Under a general covariance structure $\boldsymbol{\Sigma}$, the likelihood does not admit a factorized representation and the posterior probability associated with any particular coordinate depends on the entire latent configuration vector. Consequently, computation of a single posterior inclusion probability requires summation over a model space of cardinality $2^n$. As pointed out by \citet{XCML_2011}, exact implementation of the full Bayes Oracle therefore entails computational complexity of order $O(n2^n)$.

The computational burden associated with the Bayes Oracle extends well beyond the evaluation of a single Oracle decision rule. The primary objective of the present study is not merely to compute Oracle decisions but to estimate the corresponding Bayes risks accurately enough to permit meaningful comparisons among competing procedures. Achieving such accuracy requires averaging the Oracle loss over a large number of independently generated datasets. In the present study, this entails $G=5000$ Monte Carlo replications at each sparsity level and for each dependence structure considered. Since exact Oracle implementation for a single dataset already requires computation on the order of $O(n2^n)$, the overall computational cost of Oracle risk estimation grows exponentially with dimension and must be incurred repeatedly throughout the simulation experiment. Consequently, the practical limitation is not merely the evaluation of the Oracle itself, but the challenge of obtaining sufficiently accurate Oracle risk estimates to serve as a reliable benchmark. Thus, despite its conceptual importance as the optimal benchmark under the assumed model, the Bayes Oracle rapidly becomes computationally inaccessible under general covariance dependence, thereby restricting exact Oracle comparisons to relatively low-dimensional problems.

An important exception arises under certain structured dependence models, most notably one-factor models. In this setting, the latent factor representation induces conditional independence among the observations given the common factor, leading to substantial simplifications in both posterior calculations and risk evaluation. Consequently, one-factor models provide a rare setting in which accurate numerical approximations to Oracle risk become feasible and therefore offer a natural laboratory for studying the attainable limits of sparse signal recovery under dependence.

\subsection{A Tractable Bayes Oracle Under One-Factor Dependence}

An important exception to the computational difficulties described above arises under one-factor dependence structures. Suppose that the observations admit the latent factor representation

\[
X_i=\theta_i+\lambda_i Z+\varepsilon_i,
\qquad
i=1,\ldots,n,
\]

where \(Z\sim N(0,1)\) is a common latent factor, \(\lambda_i\in\mathbb{R}\) denotes the loading of the \(i\)th coordinate on the common factor, and

\[
\varepsilon_i\sim N(0,\sigma_i^2),
\qquad
i=1,\ldots,n,
\]

are mutually independent idiosyncratic error terms independent of \(Z\). Let

\[
\boldsymbol{\lambda}
=
(\lambda_1,\ldots,\lambda_n)^T
\]

denote the corresponding vector of factor loadings. 

Consequently, the covariance matrix of
\(\boldsymbol{X}\)
may be written as

\[
\boldsymbol{\Sigma}
=
\boldsymbol{D}+\boldsymbol{\lambda}\boldsymbol{\lambda}^T,
\]

where

\[
\boldsymbol{D}
=
\operatorname{diag}
(\sigma_1^2,\ldots,\sigma_n^2).
\]

%
%
%
%
%
%

Conditional on $Z=z$, the observations \(X_1,\ldots,X_n\) become independent and the corresponding conditional posterior inclusion probabilities factorize across coordinates as follows.
For \(d\in\{0,1\}\), define
\[
f_{d,i}(x_i\mid z)
\]
to be the conditional density of \(X_i\) given \(Z=z\) and \(\nu_i=d\). Under the present normal two-groups model,
\[
f_{0,i}(x_i\mid z)
=
\phi_{\sigma_i^2}(x_i-\lambda_i z),
\qquad
f_{1,i}(x_i\mid z)
=
\phi_{\sigma_i^2+\psi^2}(x_i-\lambda_i z),
\]
where \(\phi_{\tau^2}\) denotes the \(N(0,\tau^2)\) density. Let
\[
m_i(x_i\mid z)
=
(1-p)f_{0,i}(x_i\mid z)
+
p f_{1,i}(x_i\mid z).
\]
Then the posterior inclusion probability appearing in the Oracle rule can be written as
\[
P(\nu_i=1\mid \boldsymbol{x})
=
\frac{
	\int_{\mathbb R}
	\phi(z)\,
	p f_{1,i}(x_i\mid z)
	\prod_{k\neq i} m_k(x_k\mid z)
	\,dz
}{
	\int_{\mathbb R}
	\phi(z)\,
	\prod_{k=1}^{n} m_k(x_k\mid z)
	\,dz
}.
\]
Thus, under the one-factor model, computation of the Oracle rule no longer requires summation over all \(2^n\) latent configurations. Instead, each posterior inclusion probability can be evaluated by one-dimensional numerical integration with respect to the common factor \(Z\).

Consequently, the Bayes risk of the Oracle may be accurately estimated by Monte Carlo simulation from the one-factor model, with the Oracle decisions evaluated through the above one-dimensional quadrature. This avoids direct enumeration of the full configuration space and makes Oracle-based benchmarking feasible in dimensions where brute-force calculation would be computationally prohibitive.

Nevertheless, the computational intractability of the Oracle under a general covariance structure $\boldsymbol{\Sigma}$ motivates the search for practically implementable procedures that can effectively exploit dependence information while retaining much of the sparse recovery performance of the optimal rule. The Bayesian Step-Down procedure developed in the next section is motivated precisely from this perspective.

\section{The Bayesian Step-Down Procedure}
\label{sec:BSD_METHOD}

The Bayes Oracle developed in Section~\ref{MODEL_PROBLEM_FORMULATION} provides the ideal benchmark for sparse signal recovery under the assumed two-groups model. However, under general covariance dependence, exact implementation of the Oracle requires exploration
of a model space containing $2^n$ competing signal configurations and rapidly becomes computationally infeasible. Consequently, rather than attempting to directly approximate the full Bayes rule, it is natural to seek computationally
tractable procedures that can effectively exploit dependence information while retaining desirable statistical and decision-theoretic properties.

The Bayesian Step-Down (BSD) procedure proposed by
\citet{GhoshChakrabarti2015} was originally developed from this perspective. Although formulated within the classical two-groups framework described in Section~\ref{MODEL_PROBLEM_FORMULATION}, BSD may be viewed more broadly as a dependence-aware Bayesian model-pursuit procedure for sparse signal discovery. The central premise underlying BSD is that dependence is not merely a nuisance requiring adjustment. Rather, the covariance structure itself contains information regarding the underlying active configuration. BSD seeks to exploit this information through a sequential evidence-accumulation mechanism that adaptively updates the inferential landscape after every rejection. Viewed from this perspective, BSD is not simply a multiple testing
procedure, but a covariance-guided search strategy through a vast collection of competing sparse signal configurations.


Recall that a frequentist step-down procedure begins by examining the most significant test statistic and asking whether at least one null hypothesis can be rejected. Equivalently, one may ask whether the global null hypothesis can plausibly explain the observed data. A natural Bayesian analogue of this question may be formulated within the two-groups model.

Rather than comparing the global null hypothesis against the entire model space $\{0,1\}^{n}$, BSD restricts attention to the collection of models containing exactly one active signal,
\[
\mathcal{M}_1
=
\left\{
\boldsymbol{\nu}\in\{0,1\}^{n}
:
\sum_{i=1}^{n}\nu_i=1
\right\}.
\]

These models constitute the simplest departures from the global null and therefore represent the most plausible alternatives at the initial stage of the search process. For each coordinate $j$, one compares the posterior probability of the model in which the $j$-th hypothesis is the unique signal with the posterior probability of the global null model. If the largest of these posterior odds exceeds a predetermined threshold, the corresponding null hypothesis is rejected and the associated coordinate is incorporated into the current active configuration. The corresponding observation $X_j$ is then set aside and removed from future competition. Nevertheless, its effect continues to propagate through the covariance structure and influences all subsequent posterior comparisons. The procedure then proceeds to the next stage with the reduced set of candidate coordinates. Otherwise, the procedure terminates and accepts all remaining null hypotheses.

The same principle is then applied recursively. After a signal has been identified, attention is restricted to the remaining active set of hypotheses, and the procedure again compares the local null model against the collection of models containing exactly one additional signal among the remaining coordinates. In this way, BSD performs a sequential exploration of increasingly complex signal configurations, updating the inferential geometry after every rejection.

An important feature of this mechanism is that, once a coordinate is declared active, it is removed from further consideration in subsequent stages of the search. Thus, BSD does not repeatedly revisit previously identified signals. Instead, the procedure progressively enlarges the current active configuration while simultaneously reducing the set of remaining candidate coordinates. Consequently, the search space contracts adaptively as evidence accumulates. In this sense, BSD effectively performs a sequential exploration of the configuration space by adding one signal at a time and removing it from future competition. Through the dependence structure encoded by the covariance matrix, each newly identified signal also alters the inferential landscape for the remaining coordinates, allowing information to propagate throughout the sequential search process.

Viewed from this perspective, BSD may be interpreted as a dependence-aware posterior model-pursuit procedure. Rather than attempting to approximate the full Bayes rule over the entire model space, BSD constructs a sequence of local posterior comparisons that guide the search for the underlying active configuration. Because these comparisons are based on the joint likelihood, the resulting procedure incorporates the covariance structure at every stage. Consequently, each rejection modifies the evidence available to the remaining hypotheses through the dependence encoded by $\boldsymbol{\Sigma}$, allowing information to accumulate adaptively throughout the search process.

We emphasize that our goal is not to approximate the full Bayes rule directly, but rather to develop a computationally tractable dependence-aware procedure that performs well relative to natural Bayesian benchmarks while possessing attractive decision-theoretic properties.

\subsection{The Bayesian Step-Down Procedure}

Viewed as a model-pursuit procedure, BSD constructs the active configuration sequentially. Suppose that, before stage $t$, the coordinates
\[
j_1,\ldots,j_{t-1}
\]
have already been identified as active. These coordinates are retained as part of the current active configuration, but the corresponding observations
\[
X_{j_1},\ldots,X_{j_{t-1}}
\]
are set aside and removed from future competition, although their influence continues to propagate through the accumulated active configuration and the covariance structure. Thus, the posterior comparisons at stage $t$ are carried out using the reduced data vector
\[
\boldsymbol{X}^{(j_1,\ldots,j_{t-1})},
\]
consisting of the observations whose indices remain under consideration. The procedure then searches among these remaining coordinates for the most plausible candidate signal to be added next. Operationally, this search is implemented through a collection of stagewise posterior-odds statistics that compare the local null configuration on the reduced active set with configurations obtained by adding exactly one additional signal among the remaining hypotheses.


 Let us define
\begin{eqnarray}\label{eq:BSD_STAT_DEFN}
	S_{tj}^{(j_1,\dots,j_{t-1})}(\vec{X}) &=& \frac{\pi\left(\nu_j=1,\vec{\nu}^{(j_1,\dots,j_{t-1}, j)}=\vec{0}\mid\vec{X}^{(j_1,\dots,j_{t-1})}\right)}{\pi\left(\nu_j=0,\vec{\nu}^{(j_1,\dots,j_{t-1},j)}=\vec{0}\mid \vec{X}^{(j_1,\dots,j_{t-1})}\right)}
\end{eqnarray}
for $t,j=1,\dots,n,$ $1\leqslant j_{1}\neq\dots\neq j_{t-1}\leqslant n$ and $j_{\ell} \neq j$ for all $\ell = 1,\dots,t-1$.

\subsubsection{Interpretation of the BSD Statistics}

The BSD statistics admit a simple interpretation within the model-pursuit framework described above. At stage $t$, the coordinates
\[
j_1,\ldots,j_{t-1}
\]
have already been incorporated into the current active configuration. For each remaining coordinate $j$, the statistic
\[
S_{tj}^{(j_1,\ldots,j_{t-1})}(\boldsymbol{X})
\]
compares the posterior probability of the configuration obtained by augmenting the current active configuration with signal $j$ to the posterior probability of the current active configuration itself.

Equivalently, for each fixed stage $t$, the numerator of
\[
S_{tj}^{(j_1,\ldots,j_{t-1})}(\boldsymbol{X})
\]
may be viewed as the posterior probability of a plausible alternative obtained by adding exactly one additional signal to the current active configuration, while the denominator corresponds to the posterior probability of the associated local null configuration. Thus,
$S_{tj}^{(j_1,\ldots,j_{t-1})}(\boldsymbol{X})$ represents a stagewise posterior odds in favour of extending the current active configuration through coordinate $j$.

Consequently, the BSD statistic quantifies the posterior evidence in favour of including coordinate $j$ among the active signals. Larger values of
\[
S_{tj}^{(j_1,\ldots,j_{t-1})}(\boldsymbol{X})
\]
indicate stronger support for adding coordinate $j$ to the currently identified active set.

Unlike the Bayes Oracle, which attempts to evaluate posterior
probabilities over the entire configuration space simultaneously, BSD performs a sequence of local comparisons between neighboring configurations. At each stage, only those configurations obtained by adding a single signal to the current active configuration are considered. Consequently, BSD explores the model space incrementally, constructing increasingly complex signal configurations while avoiding
the combinatorial burden associated with exhaustive enumeration of all $2^n$ possibilities. Thus, BSD may be viewed as a greedy posterior model-pursuit strategy that sequentially constructs increasingly complex active configurations through a sequence of locally optimal posterior extensions.


\subsubsection{The Proposed Procedure}

For $t=1,\dots,n$, let us define the indices $j_t(\vec{X})$ as,
\begin{align}\label{eq:BSD_INDEX}
	j_t(\vec{X})&=\argmax_{j \in \{1,\dots,n\}\setminus\{j_1(\vec{X}),\dots,j_{t-1}(\vec{X})\}} S^{(j_1(\vec{X}),\dots,j_{t-1}(\vec{X}))}_{tj}(\vec{X}).
\end{align}

Given a predetermined threshold $\delta > 0$, the proposed Bayesian Step Down (BSD) procedure is now described below.

\begin{enumerate}
	\item At stage 1, consider the statistics $S_{1j}(\vec{X})$, where $j\in\{1,\dots,n\}$. If $S_{1j_1(\vec{X})}(\vec{X})\leqslant \delta$, stop and accept all the $H_{0i}$'s. Otherwise, reject $H_{0j_1(\vec{X})}$ and continue to stage 2.
	
	\item At stage 2, consider the statistics $S^{(j_1(\vec{X}))}_{2j}(\vec{X})$, where $j\in \{1,\dots,n\}\setminus\{j_1(\vec{X}) \}$. If $S^{(j_1(\vec{X}))}_{2j_2(\vec{X})}(\vec{X})\leqslant\delta$, stop and accept all the remaining $H_{0i}$'s. Otherwise, reject $H_{0j_2(\vec{X})}$ and continue to stage 3.
	
	\item In general, at stage $t$, consider the $(n-t+1)$ many statistics $S^{(j_1(\vec{X}),\dots,j_{t-1}(\vec{X}))}_{tj}(\vec{X})$, where $j \in \{1,\dots,n\}\setminus\{j_1(\vec{X}),\dots,j_{t-1}(\vec{X})\}$. If $S^{(j_1(\vec{X}),\dots,j_{t-1}(\vec{X}))}_{tj_t(\vec{X})}(\vec{X})\leqslant\delta$, stop and accept all the remaining $H_{0i}$'s. Otherwise, reject $H_{0j_t(\vec{X})}$ and move to stage $(t+1)$.
	
	\item We continue in this way until an acceptance occurs or we are exhausted with all the null hypotheses (that is $t=n$), in which case we must stop.
\end{enumerate}

The threshold $\delta$ is chosen to reflect the relative loss assigned to false discoveries and missed discoveries. In particular, motivated by the Bayes Oracle rule in (\ref{eq:oracle_rule}), we take
\[
\delta=\frac{\delta_0}{\delta_A}.
\]
Indeed, the Bayes Oracle rejects whenever
\[
\frac{P(\nu_i=1\mid\boldsymbol{x})}
{P(\nu_i=0\mid\boldsymbol{x})}
>
\frac{\delta_0}{\delta_A},
\]
which is equivalent to (\ref{eq:oracle_rule}). Thus, when the BSD statistic is interpreted as a stagewise posterior odds in favour of augmenting the current active configuration by including coordinate $j$, the choice $\delta=\delta_0/\delta_A$ provides a natural decision-theoretic calibration consistent with the Bayes Oracle. In the symmetric loss case $\delta_0=\delta_A=1$, this reduces to the threshold $\delta=1$.


Under independence, the BSD statistics coincide exactly with the posterior-odds quantities driving the Bayes Oracle rule, and the resulting BSD procedure reduces to the Bayes-optimal multiple testing procedure. Thus, BSD recovers the Oracle solution whenever the observations are independent.

A key distinction from the independent setting is that, under dependence, information regarding the active configuration is no longer contained solely within individual coordinates. Rather, evidence propagates through the covariance structure and must be incorporated into the sequential search process.

An important feature of the BSD statistics is that they explicitly incorporate the dependence structure among the observations through the covariance matrix $\boldsymbol{\Sigma}$ and its appropriate submatrices appearing in the stagewise posterior probabilities. Although the observations corresponding to previously selected coordinates are removed from future competition, they are not removed from future influence. Indeed, the statistics
\[
S_{tj}^{(j_1,\ldots,j_{t-1})}(\boldsymbol{X})
\]
computed at stage $t$ depend implicitly on the earlier observations through the active configuration accumulated during the preceding stages of the procedure. Consequently, decisions made at one stage influence the posterior comparisons performed at subsequent stages, allowing information to propagate through the covariance structure as the search evolves.


Although the Bayes Oracle is formally defined through posterior probabilities over the entire configuration space, exhaustive exploration of all $2^n$ configurations may be unnecessary in sparse settings. Under sparsity, the posterior distribution tends to concentrate on a comparatively small collection of configurations containing only a few active signals, while the vast majority of candidate models receive negligible posterior support. BSD exploits this concentration phenomenon by sequentially pursuing the most plausible posterior extensions of the current active configuration. Consequently, rather than attempting a global search over the entire model space, BSD concentrates its computational effort on regions that accumulate substantial posterior mass and are therefore most likely to contain the true signal configuration. Viewed from this perspective, BSD functions as a posterior-guided model-pursuit strategy that seeks to capture the dominant posterior structure of the sparse recovery problem without incurring the combinatorial complexity of exhaustive Bayesian model averaging.

From the perspective adopted in this paper, BSD may be viewed as a dependence-aware Bayesian model-pursuit strategy operating on a high-dimensional sparse configuration space. Rather than attempting to identify the underlying signal configuration through exhaustive exploration of all $2^n$ possibilities, BSD navigates the configuration space through a sequence of locally most plausible posterior extensions. At each stage, the procedure augments the current active configuration by incorporating the coordinate receiving the strongest posterior support. The covariance structure enters directly into these stagewise posterior comparisons, allowing information accumulated from earlier discoveries to propagate throughout the search process. In this way, BSD performs a covariance-guided, posterior-driven exploration of the sparse region of the configuration space, sequentially refining its estimate of the underlying signal configuration while maintaining computational tractability.


The stagewise posterior-odds statistics introduced above provide the computational engine underlying BSD. An important and somewhat surprising feature of these statistics is that they admit an equivalent residual-based representation, which establishes a close connection between BSD and the MRD methodology developed by \citet{COHEN_SACK_XU_2009}. We explore this connection in the next section.

%
%
%
%


\section{Connection between BSD and MRD}
\label{sec:THEORETICAL_PROPERTIES}

The development in Section~\ref{sec:BSD_METHOD} motivated BSD as a dependence-aware Bayesian model-pursuit strategy for sparse signal recovery. The procedure was constructed through a sequence of stagewise posterior comparisons that guide a search over increasingly complex sparse configurations while explicitly incorporating the covariance structure of the observations.

At first sight, the resulting methodology appears fundamentally Bayesian in nature. Indeed, the BSD statistics are defined through posterior probabilities associated with competing sparse configurations, and the procedure itself is naturally interpreted as a posterior-guided exploration of the configuration space. Consequently, one might expect the computational machinery underlying BSD to be fundamentally different from that of residual-based multiple testing procedures.

Although BSD was introduced in Section~\ref{sec:BSD_METHOD} through a sequence of Bayesian posterior comparisons defining a dependence-aware model-pursuit strategy, the procedure possesses a remarkable hidden structure. In this section, we show that the BSD statistics admit an equivalent representation in terms of covariance-adaptive residual quantities. This representation reveals a deep structural connection between the BSD procedure developed in this article and the Maximum Residual Down (MRD) procedure of \citet{COHEN_SACK_XU_2009}, thereby exposing the residual geometry underlying the BSD search process and providing important insight into its computational and decision-theoretic properties.

More specifically, we show that, at every stage of the procedure, the BSD statistics can be expressed as locally monotone transformations of the corresponding MRD residuals, with the transformation determined by the stage, the candidate coordinate, and the accumulated rejection history. Consequently, the Bayesian posterior evidence driving BSD is completely characterized by the same covariance-adaptive residual geometry that underlies MRD. This observation not only yields a substantial computational simplification, but also places BSD within a broader class of residual-based step-down procedures possessing attractive admissibility properties.

We begin by briefly reviewing the MRD residual quantities introduced by \citet{COHEN_SACK_XU_2009}, which serve as the canonical residual-based step-down procedure under arbitrary covariance dependence. For that, we adopt here similar convention of notations used in \citet{COHEN_SACK_XU_2009}. 

Let $\boldsymbol{X}^{(j_{1},\dots,j_{t})}$ be an $(n-t)\times1$ vector consisting of those components of $\boldsymbol{X}=(X_1,\dots,X_n)$ with $X_{j_{1}},\dots,X_{j_{t}}$ left out. Suppose $\boldsymbol{\Sigma}_{(j_{1},\dots,j_{t})} $ is the $(n-t)\times(n-t)$ sub-matrix obtained after eliminating the $j_{1},\dots,j_{t}$-th rows and the corresponding columns of $\boldsymbol{\Sigma}$. Let $\boldsymbol{\sigma}_{(j)}^{(j_{1},\dots,j_{t})} $ be the $(n-t-1)\times1$ vector obtained by eliminating the $j_{1},\dots,j_{t}$-th and $j$-th elements of the $j$-th column vector of $\boldsymbol{\Sigma}$. Define
\begin{eqnarray}
	\sigma_{j\cdot(j_{1},\dots,j_{t})} &=& \sigma_{jj} - {\boldsymbol{\sigma}_{(j)}^{(j_{1},\dots,j_{t})}}^{T}{\boldsymbol{\Sigma}^{-1}_{(j_{1},\dots,j_{t},j)}}{\boldsymbol{\sigma}_{(j)}^{(j_{1},\dots,j_{t})}}. \nonumber
\end{eqnarray}

The MRD procedure is based on a collection of adaptively formed residual
statistics defined as
\begin{eqnarray}\label{MRD_STATISTICS}
	U_{tj}^{(j_{1},\dots,j_{t-1})}(\boldsymbol{X}) &=& \frac{X_j - {\boldsymbol{\sigma}_{(j)}^{(j_{1},\dots,j_{t-1})}}^{T}{\boldsymbol{\Sigma}^{-1}_{(j_{1},\dots,j_{t-1},j)}}{\boldsymbol{X}^{(j_{1},\dots,j_{t-1},j)}}}{\sigma^{\frac{1}{2}}_{j\cdot(j_{1},\dots,j_{t-1})}},\nonumber
\end{eqnarray}
for $t,j=1,\dots,n$, $1\leqslant j_{1}\neq\dots\neq j_{t-1}\leqslant n$ and $j_{\ell}\neq j$ for all $\ell=1,\dots,t-1$.

For $1 \leqslant t \leqslant n$, we define the index $\widetilde{j}_t(\boldsymbol{X})$ as
\begin{align}\label{eq:MRD_INDEX}
	\widetilde{j}_t(\boldsymbol{X})&=\argmax_{j \in \{1,\dots,n\}\setminus\{\widetilde{j}_1(\boldsymbol{X}),\dots,\widetilde{j}_{t-1}(\boldsymbol{X})\}} |U^{(\widetilde{j}_1(\boldsymbol{X}),\dots,\widetilde{j}_{t-1}(\boldsymbol{X}))}_{tj}(\boldsymbol{X})|.
\end{align}

Given a sequence of positive constants
$C_1 \geqslant C_2 \geqslant \dots \geqslant C_n$,
the MRD procedure operates in a step-down manner as follows. At stage $t$, the null hypothesis corresponding to the largest residual statistic is rejected if
\[
\left|
U^{(\widetilde{j}_1(\boldsymbol{X}),\ldots,
	\widetilde{j}_{t-1}(\boldsymbol{X}))}_{t\widetilde{j}_t(\boldsymbol{X})}
(\boldsymbol{X})
\right| > C_t.
\]
Otherwise, the procedure stops and accepts all remaining null hypotheses. The process continues until an acceptance occurs or there are no more null hypotheses left to be rejected.

%
%
%

The MRD procedure requires specification of the stagewise critical constants \(C_1,\ldots,C_n\). In the numerical comparisons reported in this paper, we consider two calibrations of this residual-based step-down rule. The original MRD procedure of \citet{COHEN_SACK_XU_2009}, denoted here by
MRD--CSX, uses
\[
C^{\mathrm{CSX}}_1
=
\Phi^{-1}\left(1-\frac{\alpha}{2n}\right),
\]
and, for \(i=2,\ldots,n\),
\[
C^{\mathrm{CSX}}_i
=
0.71\,\Phi^{-1}\left(1-\frac{\alpha}{2(n-i+1)}\right).
\]
The MRD--GBS procedure of \citet{ghosh2026covariance} instead uses the \citet{GBS2009} stagewise calibration. Specifically, for \(i=1,\ldots,n\), define
\[
\alpha_i
=
\frac{i\alpha}{n+1-i(1-\alpha)}.
\]
The corresponding two-sided residual critical constants are
\[
C^{\mathrm{GBS}}_i
=
\Phi^{-1}\left(1-\frac{\alpha_i}{2}\right),
\qquad i=1,\ldots,n.
\]
Thus, MRD--CSX and MRD--GBS employ the same covariance-adaptive residualization mechanism but differ in their stagewise calibration. Throughout the paper, MRD--GBS serves as the primary frequentist comparator because it combines the MRD residual search framework with the generalized step-down calibration of \citet{GBS2009}.

\begin{remark}
Note that the indices $j_t(\boldsymbol{X})$ and $\widetilde{j}_{t}(\boldsymbol{X})$ defined in \eqref{eq:BSD_INDEX} and \eqref{eq:MRD_INDEX}, respectively,
need not coincide in general. Consequently, the BSD and MRD procedures 	may generate substantially different stagewise rejection sequences even when applied to the same dataset.
	
This distinction is important because both procedures are residual-adaptive. The choice made at any stage determines the collection of coordinates that are removed from future competition and therefore influences all subsequent stagewise comparisons. As a result, a discrepancy between the selected indices at one stage may propagate throughout the remainder of the search process, leading to different conditioning histories and different future rejection paths. Thus, BSD should not be viewed merely as a numerical reparameterization of MRD, but rather as a distinct sequential procedure that may explore the active configuration space in a fundamentally different manner.
\end{remark}

  Theorem \ref{THM:BSD_MRD_CONNECTION} presented below characterizes the relationship between the proposed BSD method and the MRD method due to \citet{COHEN_SACK_XU_2009}.\newline

\begin{thm}\label{THM:BSD_MRD_CONNECTION}
	Under the present set-up, the BSD statistics and the MRD statistics are associated through the following functional relationship:
	\begin{eqnarray}\label{eq:BSD_MRD_RELATION}
		S_{tj}^{(j_1,\dots,j_{t-1})}(\vec{X}) 
		&=& \frac{p}{1-p} \times \sqrt{\frac{\sigma_{j \cdot (j_1,\dots,j_{t-1})}}{\psi^2+\sigma_{j \cdot (j_1,\dots,j_{t-1})}}}\nonumber\\
		&& \times\exp\bigg\{\frac{\psi^2}{2(\psi^2+\sigma_{j \cdot (j_1,\dots,j_{t-1})})}\{U_{tj}^{(j_1,\dots,j_{t-1})}(\vec{X})\}^2\bigg\}.
	\end{eqnarray}
\end{thm}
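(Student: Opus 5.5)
The plan is to compute the ratio directly. The joint posterior of the configuration for the coordinates in the reduced data vector $\boldsymbol{X}^{(j_1,\dots,j_{t-1})}$ is proportional to prior times marginal likelihood. The key observation is that, marginally over $\boldsymbol{\theta}$, the reduced vector satisfies
\[
\boldsymbol{X}^{(j_1,\dots,j_{t-1})}\mid\boldsymbol{\nu}\sim N_{n-t+1}\big(\boldsymbol{0},\,\boldsymbol{\Sigma}_{(j_1,\dots,j_{t-1})}+\psi^2\boldsymbol{B}\big),
\]
where $\boldsymbol{B}$ is the corresponding diagonal submatrix of $\boldsymbol{B}_{\boldsymbol{\nu}}$. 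This follows from \eqref{eq:conditional_x}, since a subvector of a Gaussian vector is Gaussian with the corresponding covariance block. It involves only the indicators of the remaining coordinates, so the normalizing constant cancels in \eqref{eq:BSD_STAT_DEFN}. The prior ratio is $p/(1-p)$ because the two configurations differ only in $\nu_j$.

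Write $\boldsymbol{A}=\boldsymbol{\Sigma}_{(j_1,\dots,j_{t-1})}$ and $\boldsymbol{e}_j$ for the unit vector of coordinate $j$ within the reduced index set. Then $S_{tj}$ equals $\frac{p}{1-p}$ times
\[
\frac{N(\boldsymbol{0},\boldsymbol{A}+\psi^2\boldsymbol{e}_j\boldsymbol{e}_j^T)}{N(\boldsymbol{0},\boldsymbol{A})},
\]
both densities evaluated at $\boldsymbol{X}^{(j_1,\dots,j_{t-1})}$. I will evaluate this ratio by a rank-one update.

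By the matrix determinant lemma, $\det(\boldsymbol{A}+\psi^2\boldsymbol{e}_j\boldsymbol{e}_j^T)=\det\boldsymbol{A}\,(1+\psi^2 a^{jj})$, where $a^{jj}=(\boldsymbol{A}^{-1})_{jj}$. By Sherman--Morrison,
\[
(\boldsymbol{A}+\psi^2\boldsymbol{e}_j\boldsymbol{e}_j^T)^{-1}=\boldsymbol{A}^{-1}-\frac{\psi^2\boldsymbol{A}^{-1}\boldsymbol{e}_j\boldsymbol{e}_j^T\boldsymbol{A}^{-1}}{1+\psi^2a^{jj}}.
\]
Hence the density ratio is
\[
(1+\psi^2a^{jj})^{-1/2}\exp\Big\{\tfrac{\psi^2(\boldsymbol{e}_j^T\boldsymbol{A}^{-1}\boldsymbol{x})^2}{2(1+\psi^2a^{jj})}\Big\}.
\]

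The main step is identifying these quantities with the MRD residual objects. This uses the partitioned-inverse (Schur complement) formula for $\boldsymbol{A}$, splitting coordinate $j$ from the rest, i.e.\ from the block $\boldsymbol{\Sigma}_{(j_1,\dots,j_{t-1},j)}$. It gives $a^{jj}=1/\sigma_{j\cdot(j_1,\dots,j_{t-1})}$. It also gives
\[
\boldsymbol{e}_j^T\boldsymbol{A}^{-1}\boldsymbol{x}=\frac{X_j-{\boldsymbol{\sigma}_{(j)}^{(j_1,\dots,j_{t-1})}}^T\boldsymbol{\Sigma}^{-1}_{(j_1,\dots,j_{t-1},j)}\boldsymbol{X}^{(j_1,\dots,j_{t-1},j)}}{\sigma_{j\cdot(j_1,\dots,j_{t-1})}}=\frac{U_{tj}^{(j_1,\dots,j_{t-1})}}{\sigma^{1/2}_{j\cdot(j_1,\dots,j_{t-1})}}.
\]

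Substituting, with $\sigma=\sigma_{j\cdot(j_1,\dots,j_{t-1})}$, we get $1+\psi^2a^{jj}=(\sigma+\psi^2)/\sigma$, which yields the square-root factor $\sqrt{\sigma/(\psi^2+\sigma)}$. The exponent becomes
\[
\frac{\psi^2U^2/\sigma}{2(\sigma+\psi^2)/\sigma}=\frac{\psi^2U^2}{2(\psi^2+\sigma)},
\]
which is exactly \eqref{eq:BSD_MRD_RELATION}.

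The only delicate point is bookkeeping of the index sets. One must check that $\boldsymbol{\sigma}^{(j_1,\dots,j_{t-1})}_{(j)}$ is precisely the off-diagonal block of $\boldsymbol{A}$ in column $j$, and that the ordering of coordinates within $\boldsymbol{A}$ is immaterial. The latter holds because a permutation conjugates $\boldsymbol{A}$ and leaves both the Schur complement and the quadratic form unchanged.
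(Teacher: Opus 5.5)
Your proof is correct, but after the common first step it takes a different computational route from the paper. Both arguments reduce $S_{tj}^{(j_1,\dots,j_{t-1})}$ to $p/(1-p)$ times the ratio of the $N(\boldsymbol{0},\boldsymbol{A}+\psi^2\boldsymbol{e}_j\boldsymbol{e}_j^T)$ and $N(\boldsymbol{0},\boldsymbol{A})$ densities, evaluated at the reduced data vector.

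\emph{The paper's route.} It factors each joint density as the conditional density of $X_j$ given $\boldsymbol{X}^{(j_1,\dots,j_{t-1},j)}$ times the marginal density of $\boldsymbol{X}^{(j_1,\dots,j_{t-1},j)}$. The marginal is $N(\boldsymbol{0},\boldsymbol{\Sigma}_{(j_1,\dots,j_{t-1},j)})$ under both configurations, so it cancels. Because the perturbation changes only the $(j,j)$ entry, the two conditionals share the same regression mean. Their variances are $\sigma_{j\cdot(j_1,\dots,j_{t-1})}$ and $\psi^2+\sigma_{j\cdot(j_1,\dots,j_{t-1})}$, so the result is just a ratio of two univariate normal densities.

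\emph{Your route.} You keep the full $(n-t+1)$-dimensional density and apply the matrix determinant lemma and Sherman--Morrison. Only at the end do you use the partitioned inverse to translate $(\boldsymbol{A}^{-1})_{jj}$ and $(\boldsymbol{A}^{-1}\boldsymbol{x})_j$ into residual quantities.

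\emph{What each buys.} The paper's factorization is shorter and shows why the MRD residual is the relevant statistic: it is the part of $X_j$ not explained by the other remaining coordinates, and the alternative merely inflates its conditional variance by $\psi^2$. Your route costs more matrix algebra and the index bookkeeping you flag. In exchange, it yields the identities $\sigma_{j\cdot(j_1,\dots,j_{t-1})}=1/(\boldsymbol{A}^{-1})_{jj}$ and $U_{tj}^{(j_1,\dots,j_{t-1})}=(\boldsymbol{A}^{-1}\boldsymbol{x})_j/\sqrt{(\boldsymbol{A}^{-1})_{jj}}$ directly. Hence all stage-$t$ BSD statistics can be read off from the single inverse $\boldsymbol{\Sigma}^{-1}_{(j_1,\dots,j_{t-1})}$. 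This is exactly the one-inversion-per-stage precision-matrix representation that the paper imports separately from \citet{ghosh2026covariance} in its complexity discussion.
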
 

\begin{proof}
	See Appendix.
\end{proof}

Theorem~\ref{THM:BSD_MRD_CONNECTION} reveals that the Bayesian posterior-odds statistics driving BSD are completely characterized by the corresponding covariance-adaptive
residual quantities underlying the MRD procedure. Thus, although BSD was developed from a Bayesian model-pursuit perspective and MRD was motivated through residual-based multiple testing considerations, both procedures are ultimately driven by the same residual geometry induced by the covariance structure.

The representation \ref{eq:BSD_MRD_RELATION} shows that, for each fixed stage and conditioning history, the BSD statistic is a strictly increasing function of the square of the corresponding MRD residual. Consequently, larger residual evidence necessarily corresponds to larger posterior support for extending the current active configuration. In this sense, the covariance-adaptive residual quantities provide the geometric mechanism through which posterior evidence accumulates during the BSD search process.

It is important to note, however, that the transformation appearing in \eqref{eq:BSD_MRD_RELATION} depends explicitly on the stage of the procedure and the previously identified active configuration through the residual variance $\sigma_{j\cdot(j_1,\ldots,j_{t-1})}$. Consequently, BSD should not be viewed as a global monotone transformation of MRD. Rather, BSD belongs to a broader class of locally monotone residual-based procedures whose scoring rules adaptively evolve as the search progresses. Moreover, the BSD and MRD procedures employ distinct stagewise calibration mechanisms through their respective thresholding schemes. As a result, even though both procedures are driven by the same covariance-adaptive residual geometry, they may generate substantially different sequential search paths through the configuration space.

\subsection{Admissibility of Locally Monotone Residual-Based Step-Down Procedures}


The development in Sections 3 and 4 has shown that BSD may be viewed as a dependence-aware Bayesian model-pursuit strategy whose stagewise posterior evidence is completely characterized by covariance-adaptive residual quantities. We now turn to the decision-theoretic properties of such procedures.

Unlike the Bayes-risk formulation considered in Section~\ref{MODEL_PROBLEM_FORMULATION}, the
admissibility theory developed in this section is formulated within the classical vector-loss framework of Cohen and Sackrowitz (2005, 2007, 2008, 2009). This frequentist formulation treats the multiple testing problem as a simultaneous decision problem and provides a natural
criterion for evaluating the admissibility of dependence-aware
multiple testing procedures.

Towards that, we adopt the framework of \citet{COHEN_SACK_XU_2009} in this context. Let us assume that we observe a random vector $\boldsymbol{X}=(X_1,\dots,X_n)$ (obtained through some suitable transformation, if necessary) such that $\boldsymbol{X} \sim N_{n}(\boldsymbol{\theta},\boldsymbol{\Sigma})$ where  $\boldsymbol{\theta}=(\theta_1,\dots,\theta_n)\in\mathbb{R}^{n}$ is an $n\times 1$ unknown, but fixed mean vector, and $\boldsymbol{\Sigma}=((\sigma_{ij}))$ is an $n \times n$ known positive definite matrix with an arbitrary covariance structure. We are interested in testing simultaneously 
\begin{equation}\label{eq:TESTING_PROBLEM}
	H_{0i}:\theta_{i}=0 \mbox{ against } H_{Ai}:\theta_{i}\neq 0, \mbox{ for } i=1,\dots,n.
\end{equation}
 Recall that, in the frequentist paradigm, any multiple testing procedure
$\Phi(\boldsymbol{x})=(\phi_1(\boldsymbol{x}),\dots,\phi_n(\boldsymbol{x}))$
induces an individual test function $\phi_j(\boldsymbol{x})$ for testing
$H_{0j}$ against $H_{Aj}$, where $\phi_j(\boldsymbol{x})$ denotes the probability
of rejecting the $j$-th null hypothesis when the observation
$\boldsymbol{X}=\boldsymbol{x}$ is realized. We consider the standard $0-1$ loss function
corresponding to $\phi_j$, given by
\begin{equation}\label{INDIVIDUAL_LOSS}
	L_j\big(\phi_j(\boldsymbol{X}),\boldsymbol{\theta}\big)
	=
	I\{\theta_j=0\}\phi_j(\boldsymbol{X})
	+
	I\{\theta_j\neq0\}
	\big(1-\phi_j(\boldsymbol{X})\big),
\end{equation}
while the corresponding risk function is given by
\begin{equation}\label{INDIVIDUAL_RISK}
	R_j\big(\phi_j,\boldsymbol{\theta}\big)
	=
	I\{\theta_j=0\}
	E_{\boldsymbol{\theta}:\theta_j=0}\big(\phi_j(\boldsymbol{X})\big)
	+
	I\{\theta_j\neq0\}
	E_{\boldsymbol{\theta}:\theta_j\neq0}
	\big(1-\phi_j(\boldsymbol{X})\big).
	\nonumber
\end{equation}

We consider the overall loss function for the multiple testing procedure
$\Phi(\boldsymbol{X})$ to be the vector loss
\begin{equation}\label{eq:VECTOR_LOSS}
	L\big(\Phi(\boldsymbol{X}),\boldsymbol{\theta}\big)
	=
	\big(
	L_1\big(\phi_1(\boldsymbol{X}),\boldsymbol{\theta}\big),
	\dots,
	L_n\big(\phi_n(\boldsymbol{X}),\boldsymbol{\theta}\big)
	\big),
\end{equation}
with corresponding vector risk function
\begin{equation}\label{VECTOR_RISK}
	R\big(\Phi,\boldsymbol{\theta}\big)
	=
	\big(
	R_1\big(\phi_1,\boldsymbol{\theta}\big),
	\dots,
	R_n\big(\phi_n,\boldsymbol{\theta}\big)
	\big).
	\nonumber
\end{equation}

The vector-loss formulation \eqref{eq:VECTOR_LOSS} considered above should not be confused with the additive Bayes-loss formulation introduced in Section~\ref{MODEL_PROBLEM_FORMULATION}. The former
is used to study admissibility in the classical frequentist sense, whereas the latter is used to define Bayes-optimal sparse signal recovery and the corresponding Bayes Oracle.

A multiple testing procedure $\Phi(\boldsymbol{X})$ is said to be inadmissible with respect to the vector loss function (\ref{eq:VECTOR_LOSS}) if there exists another multiple testing procedure $\Phi^{*}(\boldsymbol{X})$ such that
$R_j\big(\phi^{*}_j,\boldsymbol{\theta}\big)
\leq
R_j\big(\phi_j,\boldsymbol{\theta}\big)$
for all $j=1,\dots,n$ and all $\boldsymbol{\theta}\in\mathbb{R}^n$,
with strict inequality holding for at least one $j$ and some
$\boldsymbol{\theta}\in\mathbb{R}^n$. A multiple testing procedure is said to be
admissible if it is not inadmissible in the aforesaid sense. It is natural
that a multiple testing procedure which is inadmissible with respect to the
vector loss function (\ref{eq:VECTOR_LOSS}) also becomes inadmissible whenever
the loss is a non-decreasing function of the numbers of type I and type II
errors. In this context, it is worth recalling that
\citet{COHEN_KOL_SACK_2007},
\citet{COHEN_SACK_2005_B},
\citet{COHEN_SACK_2007} and
\citet{COHEN_SACK_2008}
showed that in many common applications involving dependent test statistics,
typical $p$-value based stepwise testing procedures, including the celebrated
BH method, are inadmissible with respect to the vector loss function
(\ref{eq:VECTOR_LOSS}). Consequently, such procedures also become inadmissible
whenever the risk is a non-decreasing function of the expected numbers of
type I and type II errors. This reveals an undesirable feature of many
traditional stepwise multiple testing procedures under dependence.

The BSD--MRD representation established in Theorem~\ref{THM:BSD_MRD_CONNECTION} reveals that the Bayesian posterior-odds statistics underlying BSD are completely characterized by covariance-adaptive residual quantities. This connection provides more than a computational interpretation of BSD. It also places BSD within a broader class of residual-based step-down procedures whose statistical properties can be studied through the geometry of the underlying residual system.

A natural question is whether the locally monotone residual structure identified in the previous section has any decision-theoretic implications. In particular, one may ask whether BSD possesses the admissibility properties long regarded as fundamental desiderata of dependence-aware multiple testing procedures. To address this question, we appeal to a recent general admissibility result of \citet{GC_ADMISSIBILITY_2026} for locally monotone residual-based step-down procedures. The result is stated below for completeness.

\begin{lem}\label{lem:LEM_ADMISSIBILITY_MONOTONE_TRANSFORM}
	
Suppose $\boldsymbol{X}\sim N_{n}(\boldsymbol{\theta},\boldsymbol{\Sigma})$, where $\boldsymbol{\theta}\in\mathbb{R}^{n}$ is unknown, but fixed and $\boldsymbol{\Sigma}$ is an $n \times n$ arbitrary but known positive definite covariance matrix. Then, for the two sided multiple testing problem (\ref{eq:TESTING_PROBLEM}), any step-down multiple testing procedure based on statistics $S_{tj}$'s, where each $S_{tj}$ is obtained through a locally adaptive strictly increasing transformation of the absolute value of the corresponding MRD statistic $U_{tj}$, is admissible with respect to the vector loss function \eqref{eq:VECTOR_LOSS}.
\end{lem}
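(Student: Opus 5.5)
The plan follows the Cohen–Sackrowitz route for proving admissibility of MRD: use the Matthes–Truax/Stein characterization of admissible tests in exponential families, and show that every individual test function induced by the step-down procedure has convex acceptance sections of the right kind. Write the density of $\boldsymbol{X}$ in canonical exponential-family form with natural parameter $\boldsymbol{\Sigma}^{-1}\boldsymbol{\theta}$ and sufficient statistic $\boldsymbol{Y}=\boldsymbol{\Sigma}^{-1}\boldsymbol{X}$. For the vector loss \eqref{eq:VECTOR_LOSS}, admissibility of $\Phi$ reduces to admissibility of each $\phi_j$ for testing $H_{0j}:\theta_j=0$ in the presence of nuisance parameters. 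The standard sufficient condition (Cohen, Sackrowitz and Xu; Matthes and Truax) is the following: after reparametrizing so that $\theta_j$ is the parameter of interest, with associated coordinate $Y_j$ and conditioning statistic $\boldsymbol{W}_{j}$ for the remaining coordinates, the acceptance region of $\phi_j$ must be an interval in $Y_j$ for every fixed value of $\boldsymbol{W}_j$. Equivalently, the rejection region of $\phi_j$ must be ``Schur-type/convex-complement'' along the direction $\boldsymbol{\Sigma}\boldsymbol{e}_j$ in $\boldsymbol{x}$-space.

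Step 1 is to record the geometry of the MRD residuals. I will check, as in \citet{COHEN_SACK_XU_2009}, that for any history $(j_1,\dots,j_{t-1})$ not containing $j$, the residual $U_{tj}^{(j_1,\dots,j_{t-1})}$ behaves in a specific way as $\boldsymbol{x}$ moves along the line $\boldsymbol{x}+c\,\boldsymbol{\Sigma}\boldsymbol{e}_j$, which is the direction that changes $Y_j$ with the other $Y_k$ fixed. In that case the residual is an affine function of $c$ with nonzero slope. If instead $j$ belongs to the history, or the residual belongs to another index $k\neq j$, then $U_{tk}$ is unchanged along the line. The reason is that the residual of $X_k$ on the remaining coordinates equals a multiple of the $k$-th coordinate of the inverse of the corresponding submatrix applied to the subvector, so it depends only on the $Y$-type coordinates of that subsystem.

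Step 2 transfers this to the $S_{tj}$. Since each $S_{tj}=g_{tj}^{(\cdot)}(|U_{tj}|)$ with $g$ strictly increasing, and $g$ depends only on the stage and history, every stagewise comparison and argmax among indices $k\neq j$ is unaffected by $c$. The only quantity that moves is $|U_{tj}|$, which is a convex V-shaped function of $c$. The key point is that the transformations being local (history-dependent) does not matter, because the histories that arise before $j$ is rejected consist only of indices other than $j$, whose statistics are invariant.

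Step 3 is the main obstacle: showing that $\{c:\phi_j(\boldsymbol{x}+c\boldsymbol{\Sigma}\boldsymbol{e}_j)=0\}$ is an interval. As $|c|$ grows, $S_{tj}$ increases at every stage $t$ while all competitors stay fixed. Hence, once $j$ is rejected at some stage (either by winning the argmax and exceeding the threshold, or by being rejected later), it remains rejected for larger $|c|$ on that side. This requires a careful induction on stages. If $j$ is rejected at stage $t$ for some $c_0$, then for $|c|>|c_0|$ (same sign) the path before stage $t$ is either identical or $j$ enters earlier, and earlier entry still means rejection. I must also handle $j$ overtaking a competitor, which changes later histories. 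Since $j$ is then already rejected, that is harmless. Near $c$ where $U_{tj}=0$, monotonicity in $|U_{tj}|$ gives acceptance sets that are symmetric intervals around the zero of the affine map, so $\phi_j$ has a convex acceptance section. Ties in the argmax occur on a null set and can be ignored.

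Finally, the Matthes–Truax/Stein theorem (in the form used by Cohen, Sackrowitz and Xu) gives that each $\phi_j$ is admissible. Because the sections are intervals for all $\boldsymbol{W}_j$, this follows through the closure-of-Bayes argument. Componentwise admissibility then yields admissibility of $\Phi$ under the vector loss \eqref{eq:VECTOR_LOSS}.
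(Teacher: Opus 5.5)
The paper gives no argument for this lemma. It is quoted from \citet{GC_ADMISSIBILITY_2026} and used as a black box in the proof of Theorem~\ref{THM:BSD_Admissibility}. You supply the argument along Cohen--Sackrowitz--Xu lines, and that route works: reduce vector-loss admissibility to admissibility of each $\phi_j$, then verify the Matthes--Truax interval-section criterion in $Y_j=(\boldsymbol{\Sigma}^{-1}\boldsymbol{X})_j$ given $\boldsymbol{Y}_{-j}$. What it buys over the citation is a transparent reason why stage-, history- and coordinate-dependent transformations, and stage-dependent thresholds, cause no trouble. On histories not containing $j$, every competitor statistic is frozen along $\boldsymbol{\Sigma}\boldsymbol{e}_j$, so only $S_{tj}$ moves.

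Several steps need repair.

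First, the natural parameter paired with $\boldsymbol{Y}=\boldsymbol{\Sigma}^{-1}\boldsymbol{X}$ is $\boldsymbol{\theta}$, not $\boldsymbol{\Sigma}^{-1}\boldsymbol{\theta}$.

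Second, your Step~1 claim that the $U_{tk}$ are unchanged when $j$ lies in the history is false. Write $J$ for $(j_1,\ldots,j_{t-1})$ and $\boldsymbol{\Sigma}_{-J,J}$ for the block with rows outside $J$ and columns in $J$. Then $\boldsymbol{\Sigma}_{(J)}^{-1}\boldsymbol{X}^{(J)}=\boldsymbol{Y}_{-J}+\boldsymbol{\Sigma}_{(J)}^{-1}\boldsymbol{\Sigma}_{-J,J}\boldsymbol{Y}_J$, which moves with $Y_j$ when $j\in J$. This is harmless only because such histories arise after $H_{0j}$ has been rejected. For $j\notin J$ the clean reason is $(\boldsymbol{\Sigma}\boldsymbol{e}_j)^{(J)}=\boldsymbol{\Sigma}_{(J)}\boldsymbol{e}_j^{(J)}$. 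Hence $\boldsymbol{\Sigma}_{(J)}^{-1}\boldsymbol{X}^{(J)}$ shifts by $c$ in its $j$th coordinate only, and $U_{tj}^{(J)}$ moves with slope $\sigma^{1/2}_{j\cdot(J)}>0$.

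Third, and most important, Step~3 as written is incorrect. The zero of $c\mapsto U_{tj}^{(J)}$ depends on $J$ through $\boldsymbol{Y}_J$. So there is no common origin from which ``$|c|$ grows'', $S_{tj}$ does not increase at all stages simultaneously, and the acceptance section is generally not symmetric about anything. The correct argument is short:
\begin{enumerate}
\item Fix $\boldsymbol{Y}_{-j}$. Let $\emptyset=J_0, J_1, J_2,\ldots$ be the competitor path the procedure follows as long as $j$ is not selected, and let $T$ be the stage at which that path stops.
\item Set $\kappa_t=\max\{\delta_t,\max_{k\notin J_{t-1}\cup\{j\}}S^{(J_{t-1})}_{tk}\}$. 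All of these objects are constant in $c$.
\item $H_{0j}$ is accepted if and only if $S^{(J_{t-1})}_{tj}(c)\le\kappa_t$ for every $t\le T$.
\item Each such set is an interval $\{c:|c-c_t|\le\rho_t\}$ with its own centre $c_t$; it may be empty, half-open, or all of $\mathbb{R}$.
\item The section is therefore an intersection of intervals, hence an interval. Ties involving $S_{tj}$ occur at finitely many $c$.
\end{enumerate}
This is the ``harmless overtaking'' you point to, made precise.

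Finally, what carries the conclusion is the Matthes--Truax section criterion. Here it follows from three facts: continuity of power at $\theta_j=0$, completeness of the $\boldsymbol{Y}_{-j}$-family, and a conditional generalized Neyman--Pearson argument. Stein's theorem is not the right tool, since it needs the whole acceptance region to be convex, which step-down rules need not satisfy. Nor is a closure-of-Bayes argument needed.
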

\begin{proof}
See \citet{GC_ADMISSIBILITY_2026}.
\end{proof}

Lemma~\ref{lem:LEM_ADMISSIBILITY_MONOTONE_TRANSFORM} identifies a broad structural mechanism underlying admissibility. In particular, it shows that admissibility is fundamentally determined by the local monotone ordering structure induced by the MRD residual statistics, rather than by the specific form of the residual scoring rule used to rank competing hypotheses. As a consequence, admissibility extends beyond the MRD procedure itself to a substantially larger class of locally monotone residual-based step-down procedures.

The admissibility of the BSD procedure now follows immediately by combining Theorem~\ref{THM:BSD_MRD_CONNECTION} with Lemma~\ref{lem:LEM_ADMISSIBILITY_MONOTONE_TRANSFORM} as presented by the following theorem.

\begin{thm}\label{THM:BSD_Admissibility}
Suppose $\boldsymbol{X}\sim N_{n}(\boldsymbol{\theta},\boldsymbol{\Sigma})$, where $\boldsymbol{\theta}\in\mathbb{R}^{n}$ is unknown, but fixed and $\boldsymbol{\Sigma}$ is an $n \times n$ arbitrary but known positive definite covariance matrix. Then, for the two sided multiple testing problem (\ref{eq:TESTING_PROBLEM}) the BSD procedure based on the statistics $S_{tj}$'s defined in \eqref{eq:BSD_STAT_DEFN} is admissible with respect to the vector loss function \eqref{eq:VECTOR_LOSS}.
\end{thm}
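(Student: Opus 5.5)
The plan is to combine Theorem~\ref{THM:BSD_MRD_CONNECTION} with Lemma~\ref{lem:LEM_ADMISSIBILITY_MONOTONE_TRANSFORM}. The one genuine verification is that the BSD statistics have exactly the structure the lemma requires: each $S_{tj}^{(j_1,\dots,j_{t-1})}$ must be a strictly increasing transformation of $|U_{tj}^{(j_1,\dots,j_{t-1})}|$. That transformation may depend only on the stage $t$, the candidate index $j$, and the conditioning history $(j_1,\dots,j_{t-1})$, and not on the data otherwise.

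First, I will fix $t$, $j$ and the history, and write $s=\sigma_{j\cdot(j_1,\dots,j_{t-1})}$. By Theorem~\ref{THM:BSD_MRD_CONNECTION},
\[
S_{tj}^{(j_1,\dots,j_{t-1})}(\boldsymbol{X})=g_{tj}^{(j_1,\dots,j_{t-1})}\big(|U_{tj}^{(j_1,\dots,j_{t-1})}(\boldsymbol{X})|\big),
\]
where
\[
g_{tj}^{(j_1,\dots,j_{t-1})}(u)=\frac{p}{1-p}\sqrt{\frac{s}{\psi^2+s}}\exp\Big\{\frac{\psi^2u^2}{2(\psi^2+s)}\Big\},\qquad u\ge 0 .
\]
Since $\boldsymbol{\Sigma}$ is positive definite, $s$ is a Schur complement of a positive definite matrix, so $s>0$. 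Together with $p\in(0,1)$ and $\psi^2>0$, this makes the prefactor positive and the exponent coefficient strictly positive. Hence $g$ is strictly increasing on $[0,\infty)$. Moreover, $g$ depends only on $(t,j,j_1,\dots,j_{t-1})$ and the known constants $p$, $\psi^2$ and $\boldsymbol{\Sigma}$. It is therefore a locally adaptive strictly increasing transform in the sense of the lemma.

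Second, I will check that BSD is a step-down procedure of the form the lemma covers. At each stage it selects the argmax of the transformed statistics, and it rejects when the maximum exceeds a constant threshold $\delta$, which is the same at every stage. A subtle point is that the argmax in \eqref{eq:BSD_INDEX} is taken over $g(|U|)$ rather than over $|U|$, so the rejection path may differ from that of MRD (see the Remark). This is exactly the generality the lemma permits, since it allows the ordering to be induced by the local transforms themselves. The rule ``reject if $g_{tj_t}(|U_{tj_t}|)>\delta$'' is equivalent to $|U_{tj_t}|>g_{tj_t}^{-1}(\delta)$, where $g^{-1}(\delta)$ is read as $0$ if $\delta$ lies below the range of $g$. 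This is a stage- and history-dependent residual threshold, which is again of the locally monotone form. Ties in the argmax occur with probability zero under the Gaussian model and do not affect risks.

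Third, the admissibility conclusion follows from the lemma. The main obstacle is making sure that the class in Lemma~\ref{lem:LEM_ADMISSIBILITY_MONOTONE_TRANSFORM} genuinely allows candidate-dependent transforms, selection by the transformed ordering, and a common threshold $\delta$. To settle this I would reformulate BSD explicitly as a step-down procedure on the statistics $S_{tj}$ with constant critical values. I would then note that the lemma is stated for arbitrary such $S_{tj}$ and arbitrary $\boldsymbol{\Sigma}$, which means no further checks are needed.
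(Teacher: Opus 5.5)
Your proposal is correct and follows the paper's proof. Both arguments use Theorem~\ref{THM:BSD_MRD_CONNECTION} to write $S_{tj}^{(j_1,\dots,j_{t-1})}=A\exp\{B\,(U_{tj}^{(j_1,\dots,j_{t-1})})^2\}$ with $A,B>0$, conclude that $S_{tj}$ is strictly increasing in $|U_{tj}|$ for each fixed stage, history and candidate, and then invoke Lemma~\ref{lem:LEM_ADMISSIBILITY_MONOTONE_TRANSFORM}; your additional remarks (the Schur-complement argument for $\sigma_{j\cdot(j_1,\ldots,j_{t-1})}>0$, the constant threshold $\delta$, and probability-zero ties) just make explicit what the paper takes for granted.
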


\begin{proof}
See Appendix.
\end{proof}

Theorem~\ref{THM:BSD_Admissibility} provides a rigorous decision-theoretic justification for the BSD procedure under arbitrary covariance dependence. Although BSD was originally motivated from a Bayesian model-pursuit perspective, the theorem shows that its statistical validity extends beyond the Bayesian framework. In particular, BSD cannot be uniformly improved upon with respect to the vector loss function \eqref{eq:VECTOR_LOSS}, and therefore possesses one of the strongest forms of optimality available in classical decision theory.

The result is particularly noteworthy because admissibility is established without requiring independence among the observations. Indeed, admissibility continues to hold under an arbitrary known positive definite covariance structure. This stands in sharp contrast to many traditional multiple testing procedures whose admissibility properties may fail under dependence. Consequently, Theorem~\ref{THM:BSD_Admissibility} provides a rigorous theoretical justification for using BSD as a dependence-aware multiple testing procedure.

More broadly, Theorems~\ref{THM:BSD_MRD_CONNECTION} and \ref{THM:BSD_Admissibility} reveal an interesting structural phenomenon. Although BSD was developed through Bayesian posterior comparisons and motivated as a model-pursuit strategy for sparse signal discovery, its admissibility ultimately derives from the covariance-adaptive residual geometry underlying the procedure. Thus, the decision-theoretic properties of BSD are not tied to the specific Bayesian formulation itself, but rather to the locally monotone residual structure through which posterior evidence is accumulated during the sequential search process.

\subsection{Computational Complexity of BSD}

The Bayes Oracle introduced in Section~\ref{MODEL_PROBLEM_FORMULATION} requires posterior evaluation over the entire configuration space
\[
\{0,1\}^n,
\]
whose cardinality is $2^n$. Consequently, exact implementation of the Oracle rule requires posterior calculations over an exponentially growing collection of signal configurations and becomes computationally prohibitive even for moderate values of $n$. This combinatorial explosion constitutes one of the principal obstacles to Bayesian multiple testing under dependence.

BSD circumvents this difficulty by replacing exhaustive posterior exploration with a sequential model-pursuit strategy. Rather than evaluating posterior probabilities over all possible signal configurations simultaneously, BSD incrementally constructs an active configuration through a sequence of local posterior comparisons. At each stage, only those configurations obtained by adding a single signal to the current active configuration are considered.

More specifically, at stage $t$, BSD evaluates posterior odds for only
\[
n-t+1
\]
candidate extensions of the current active configuration. Consequently, in the worst-case scenario, where the procedure continues until the $n$-th stage, the total number of stagewise posterior comparisons performed by BSD is
\[
\sum_{t=1}^{n}(n-t+1)
=
n+(n-1)+\cdots+1
=
\frac{n(n+1)}{2},
\]
which grows quadratically with the number of hypotheses $n$.

Thus BSD replaces the exponential configuration-space exploration required by the Bayes Oracle with a quadratic-time sequential search procedure. While the Oracle requires consideration of all $2^n$ possible model configurations, BSD restricts attention to a carefully selected collection of neighboring configurations whose cardinality grows only quadratically with $n$.

Furthermore, Theorem~\ref{THM:BSD_MRD_CONNECTION} establishes that the BSD statistics admit an equivalent representation in terms of the MRD residual statistics. Consequently, BSD inherits the computational simplifications developed for the MRD procedure by \citet{ghosh2026covariance}. Under the original formulation of \citet{COHEN_SACK_XU_2009}, computation of the residual statistic associated with each active coordinate requires inversion of a corresponding leave-one-out covariance submatrix. Thus, if $k$ hypotheses remain active at a given stage, direct implementation requires $k$ separate matrix inversions. In the worst-case scenario, the total number of matrix inversions required throughout the procedure is
\[
\sum_{k=1}^{n} k
=
\frac{n(n+1)}{2}.
\]

In contrast, the alternative precision-matrix representation developed by \citet{ghosh2026covariance} eliminates this redundancy by expressing all active residual statistics at a given stage through a single inverse covariance matrix. As a result, at most one matrix inversion is required per stage, reducing the total number of matrix inversions from $n(n+1)/2$ to at most $n$. Since the BSD statistics are locally monotone transformations of the corresponding MRD residual statistics, the same computational advantages become available for BSD through Theorem~\ref{THM:BSD_MRD_CONNECTION}.

Thus BSD achieves computational scalability through two complementary mechanisms: a sequential model-pursuit strategy that replaces exhaustive exploration of the $2^n$ configuration space by only $n(n+1)/2$ local posterior comparisons, and a precision-matrix representation that substantially reduces the computational burden associated with evaluating the stagewise statistics.

\section{Simulation Study}
\label{sec:SIMULATION}

The primary objective of this simulation study is to investigate the finite-sample operating characteristics of the proposed BSD procedure under a variety of dependence structures and to assess how closely its performance approaches that of the Bayes Oracle whenever the latter is available. While the BSD procedure was originally motivated as a computationally feasible Bayesian multiple-testing method under dependence, its exact relationship with the corresponding Bayes-optimal decision rule remains difficult to quantify theoretically. Consequently, simulation studies provide a natural framework for evaluating the extent to which BSD succeeds in approximating the Bayes Oracle.

A particular advantage of one-factor dependence structures is that they admit an explicit latent-variable representation that allows direct computation of the Bayes Oracle. This enables a detailed comparison between the Oracle and BSD procedures across a broad range of sparsity regimes and dependence configurations. In addition to BSD and the Oracle, we also compare several competing multiple-testing procedures in order to place the performance of BSD within a broader methodological context.

Throughout the simulation study, special attention is paid not only to overall misclassification risk but also to the individual components of testing performance, including false discovery rates, false non-discovery rates, power, and support recovery characteristics. Collectively, these experiments provide insight into the extent to which the BSD procedure is able to exploit dependence information for accurate signal detection in large-scale multiple-testing problems.

The mean vector is generated according to the standard two-group formulation

\[
\theta_i
\sim
(1-p)\delta_0
+
pN(0,\psi^2),
\qquad
i=1,\ldots,n,
\]

independently across coordinates, where \(p\) denotes the proportion of non-null signals and \(\delta_0\) is a point mass at zero. 

%

Following the calibration used in \citet{BCFG2011}, we set
\[
\psi^2 = 2\log n.
\]
This choice corresponds to the classical universal threshold \(\sqrt{2\log n}\) introduced by \citet{DJ1994}, which has played a fundamental role in sparse high-dimensional estimation, signal recovery, and multiple testing. Signal strengths of this order frequently arise as critical detection boundaries in sparse high-dimensional inference, separating detectable signals from those that are asymptotically indistinguishable from noise; see, for example, \citet{ABDJ2006}, \citet{JS2004}, \citet{BCFG2011}, and \citet{HJ2010}. Consequently, the resulting simulation setting is neither trivially easy nor excessively difficult, thereby providing a meaningful and scientifically relevant benchmark for evaluating the ability of competing procedures to recover sparse signals under dependence.

The sparsity parameter \(p\) is varied over the grid

\[
0.01,\,
0.03,\,
0.05,\,
0.10,\,
0.15,\,
0.20,\,
0.25,\,
0.30,\,
0.35,\,
0.40,\,
0.45,\,
0.50.
\]

For each configuration, results are based on \(G=5000\) independently generated datasets.

The one-factor representation considered in Section~\ref{MODEL_PROBLEM_FORMULATION}.5 permits exact computation of the Bayes Oracle and therefore provides an ideal setting for benchmarking the BSD procedure. To assess the robustness of the resulting conclusions across a variety of dependence patterns, we consider six distinct specifications of the loading vector
\(\boldsymbol{\lambda}=(\lambda_1,\ldots,\lambda_n)^\top\). These loading configurations induce a broad spectrum of covariance structures ranging from independence and homogeneous positive dependence to heterogeneous, clustered, and mixed-sign factor models. Specifically, we consider:

\begin{enumerate}
	\item \textbf{Independence:}
	\(\lambda_i = 0\) for all \(i\).
	
	\item \textbf{Positive Factor Dependence:}
	\(\lambda_i = \sqrt{0.7}\) for all \(i\).
	
	\item \textbf{Alternating Factor Dependence:}
	\(\lambda_i = (-1)^i\sqrt{0.7}\).
	
	\item \textbf{Linear Loadings:}
	the loadings increase linearly from \(0.45\) to \(0.85\).
	
	\item \textbf{Sparse Factor Dependence:}
	\(\lambda_i=0.85\) for the first quarter of the coordinates and
	\(\lambda_i=0.20\) for the remaining coordinates.
	
	\item \textbf{Two-Block Factor Dependence:}
	\(\lambda_i=0.85\) for the first half of the coordinates and
	\(\lambda_i=0.35\) for the remaining coordinates.
\end{enumerate}

Collectively, these six models encompass a diverse collection of dependence structures generated within a common one-factor framework. The resulting experimental design allows us to investigate the robustness of the BSD procedure across markedly different covariance geometries while simultaneously assessing the extent to which its operating characteristics continue to approximate those of the Bayes Oracle under increasingly complex forms of dependence.

\subsection{Competing Procedures and Performance Measures}
The primary goal of the simulation study is to assess how closely the proposed BSD procedure approximates the Bayes Oracle under dependence. To place the performance of BSD in a broader methodological context, we compare it with the following competing procedures.

\begin{enumerate}
	\item \textbf{Bayes Oracle.}
	The Bayes-optimal decision rule corresponding to the underlying one-factor model and the symmetric misclassification loss considered in this paper.
	
	\item \textbf{BSD.}
	The Bayesian Step-Down procedure proposed in this paper.
	
	\item \textbf{MRD--GBS.}
	The GBS-calibrated MRD procedure recently proposed by \citet{ghosh2026covariance}.
	
	\item \textbf{MRD--CSX.}
	The original MRD procedure of \citet{COHEN_SACK_XU_2009}.
	
	\item \textbf{BH.}
	The Benjamini--Hochberg procedure \citep{BH1995}.
\end{enumerate}

For each procedure, we report the Bayes misclassification risk together with several commonly used measures of multiple-testing performance, including the false discovery rate (FDR), false non-discovery rate (FNR), power, average number of rejections, average number of false discoveries, and average number of false nondiscoveries. Since the Bayes misclassification risk corresponds directly to the underlying multiple-decision loss function, it serves as our primary criterion for comparing the overall performance of the competing procedures. Additional insight into the mechanisms driving the observed risk differences is obtained through separate analyses of FDR, FNR, power, and signal-recovery behavior.

\subsection{Bayes Oracle Benchmarking Under One-Factor Dependence}

Several findings emerging from the simulation study proved to be particularly striking. Most notably, across all six one-factor dependence structures considered in this paper, the BSD procedure exhibits operating characteristics that are nearly indistinguishable from those of the Bayes Oracle, even in relatively small dimensions such as \(n=20\) and \(n=50\). The agreement becomes even more pronounced as the dimension increases, suggesting that BSD is capable of reproducing Oracle behavior with remarkable accuracy across a broad range of sparsity regimes and dependence configurations.

Equally noteworthy is the performance of the MRD--GBS procedure. Although derived from a fundamentally different methodological framework, MRD--GBS consistently attains Bayes misclassification risks that remain remarkably close to those of both BSD and the Bayes Oracle across a wide variety of dependence structures and sparsity regimes considered in this paper. While BSD provides an almost indistinguishable approximation to the Oracle throughout the experiments, MRD--GBS frequently operates very near the same Bayes-risk frontier. The emergence of similar operating characteristics from two conceptually distinct methodologies is striking and suggests that effective use of dependence information plays a central role in sparse-signal recovery under correlation. This observation is particularly intriguing because BSD arises from a Bayesian model-pursuit framework, whereas MRD--GBS is a step-down frequentist testing procedure. Taken together, the numerical results suggest that both BSD and MRD--GBS are highly effective at exploiting dependence information for sparse signal recovery, with BSD providing the closest overall approximation to the Bayes-optimal decision rule among the competing methods considered here.

\subsubsection{Bayes Misclassification Risks}

A central question motivating this study is whether the BSD procedure is capable of reproducing the Bayes Oracle behavior observed under independence even in the presence of substantial dependence. The Bayes Oracle represents the optimal benchmark under the assumed model and therefore provides a natural reference point for evaluating the efficiency of competing multiple testing procedures. The results presented below demonstrate that BSD remains remarkably close to the Oracle across a wide range of sparsity levels, dimensions, and covariance structures.
\begin{figure}[!htbp]
	\centering
	
	\begin{subfigure}{0.48\textwidth}
		\centering
		\includegraphics[width=\linewidth]{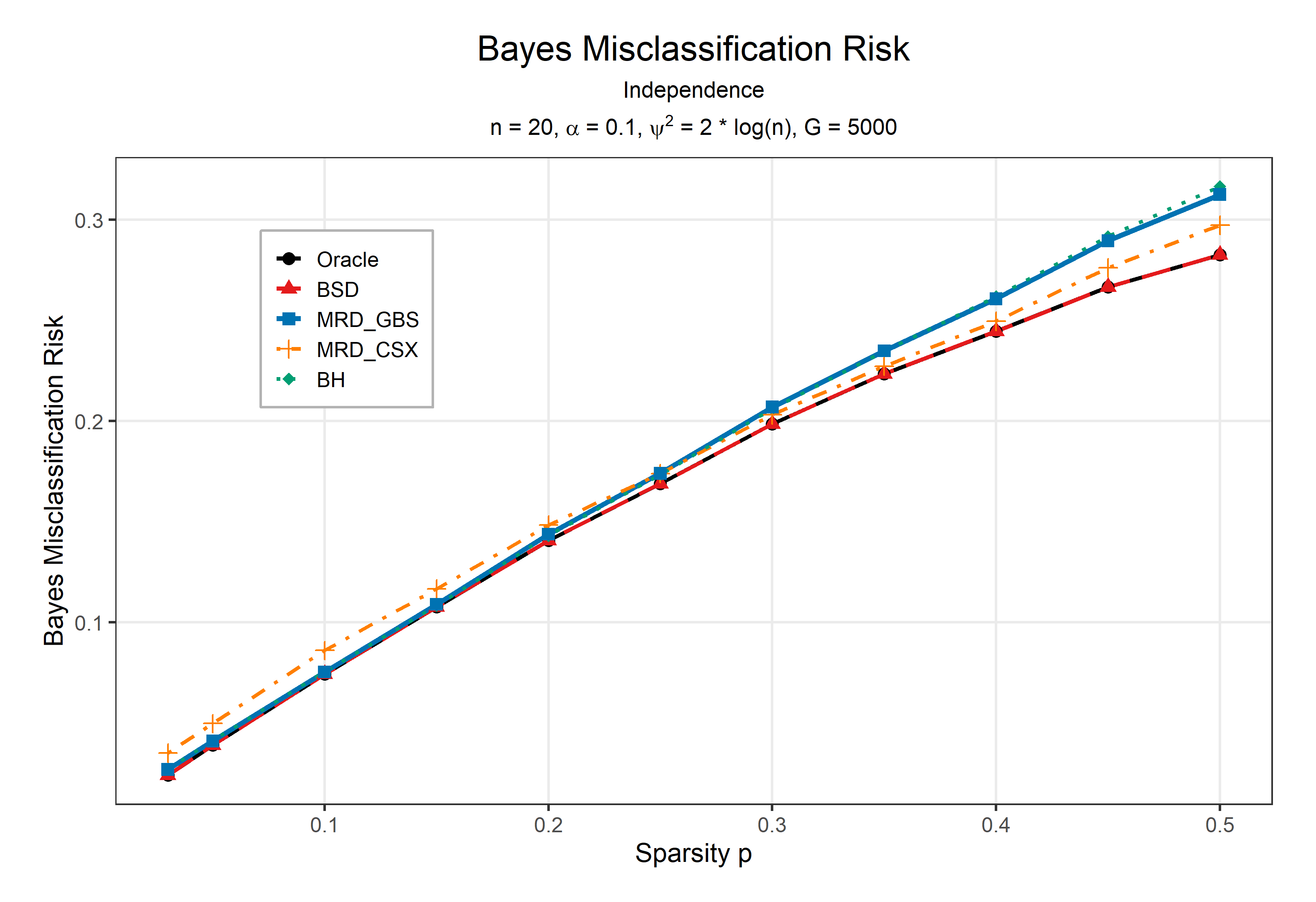}
		\caption{Independence}
	\end{subfigure}
	\begin{subfigure}{0.48\textwidth}
		\centering
		\includegraphics[width=\linewidth]{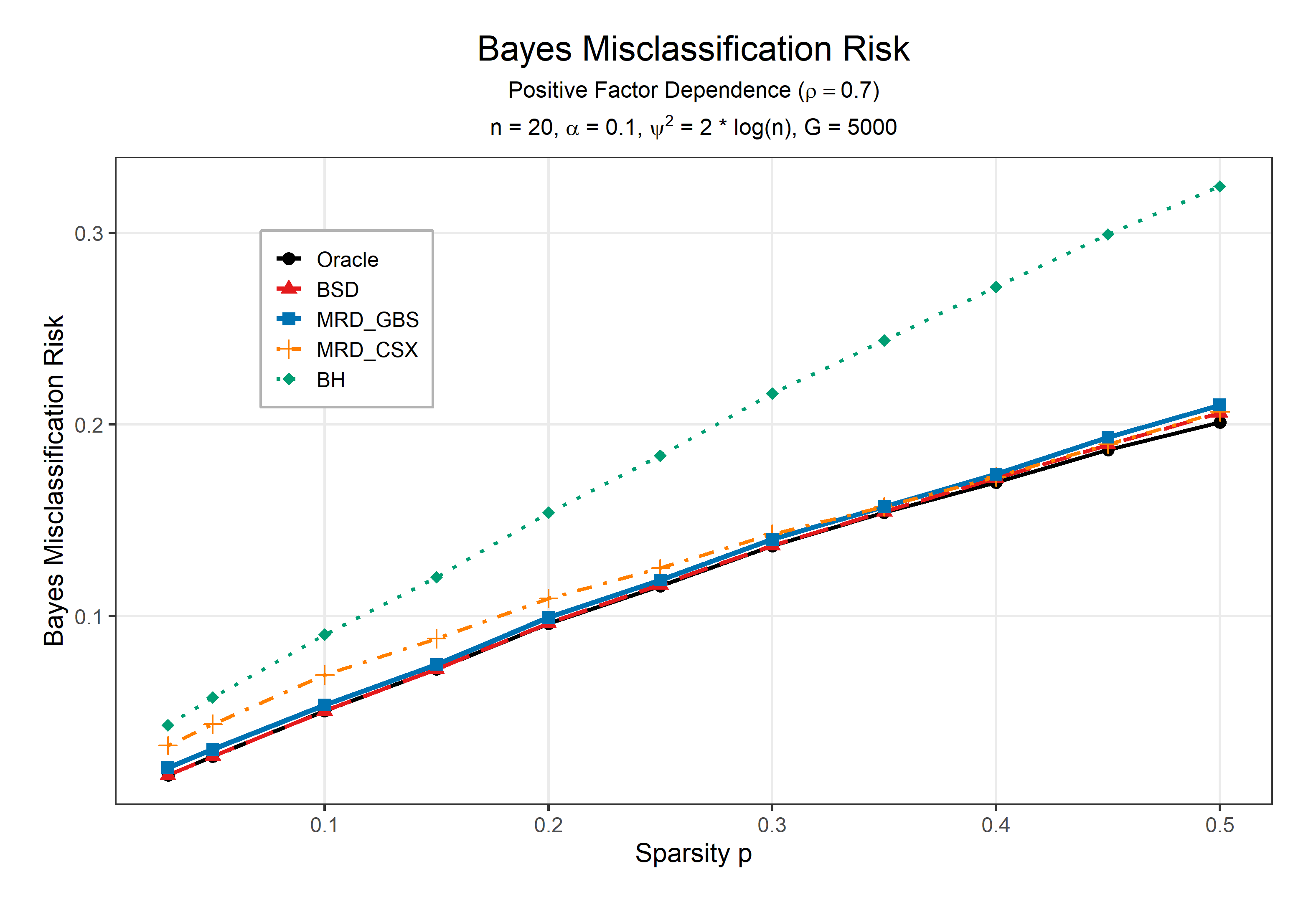}
		\caption{Positive factor}
	\end{subfigure}
	\vspace{0.25cm}
	\begin{subfigure}{0.48\textwidth}
		\centering
		\includegraphics[width=\linewidth]{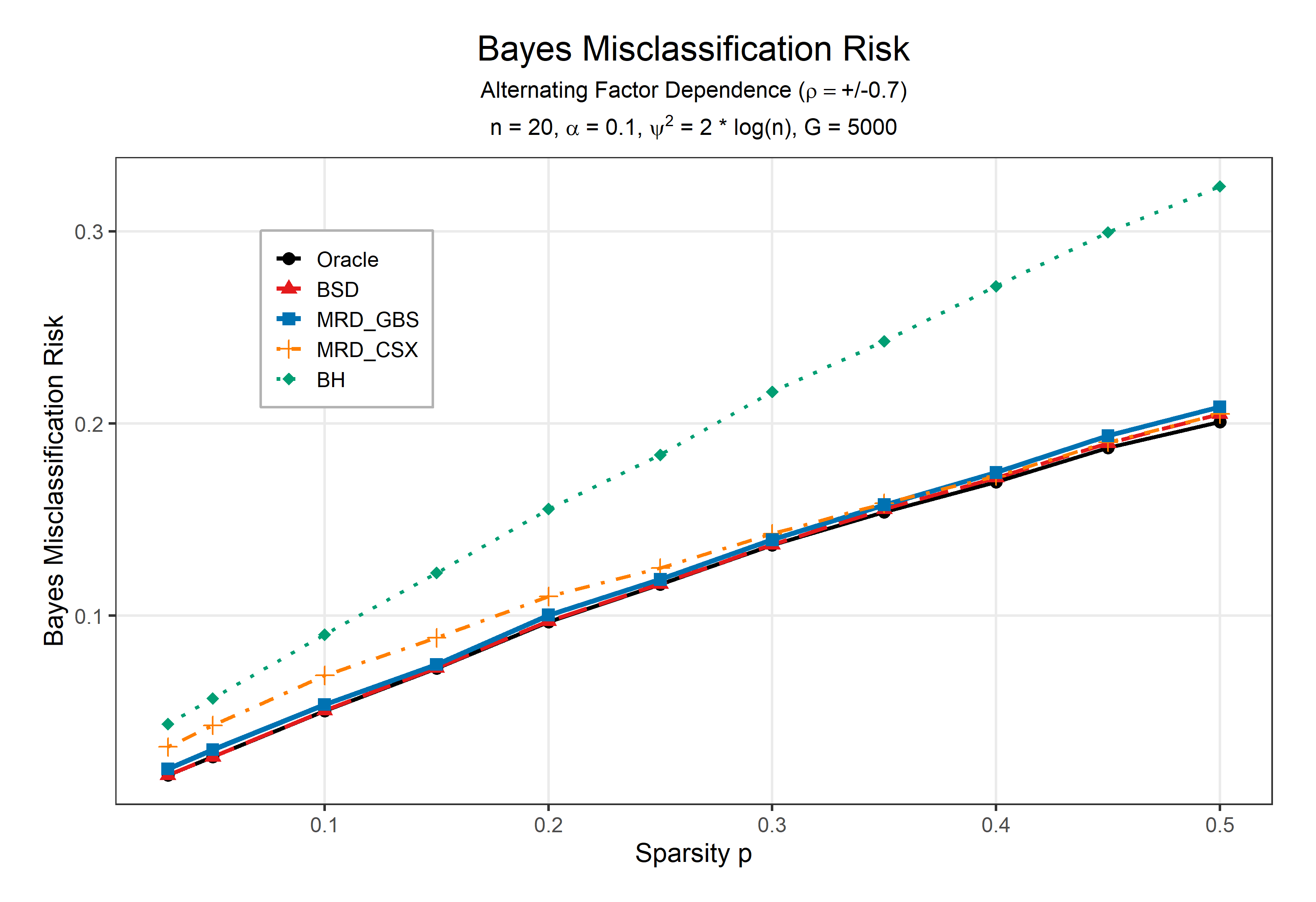}
		\caption{Alternating factor}
	\end{subfigure}
	\begin{subfigure}{0.48\textwidth}
		\centering
		\includegraphics[width=\linewidth]{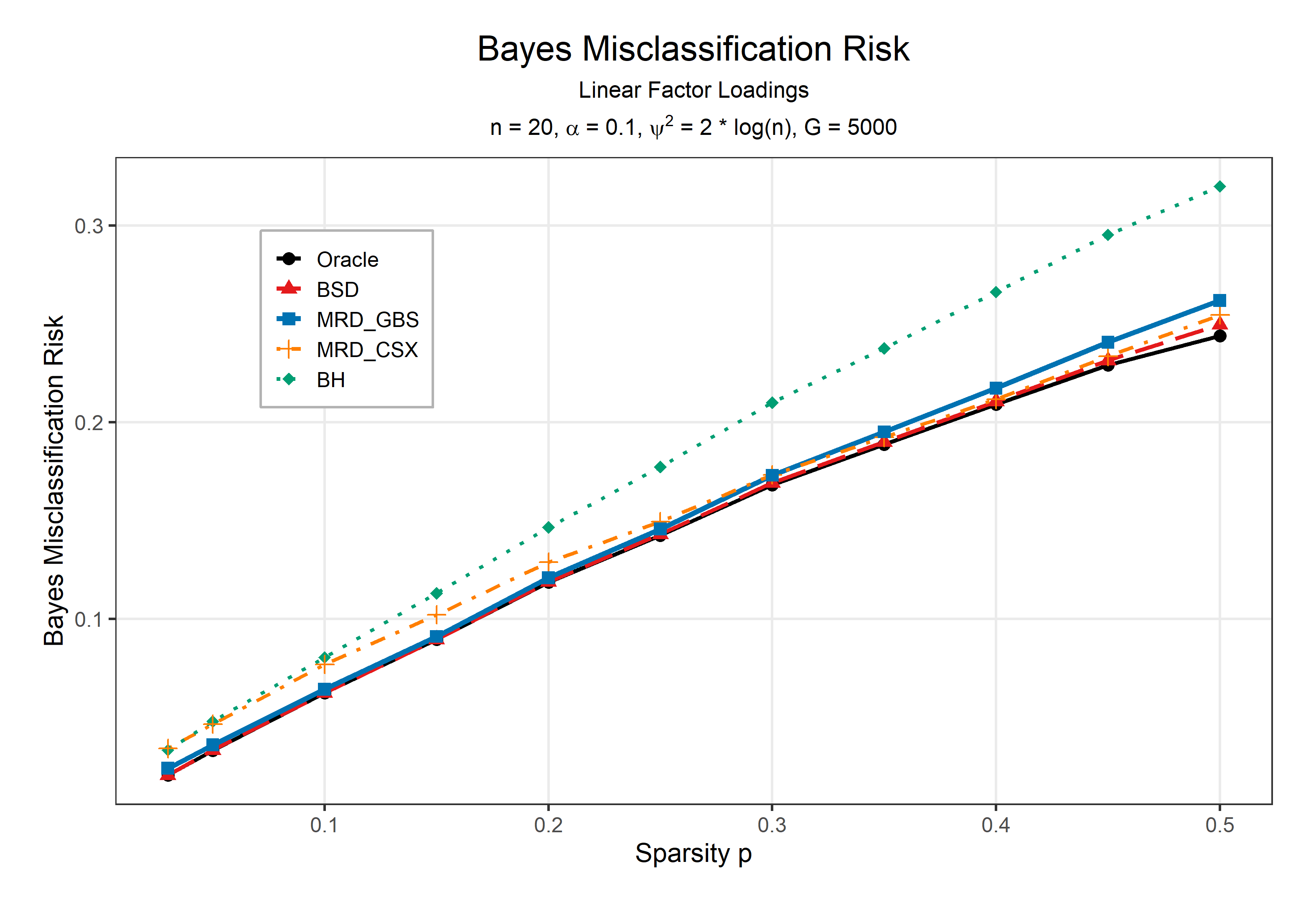}
		\caption{Linear loadings}
	\end{subfigure}
		\vspace{0.25cm}
	\begin{subfigure}{0.48\textwidth}
		\centering
		\includegraphics[width=\linewidth]{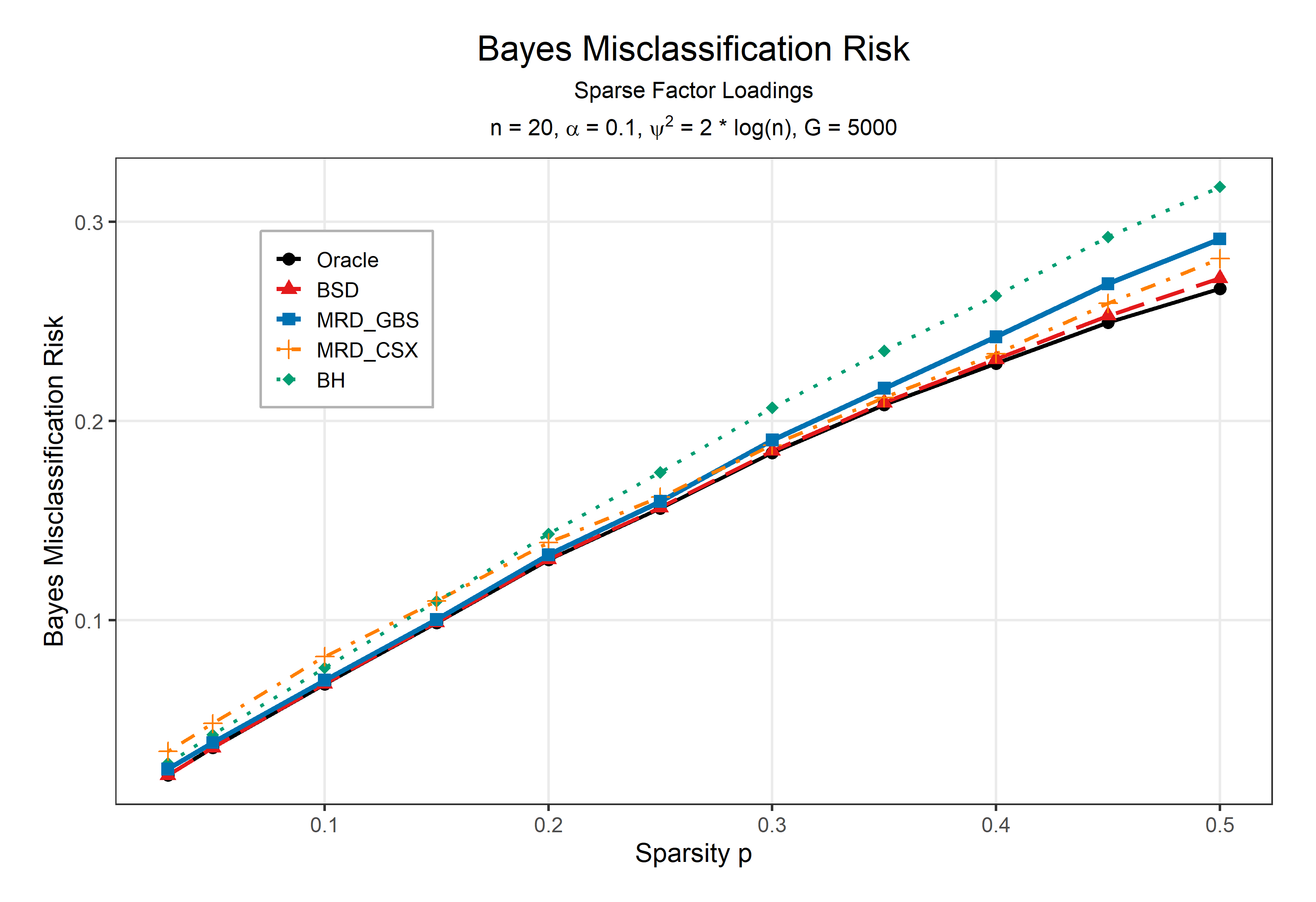}
		\caption{Sparse factor}
	\end{subfigure}
	\begin{subfigure}{0.48\textwidth}
		\centering
		\includegraphics[width=\linewidth]{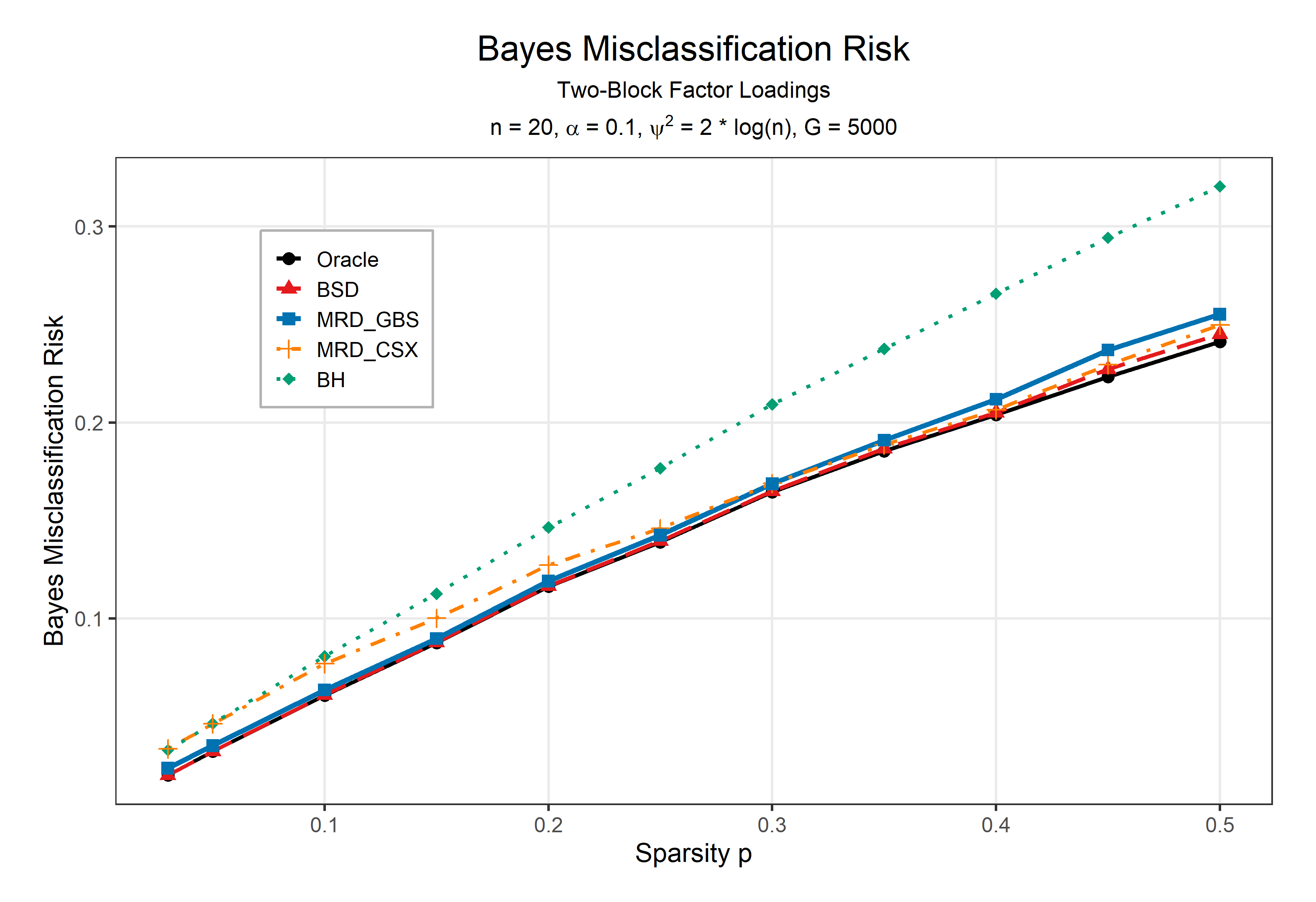}
		\caption{Two-block factor}
	\end{subfigure}
	
	\caption{Bayes misclassification risks of the competing procedures under the six one-factor dependence structures when \(n=20\), \(\alpha=0.1\), \(\psi^2=2\log n\), and \(G=5000\) Monte Carlo replications.}
	\label{fig:onefactor-risk-n20}
\end{figure}

Figure~\ref{fig:onefactor-risk-n20} displays the Bayes misclassification risks of the competing procedures under the six one-factor dependence structures when $n=20$. Several noteworthy patterns emerge. Most strikingly, BSD tracks the Bayes Oracle remarkably closely across all sparsity levels and covariance structures considered. Even in this relatively small-dimensional setting, the BSD risk curves are often visually indistinguishable from their Oracle counterparts, suggesting that the procedure successfully captures much of the information available to the optimal decision rule. A second noteworthy observation concerns the performance of MRD--GBS. Although derived from a fundamentally different methodological framework, its risks remain surprisingly close to those of BSD and the Oracle throughout the experiments. In contrast, MRD--CSX and BH typically incur noticeably larger risks, particularly in the moderate- to high-sparsity regimes.

\begin{figure}[!htbp]
	\centering
	
	\begin{subfigure}{0.48\textwidth}
		\centering
		\includegraphics[width=\linewidth]{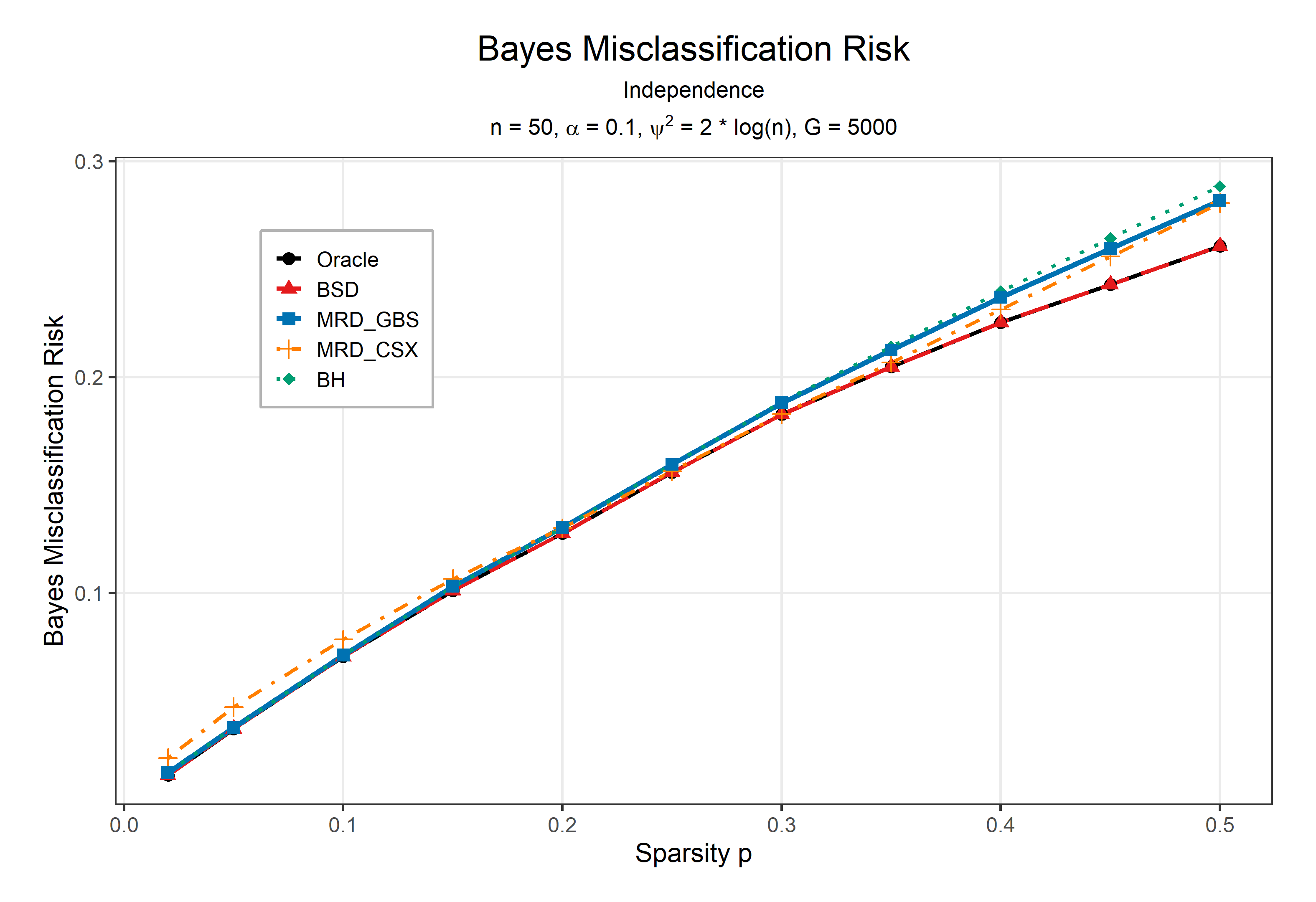}
		\caption{Independence}
	\end{subfigure}
	\begin{subfigure}{0.48\textwidth}
		\centering
		\includegraphics[width=\linewidth]{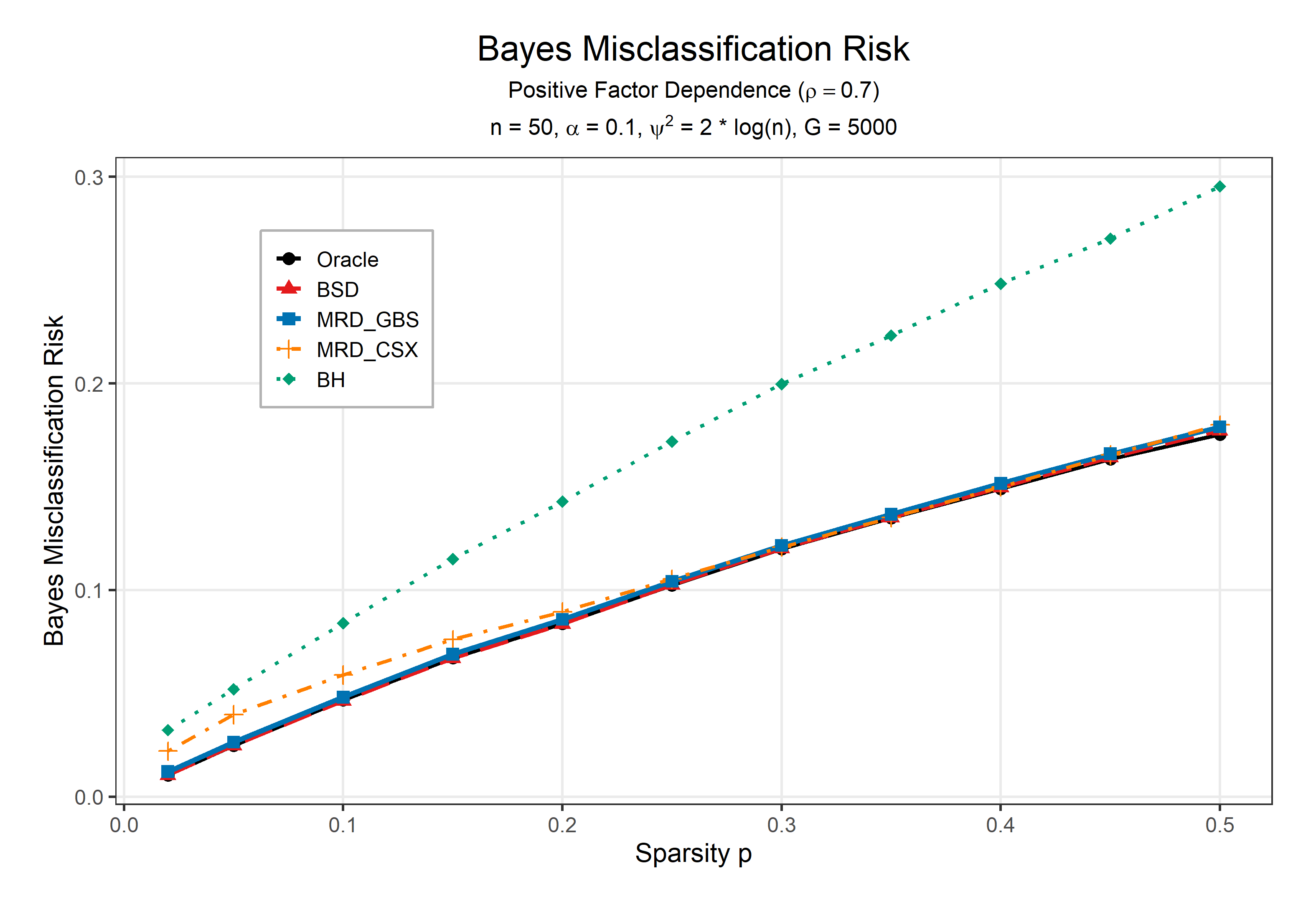}
		\caption{Positive factor}
	\end{subfigure}
	\vspace{0.25cm}
	\begin{subfigure}{0.48\textwidth}
		\centering
		\includegraphics[width=\linewidth]{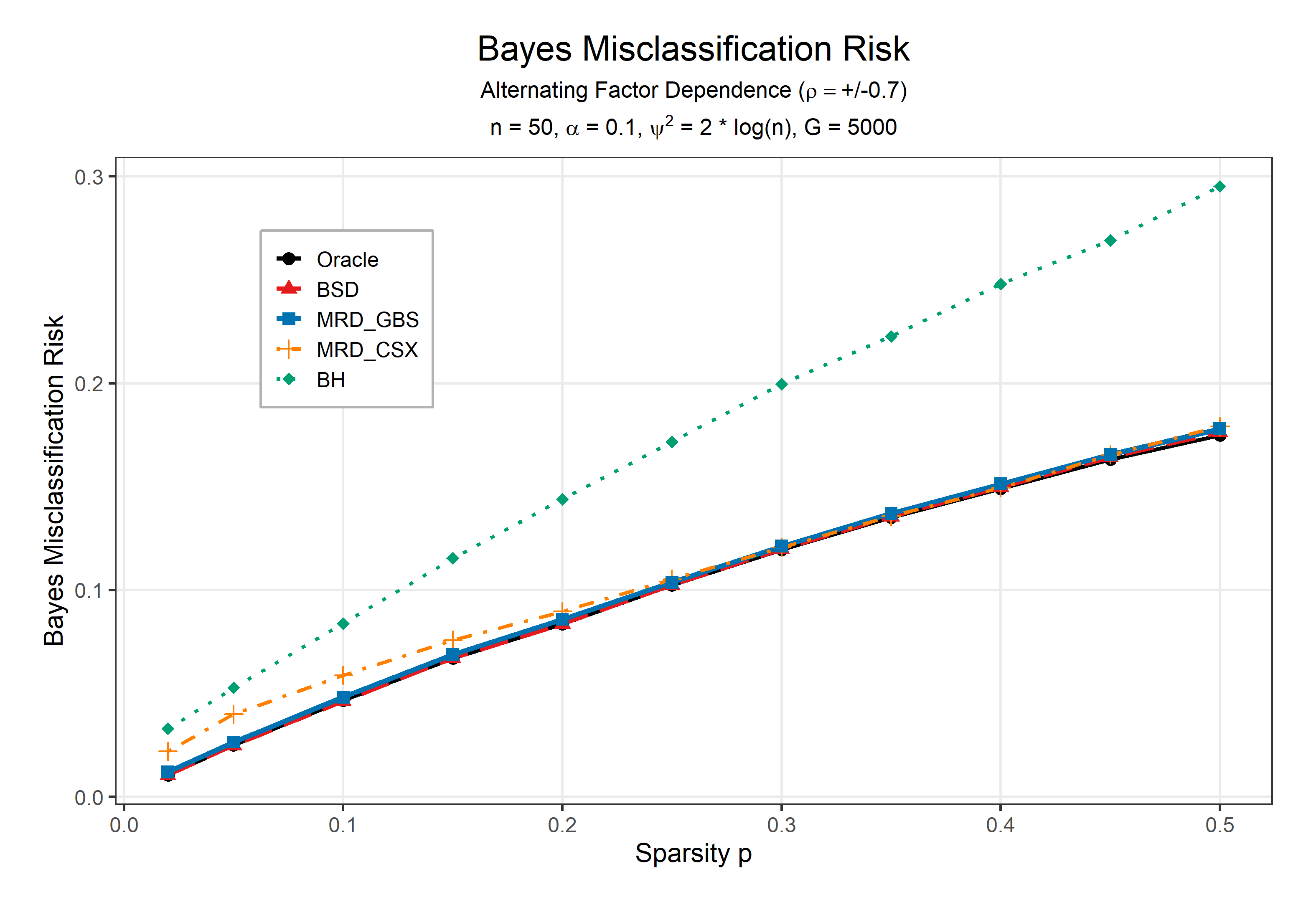}
		\caption{Alternating factor}
	\end{subfigure}
    \begin{subfigure}{0.48\textwidth}
		\centering
		\includegraphics[width=\linewidth]{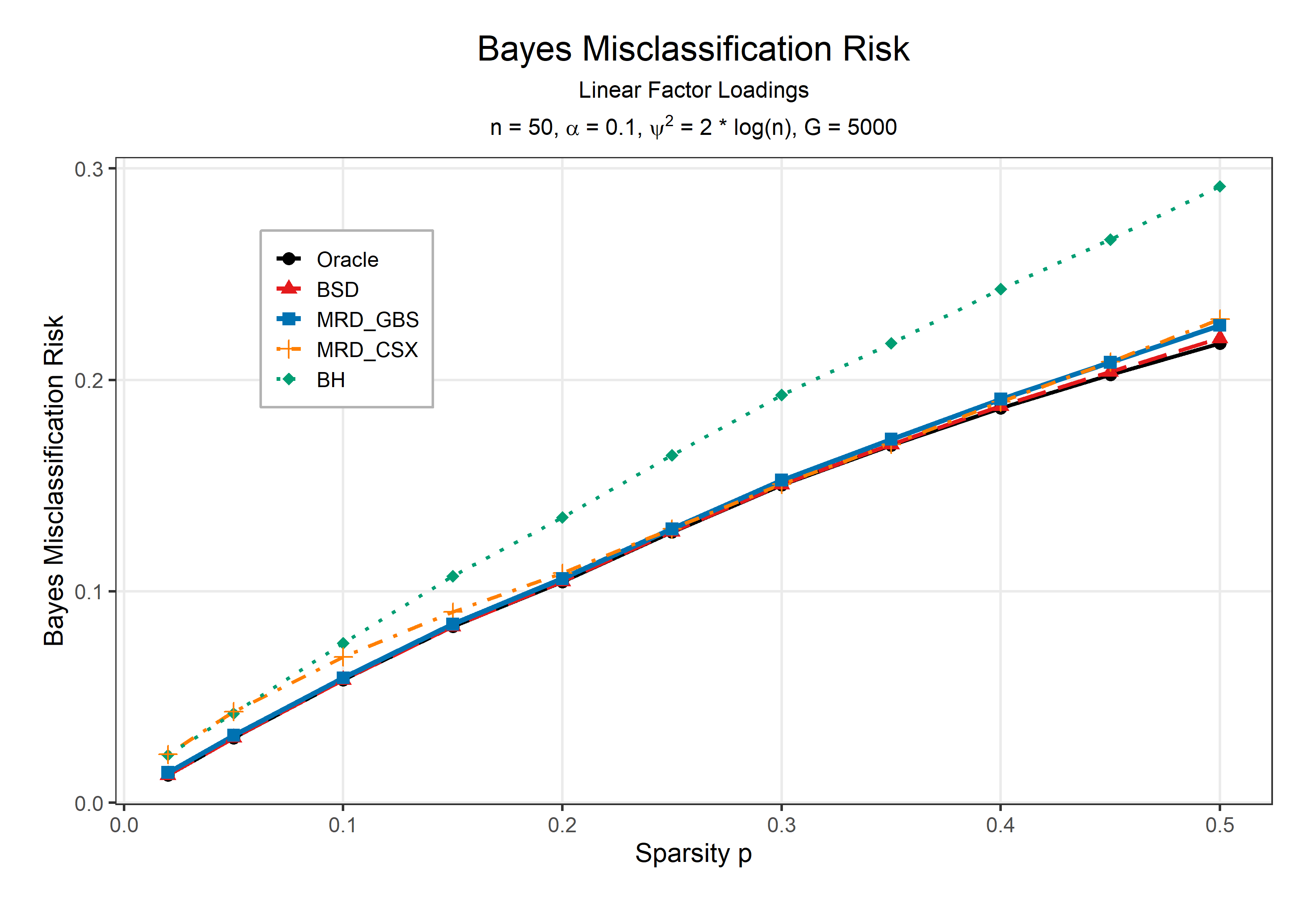}
		\caption{Linear loadings}
	\end{subfigure}
		\vspace{0.25cm}
	\begin{subfigure}{0.48\textwidth}
		\centering
		\includegraphics[width=\linewidth]{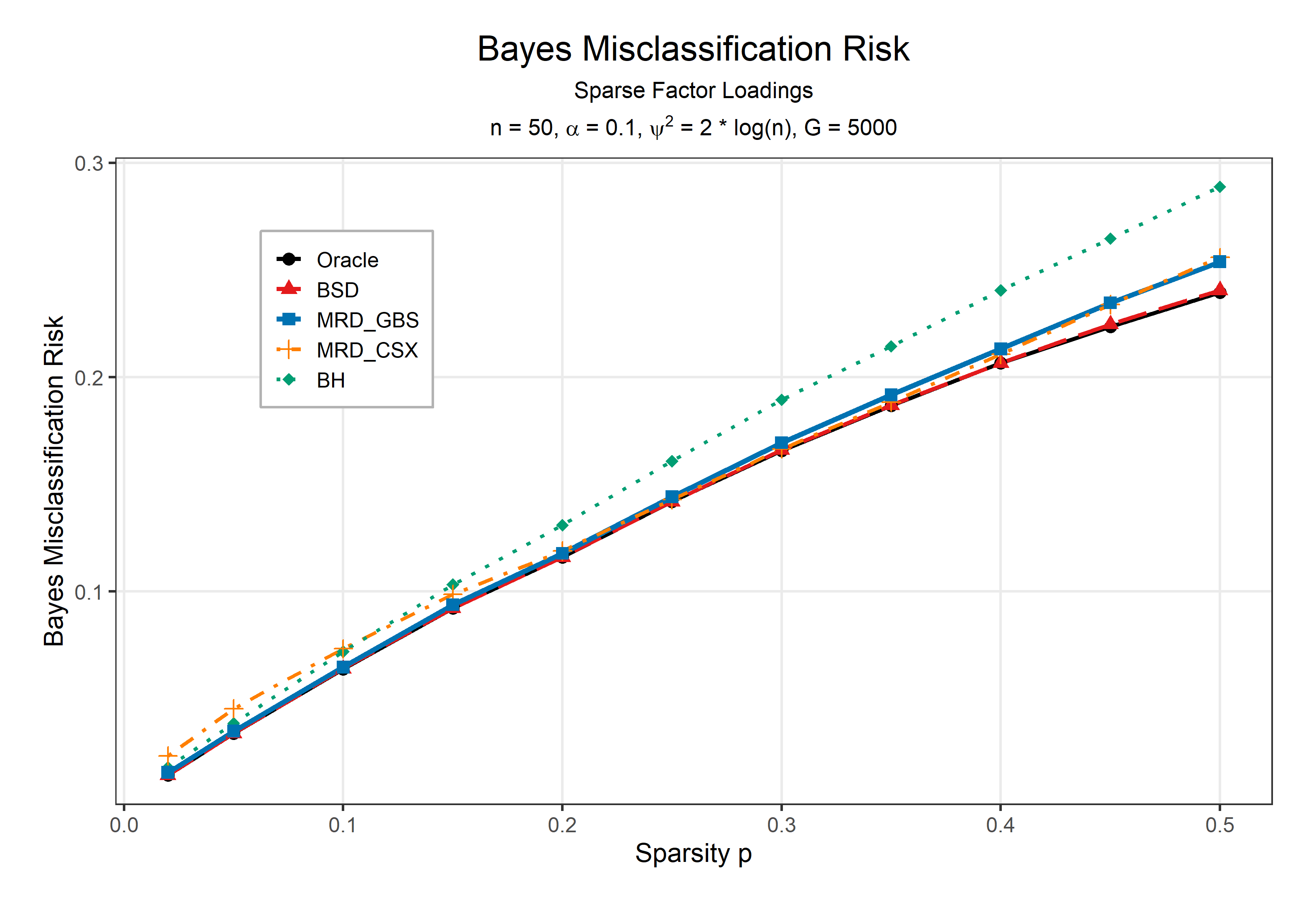}
		\caption{Sparse factor}
	\end{subfigure}
	\begin{subfigure}{0.48\textwidth}
		\centering
		\includegraphics[width=\linewidth]{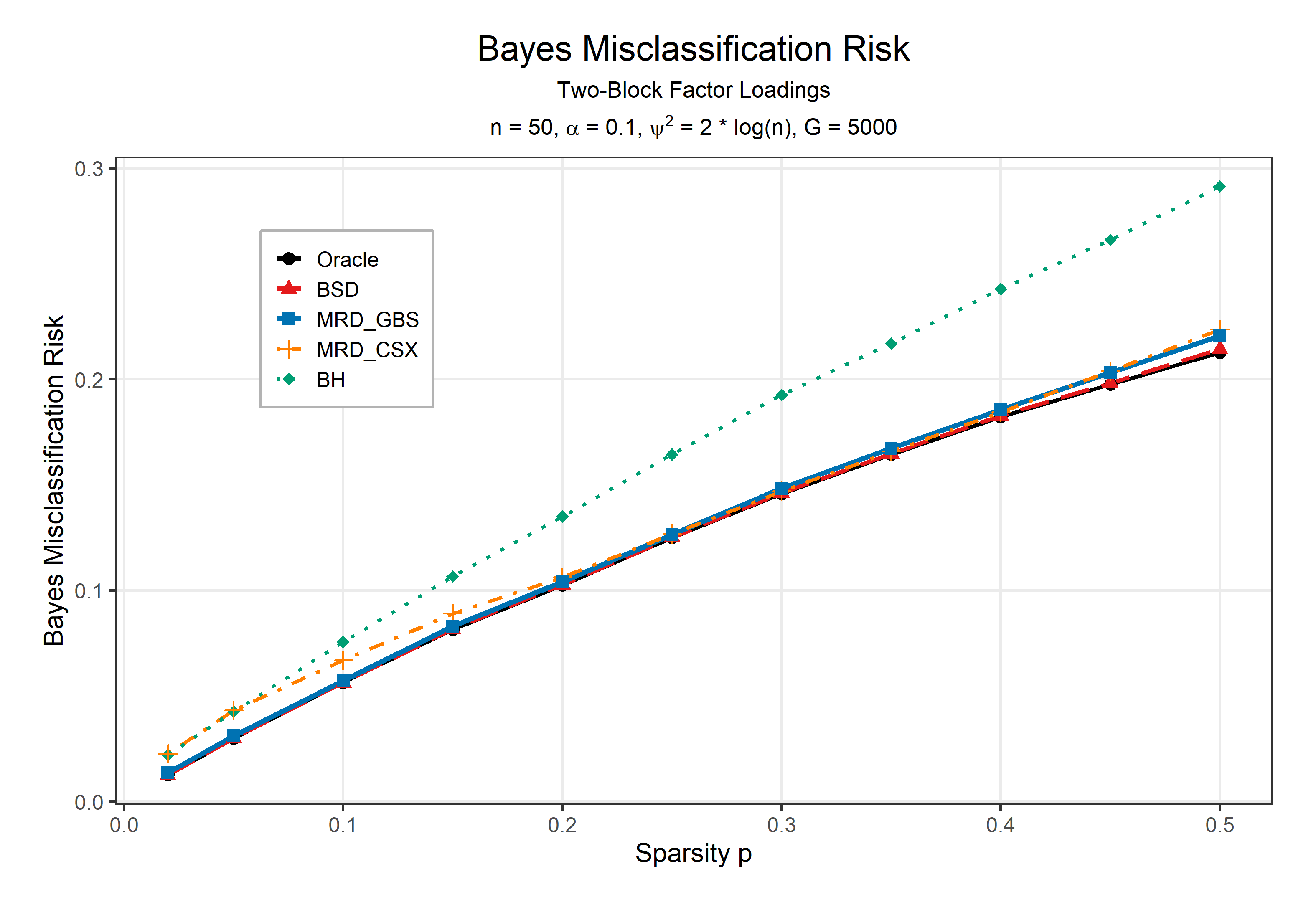}
		\caption{Two-block factor}
	\end{subfigure}
	
	\caption{Bayes misclassification risks of the competing procedures under the six one-factor dependence structures when \(n=50\), \(\alpha=0.1\), \(\psi^2=2\log n\), and \(G=5000\) Monte Carlo replications.}
	\label{fig:onefactor-risk-n50}
\end{figure}

The corresponding results for $n=50$ are presented in Figure~\ref{fig:onefactor-risk-n50}. The overall qualitative conclusions remain unchanged, but the agreement between BSD and the Bayes Oracle becomes even more pronounced. Across all six dependence structures, the BSD risk curves continue to lie extremely close to the Bayes Oracle over the entire range of sparsity levels considered. At the same time, MRD--GBS remains the strongest frequentist competitor and consistently outperforms both MRD--CSX and BH. The persistence of these patterns across dimensions provides strong evidence that the close agreement between BSD and the Oracle is not merely a small-sample phenomenon.

\begin{figure}[!htbp]
	\centering
	
	\begin{subfigure}{0.48\textwidth}
		\centering
		\includegraphics[width=\linewidth]{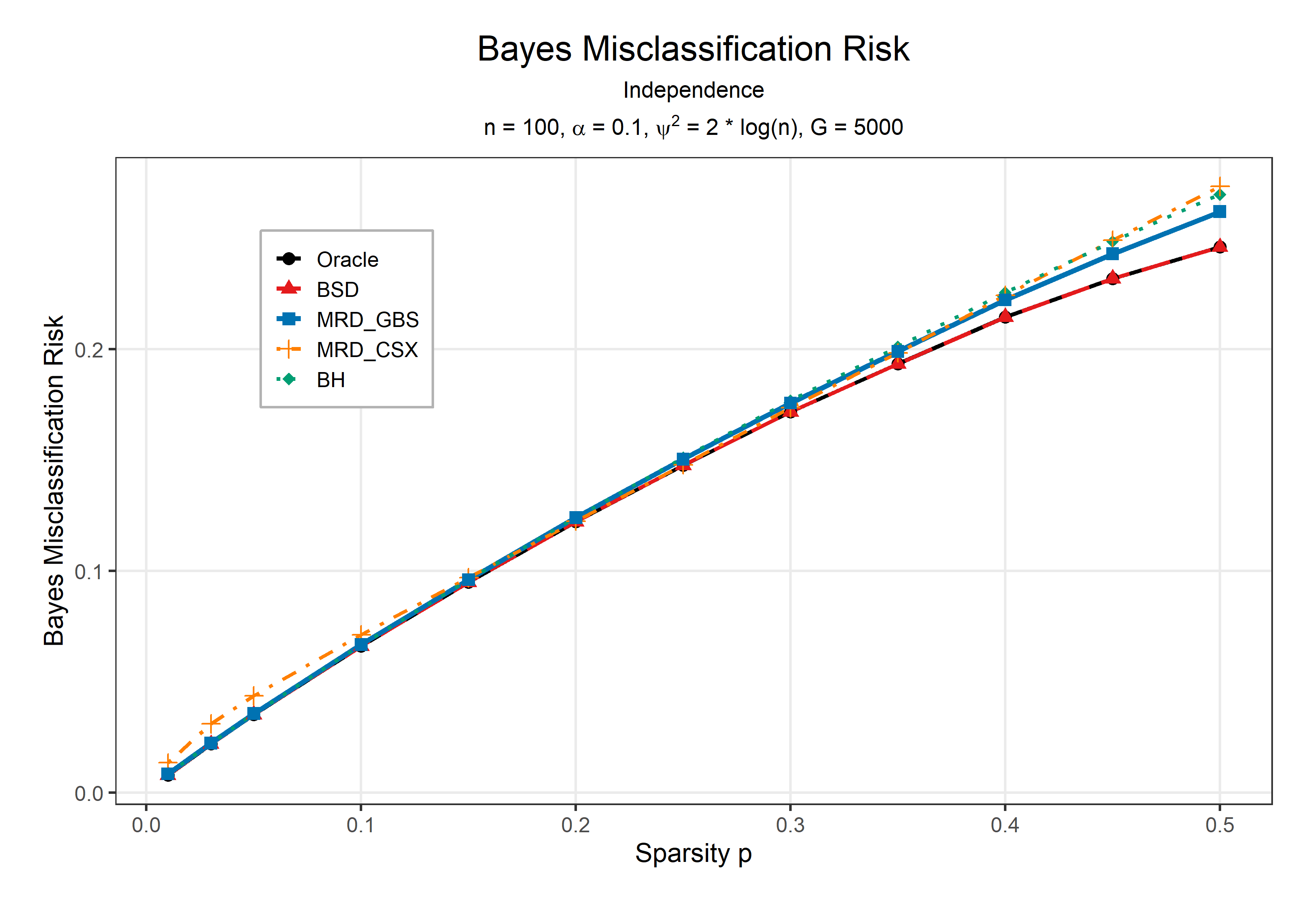}
		\caption{Independence}
	\end{subfigure}
	\begin{subfigure}{0.48\textwidth}
		\centering
		\includegraphics[width=\linewidth]{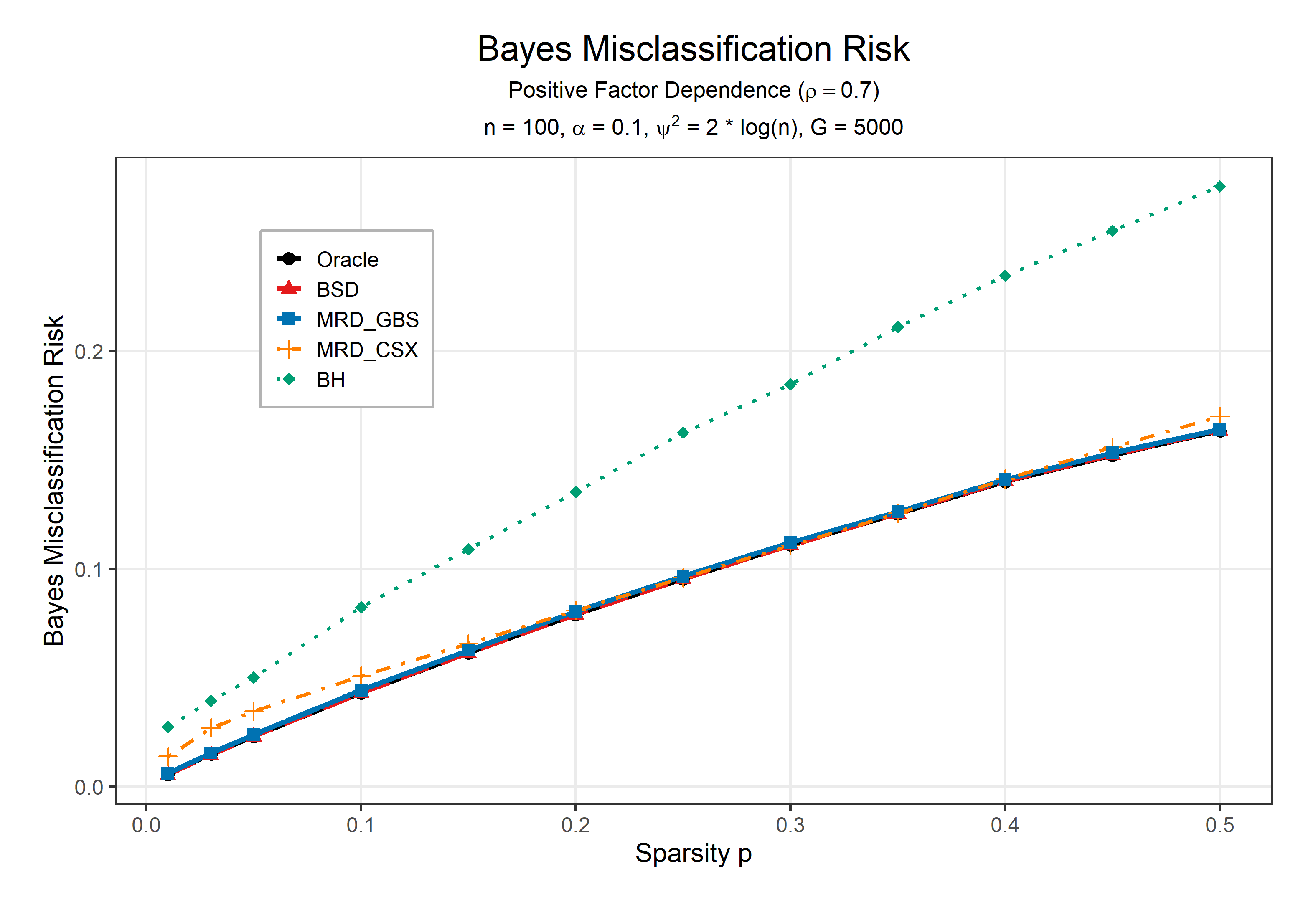}
		\caption{Positive factor}
	\end{subfigure}
		\vspace{0.25cm}
	\begin{subfigure}{0.48\textwidth}
		\centering
		\includegraphics[width=\linewidth]{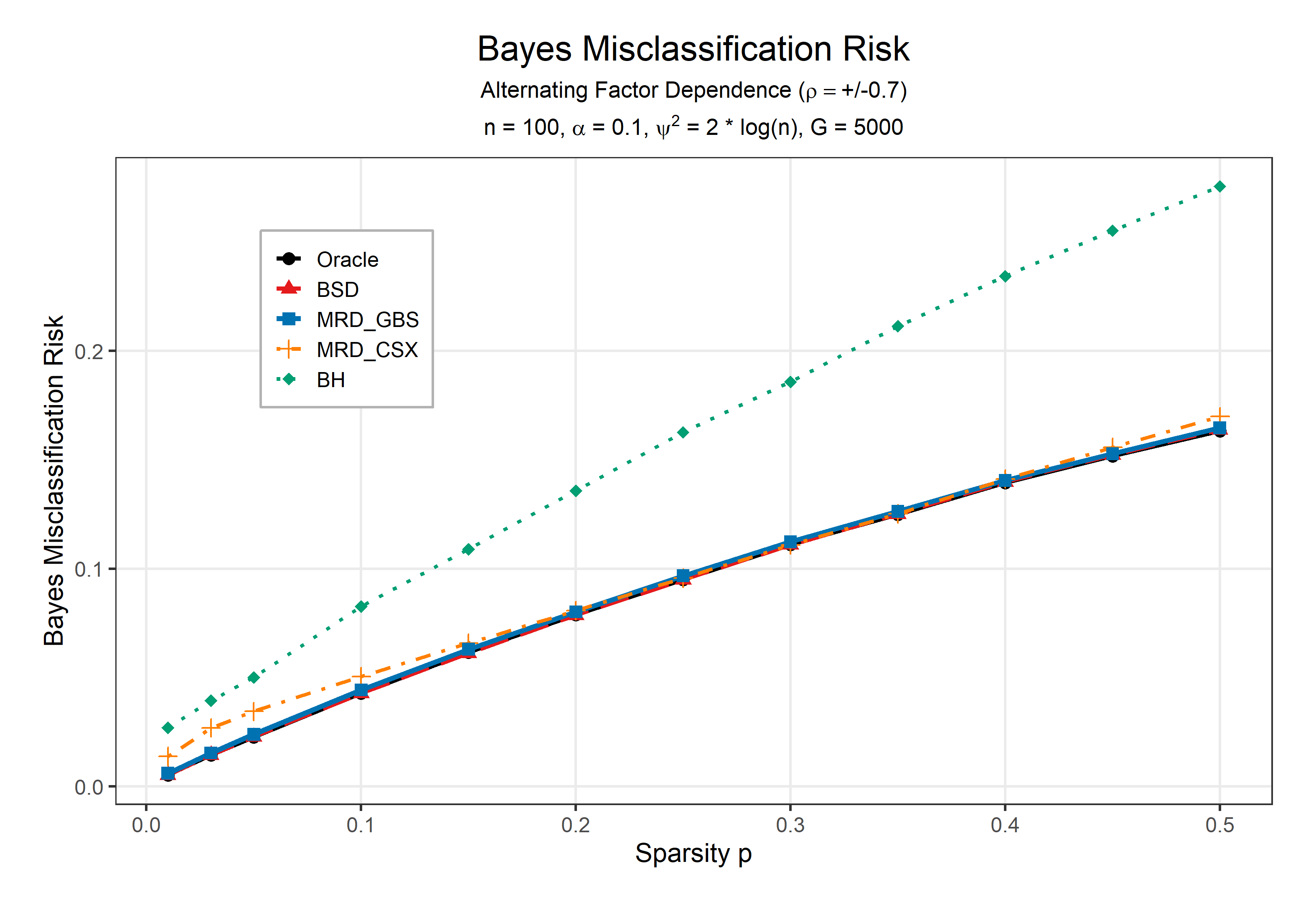}
		\caption{Alternating factor}
	\end{subfigure}	
	\begin{subfigure}{0.48\textwidth}
		\centering
		\includegraphics[width=\linewidth]{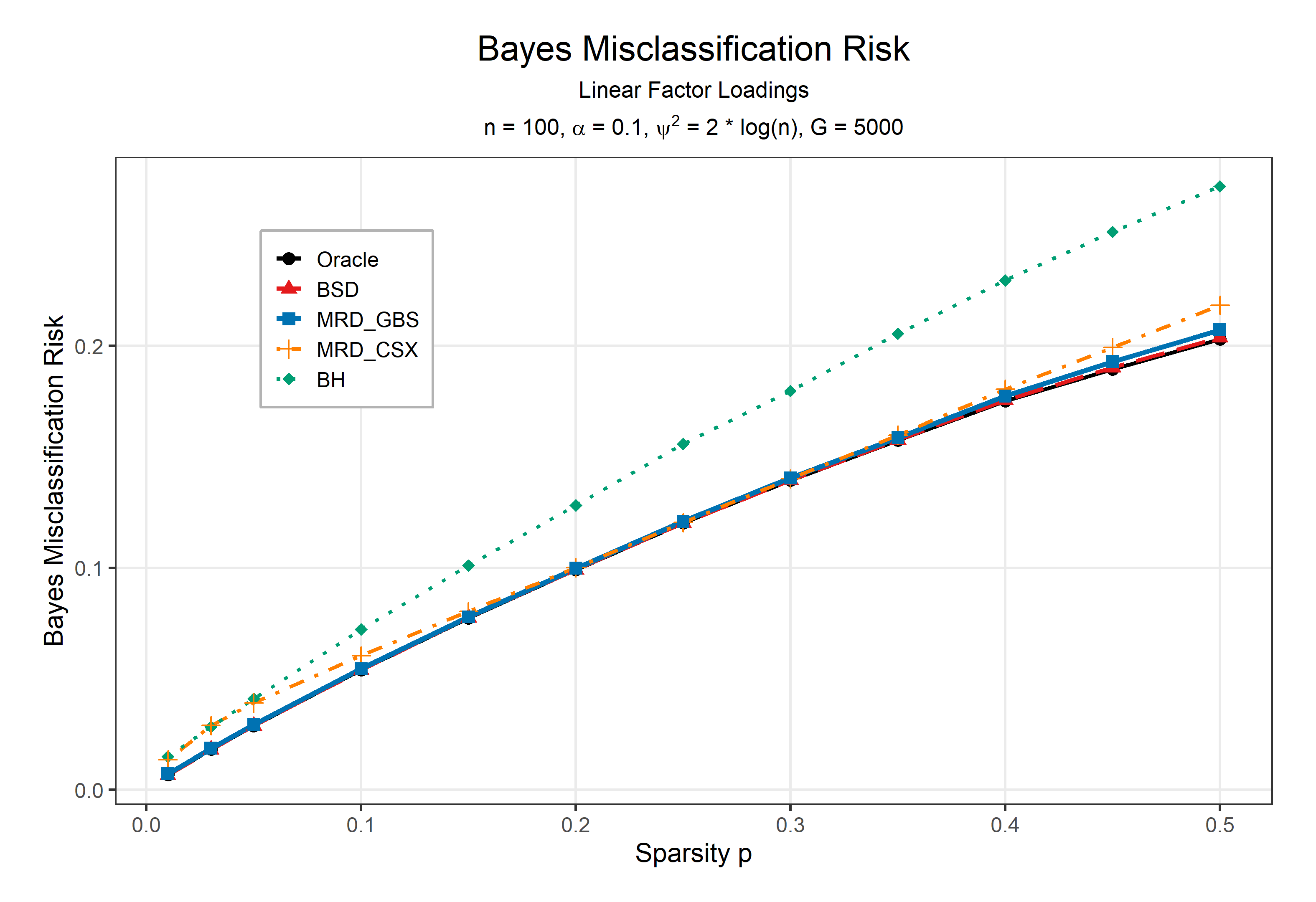}
		\caption{Linear loadings}
	\end{subfigure}
		\vspace{0.25cm}
	\begin{subfigure}{0.48\textwidth}
		\centering
		\includegraphics[width=\linewidth]{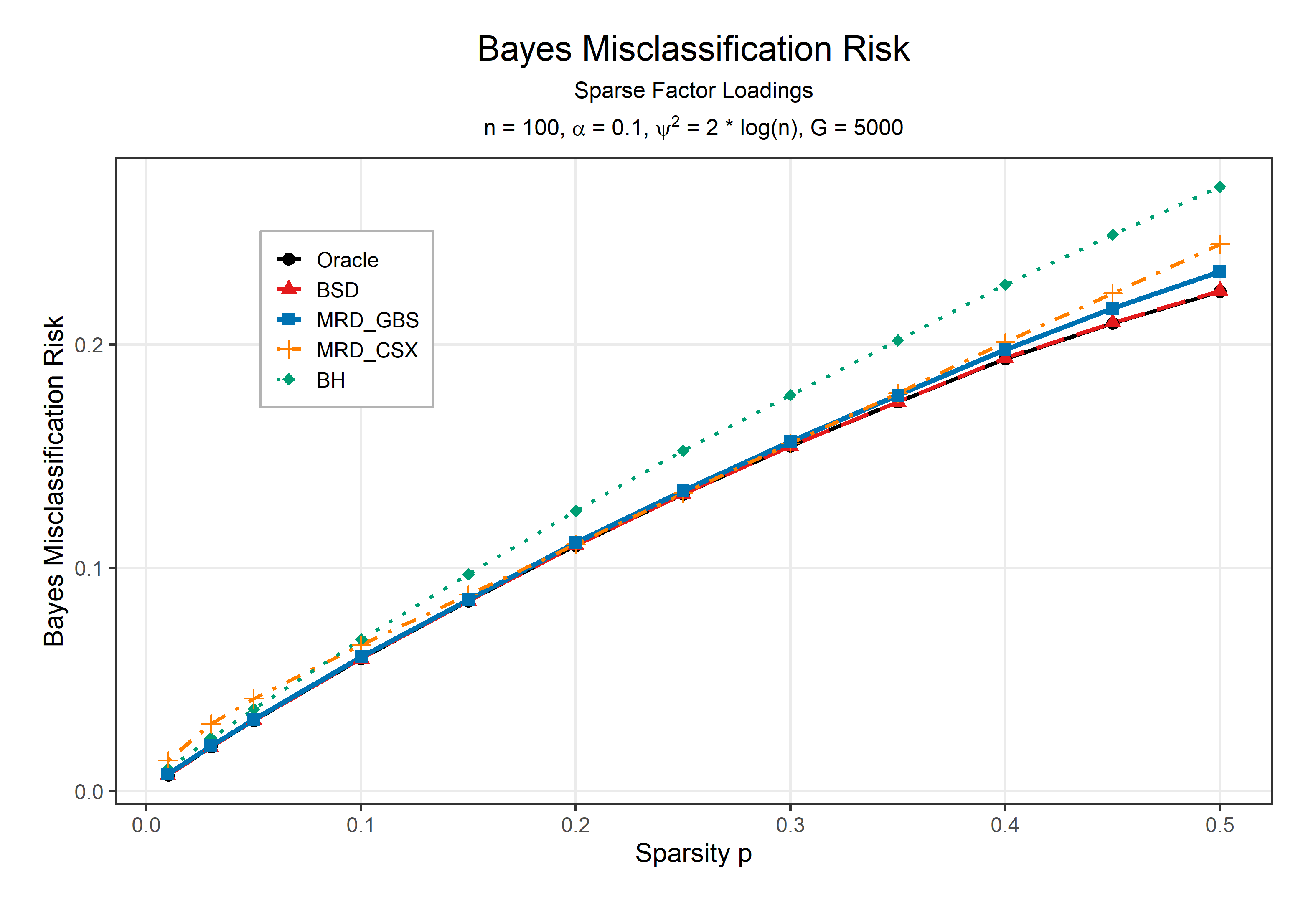}
		\caption{Sparse factor}
	\end{subfigure}
	\begin{subfigure}{0.48\textwidth}
		\centering
		\includegraphics[width=\linewidth]{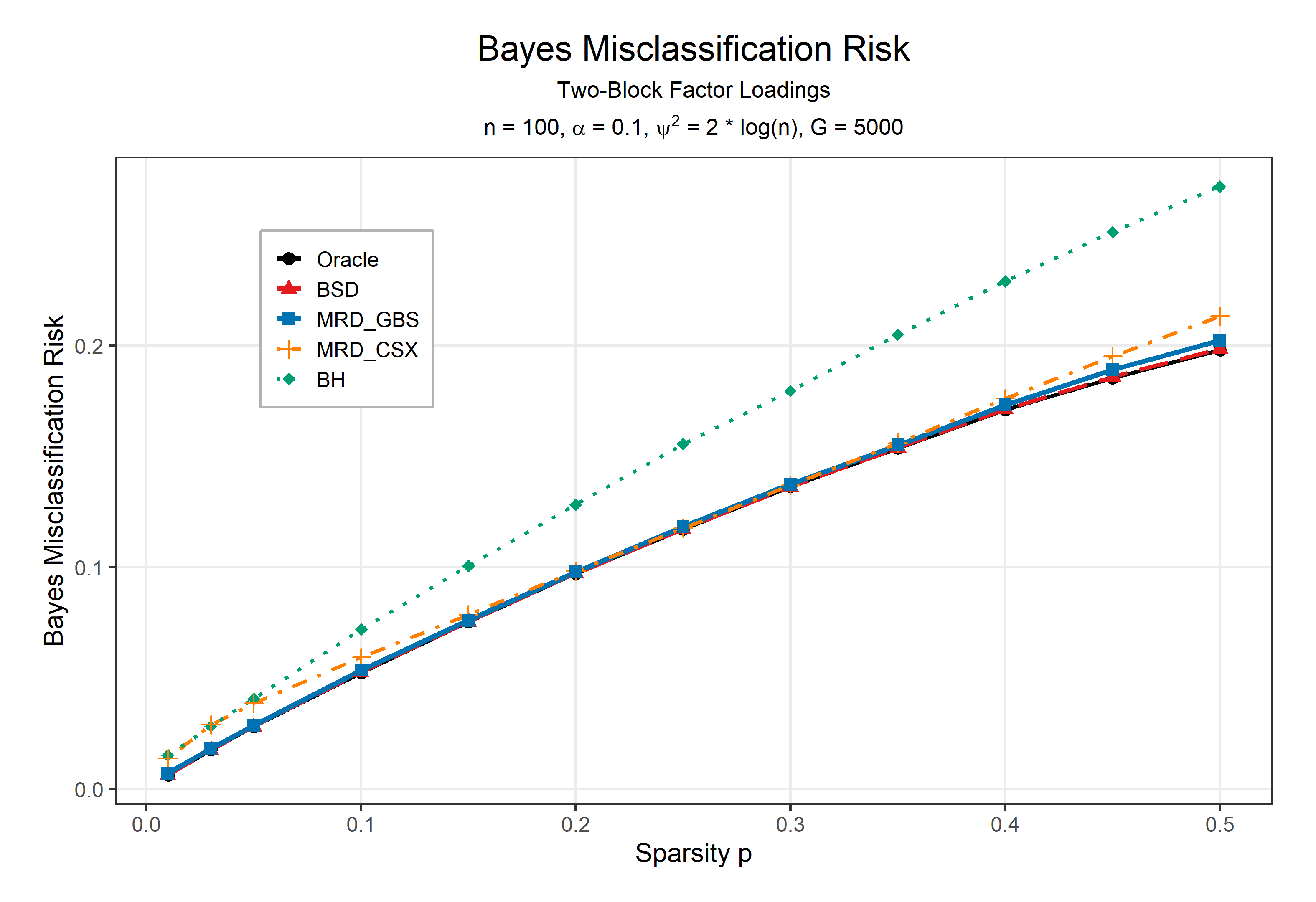}
		\caption{Two-block factor}
	\end{subfigure}
	
	\caption{Bayes misclassification risks of the competing procedures under the six one-factor dependence structures when \(n=100\), \(\alpha=0.1\), \(\psi^2=2\log n\), and \(G=5000\) Monte Carlo replications.}
	\label{fig:onefactor-risk-n100}
\end{figure}
Figure~\ref{fig:onefactor-risk-n100} further reinforces the preceding observations. Across all six covariance structures, BSD remains virtually indistinguishable from the Bayes Oracle, with excess risks that are negligible relative to the overall magnitude of the Bayes risk despite the increasing complexity of the underlying dependence patterns. Equally noteworthy is the behavior of MRD--GBS, whose risks frequently remain very close to the Bayes Oracle and are substantially smaller than those of MRD--CSX and BH. Taken together, the results in Figures~\ref{fig:onefactor-risk-n20}--\ref{fig:onefactor-risk-n100} provide compelling empirical evidence that BSD is capable of reproducing Oracle behavior with remarkable accuracy under a broad range of dependence structures. They also suggest that MRD--GBS captures many of the same signal-recovery mechanisms as the Oracle and BSD procedures despite arising from a fundamentally different methodological framework.

\begin{table}[ht]
	\centering
\caption{Average Bayes misclassification risks across all sparsity levels under the six one-factor dependence structures when $n=100$.}
	\label{tab:avg-risk-n100}
	\begin{tabular}{lccccc}
		\hline
		Dependence Structure & Oracle & BSD & MRD--GBS & MRD--CSX & BH \\
		\hline
		Independence       & 0.1295 & 0.1295 & 0.1339 & 0.1372 & 0.1356 \\
		Positive Factor    & 0.0844 & 0.0844 & 0.0855 & 0.0893 & 0.1474 \\
		Alternating Factor & 0.0842 & 0.0843 & 0.0855 & 0.0893 & 0.1474 \\
		Linear Loadings    & 0.1058 & 0.1059 & 0.1071 & 0.1118 & 0.1400 \\
		Sparse Factor      & 0.1168 & 0.1169 & 0.1193 & 0.1237 & 0.1366 \\
		Two-Block Factor   & 0.1032 & 0.1033 & 0.1047 & 0.1094 & 0.1398 \\
		\hline
	\end{tabular}
\end{table}
Table~\ref{tab:avg-risk-n100} further quantifies the close agreement between BSD and the Bayes Oracle. Across all six dependence structures, the average Bayes misclassification risks of BSD are nearly identical to those of the Oracle, with discrepancies appearing only beyond the third decimal place in most cases. MRD--GBS also exhibits remarkably competitive performance and consistently remains much closer to the Bayes Oracle than either MRD--CSX or BH. Particularly noteworthy is the behavior under the positive-factor, alternating-factor, linear-loading, sparse-factor, and two-block-factor models, where BSD effectively reproduces the Oracle risk. These numerical summaries reinforce the visual impression conveyed by the risk profiles and provide additional evidence that BSD successfully captures the essential dependence structure underlying Oracle behavior.

While Bayes misclassification risk provides a comprehensive summary of overall
decision quality, it does not reveal how the competing procedures balance false
discoveries against missed signals. To better understand the mechanisms driving
the observed risk differences, we next examine the false discovery rates of the
competing procedures under the same collection of one-factor dependence structures.

\subsubsection{False Discovery Rates}

We next examine the false discovery rate (FDR) behavior of the competing procedures. Since both BSD and the Bayes Oracle are derived from a Bayesian classification framework and were not designed to explicitly control the FDR, it is of particular interest to investigate the false discovery behavior induced by their Bayes-risk optimality considerations. In contrast, the BH and MRD--GBS procedures are calibrated through mechanisms that directly target false discovery control. Figures~\ref{fig:fdr-n20}--\ref{fig:fdr-n100} display the empirical FDRs across the six dependence structures and the three dimensions considered in our study.

\begin{figure}[p]
	\centering
	
	\begin{subfigure}{0.48\textwidth}
		\centering
		\includegraphics[width=\linewidth]{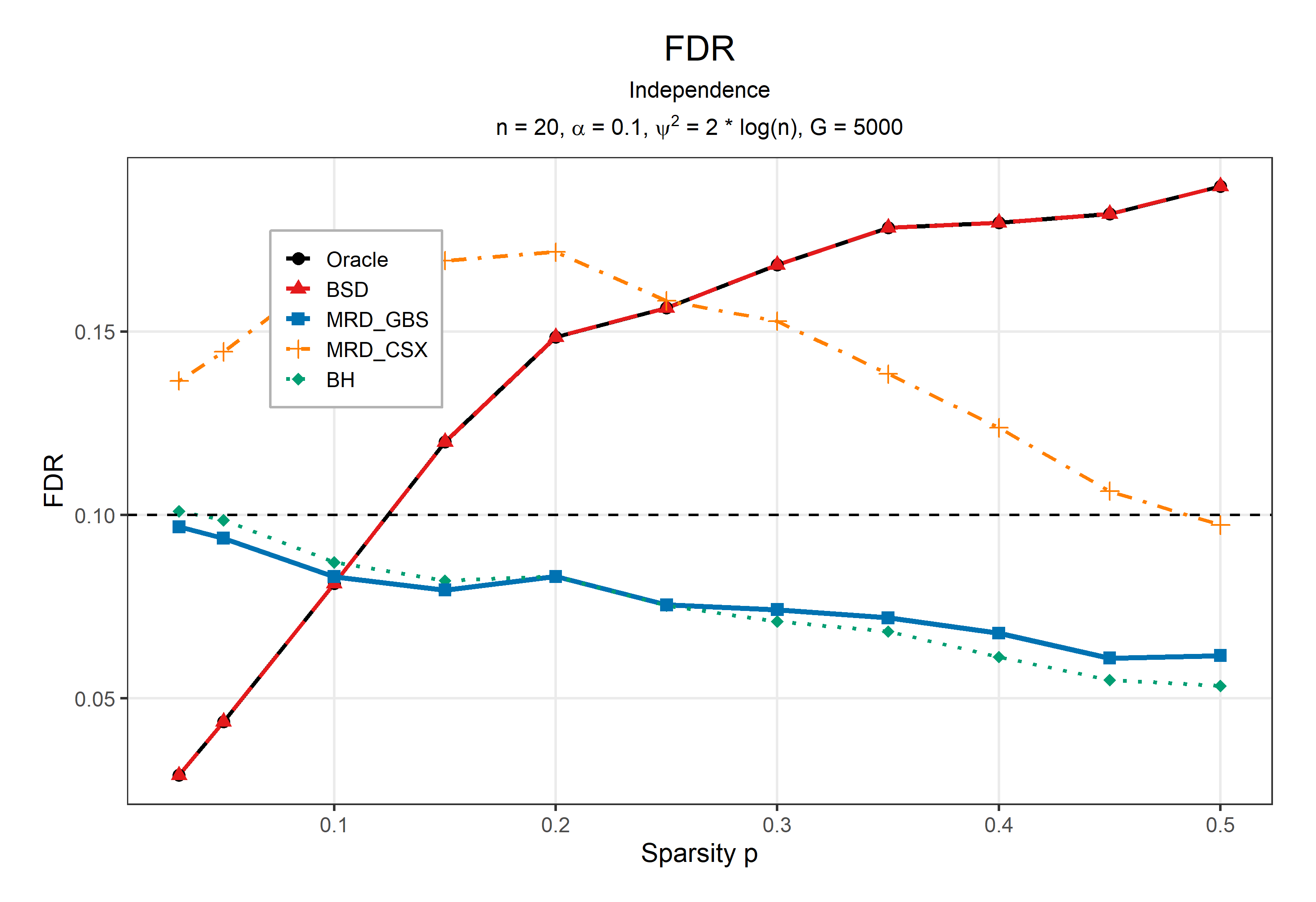}
		\caption{Independence}
	\end{subfigure}
	\hfill
	\begin{subfigure}{0.48\textwidth}
		\centering
		\includegraphics[width=\linewidth]{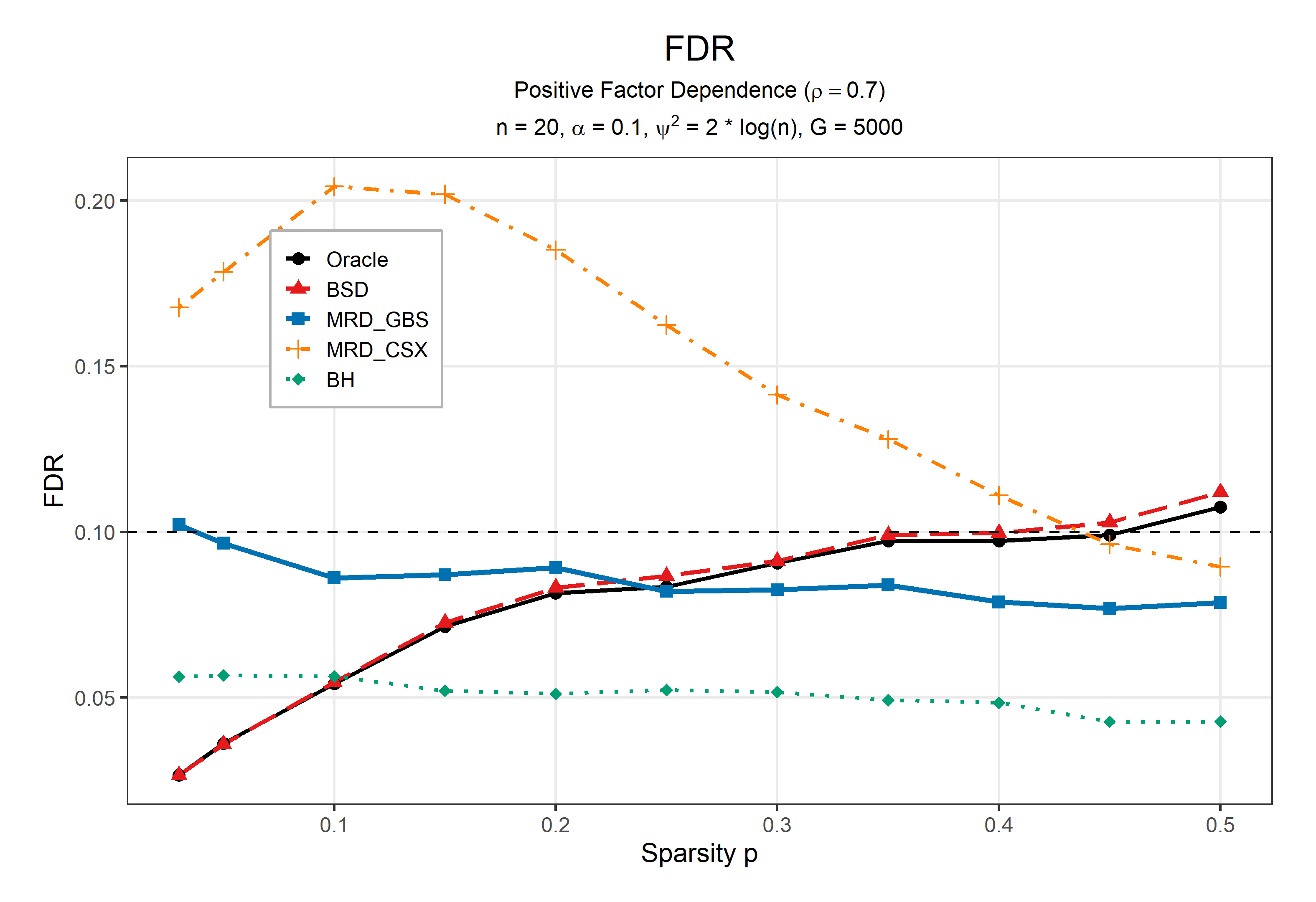}
		\caption{Positive Factor Dependence}
	\end{subfigure}
	
	\vspace{0.3cm}
	
	\begin{subfigure}{0.48\textwidth}
		\centering
		\includegraphics[width=\linewidth]{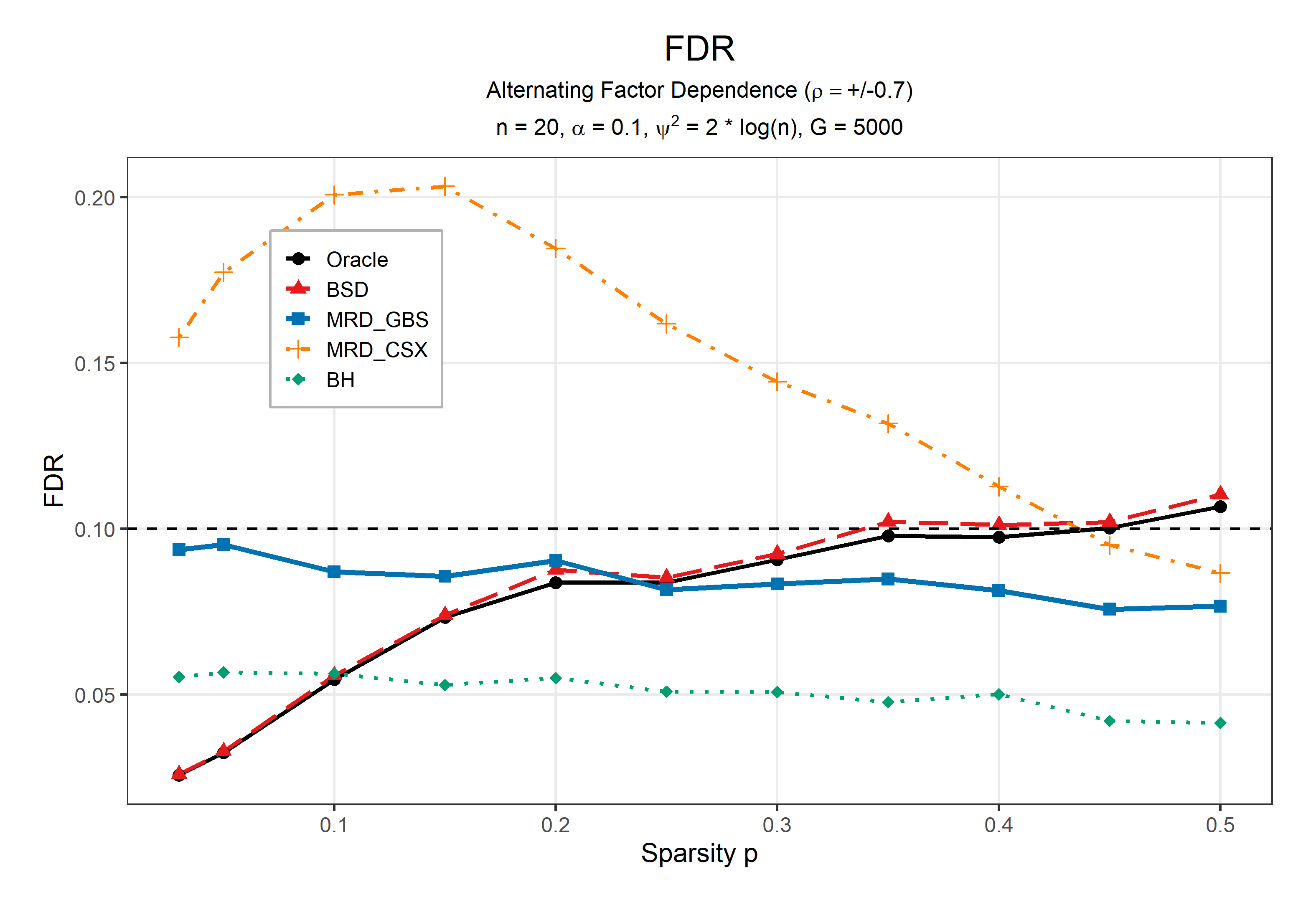}
		\caption{Alternating Factor Dependence}
	\end{subfigure}
	\hfill
	\begin{subfigure}{0.48\textwidth}
		\centering
		\includegraphics[width=\linewidth]{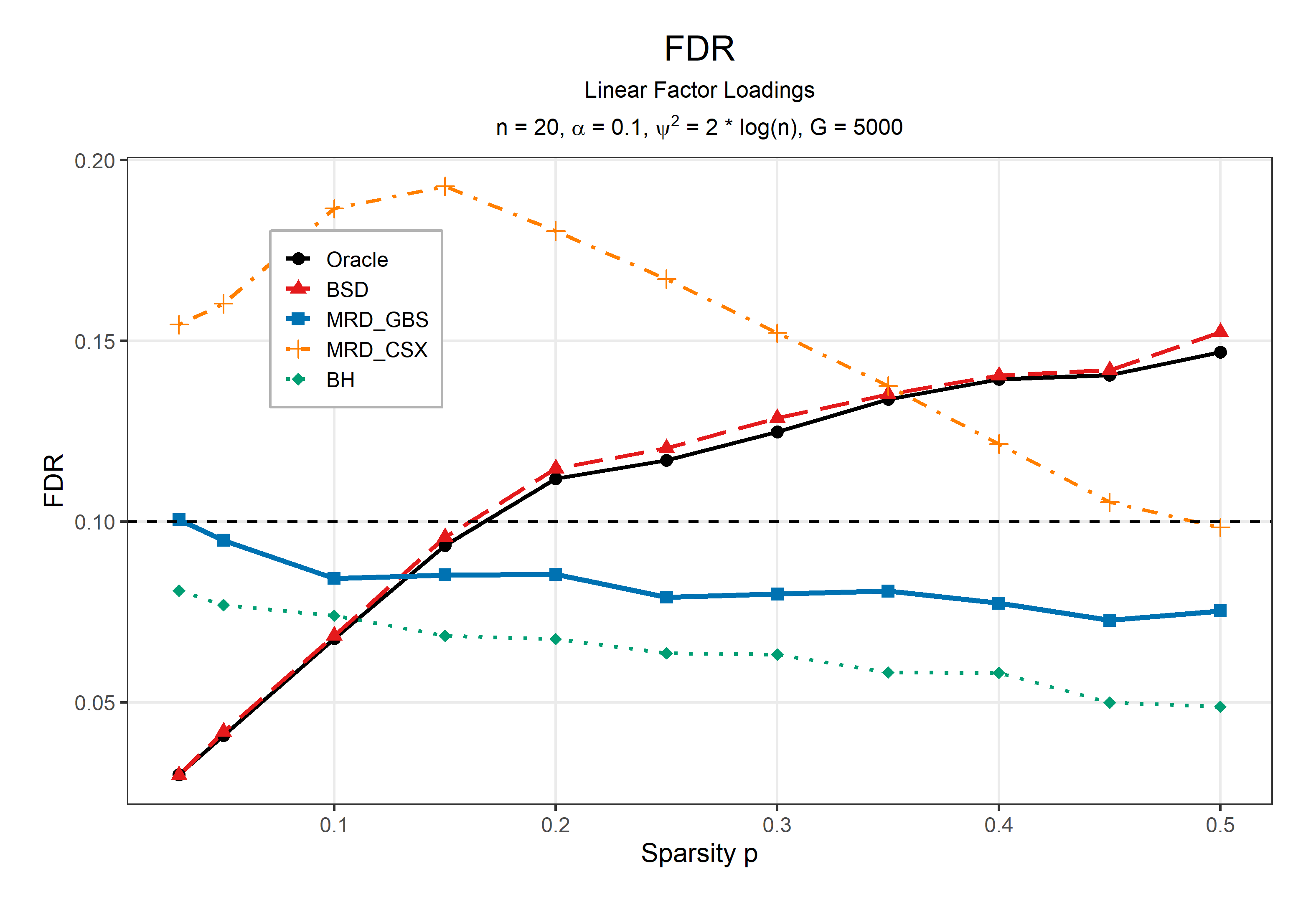}
		\caption{Linear Loadings}
	\end{subfigure}
	
	\vspace{0.3cm}
	
	\begin{subfigure}{0.48\textwidth}
		\centering
		\includegraphics[width=\linewidth]{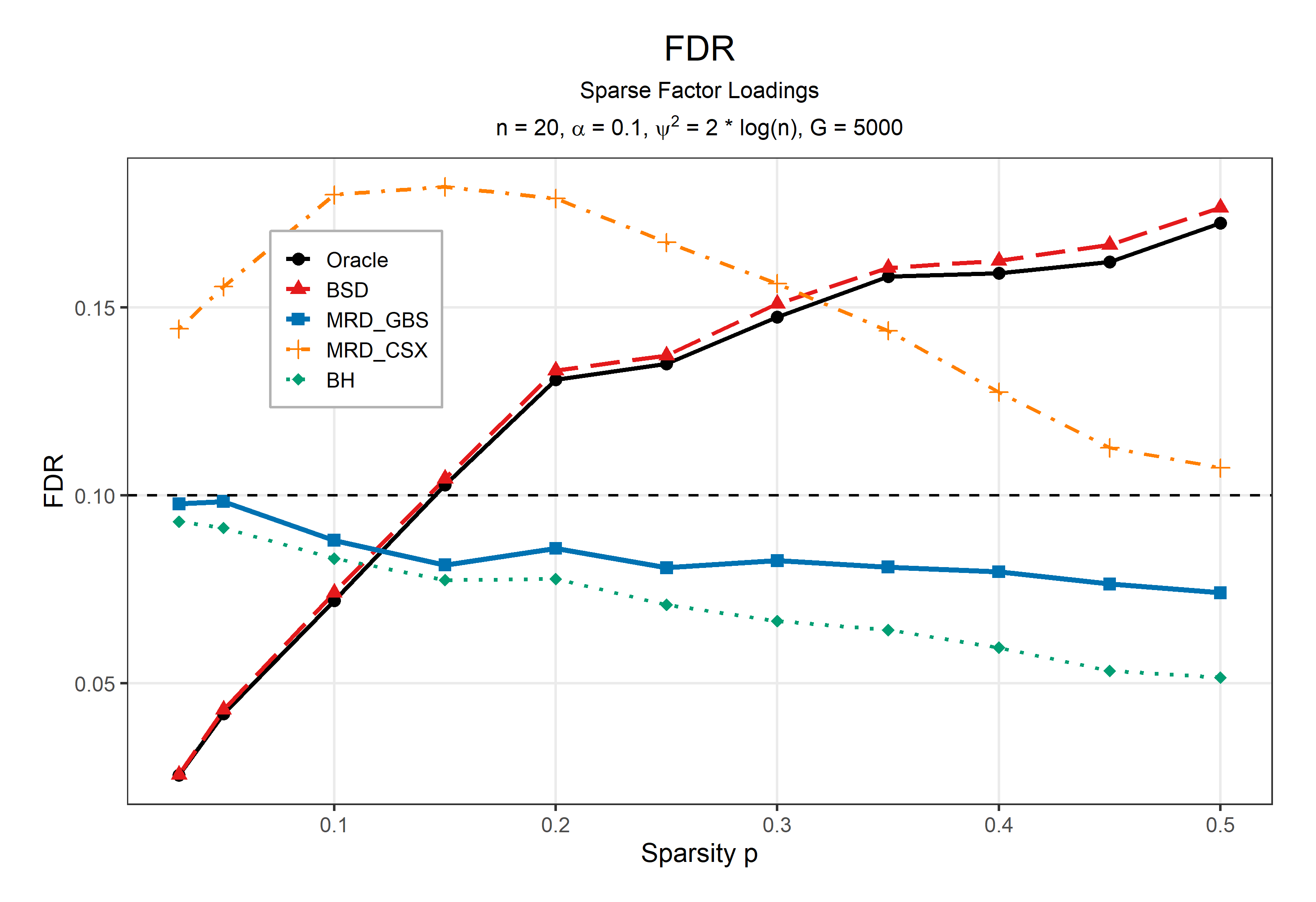}
		\caption{Sparse Factor Dependence}
	\end{subfigure}
	\hfill
	\begin{subfigure}{0.48\textwidth}
		\centering
		\includegraphics[width=\linewidth]{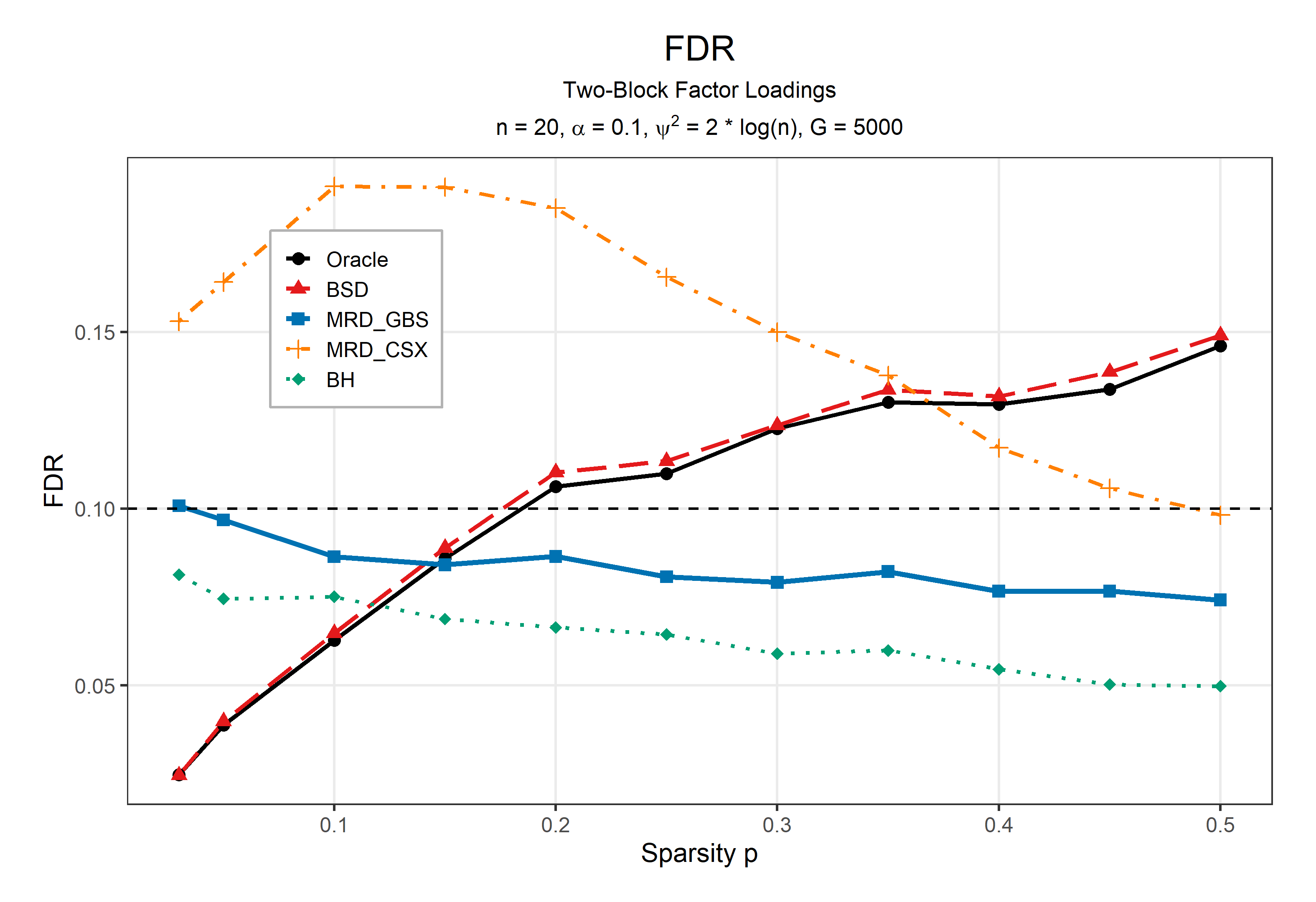}
		\caption{Two-Block Factor Dependence}
	\end{subfigure}
	
	\caption{Empirical false discovery rates under the six one-factor dependence structures when $n=20$, \(\alpha=0.1\), \(\psi^2=2\log n\), and \(G=5000\) Monte Carlo replications.}
	\label{fig:fdr-n20}
\end{figure}


\begin{figure}[p]
	\centering
	
	\begin{subfigure}{0.48\textwidth}
		\centering
		\includegraphics[width=\linewidth]{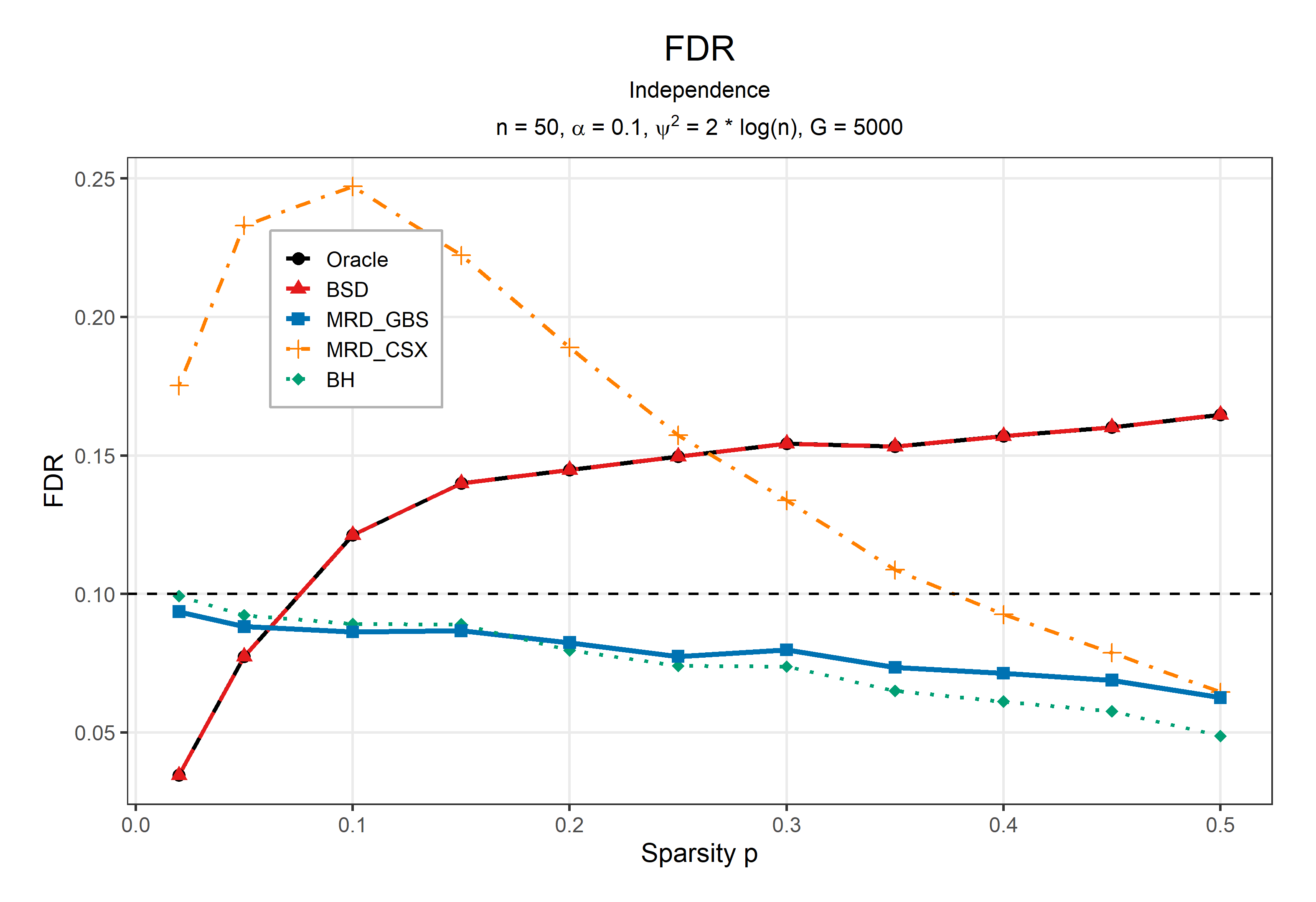}
		\caption{Independence}
	\end{subfigure}
	\hfill
	\begin{subfigure}{0.48\textwidth}
		\centering
		\includegraphics[width=\linewidth]{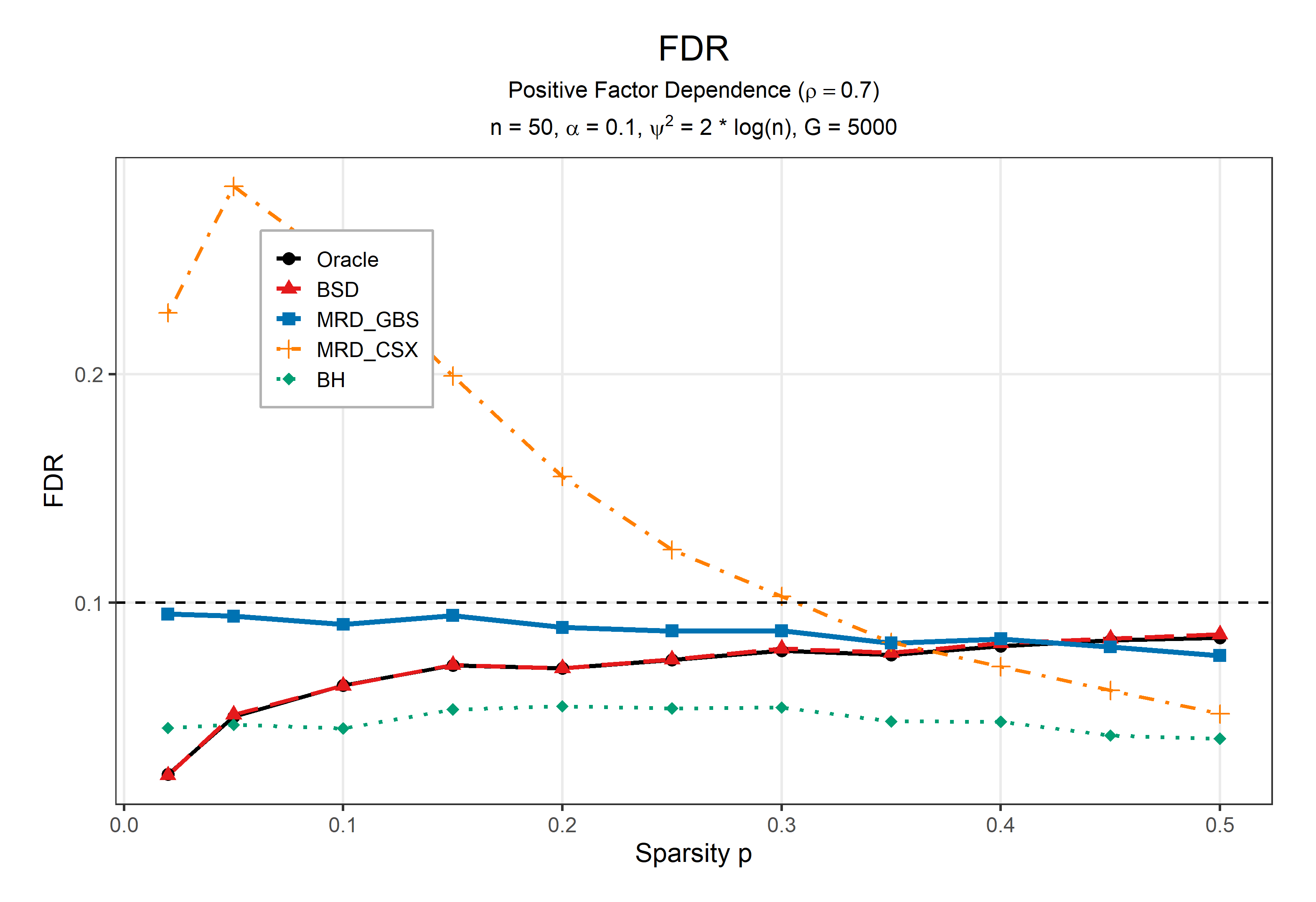}
		\caption{Positive Factor Dependence}
	\end{subfigure}
	
	\vspace{0.3cm}
	
	\begin{subfigure}{0.48\textwidth}
		\centering
		\includegraphics[width=\linewidth]{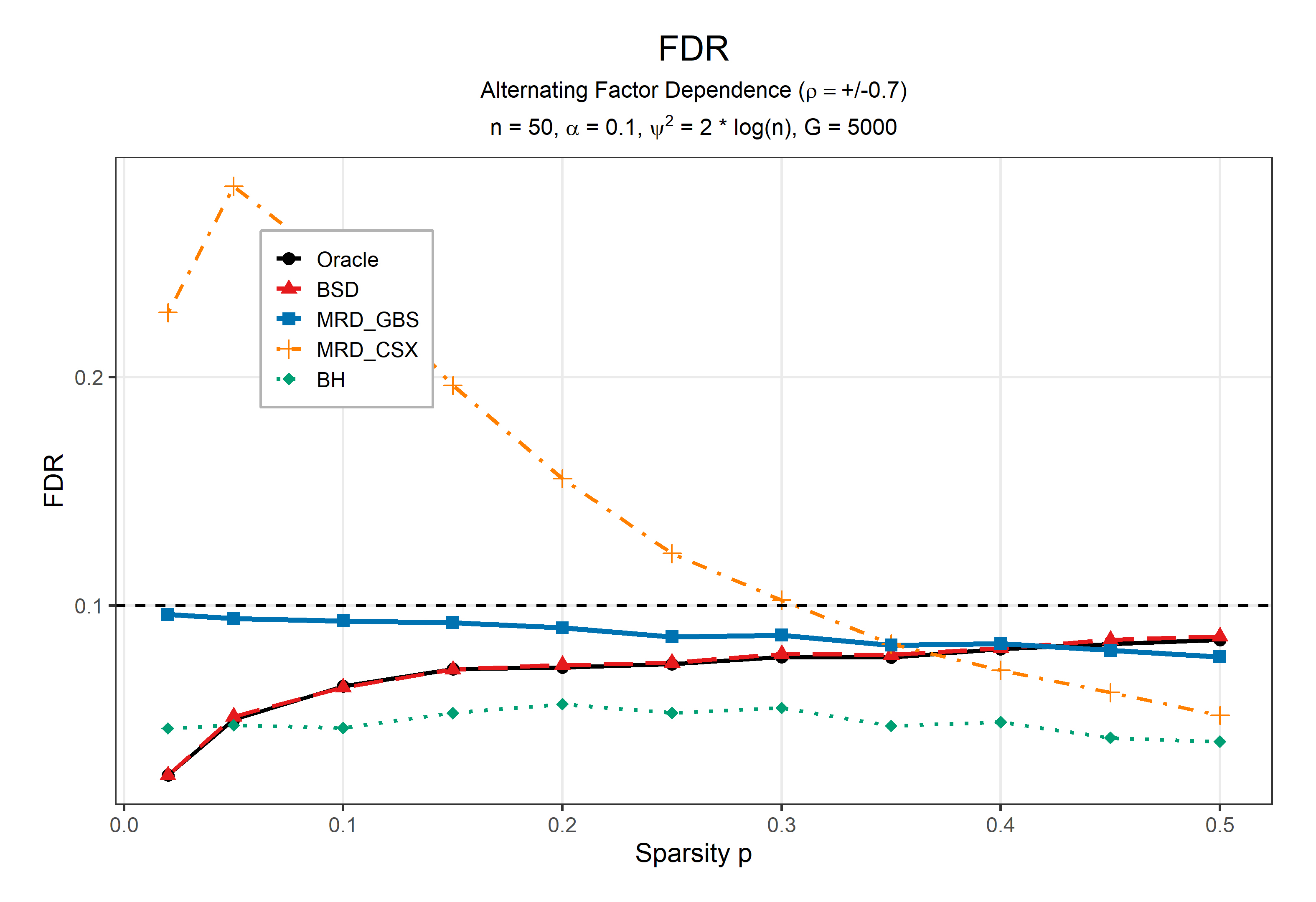}
		\caption{Alternating Factor Dependence}
	\end{subfigure}
	\hfill
	\begin{subfigure}{0.48\textwidth}
		\centering
		\includegraphics[width=\linewidth]{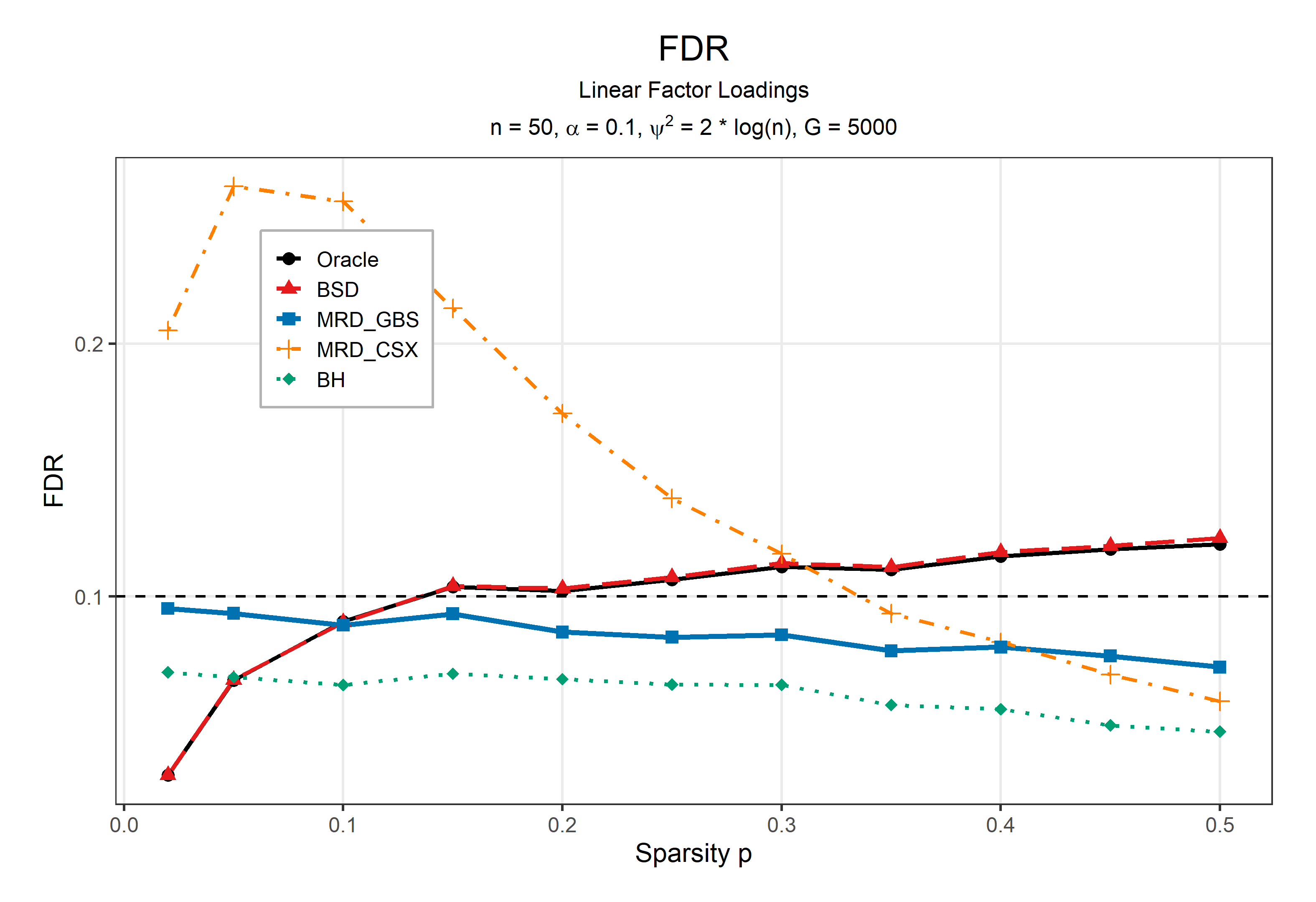}
		\caption{Linear Loadings}
	\end{subfigure}
	
	\vspace{0.3cm}
	
	\begin{subfigure}{0.48\textwidth}
		\centering
		\includegraphics[width=\linewidth]{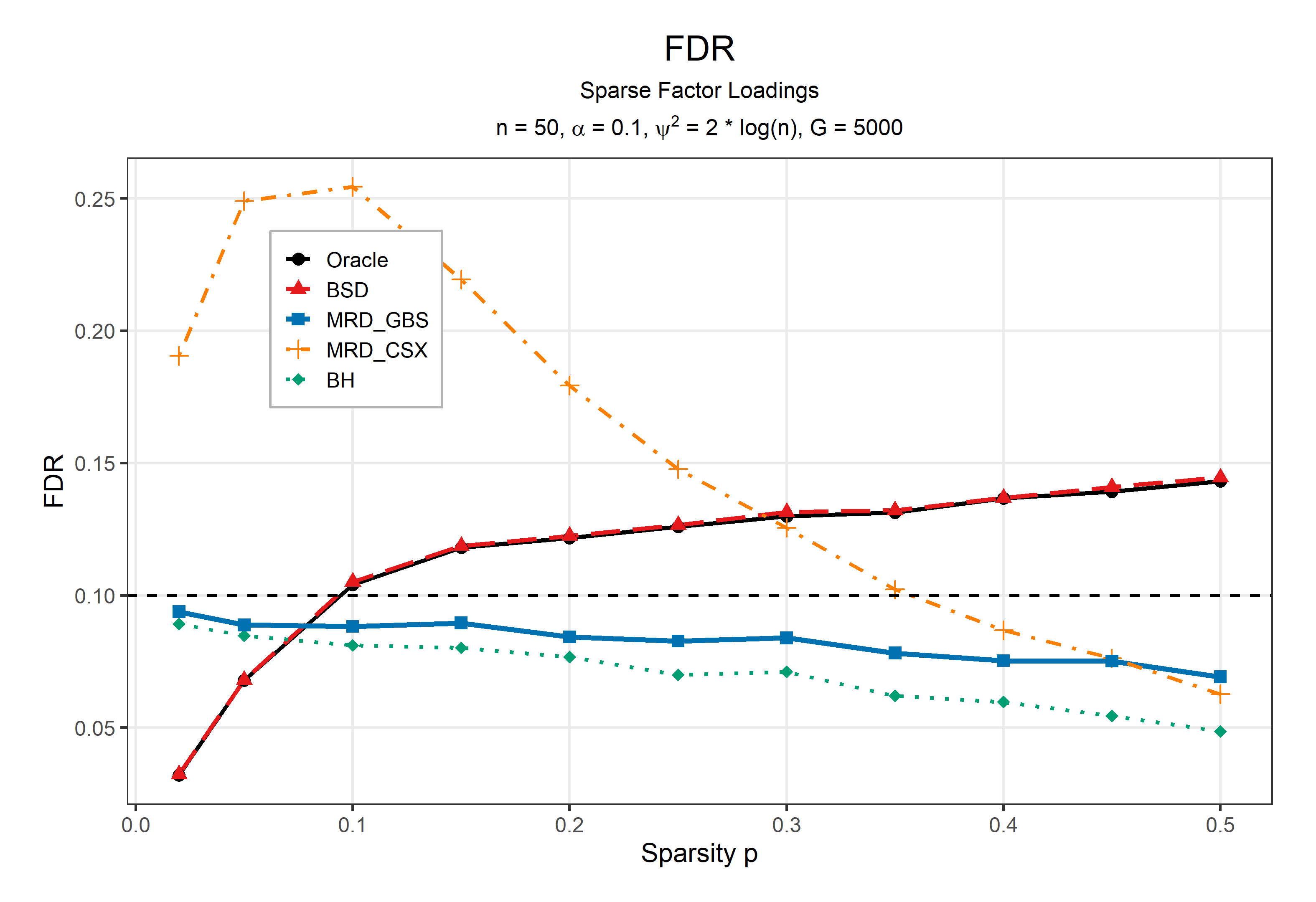}
		\caption{Sparse Factor Dependence}
	\end{subfigure}
	\hfill
	\begin{subfigure}{0.48\textwidth}
		\centering
		\includegraphics[width=\linewidth]{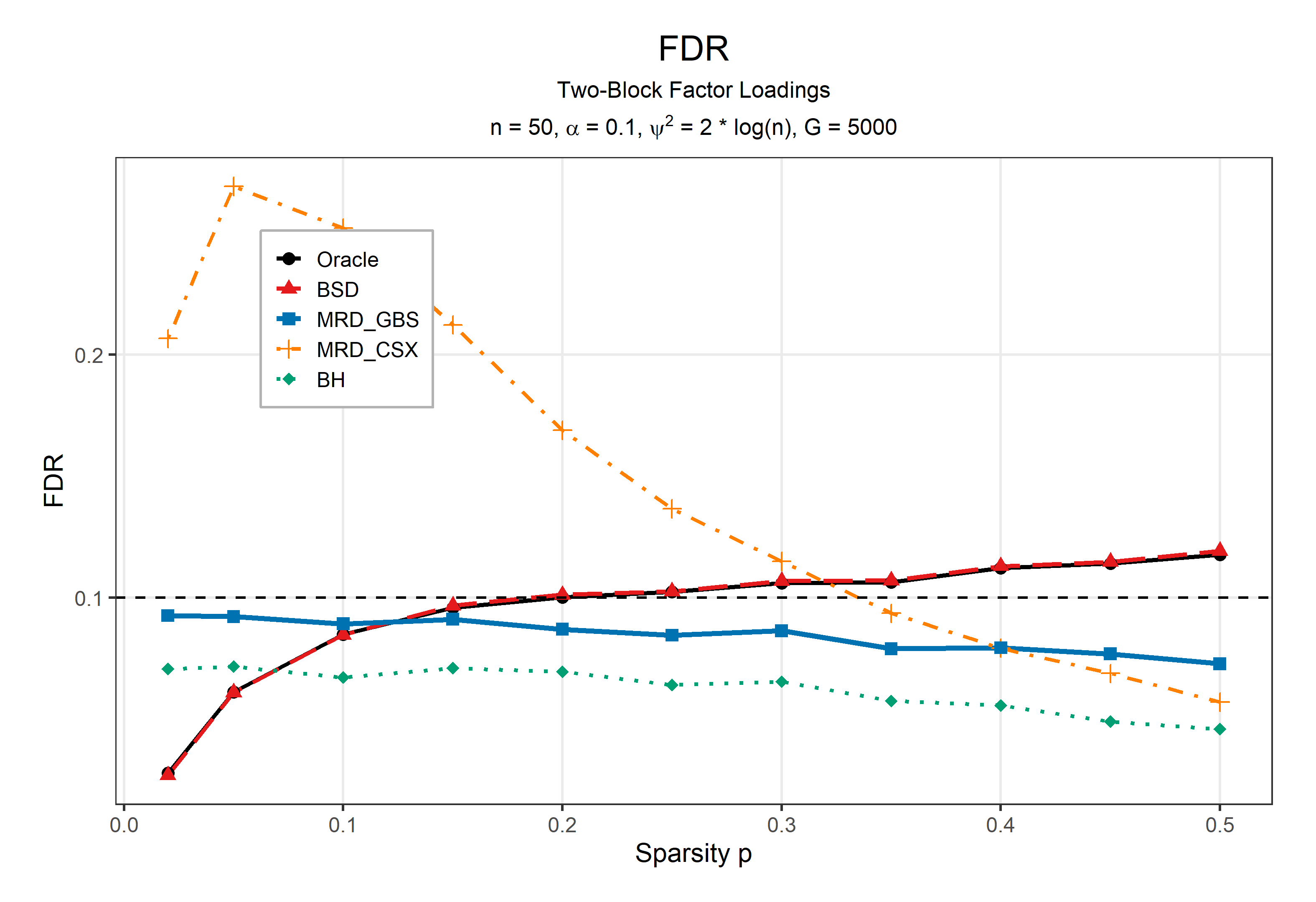}
		\caption{Two-Block Factor Dependence}
	\end{subfigure}
	
	\caption{Empirical false discovery rates under the six one-factor dependence structures when $n=50$, \(\alpha=0.1\), \(\psi^2=2\log n\), and \(G=5000\) Monte Carlo replications.}
	\label{fig:fdr-n50}
\end{figure}


\begin{figure}[p]
	\centering
	
	\begin{subfigure}{0.48\textwidth}
		\centering
		\includegraphics[width=\linewidth]{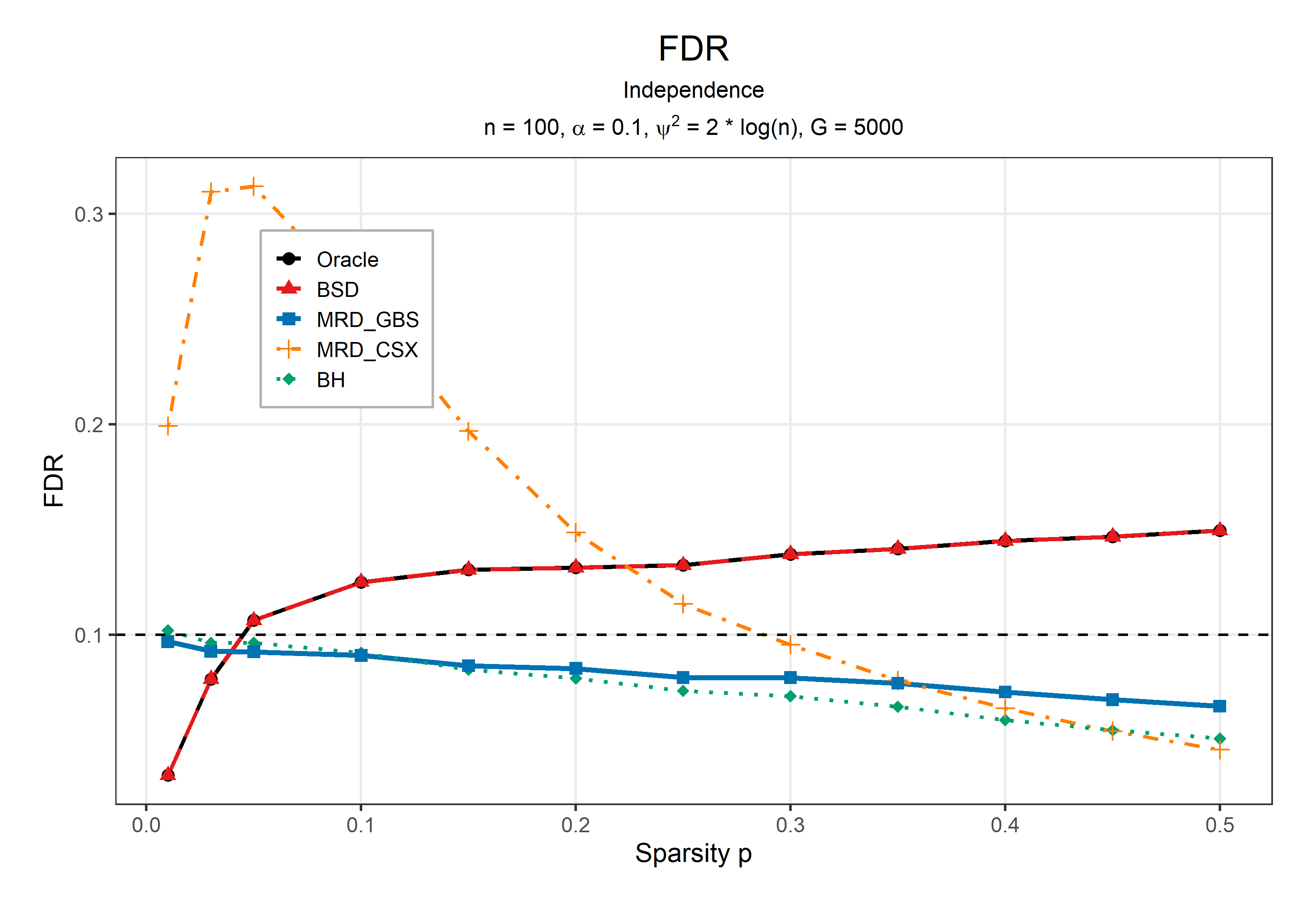}
		\caption{Independence}
	\end{subfigure}
	\hfill
	\begin{subfigure}{0.48\textwidth}
		\centering
		\includegraphics[width=\linewidth]{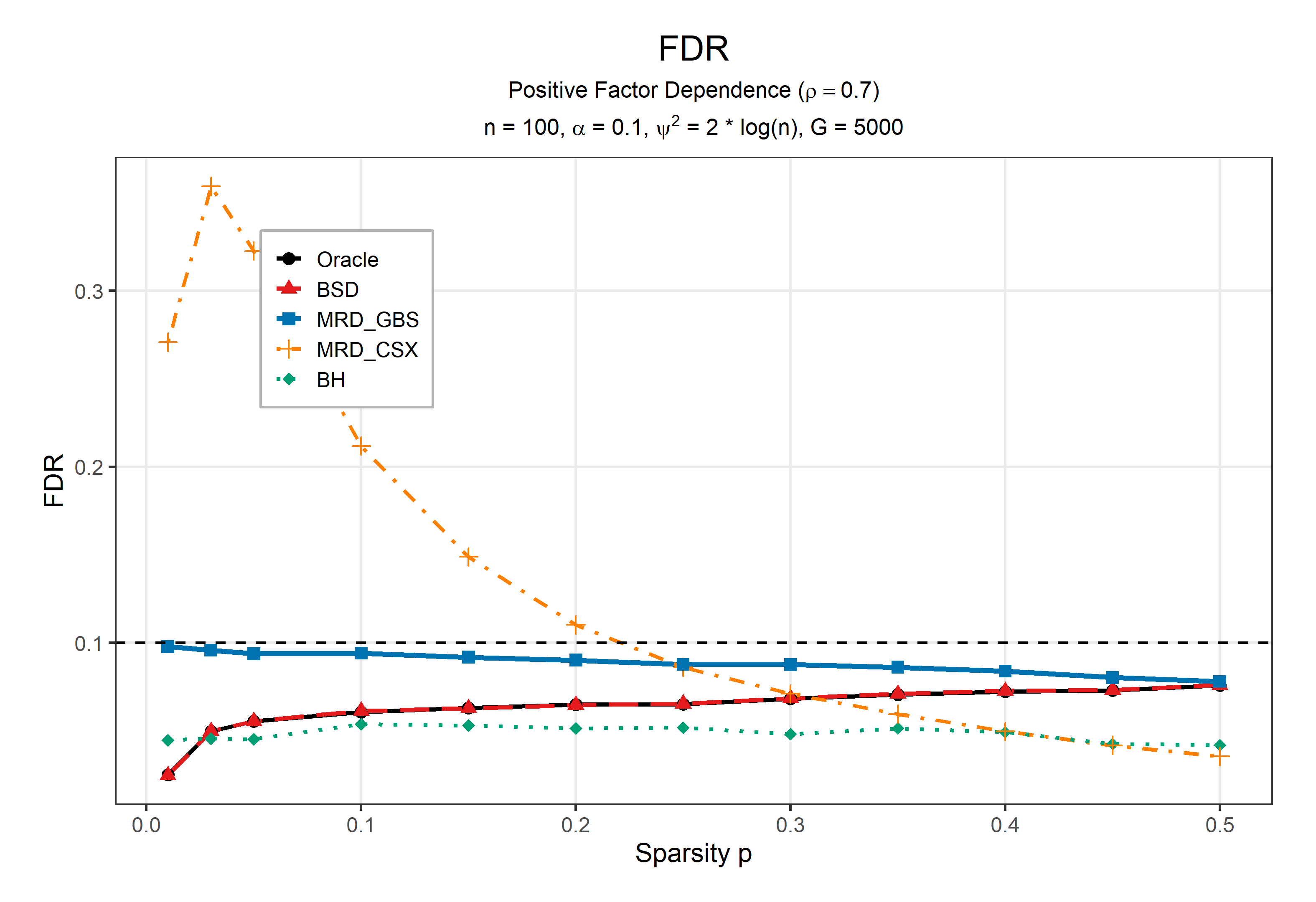}
		\caption{Positive Factor Dependence}
	\end{subfigure}
	
	\vspace{0.3cm}
	
	\begin{subfigure}{0.48\textwidth}
		\centering
		\includegraphics[width=\linewidth]{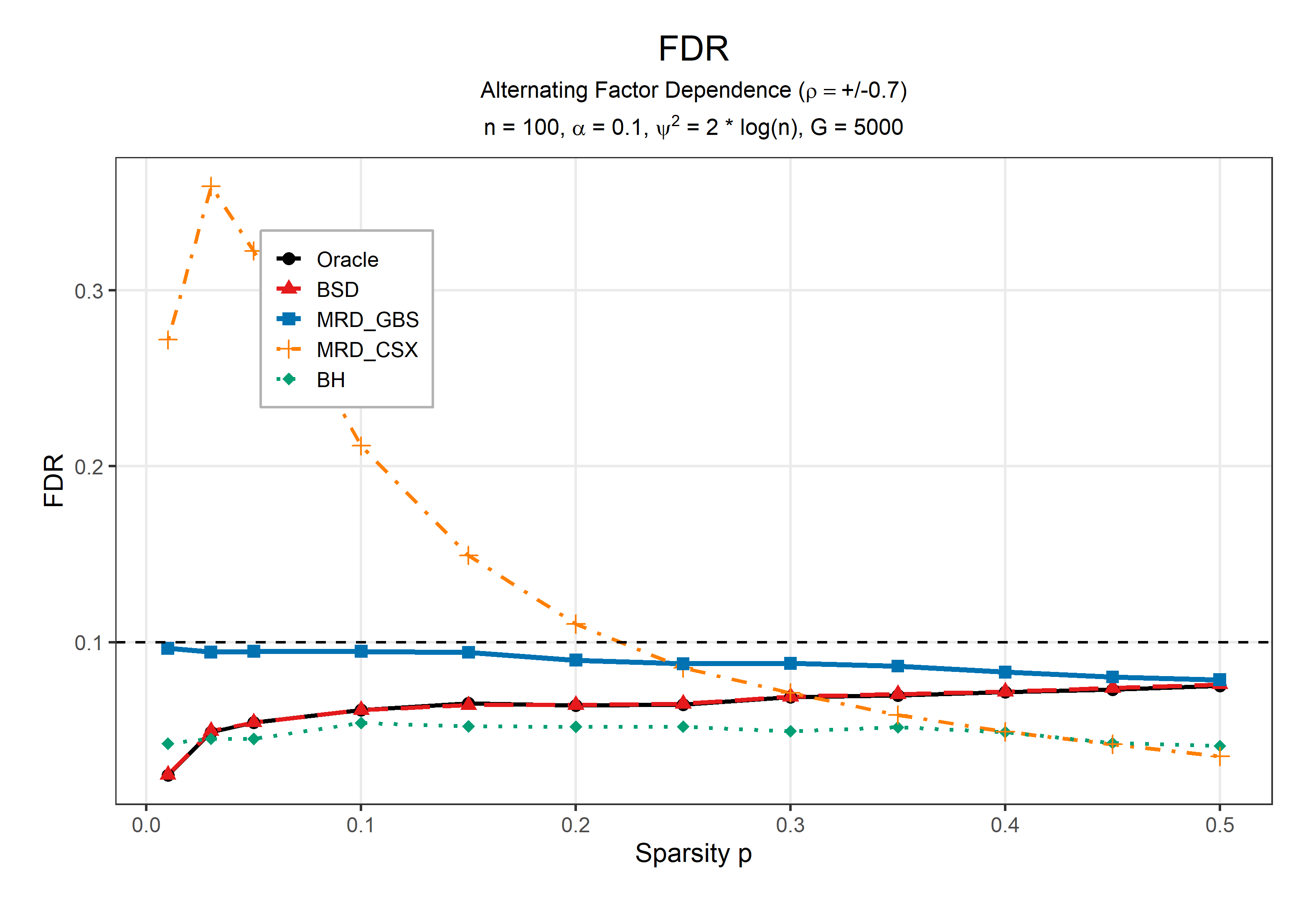}
		\caption{Alternating Factor Dependence}
	\end{subfigure}
	\hfill
	\begin{subfigure}{0.48\textwidth}
		\centering
		\includegraphics[width=\linewidth]{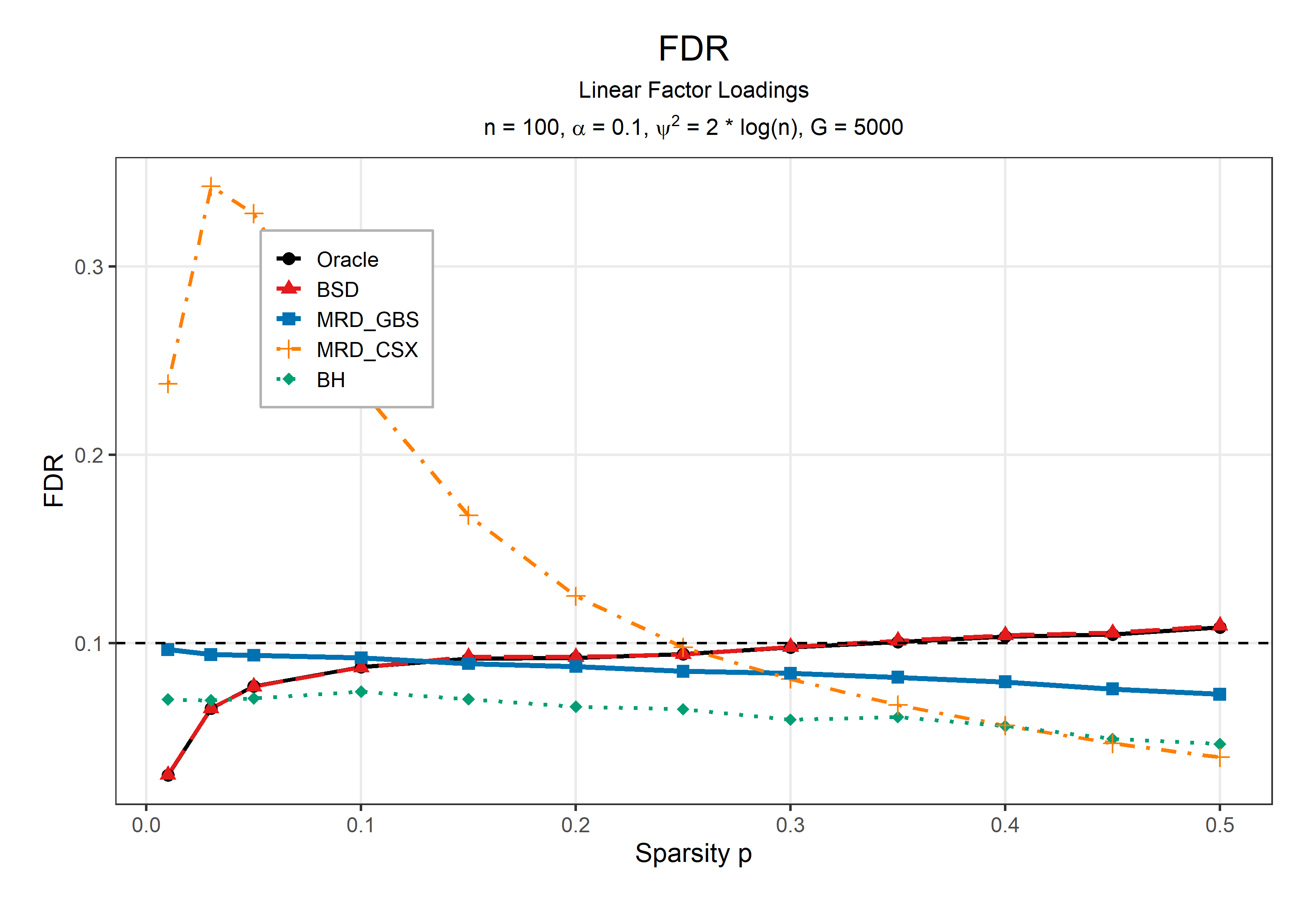}
		\caption{Linear Loadings}
	\end{subfigure}
	
	\vspace{0.3cm}
	
	\begin{subfigure}{0.48\textwidth}
		\centering
		\includegraphics[width=\linewidth]{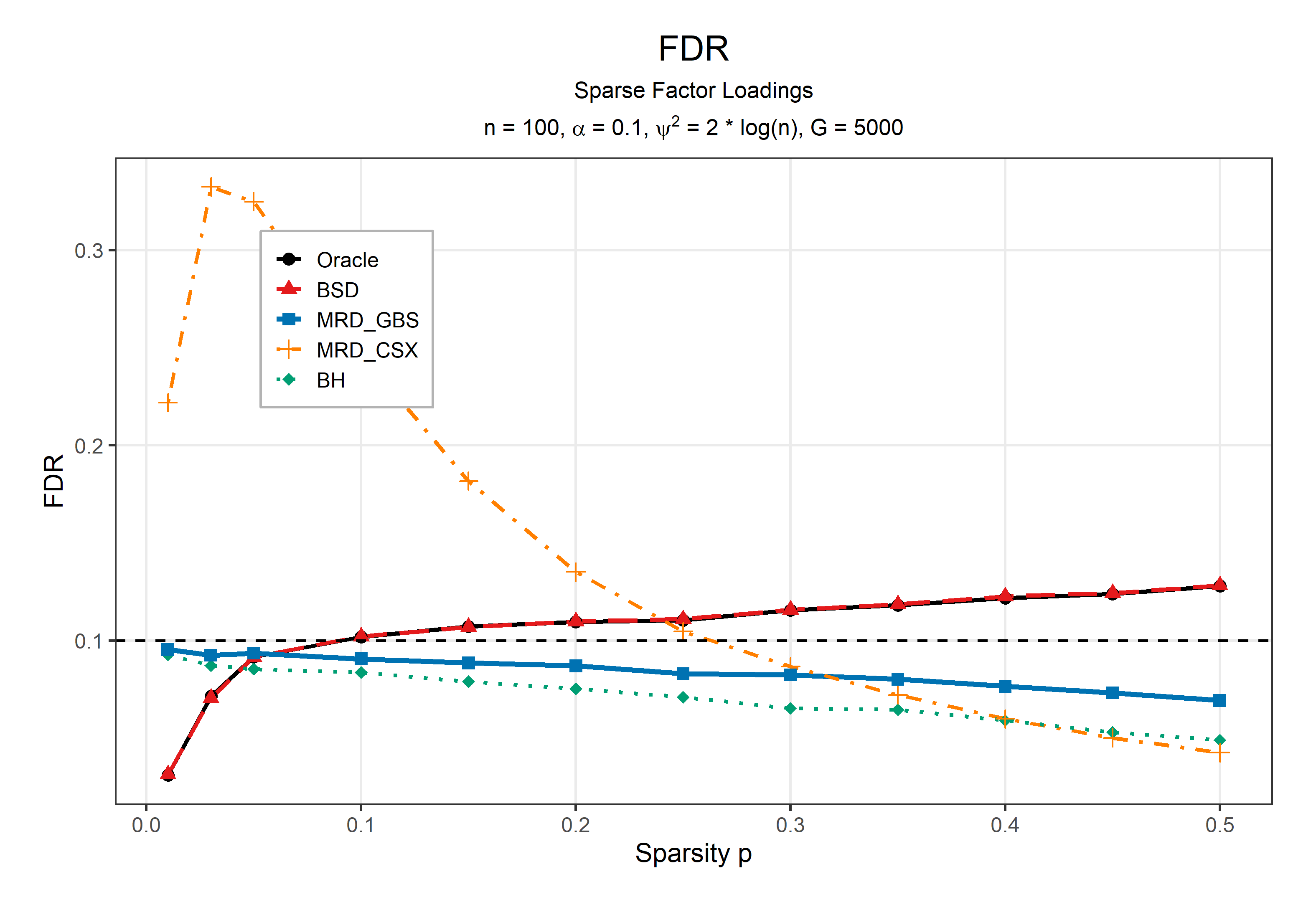}
		\caption{Sparse Factor Dependence}
	\end{subfigure}
	\hfill
	\begin{subfigure}{0.48\textwidth}
		\centering
		\includegraphics[width=\linewidth]{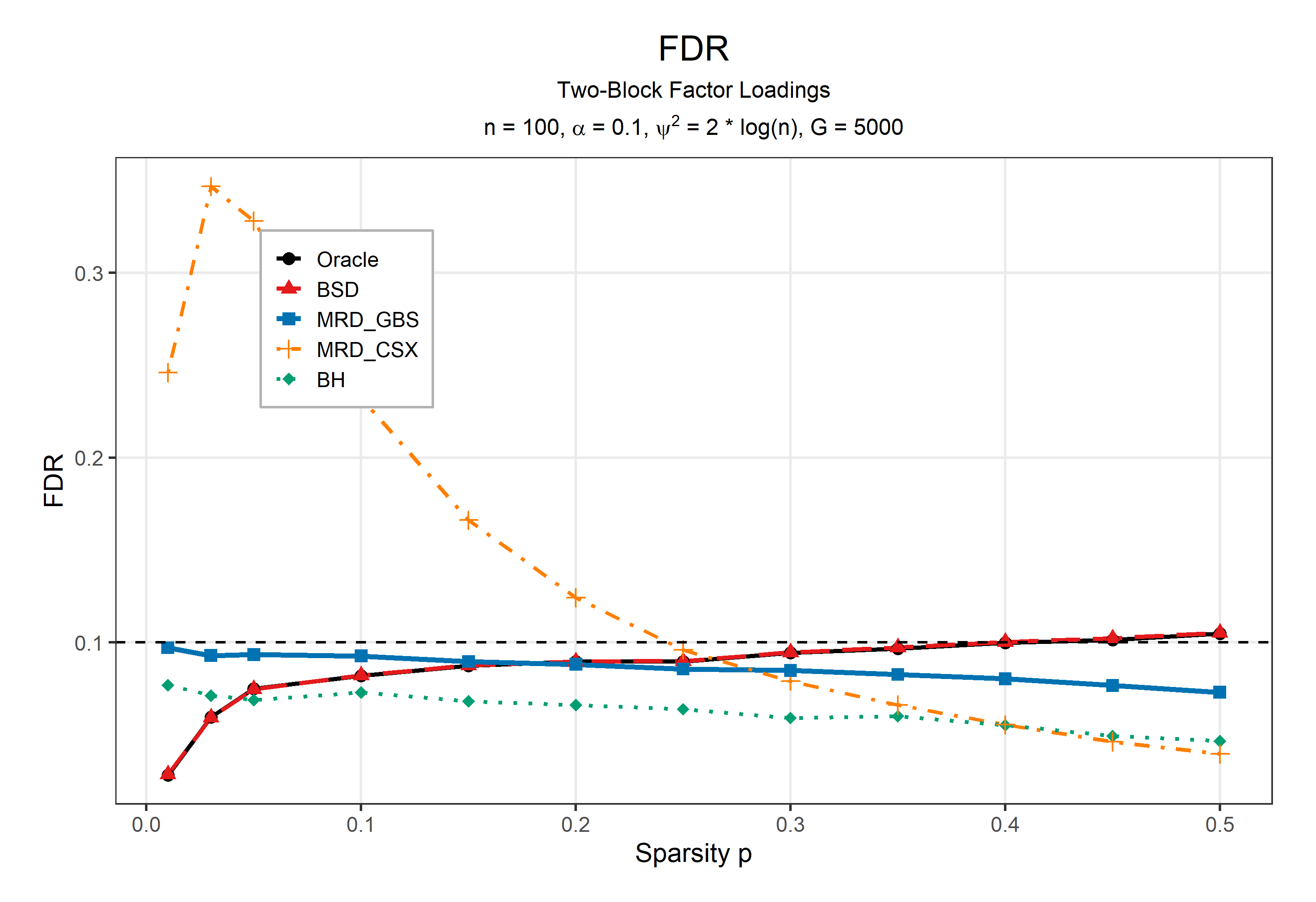}
		\caption{Two-Block Factor Dependence}
	\end{subfigure}
	
	\caption{Empirical false discovery rates under the six one-factor dependence structures when $n=100$, \(\alpha=0.1\), \(\psi^2=2\log n\), and \(G=5000\) Monte Carlo replications.}
	\label{fig:fdr-n100}
\end{figure}


A striking feature of the results is the near-perfect agreement between BSD and the Bayes Oracle across all dependence structures and dimensions considered. The two procedures exhibit almost identical FDR curves throughout the entire sparsity range considered, providing further evidence that BSD successfully reproduces the operating characteristics of the Oracle rule under a broad collection of covariance structures. It is worth emphasizing that neither BSD nor the Bayes Oracle is calibrated to attain any preassigned FDR level. Both procedures arise from a Bayes classification framework whose objective is to minimize overall misclassification loss rather than to maximize discoveries subject to an FDR constraint. Consequently, there is no theoretical reason for their FDRs to coincide with the nominal level $\alpha=0.10$. Indeed, the Oracle frequently operates at substantially smaller FDR levels, particularly in moderate and dense sparsity regimes. This behavior is entirely consistent with its underlying objective, since the Oracle is designed to optimize the overall trade-off between false discoveries and missed discoveries rather than to maximize the number of discoveries subject to an FDR constraint. The close agreement between BSD and the Oracle therefore suggests that BSD successfully reproduces the error-allocation strategy induced by Bayes-risk minimization and continues to recover nearly the same decision boundary selected by the Bayes Oracle even under substantial covariance dependence.

The MRD--GBS procedure displays a markedly different behavior. Across virtually all configurations, its empirical FDR remains close to or below the nominal level $\alpha=0.1$, often exhibiting a conservative tendency, particularly as the dimension increases. This observation is consistent with the strong support-recovery performance observed earlier and helps explain why MRD--GBS achieves Bayes risks that remain surprisingly close to the Bayes Oracle despite arising from a fundamentally different methodological framework.

In contrast, MRD--CSX exhibits substantial variability in its FDR behavior. For smaller sparsity levels, its FDR frequently exceeds the nominal level by a considerable margin, particularly when $n=50$ and $n=100$, before gradually decreasing as the signal proportion increases. The BH procedure, on the other hand, tends to be noticeably conservative under the dependence structures considered here, often producing empirical FDR values substantially below the nominal level.

Overall, the FDR results reinforce the conclusions drawn from the Bayes risk analysis. BSD closely tracks the Bayes Oracle not only in overall misclassification risk but also in its allocation of Type I errors. At the same time, the consistently stable FDR behavior of MRD--GBS provides additional evidence of its strong finite-sample performance under both independence and dependence.


\subsubsection{False Non-Discovery Rates}

The empirical false non-discovery rates (FNRs) are displayed in Figures~\ref{fig:fnr-n20}--\ref{fig:fnr-n100}. Since FNR measures the proportion of missed signals among all non-discoveries, it provides a direct assessment of the signal-recovery ability of the competing procedures. Consequently, the FNR results offer important insight into the mechanisms underlying the Bayes risk comparisons presented in the previous subsection.

\begin{figure}[p]
	\centering
	
	\begin{subfigure}{0.48\textwidth}
		\centering
		\includegraphics[width=\linewidth]{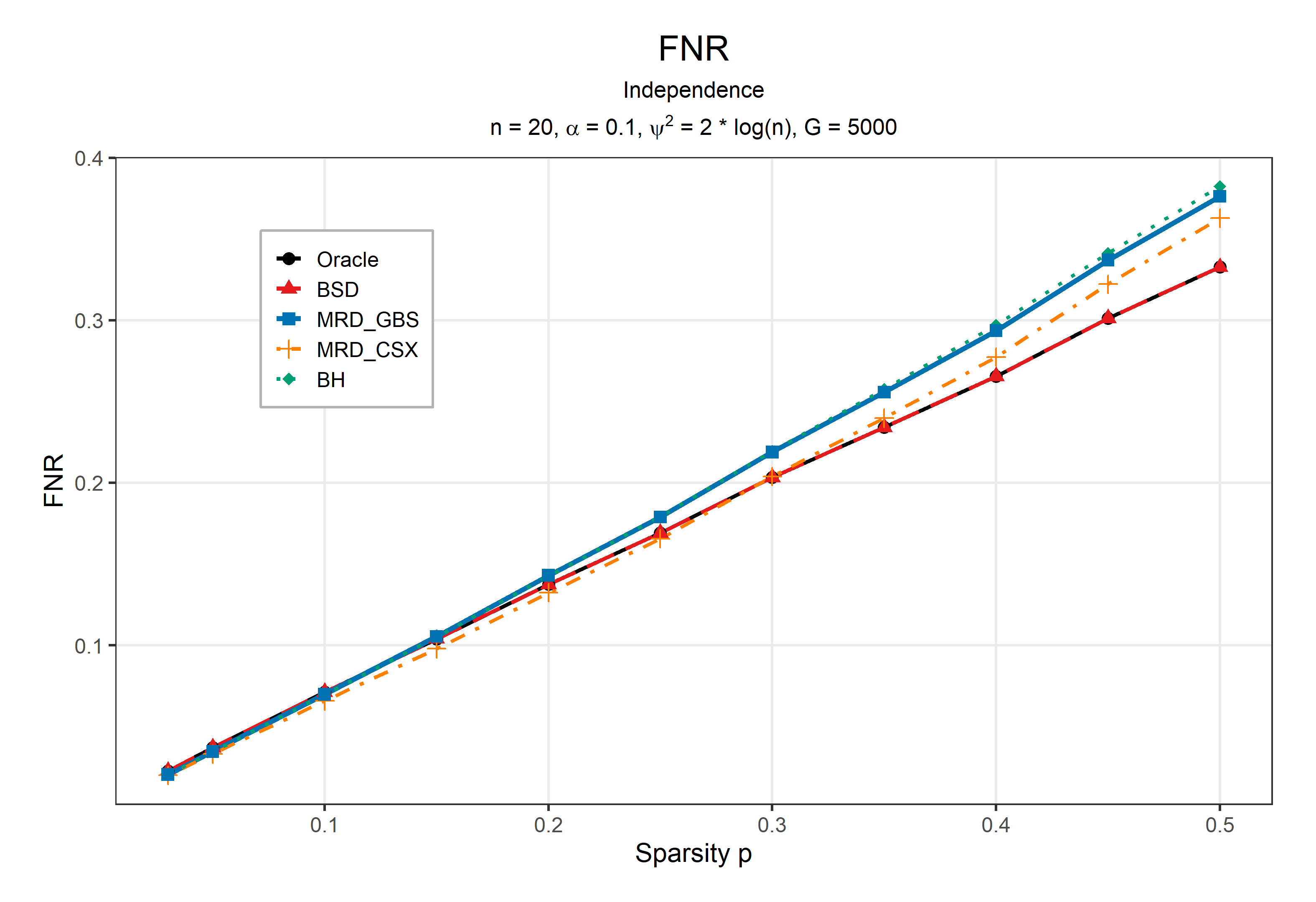}
		\caption{Independence}
	\end{subfigure}
	\hfill
	\begin{subfigure}{0.48\textwidth}
		\centering
		\includegraphics[width=\linewidth]{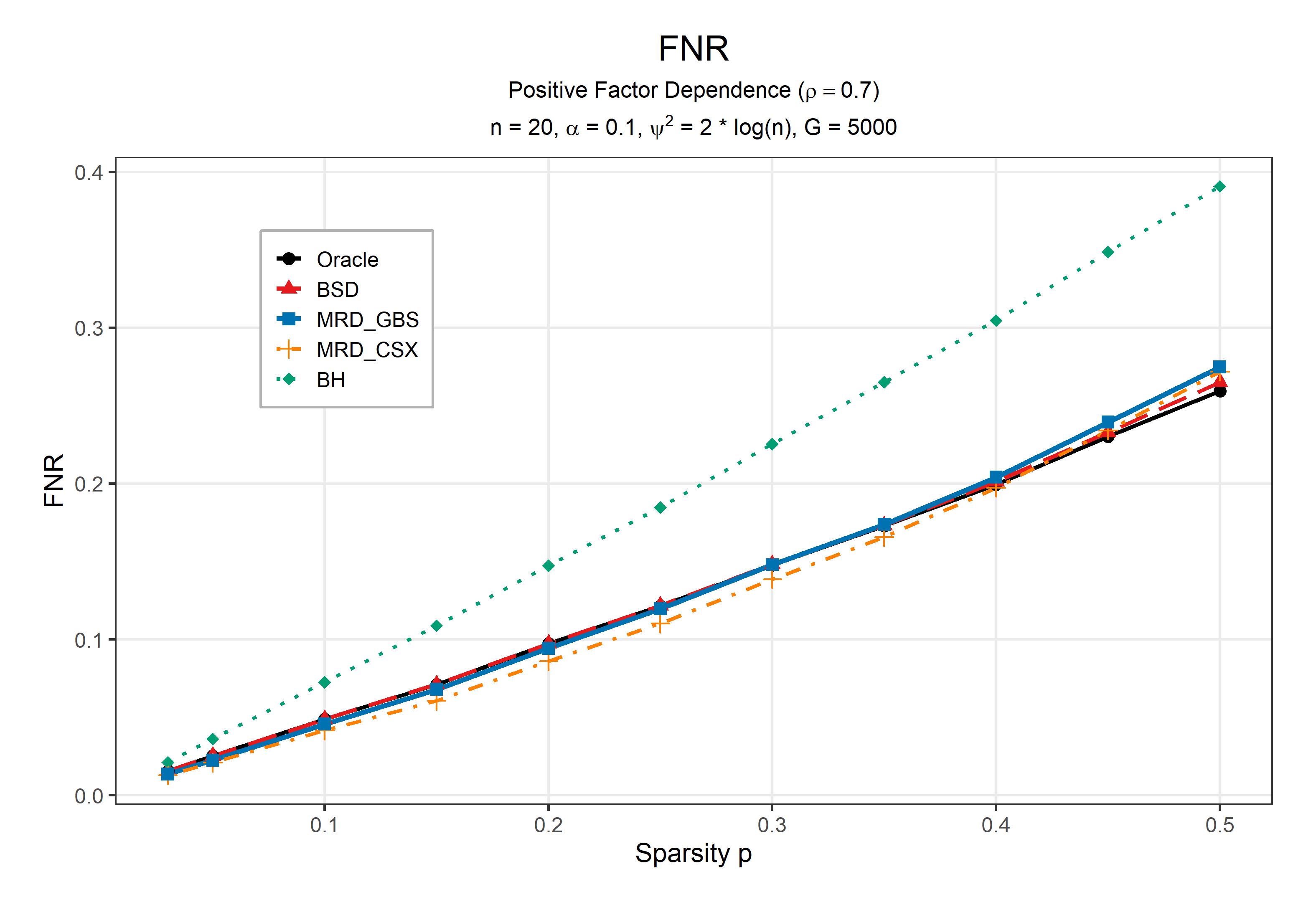}
		\caption{Positive Factor Dependence}
	\end{subfigure}
	
	\vspace{0.3cm}
	
	\begin{subfigure}{0.48\textwidth}
		\centering
		\includegraphics[width=\linewidth]{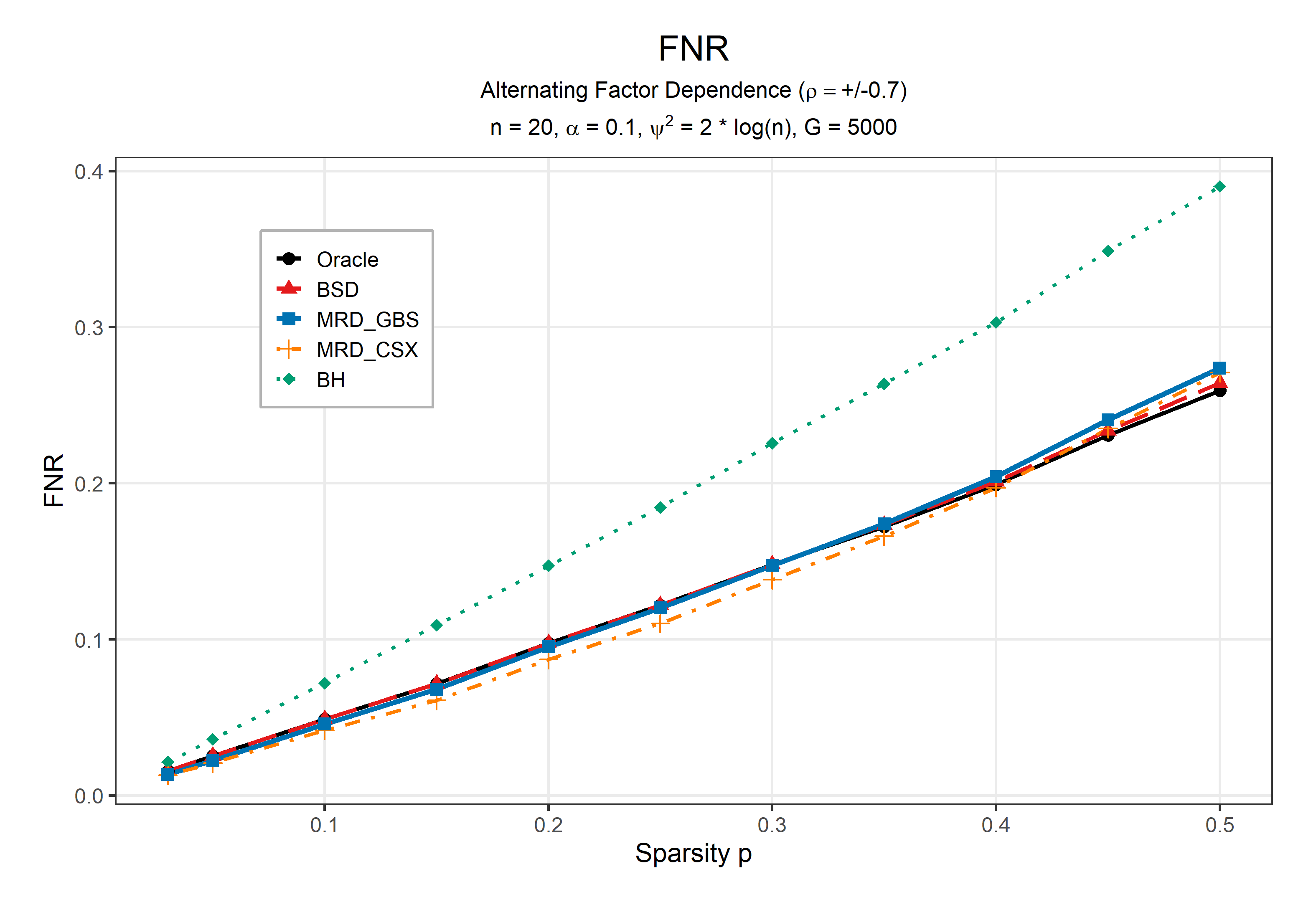}
		\caption{Alternating Factor Dependence}
	\end{subfigure}
	\hfill
	\begin{subfigure}{0.48\textwidth}
		\centering
		\includegraphics[width=\linewidth]{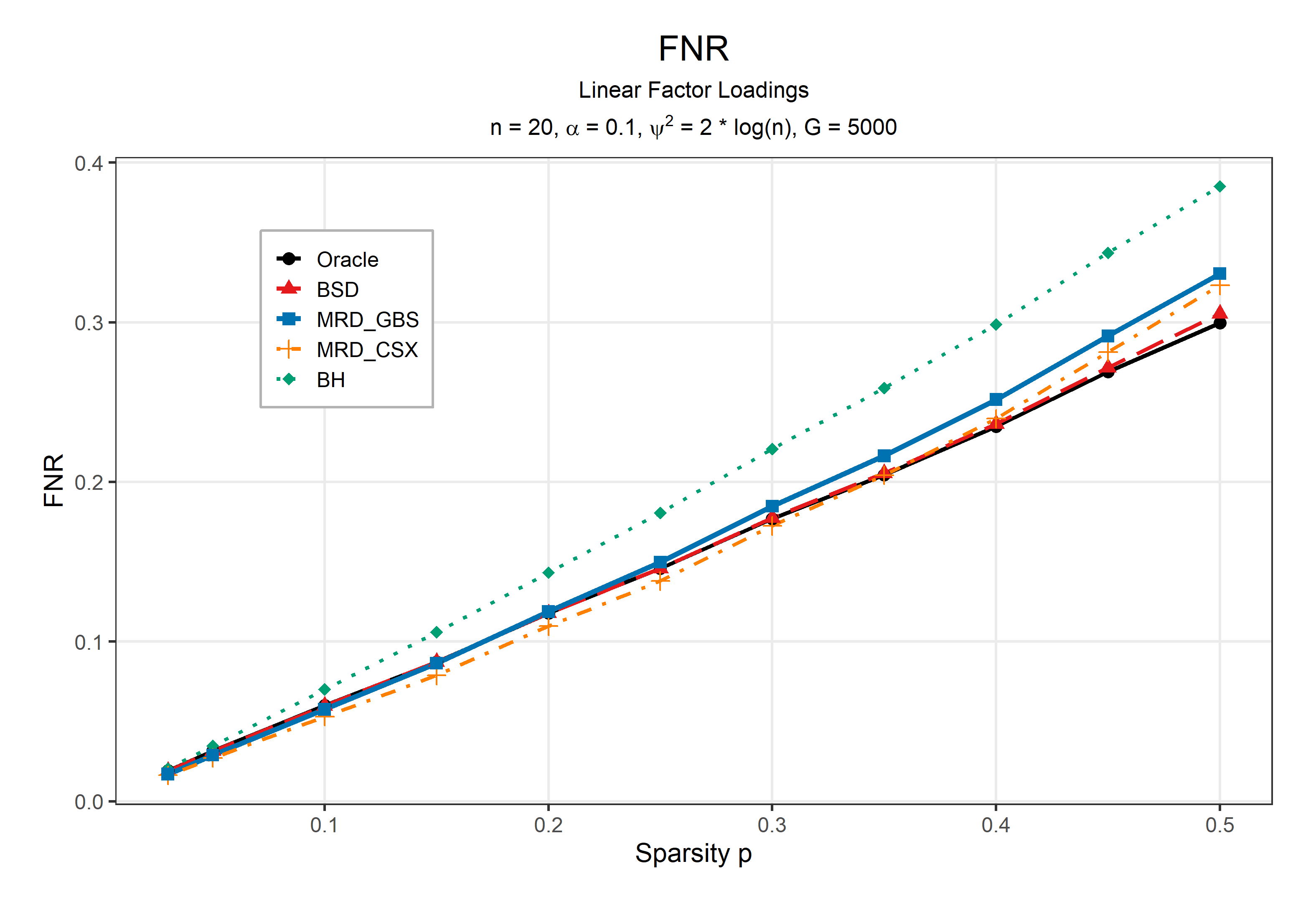}
		\caption{Linear Loadings}
	\end{subfigure}
	
	\vspace{0.3cm}
	
	\begin{subfigure}{0.48\textwidth}
		\centering
		\includegraphics[width=\linewidth]{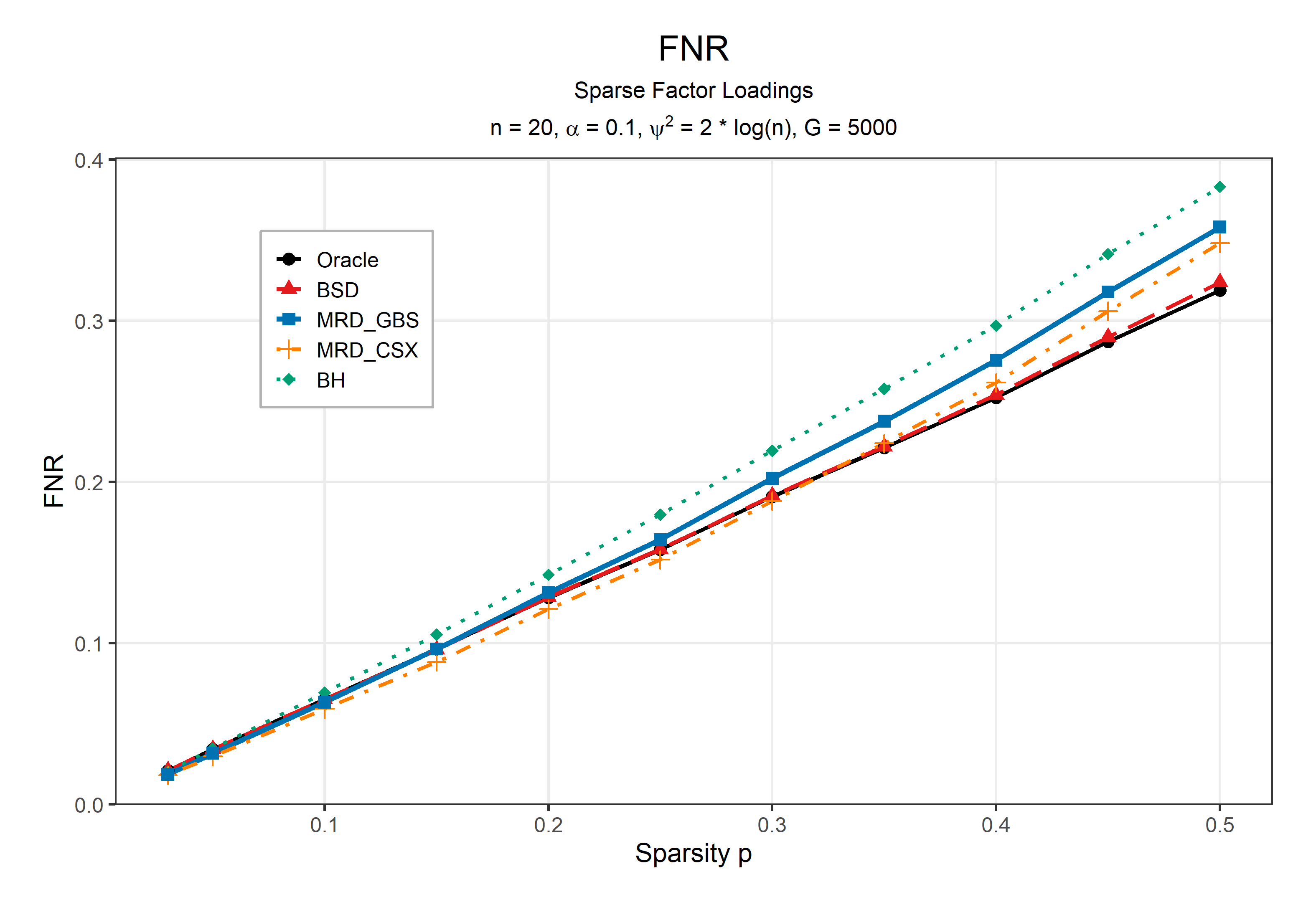}
		\caption{Sparse Factor Dependence}
	\end{subfigure}
	\hfill
	\begin{subfigure}{0.48\textwidth}
		\centering
		\includegraphics[width=\linewidth]{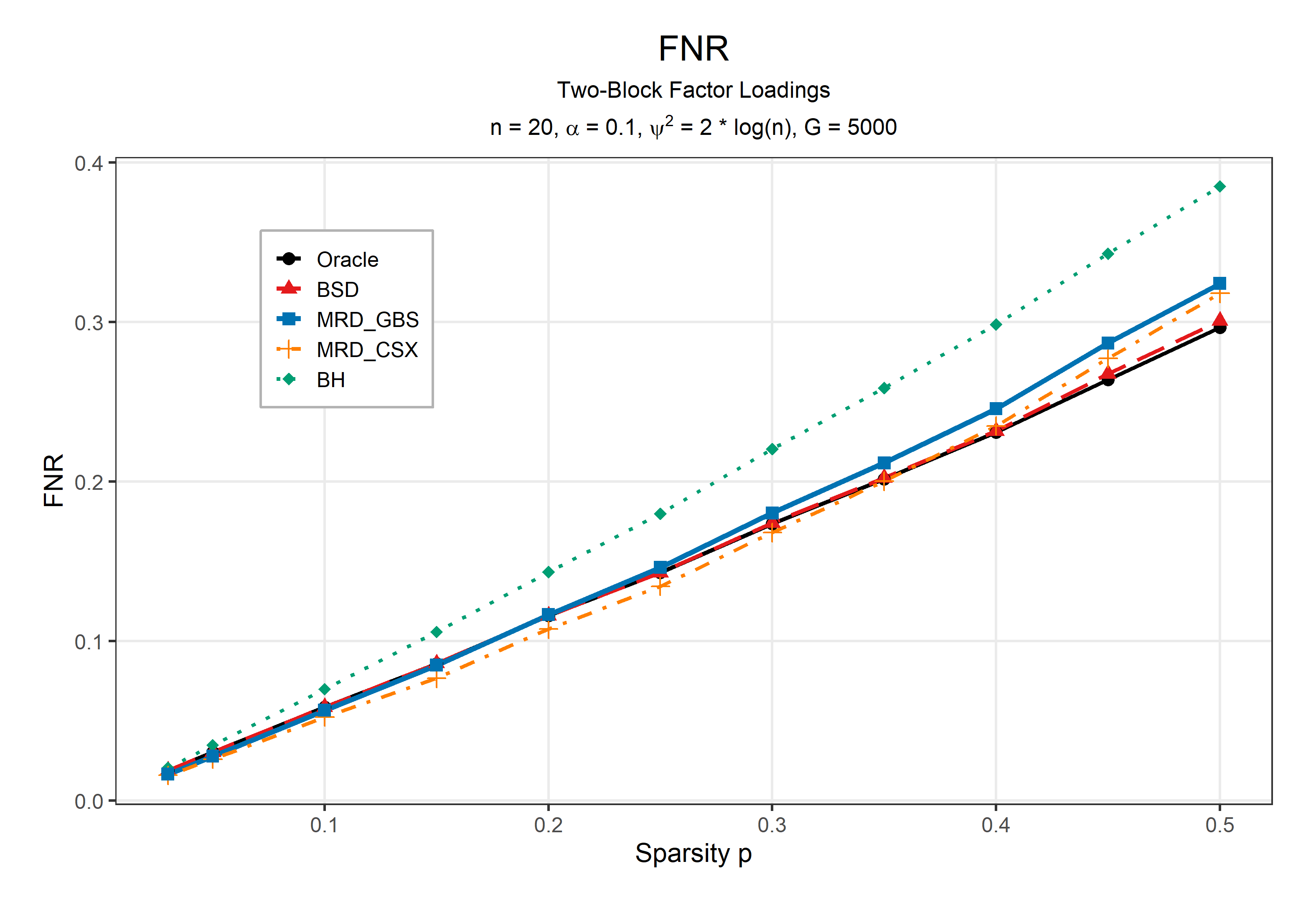}
		\caption{Two-Block Factor Dependence}
	\end{subfigure}
	
	\caption{Empirical false non-discovery rates under the six one-factor dependence structures when $n=20$, \(\alpha=0.1\), \(\psi^2=2\log n\), and \(G=5000\) Monte Carlo replications.}
	\label{fig:fnr-n20}
\end{figure}

\begin{figure}[p]
	\centering
	
	\begin{subfigure}{0.48\textwidth}
		\centering
		\includegraphics[width=\linewidth]{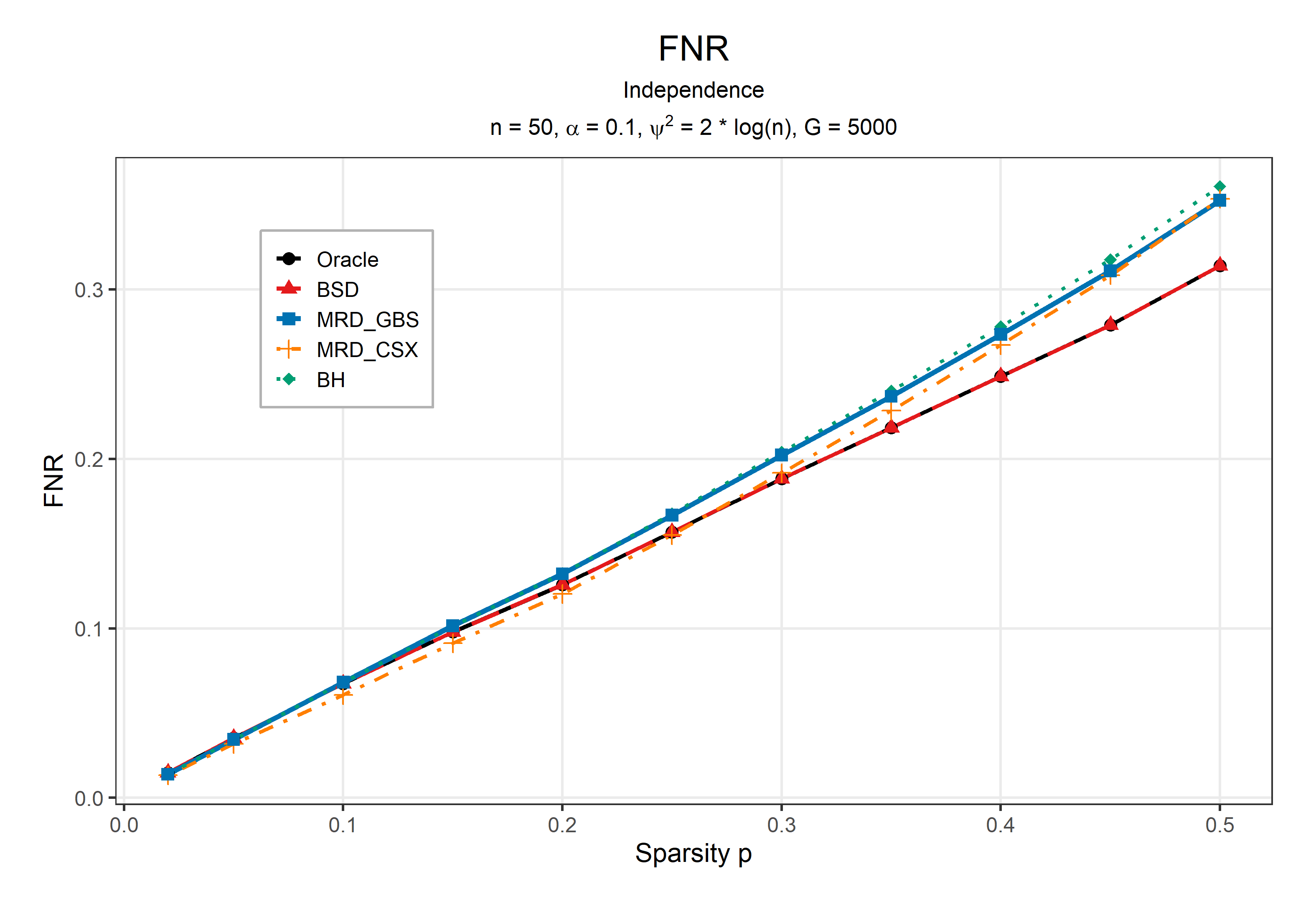}
		\caption{Independence}
	\end{subfigure}
	\hfill
	\begin{subfigure}{0.48\textwidth}
		\centering
		\includegraphics[width=\linewidth]{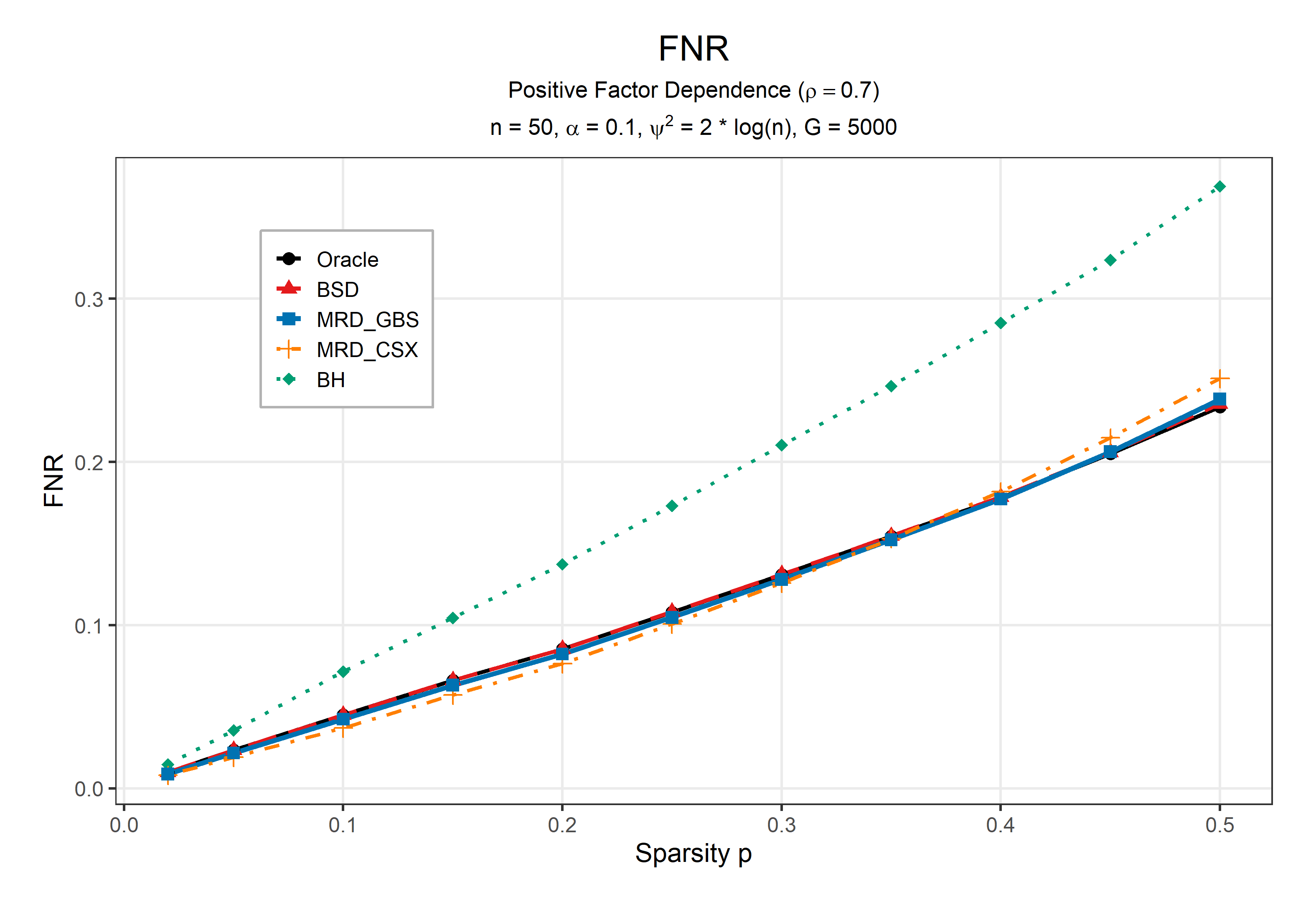}
		\caption{Positive Factor Dependence}
	\end{subfigure}
	
	\vspace{0.3cm}
	
	\begin{subfigure}{0.48\textwidth}
		\centering
		\includegraphics[width=\linewidth]{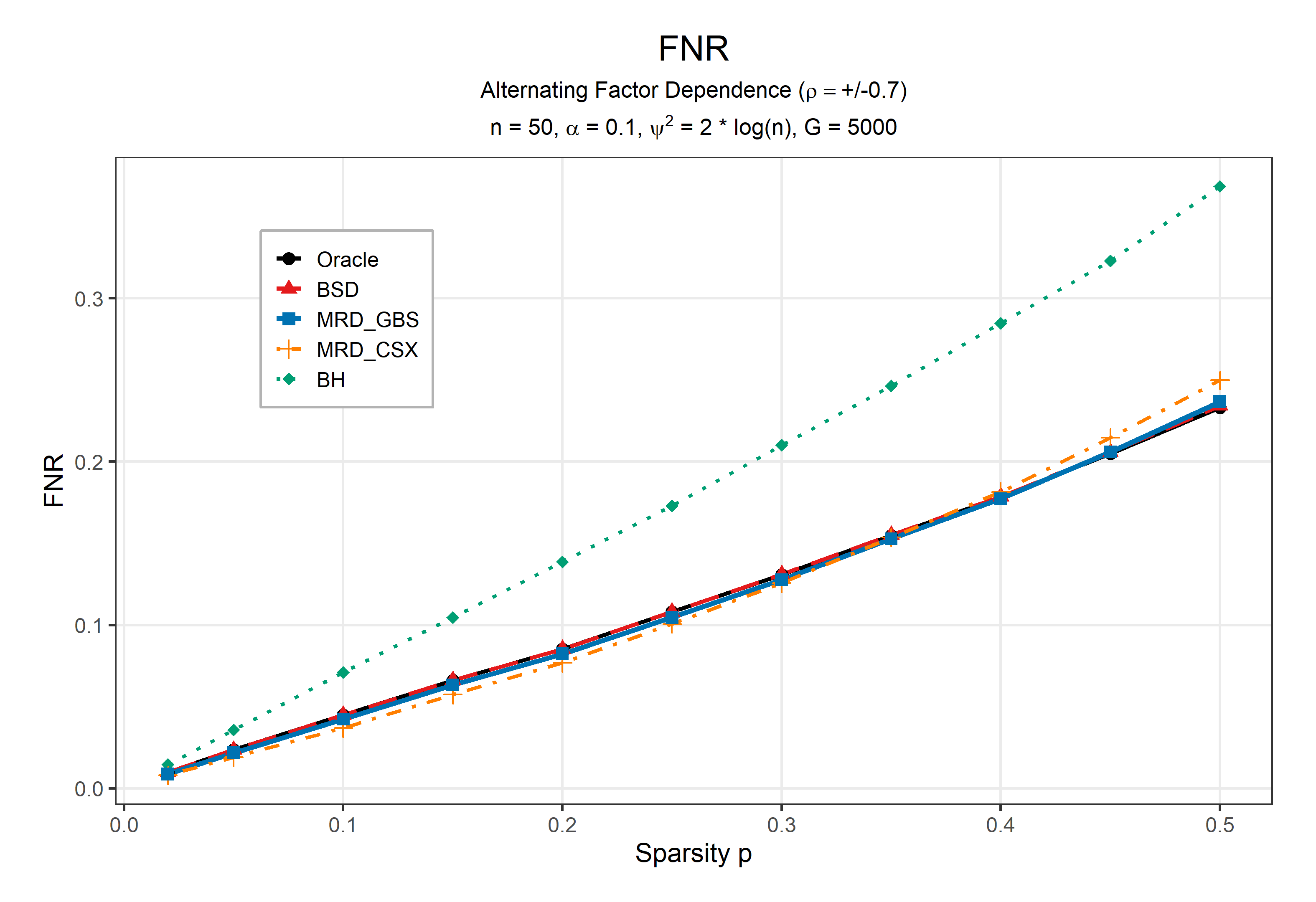}
		\caption{Alternating Factor Dependence}
	\end{subfigure}
	\hfill
	\begin{subfigure}{0.48\textwidth}
		\centering
		\includegraphics[width=\linewidth]{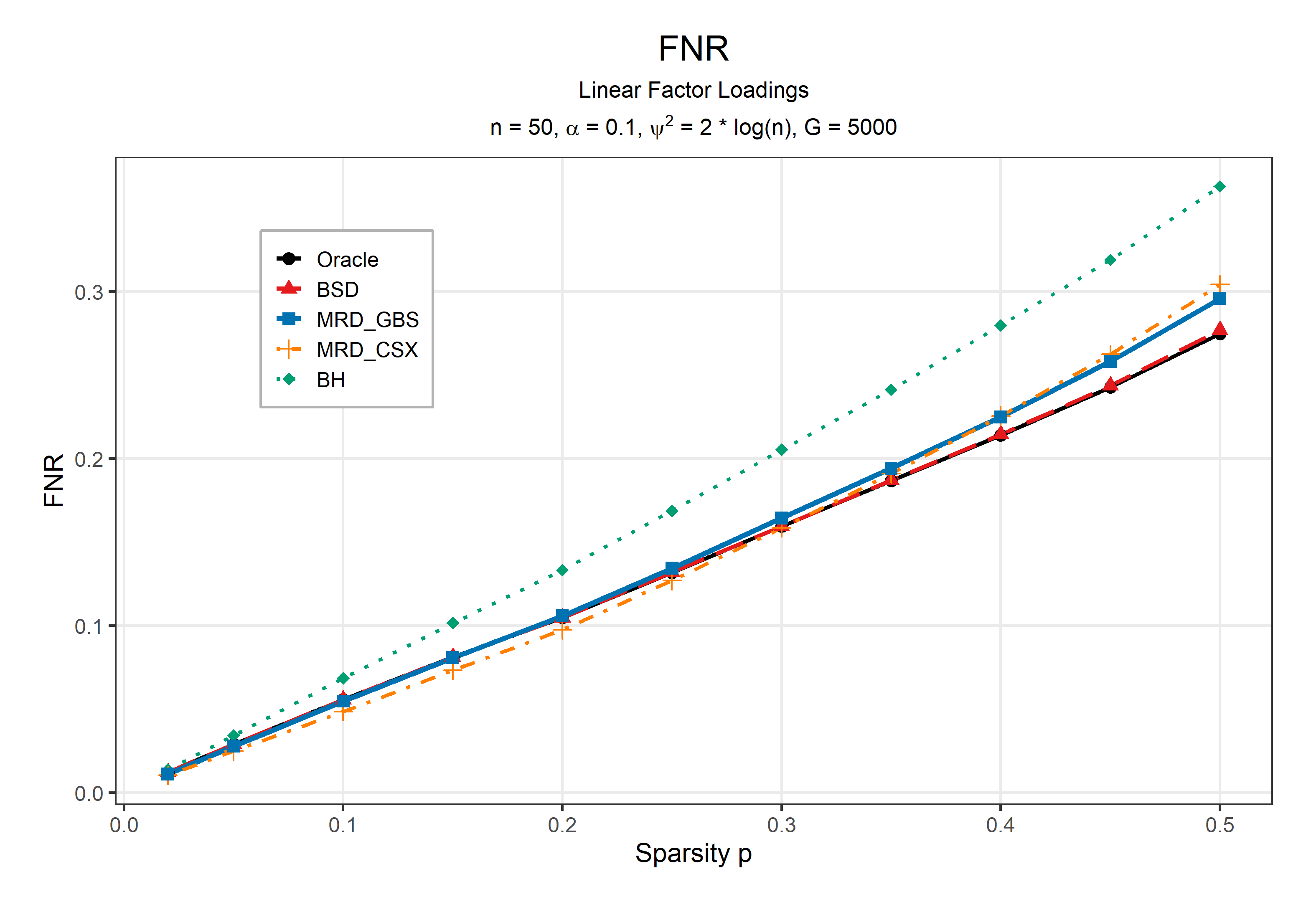}
		\caption{Linear Loadings}
	\end{subfigure}
	
	\vspace{0.3cm}
	
	\begin{subfigure}{0.48\textwidth}
		\centering
		\includegraphics[width=\linewidth]{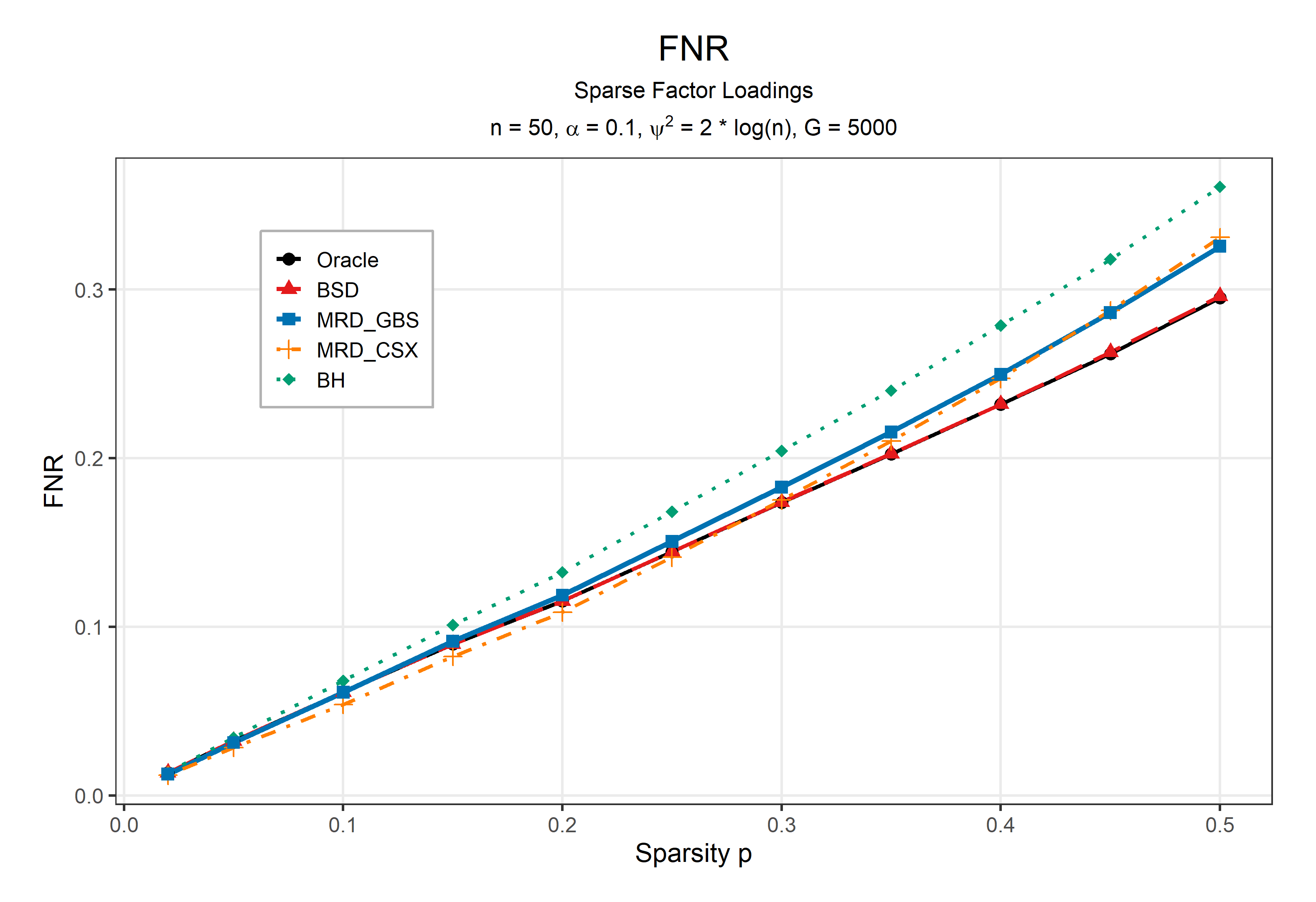}
		\caption{Sparse Factor Dependence}
	\end{subfigure}
	\hfill
	\begin{subfigure}{0.48\textwidth}
		\centering
		\includegraphics[width=\linewidth]{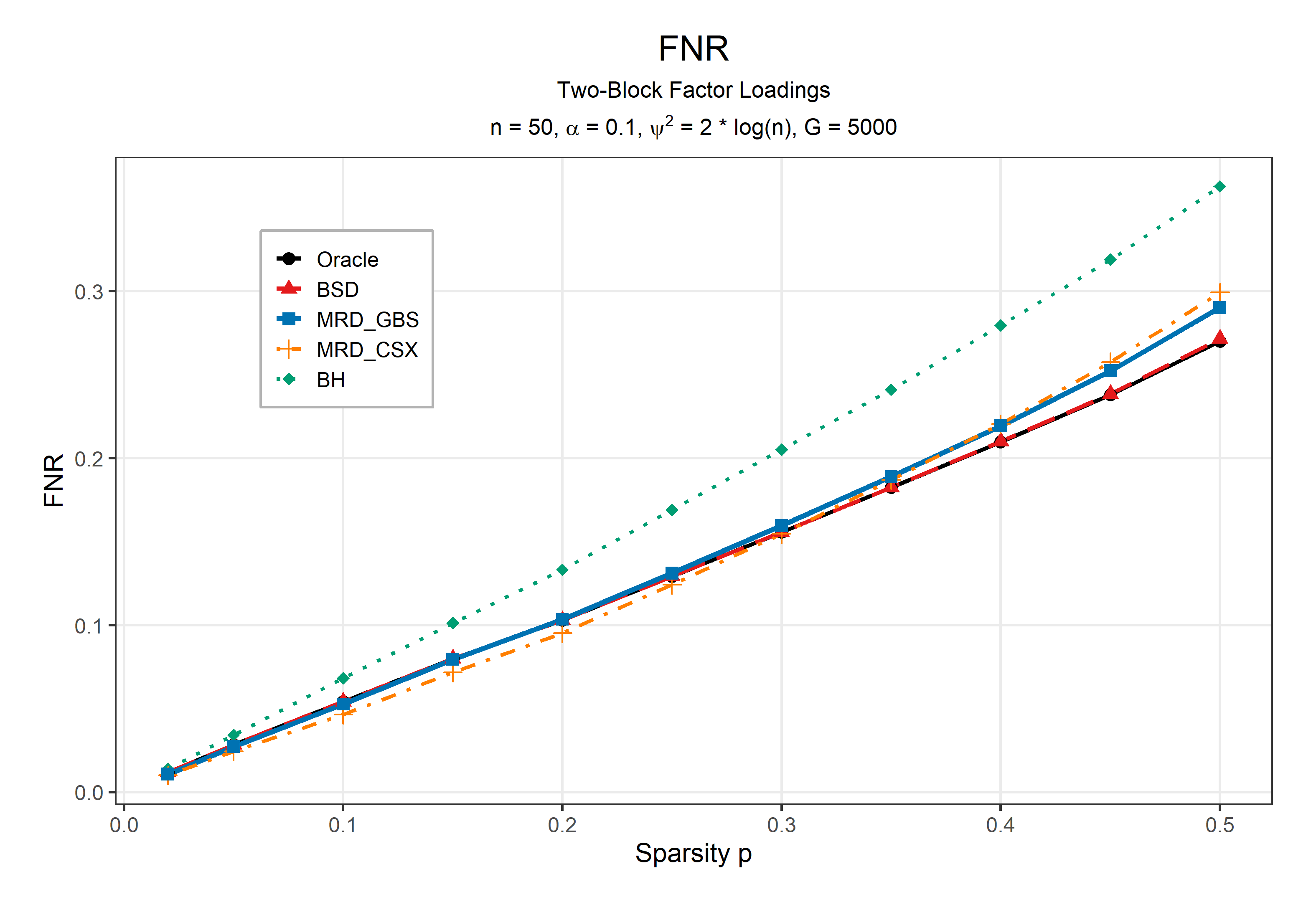}
		\caption{Two-Block Factor Dependence}
	\end{subfigure}
	
	\caption{Empirical false non-discovery rates under the six one-factor dependence structures when $n=50$, \(\alpha=0.1\), \(\psi^2=2\log n\), and \(G=5000\) Monte Carlo replications.}
	\label{fig:fnr-n50}
\end{figure}

\begin{figure}[p]
	\centering
	
	\begin{subfigure}{0.48\textwidth}
		\centering
		\includegraphics[width=\linewidth]{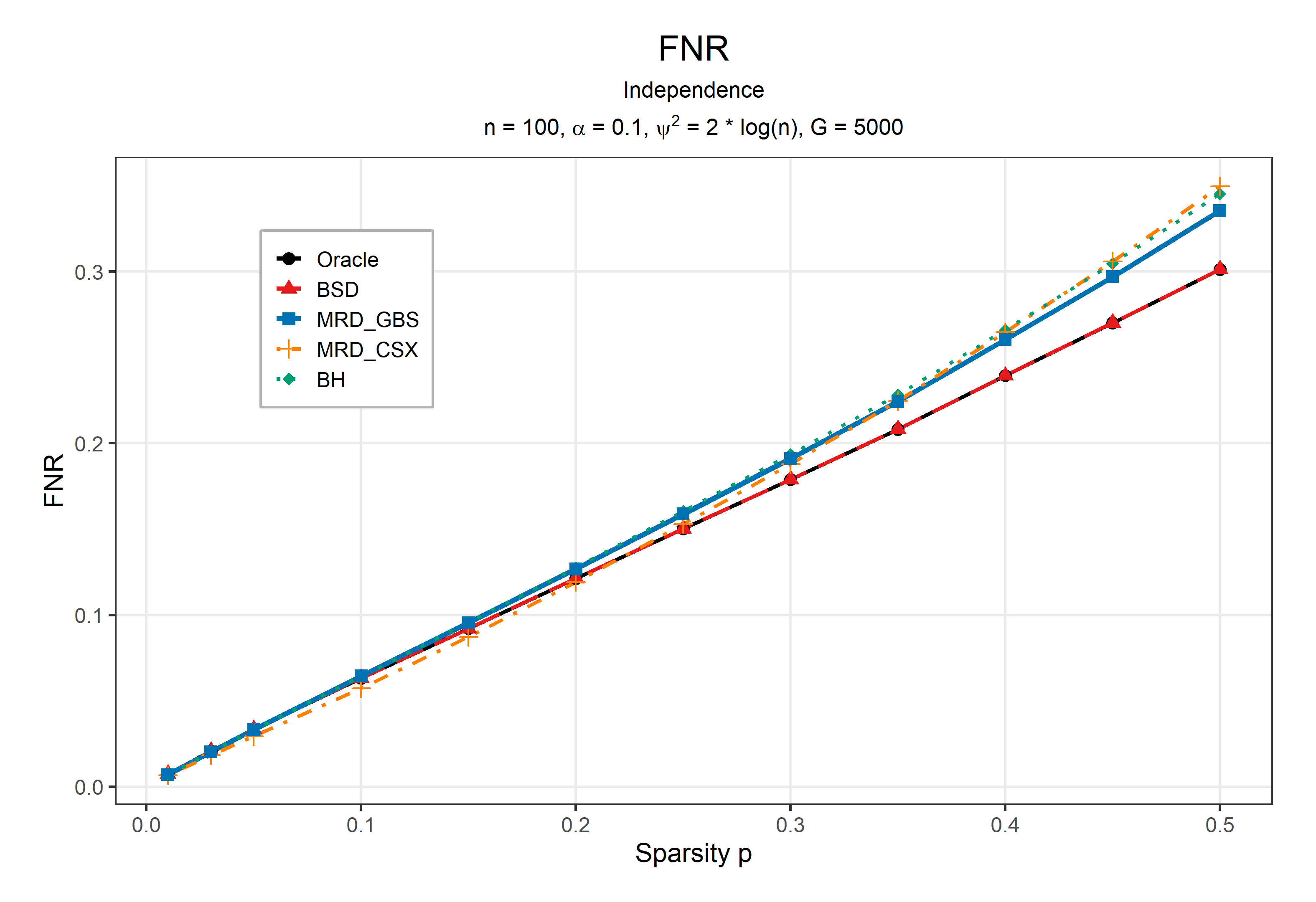}
		\caption{Independence}
	\end{subfigure}
	\hfill
	\begin{subfigure}{0.48\textwidth}
		\centering
		\includegraphics[width=\linewidth]{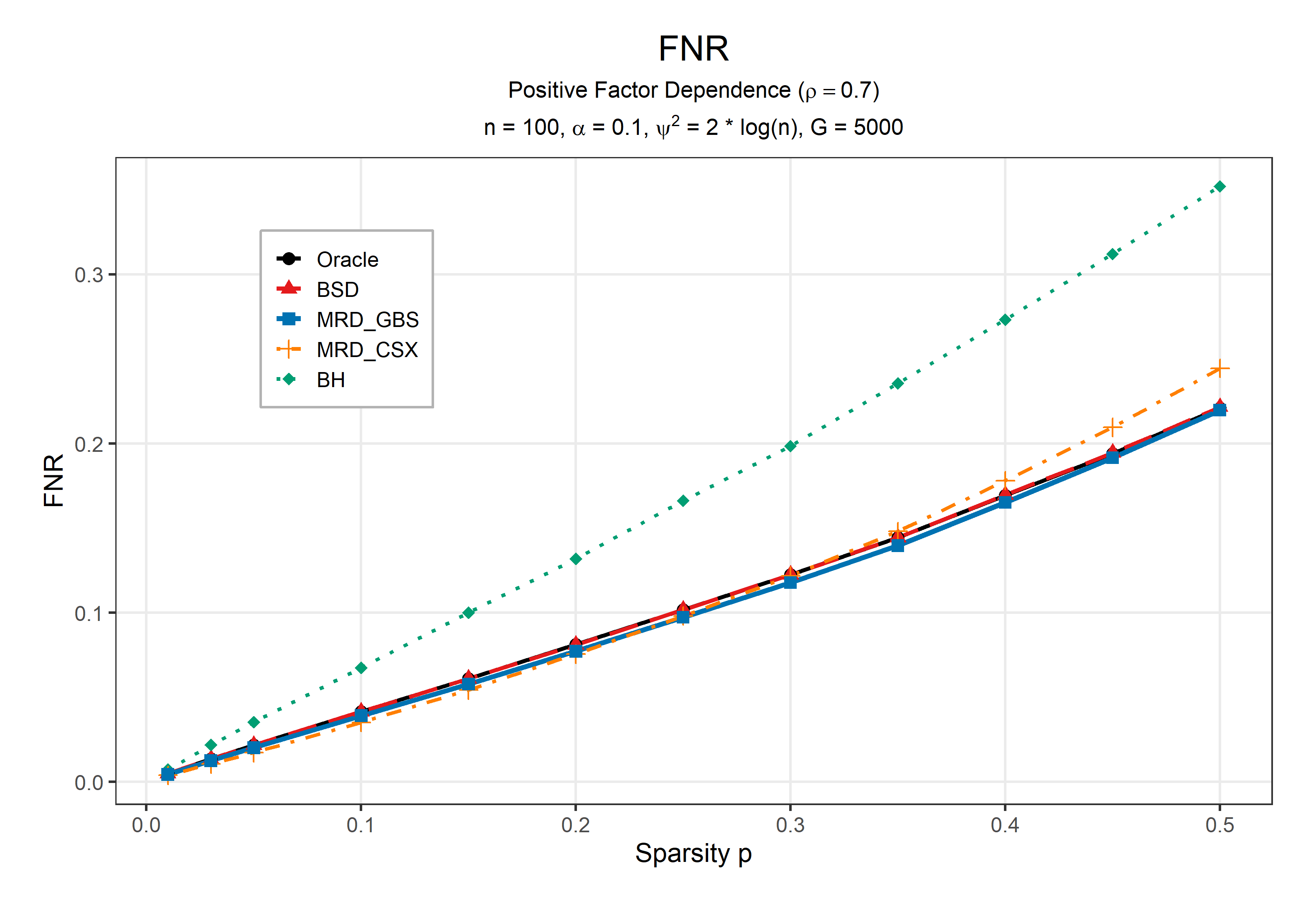}
		\caption{Positive Factor Dependence}
	\end{subfigure}
	
	\vspace{0.3cm}
	
	\begin{subfigure}{0.48\textwidth}
		\centering
		\includegraphics[width=\linewidth]{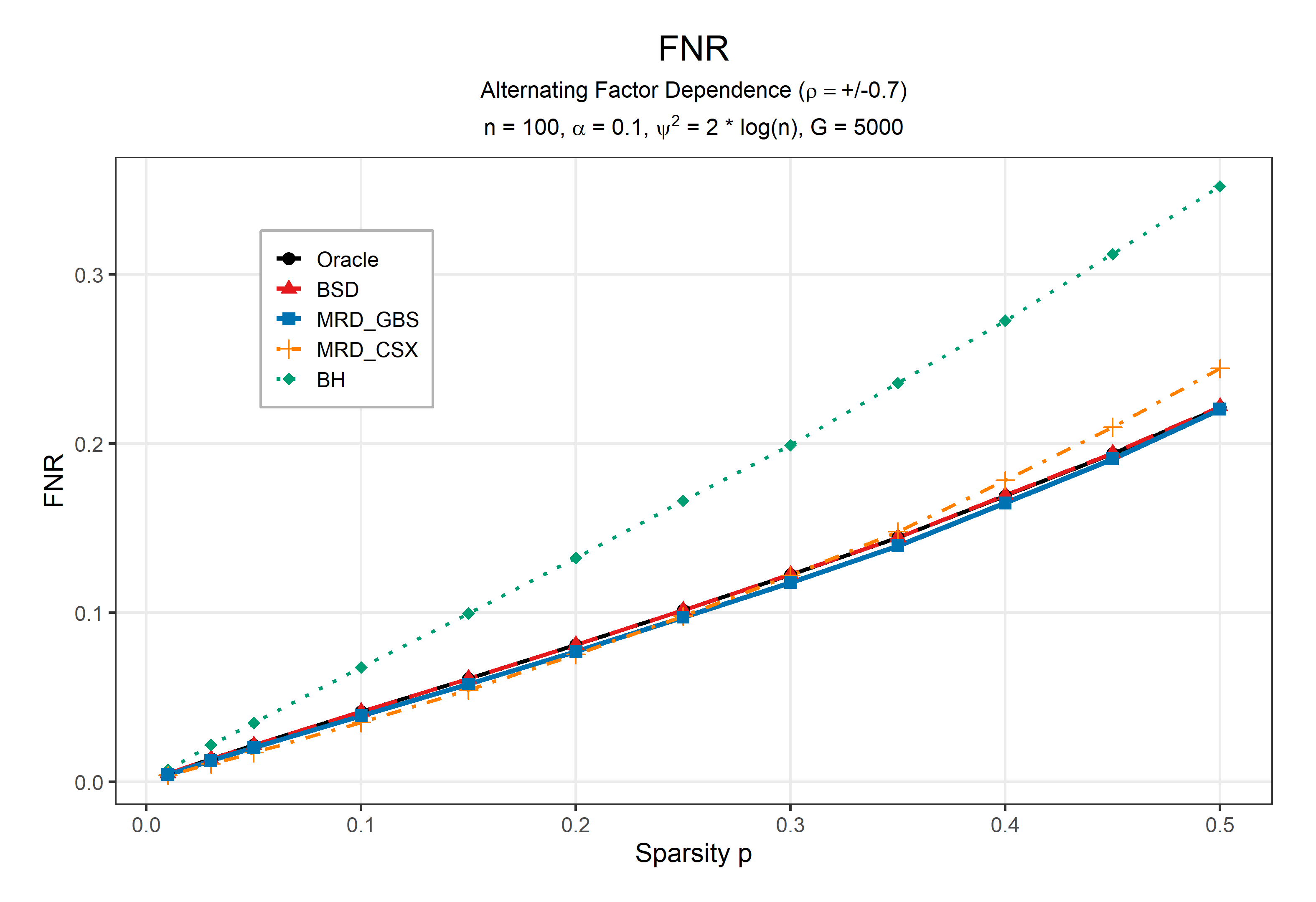}
		\caption{Alternating Factor Dependence}
	\end{subfigure}
	\hfill
	\begin{subfigure}{0.48\textwidth}
		\centering
		\includegraphics[width=\linewidth]{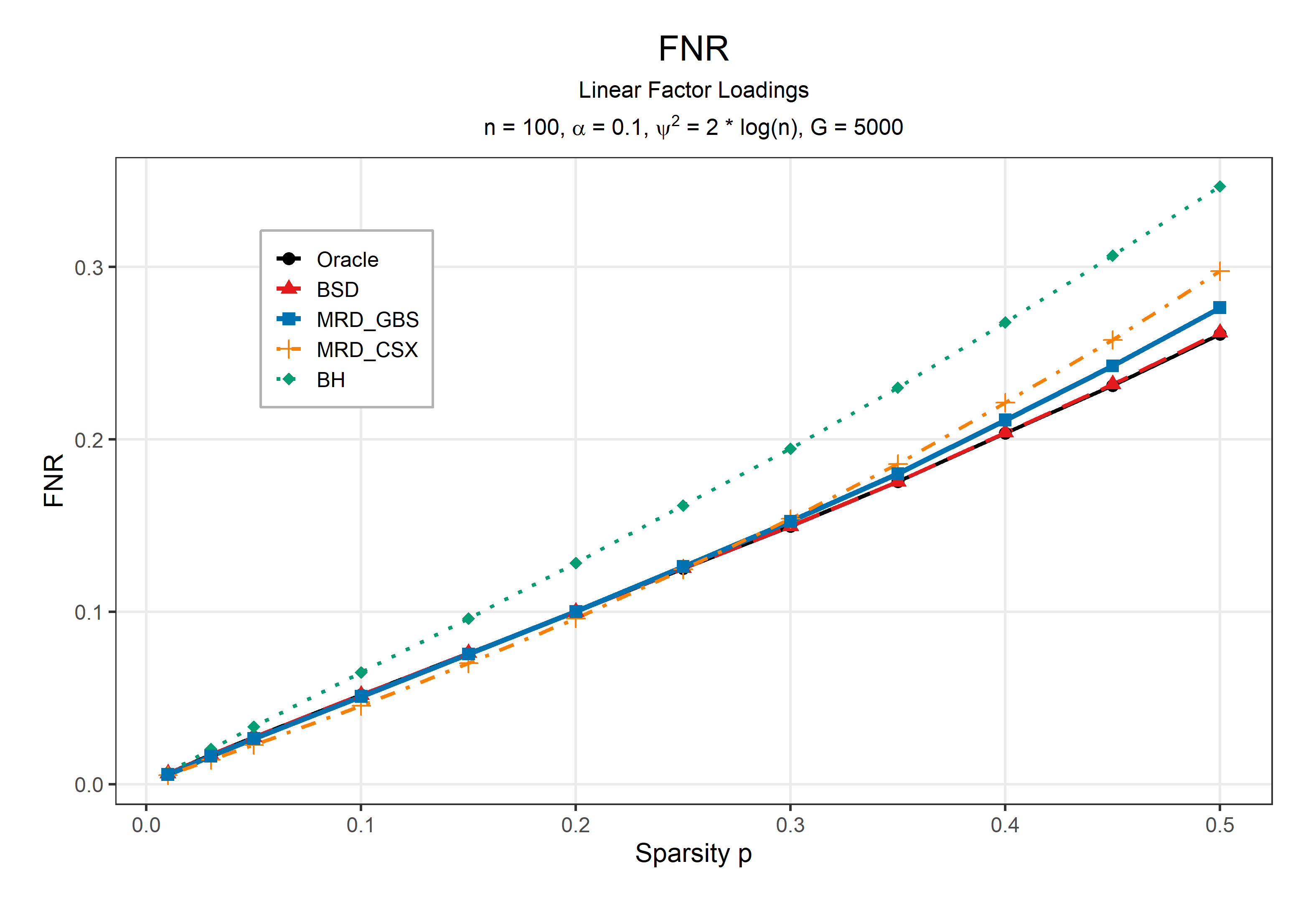}
		\caption{Linear Loadings}
	\end{subfigure}
	
	\vspace{0.3cm}
	
	\begin{subfigure}{0.48\textwidth}
		\centering
		\includegraphics[width=\linewidth]{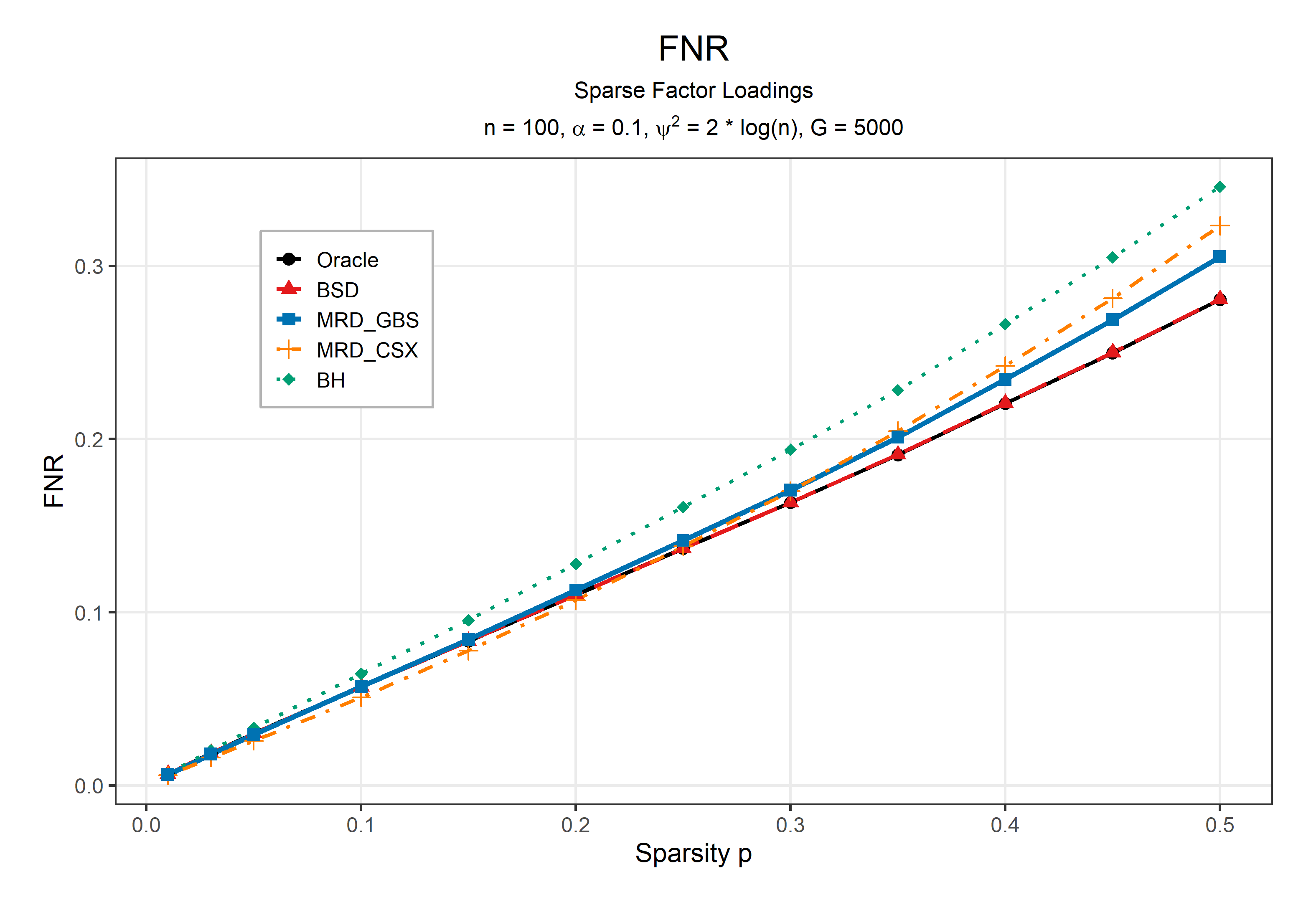}
		\caption{Sparse Factor Dependence}
	\end{subfigure}
	\hfill
	\begin{subfigure}{0.48\textwidth}
		\centering
		\includegraphics[width=\linewidth]{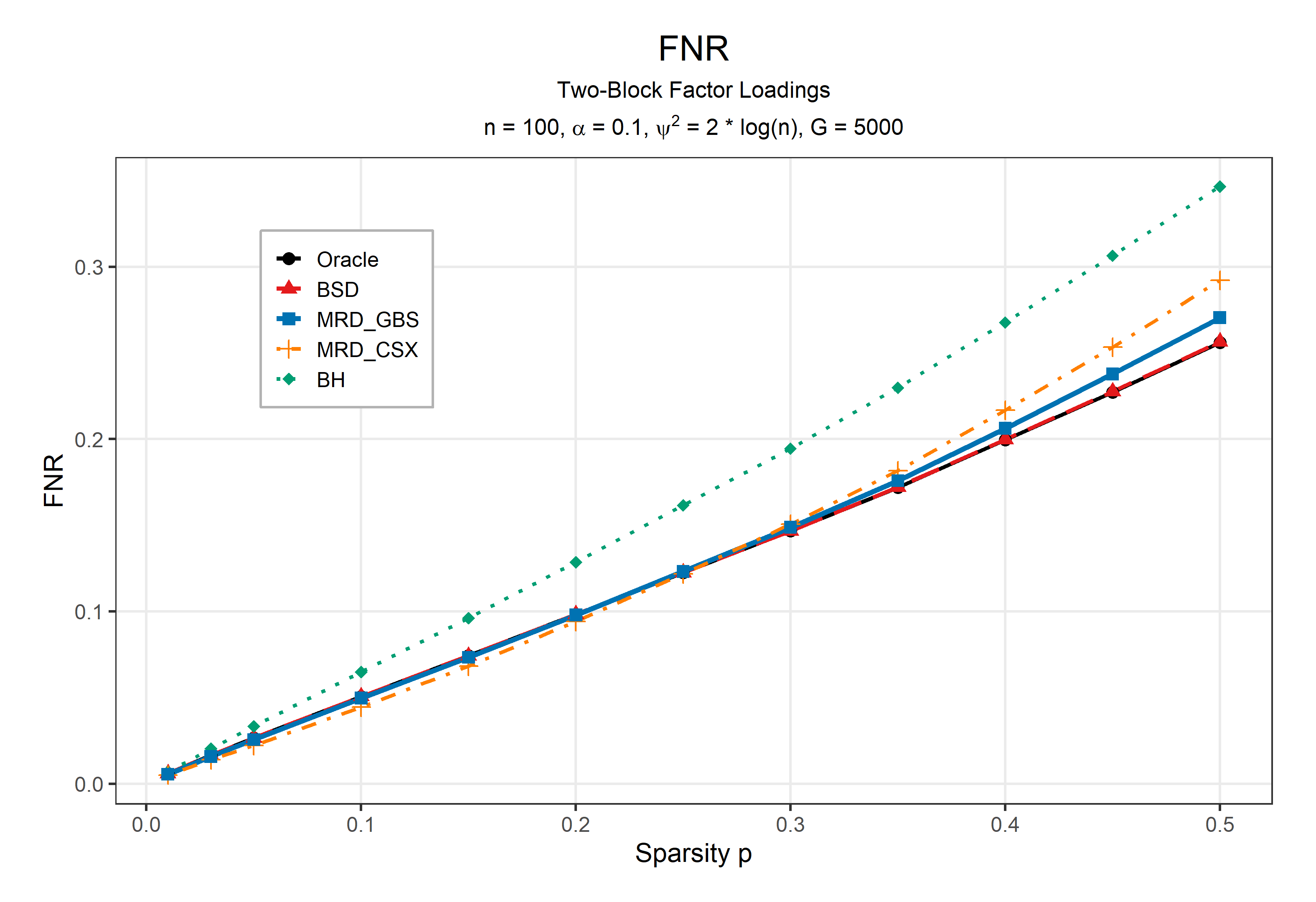}
		\caption{Two-Block Factor Dependence}
	\end{subfigure}
	
	\caption{Empirical false non-discovery rates under the six one-factor dependence structures when $n=100$, \(\alpha=0.1\), \(\psi^2=2\log n\), and \(G=5000\) Monte Carlo replications.}
	\label{fig:fnr-n100}
\end{figure}

A striking feature of the results is the close agreement between BSD and the Bayes Oracle across all sample sizes, sparsity levels, and dependence structures considered. The FNR curves of BSD are nearly indistinguishable from those of the Oracle throughout the entire range of configurations examined. This agreement complements the Bayes risk results reported earlier and provides additional evidence that the posterior model-pursuit mechanism underlying BSD successfully reproduces the signal-selection behavior of the Bayes Oracle even under substantial covariance dependence. In particular, BSD appears to recover nearly the same balance between false discoveries and missed discoveries that is induced by Bayes-risk minimization.

The behavior of MRD--GBS is also noteworthy. Across most configurations, the procedure attains FNR values that remain remarkably close to those of BSD and the Oracle despite being constructed from an entirely different frequentist framework. This finding is particularly significant because the GBS calibration was originally motivated by FDR control rather than Bayes-risk optimization. The results therefore suggest that the adaptive covariance-aware search mechanism employed by MRD--GBS is capable of recovering a large proportion of the signals while simultaneously maintaining stable control of false discoveries.

The performance of MRD--CSX is less competitive. Although the procedure occasionally attains FNR values comparable to those of BSD and MRD--GBS in certain moderately dense regimes, its FNRs are generally larger and are frequently accompanied by substantially elevated FDR values, as observed in the previous subsection. The BH procedure consistently exhibits the largest FNRs among the competing methods, reflecting its conservative behavior under the dependence structures considered here. These observations indicate that the lower Bayes risks achieved by BSD and MRD--GBS arise not merely from increased numbers of discoveries, but from a more favorable allocation of Type~I and Type~II errors.

Overall, the FNR results reinforce the conclusions of both the Bayes risk and FDR analyses. BSD consistently reproduces the signal-recovery performance of the Bayes Oracle across a broad collection of dependence structures, while MRD--GBS frequently attains comparable FNR values despite being derived from a fundamentally different methodology. Taken together, these findings provide further evidence that both procedures are highly effective at identifying sparse signals under dependence, with BSD exhibiting near-Oracle performance throughout the simulation settings considered.

\subsubsection{Power}

The empirical powers of the competing procedures are displayed in Figures \ref{fig:power-n20}--\ref{fig:power-n100}.
Since power measures the proportion of active signals that are correctly identified,
it provides a direct assessment of the signal-detection capability of the competing
methods. Together with the FDR and FNR analyses presented earlier, the power
results offer a complementary perspective on the sparse signal recovery behavior
of the procedures under consideration.

\begin{figure}[p]
	\centering
	
	\begin{subfigure}{0.48\textwidth}
		\centering
		\includegraphics[width=\linewidth]{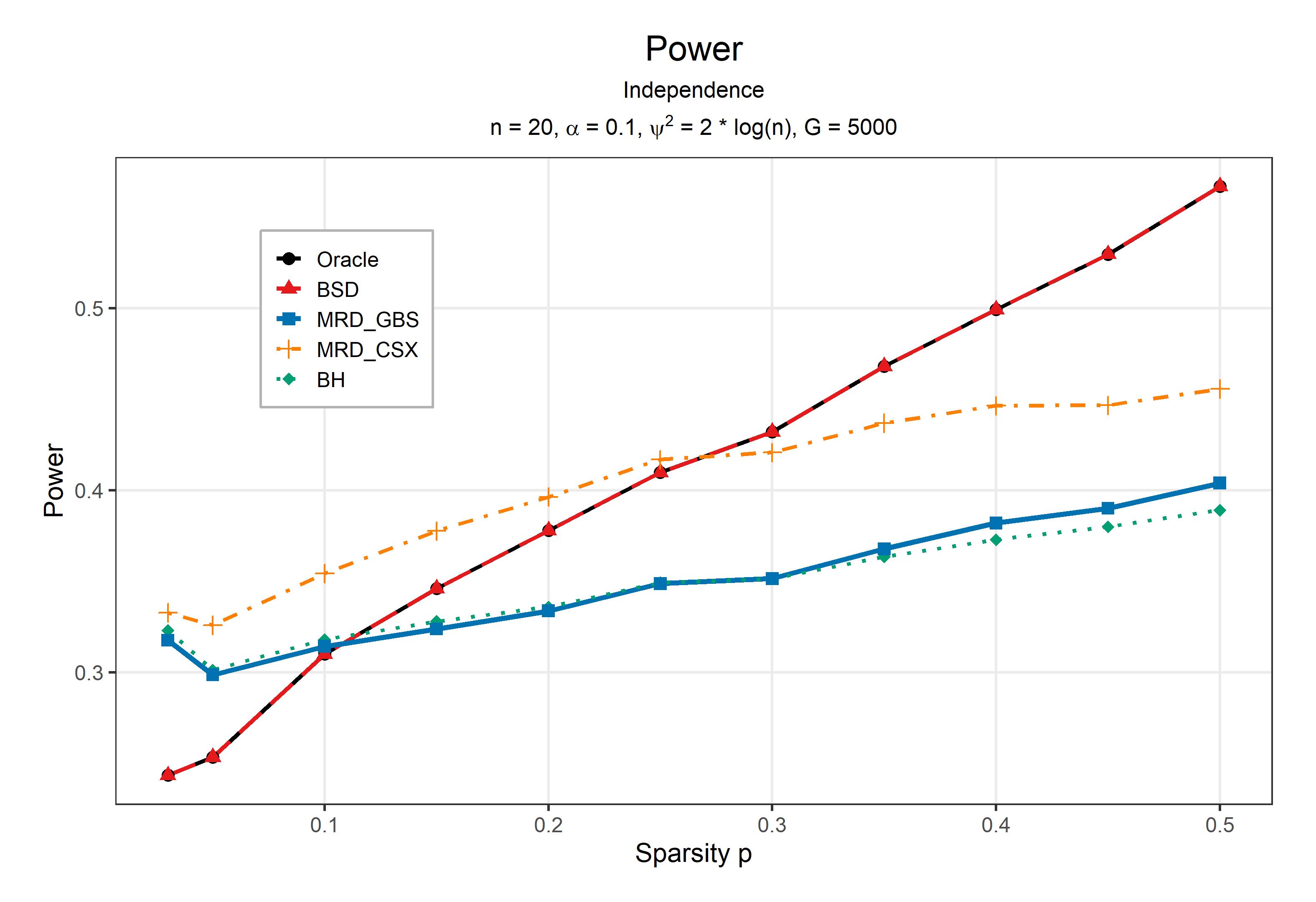}
		\caption{Independence}
	\end{subfigure}
	\hfill
	\begin{subfigure}{0.48\textwidth}
		\centering
		\includegraphics[width=\linewidth]{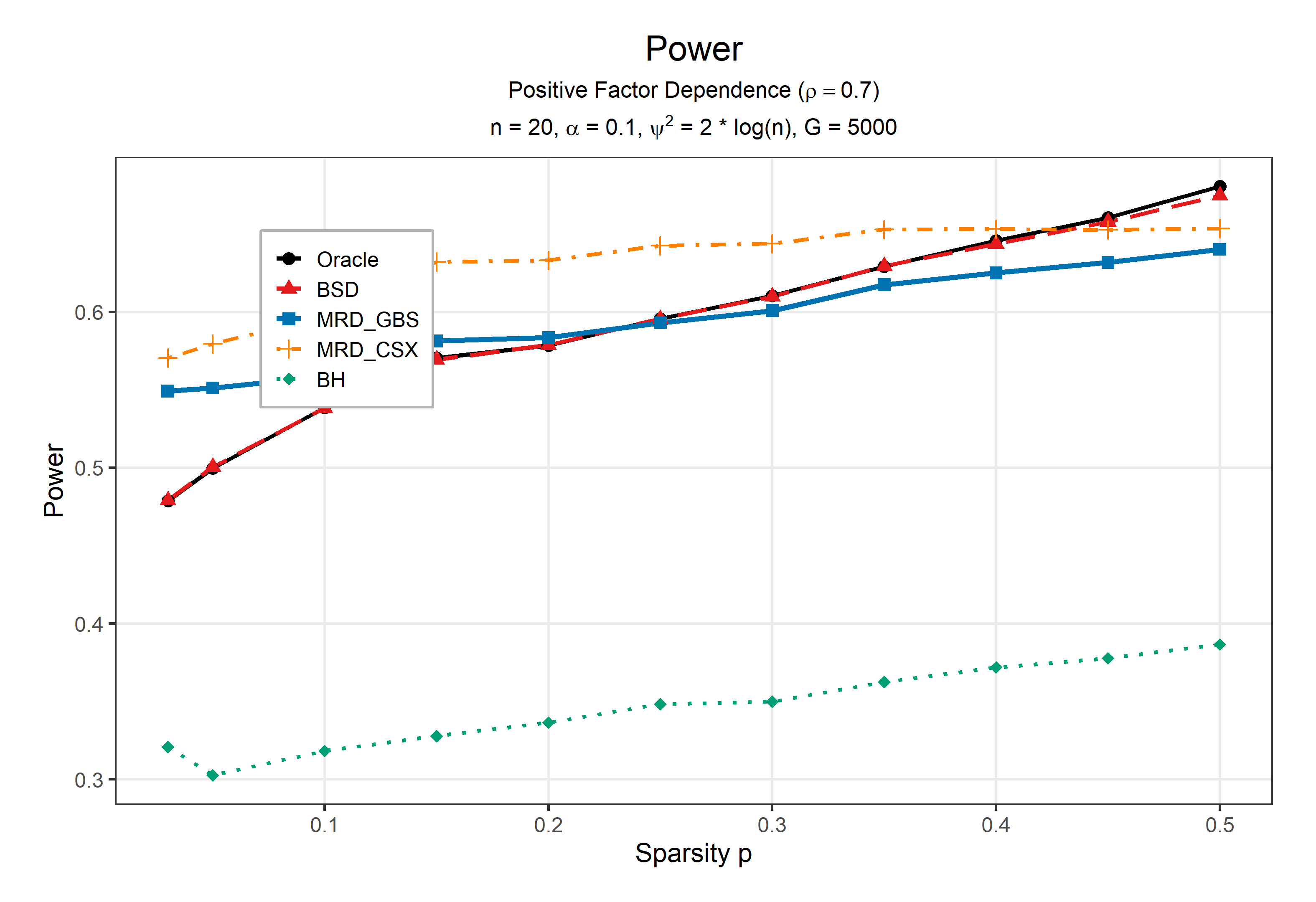}
		\caption{Positive Factor Dependence}
	\end{subfigure}
	
	\vspace{0.3cm}
	
	\begin{subfigure}{0.48\textwidth}
		\centering
		\includegraphics[width=\linewidth]{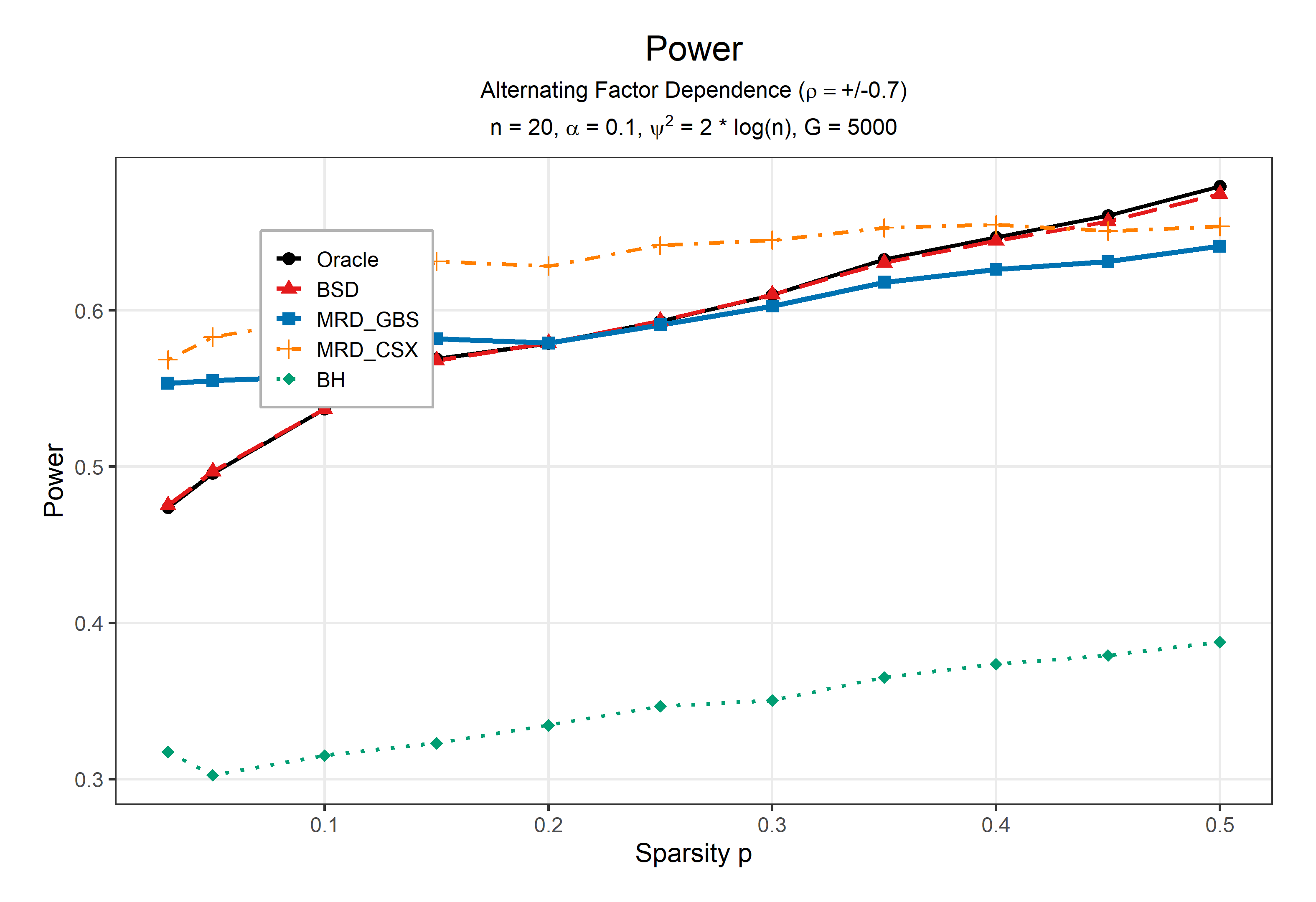}
		\caption{Alternating Factor Dependence}
	\end{subfigure}
	\hfill
	\begin{subfigure}{0.48\textwidth}
		\centering
		\includegraphics[width=\linewidth]{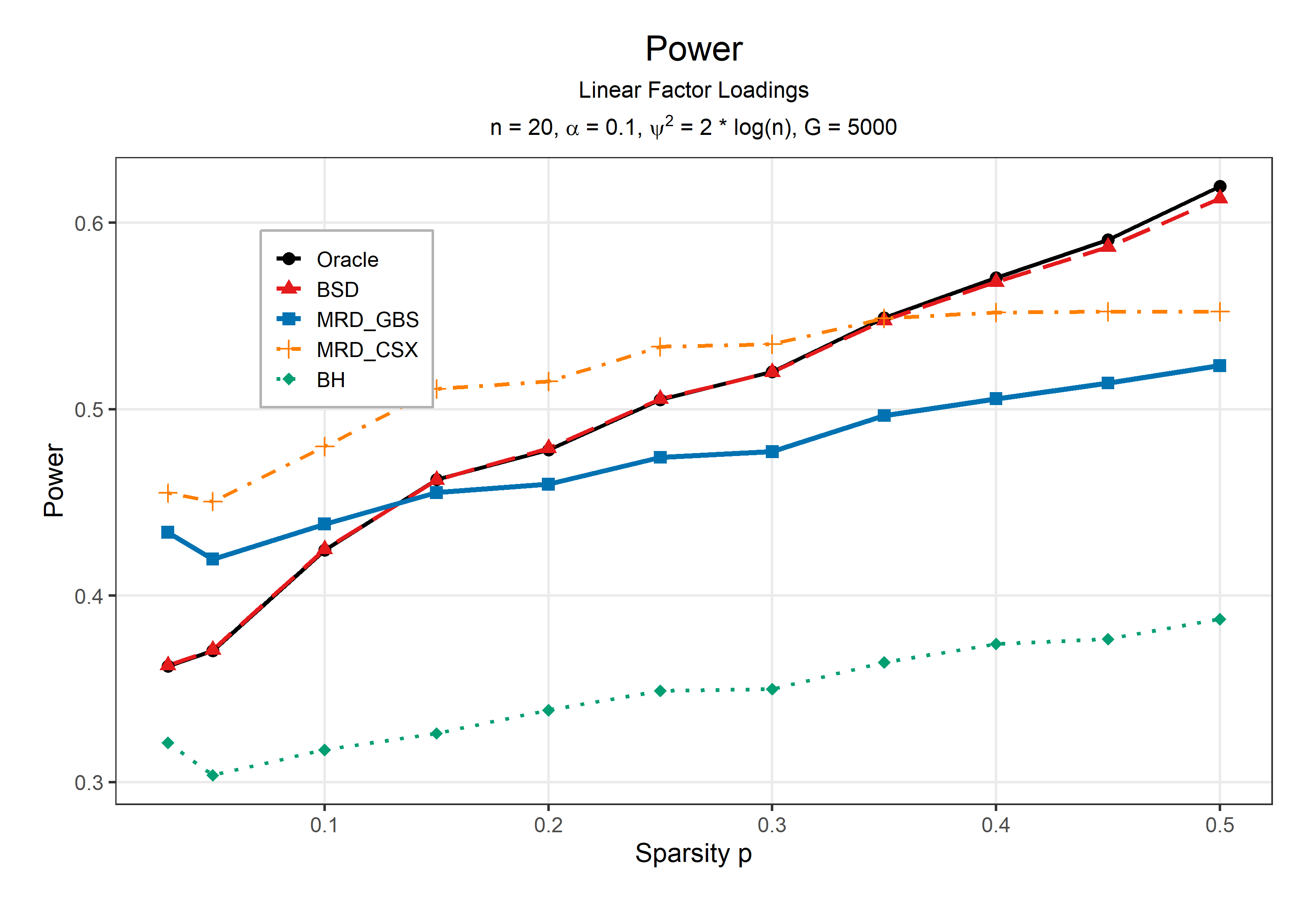}
		\caption{Linear Loadings}
	\end{subfigure}
	
	\vspace{0.3cm}
	
	\begin{subfigure}{0.48\textwidth}
		\centering
		\includegraphics[width=\linewidth]{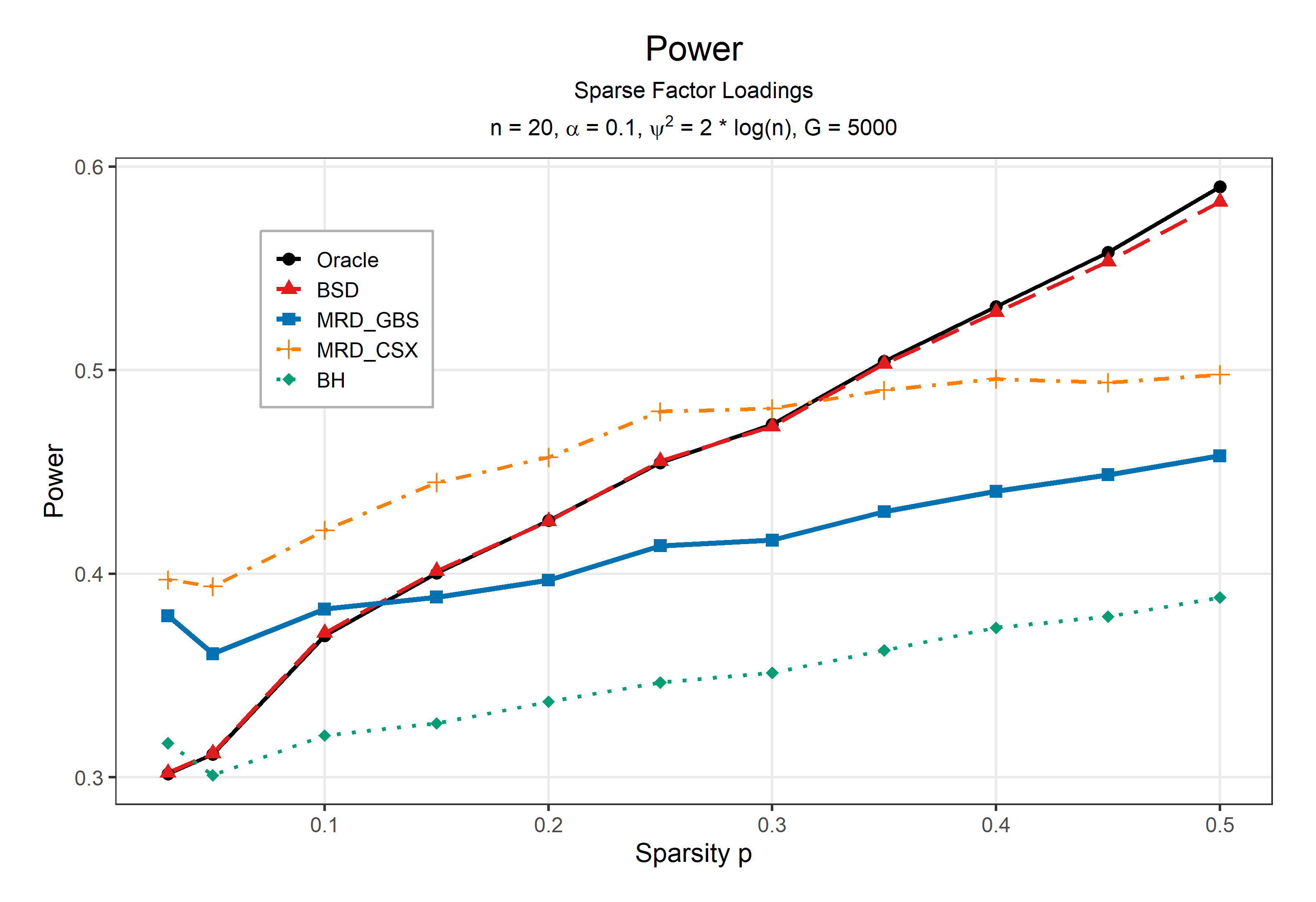}
		\caption{Sparse Factor Dependence}
	\end{subfigure}
	\hfill
	\begin{subfigure}{0.48\textwidth}
		\centering
		\includegraphics[width=\linewidth]{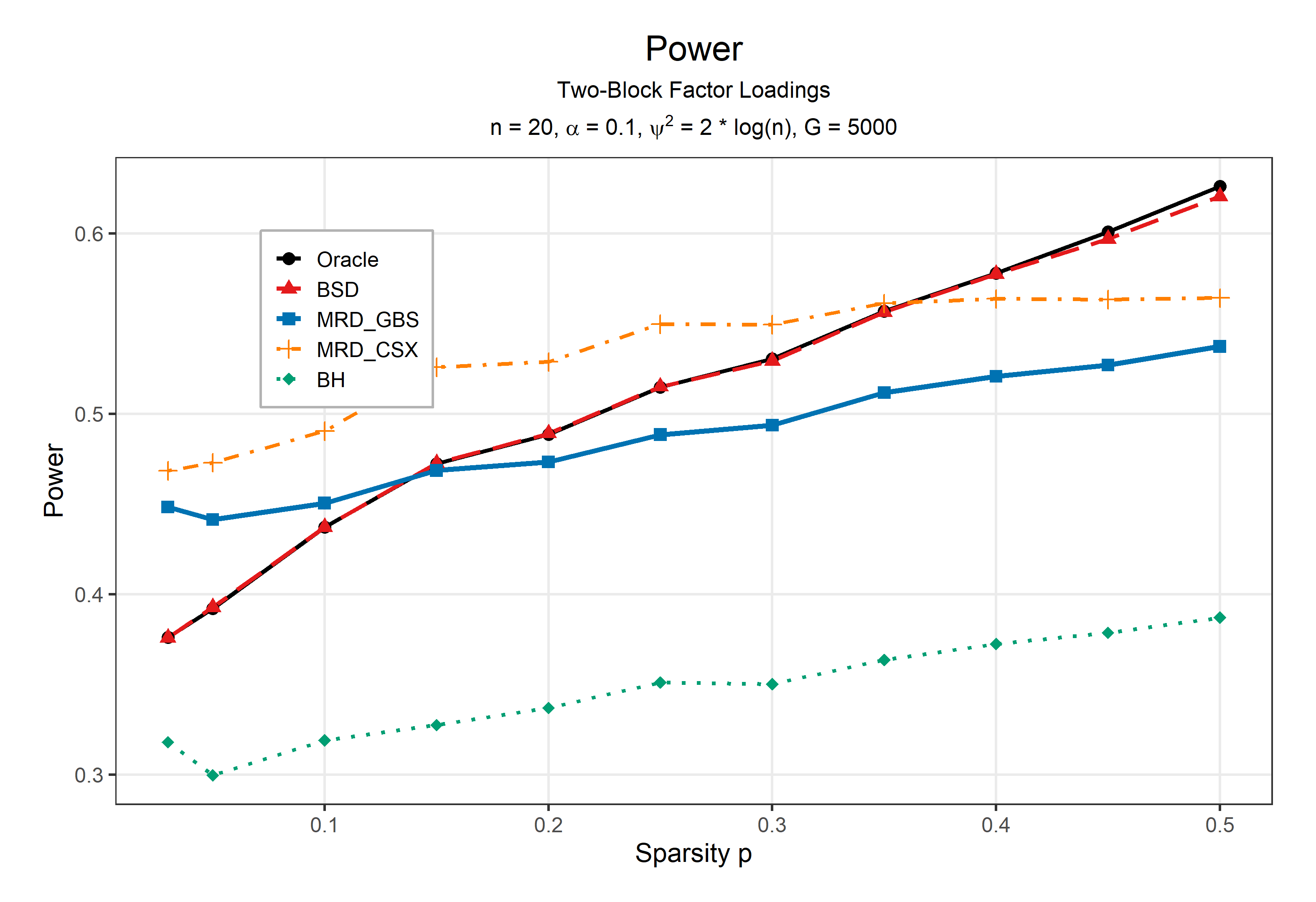}
		\caption{Two-Block Factor Dependence}
	\end{subfigure}
	
	\caption{Empirical powers under the six one-factor dependence structures when $n=20$, \(\alpha=0.1\), \(\psi^2=2\log n\), and \(G=5000\) Monte Carlo replications.}
	\label{fig:power-n20}
\end{figure}

\begin{figure}[p]
	\centering
	
	\begin{subfigure}{0.48\textwidth}
		\centering
		\includegraphics[width=\linewidth]{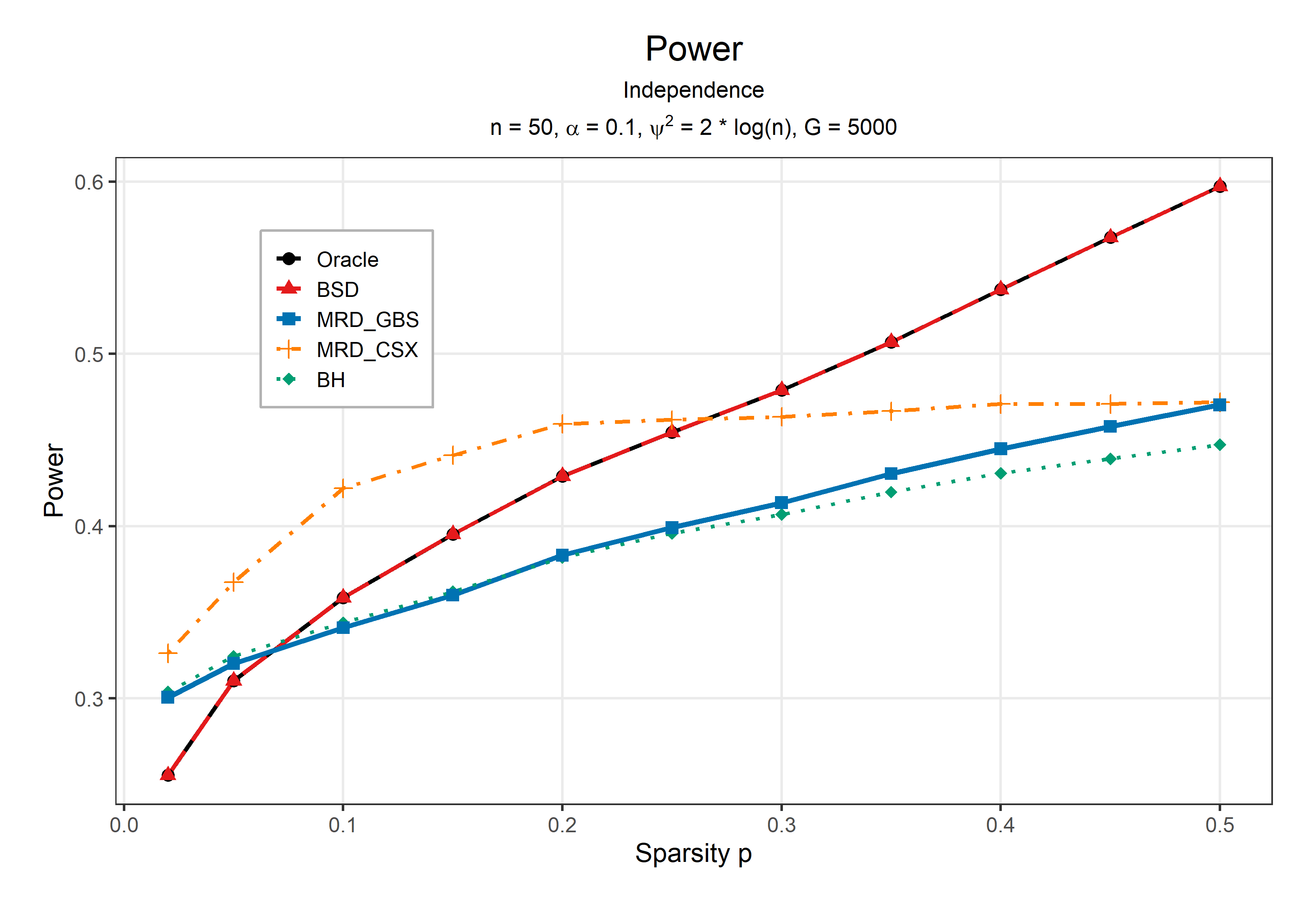}
		\caption{Independence}
	\end{subfigure}
	\hfill
	\begin{subfigure}{0.48\textwidth}
		\centering
		\includegraphics[width=\linewidth]{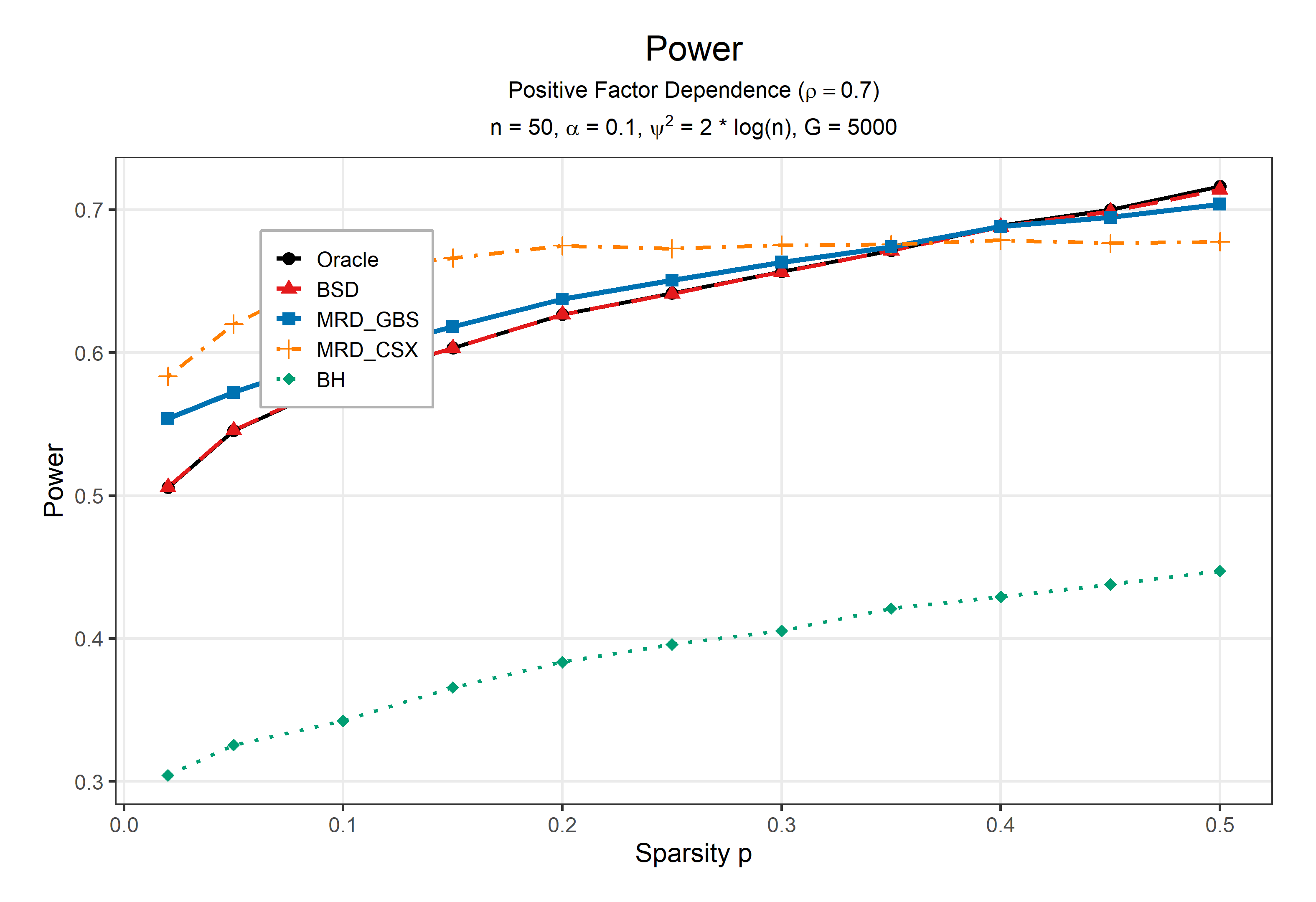}
		\caption{Positive Factor Dependence}
	\end{subfigure}
	
	\vspace{0.3cm}
	
	\begin{subfigure}{0.48\textwidth}
		\centering
		\includegraphics[width=\linewidth]{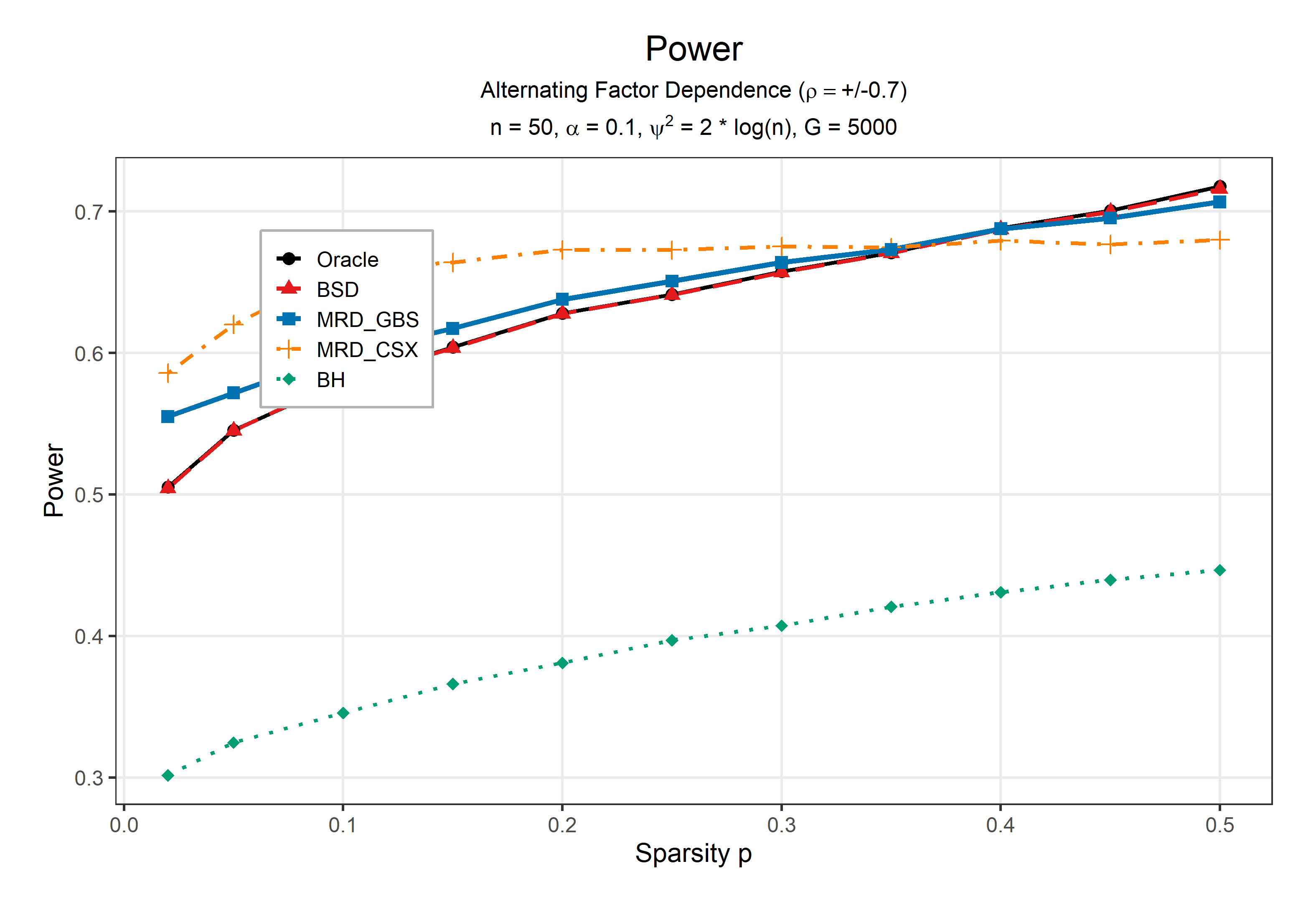}
		\caption{Alternating Factor Dependence}
	\end{subfigure}
	\hfill
	\begin{subfigure}{0.48\textwidth}
		\centering
		\includegraphics[width=\linewidth]{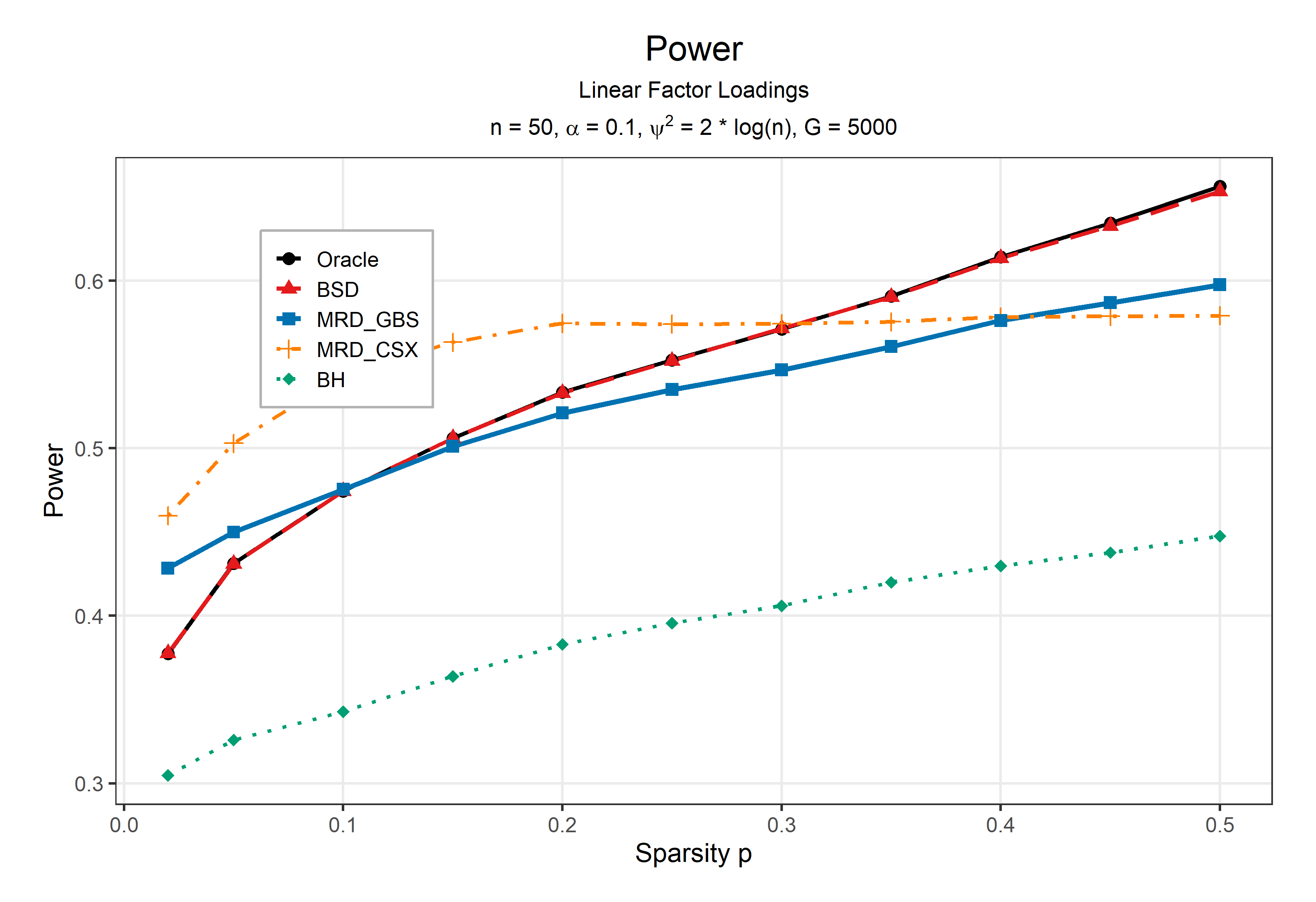}
		\caption{Linear Loadings}
	\end{subfigure}
	
	\vspace{0.3cm}
	
	\begin{subfigure}{0.48\textwidth}
		\centering
		\includegraphics[width=\linewidth]{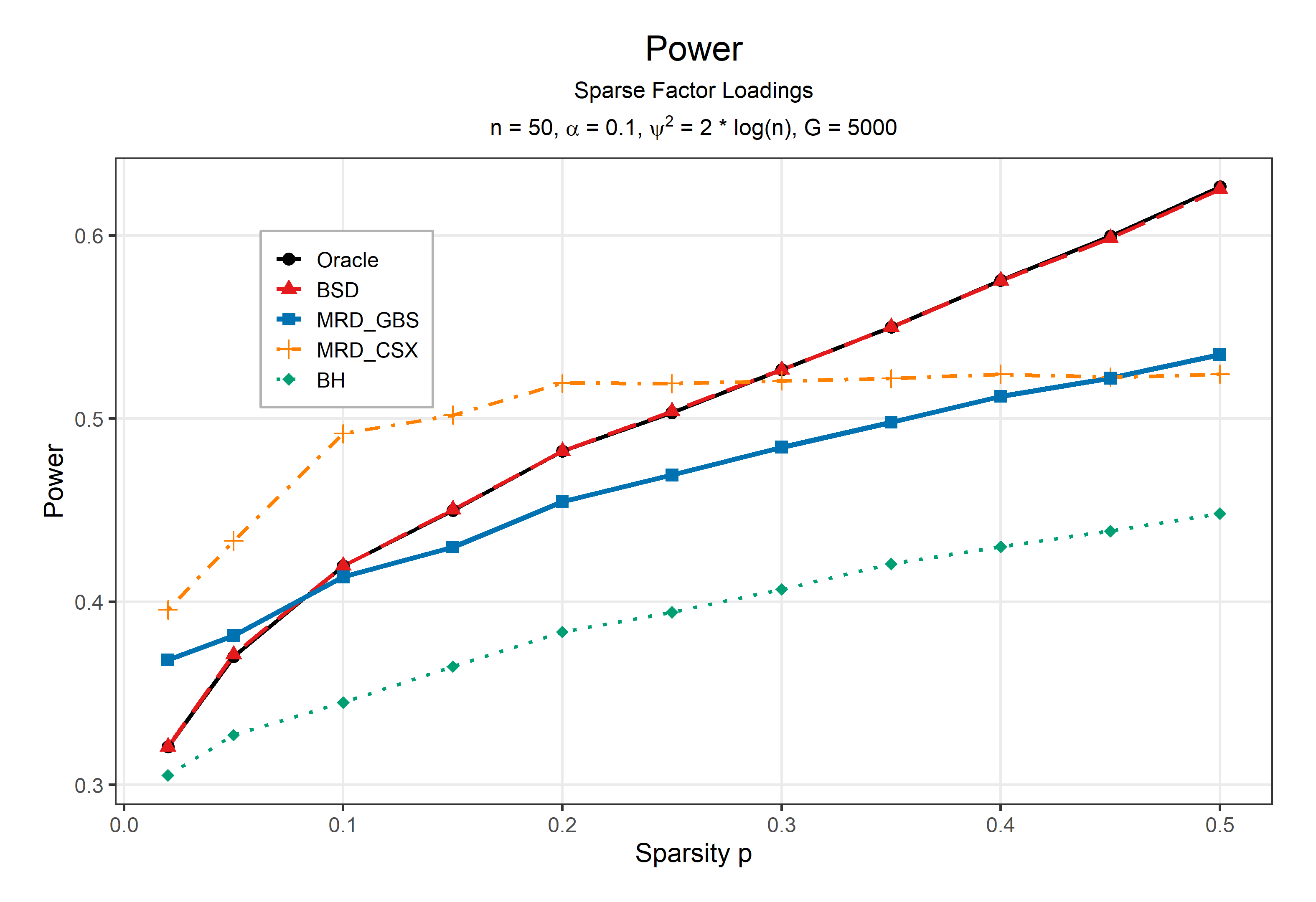}
		\caption{Sparse Factor Dependence}
	\end{subfigure}
	\hfill
	\begin{subfigure}{0.48\textwidth}
		\centering
		\includegraphics[width=\linewidth]{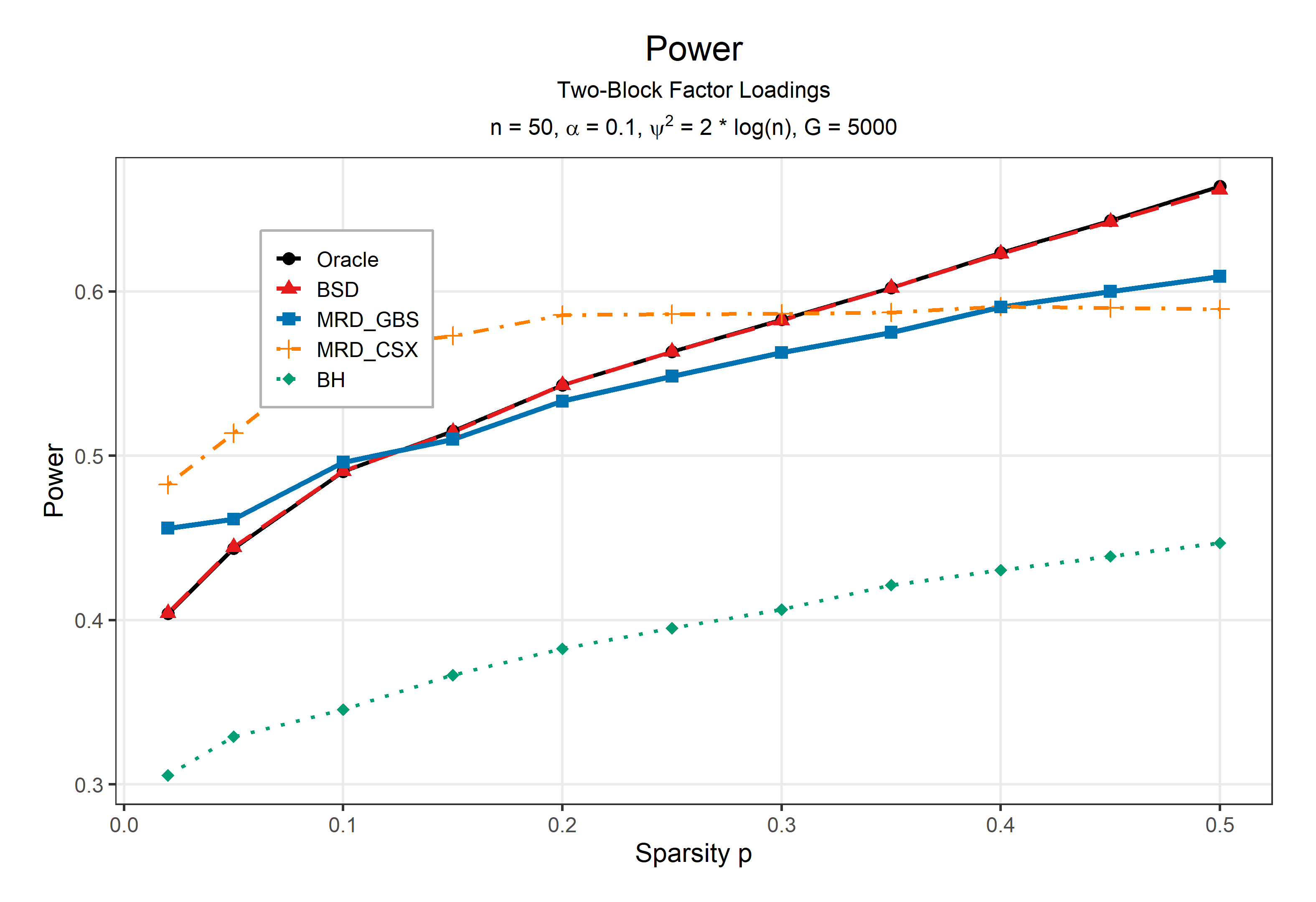}
		\caption{Two-Block Factor Dependence}
	\end{subfigure}
	
	\caption{Empirical powers under the six one-factor dependence structures when $n=50$, \(\alpha=0.1\), \(\psi^2=2\log n\), and \(G=5000\) Monte Carlo replications.}
	\label{fig:power-n50}
\end{figure}

\begin{figure}[p]
	\centering
	
	\begin{subfigure}{0.48\textwidth}
		\centering
		\includegraphics[width=\linewidth]{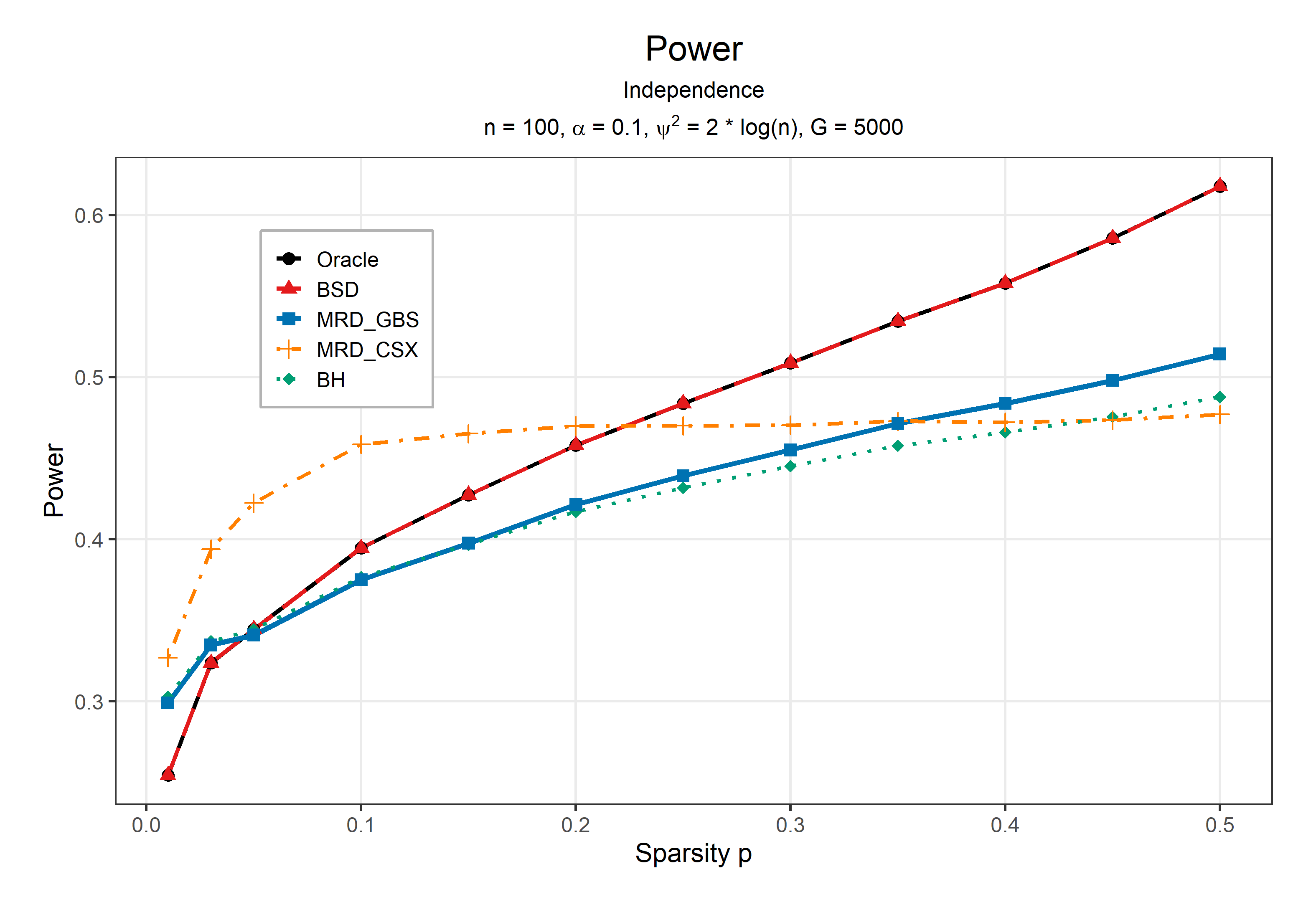}
		\caption{Independence}
	\end{subfigure}
	\hfill
	\begin{subfigure}{0.48\textwidth}
		\centering
		\includegraphics[width=\linewidth]{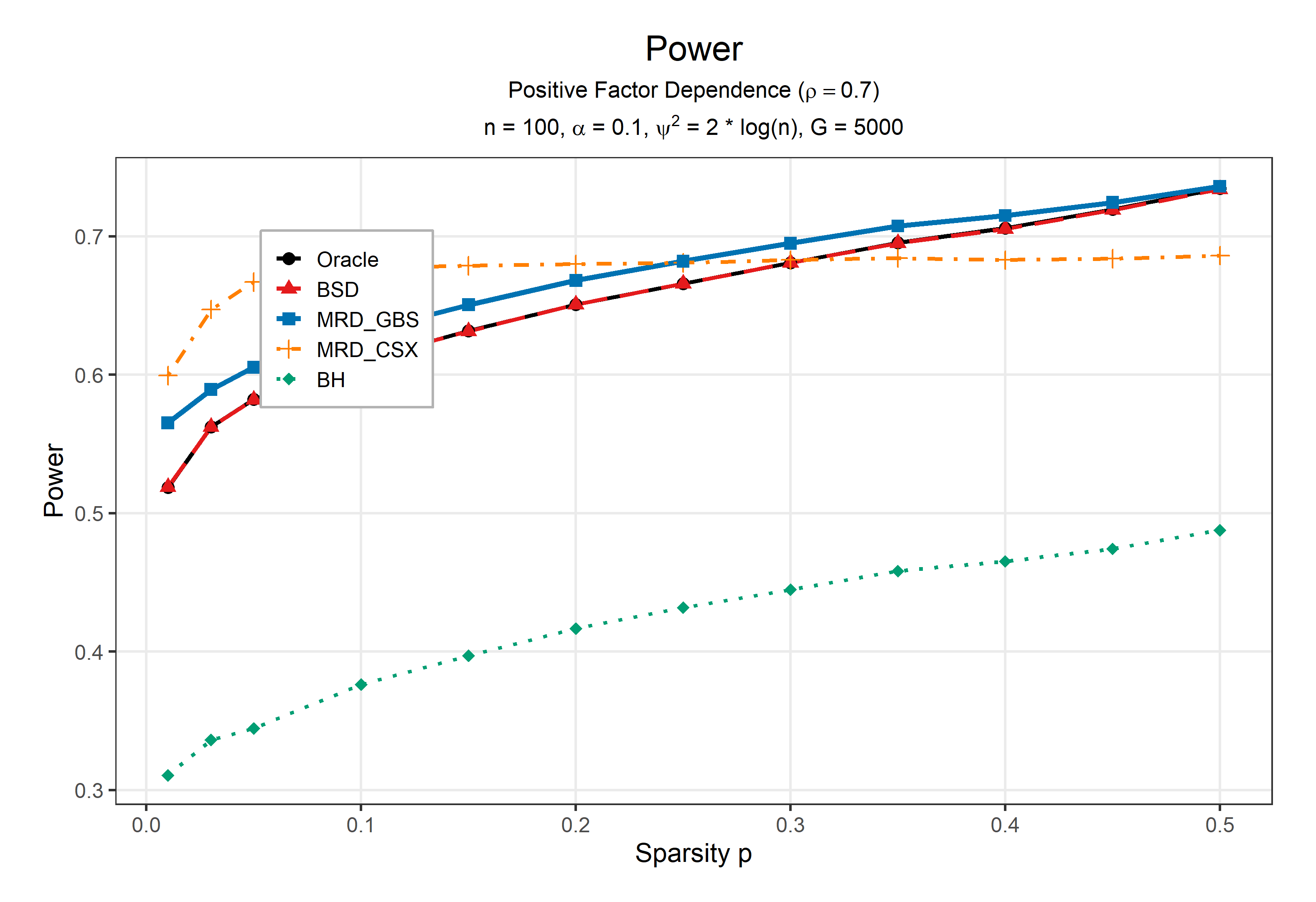}
		\caption{Positive Factor Dependence}
	\end{subfigure}
	
	\vspace{0.3cm}
	
	\begin{subfigure}{0.48\textwidth}
		\centering
		\includegraphics[width=\linewidth]{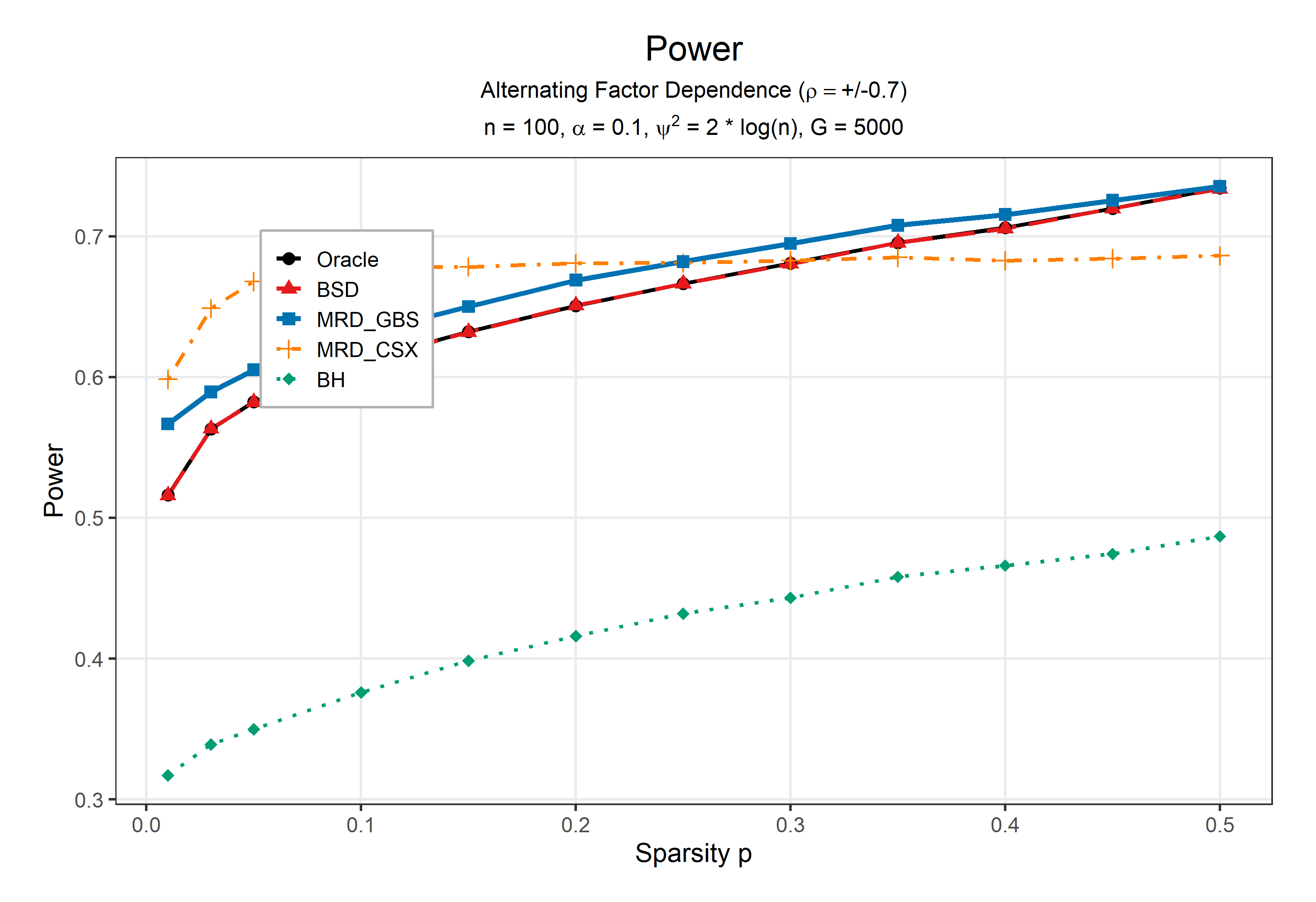}
		\caption{Alternating Factor Dependence}
	\end{subfigure}
	\hfill
	\begin{subfigure}{0.48\textwidth}
		\centering
		\includegraphics[width=\linewidth]{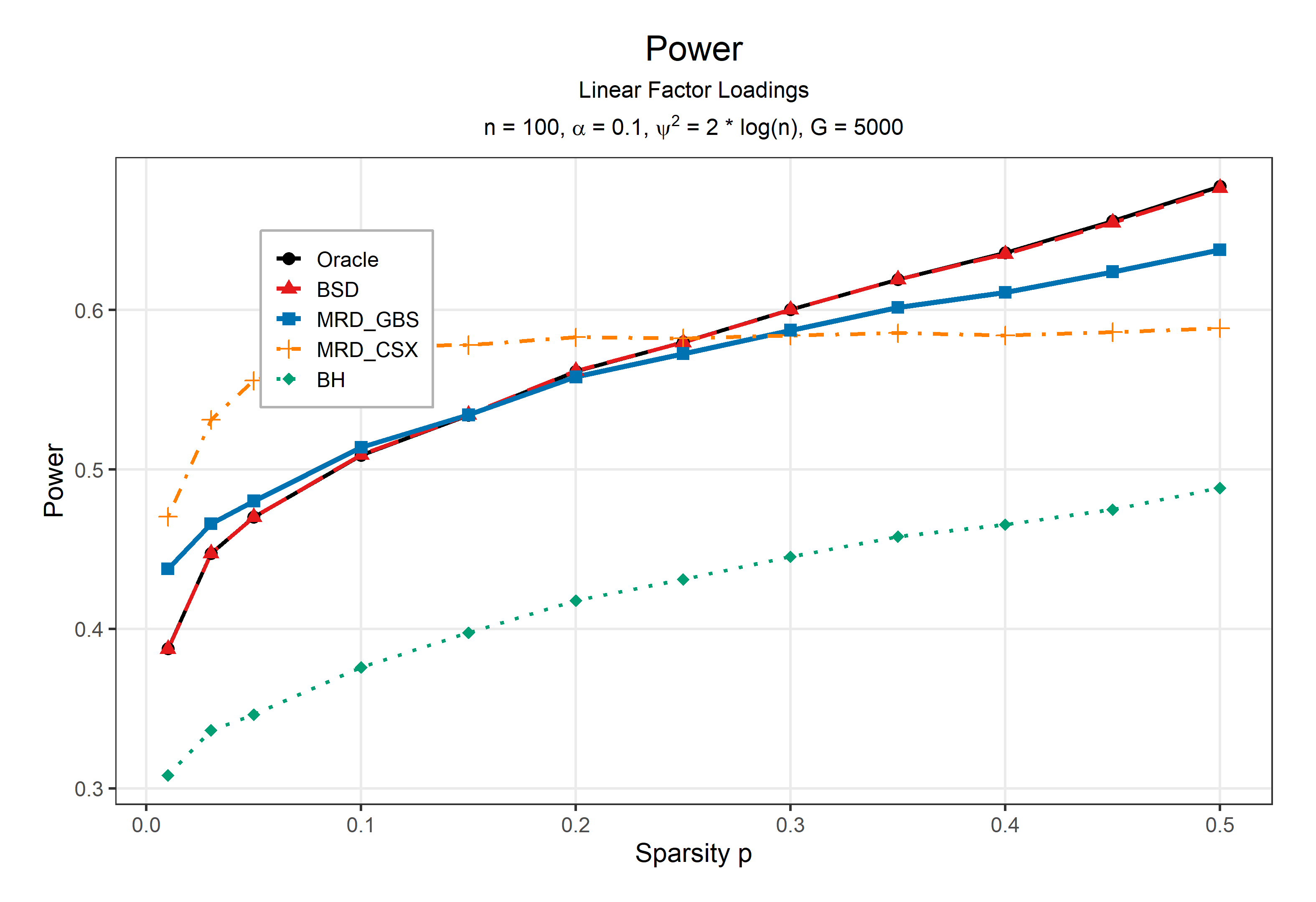}
		\caption{Linear Loadings}
	\end{subfigure}
	
	\vspace{0.3cm}
	
	\begin{subfigure}{0.48\textwidth}
		\centering
		\includegraphics[width=\linewidth]{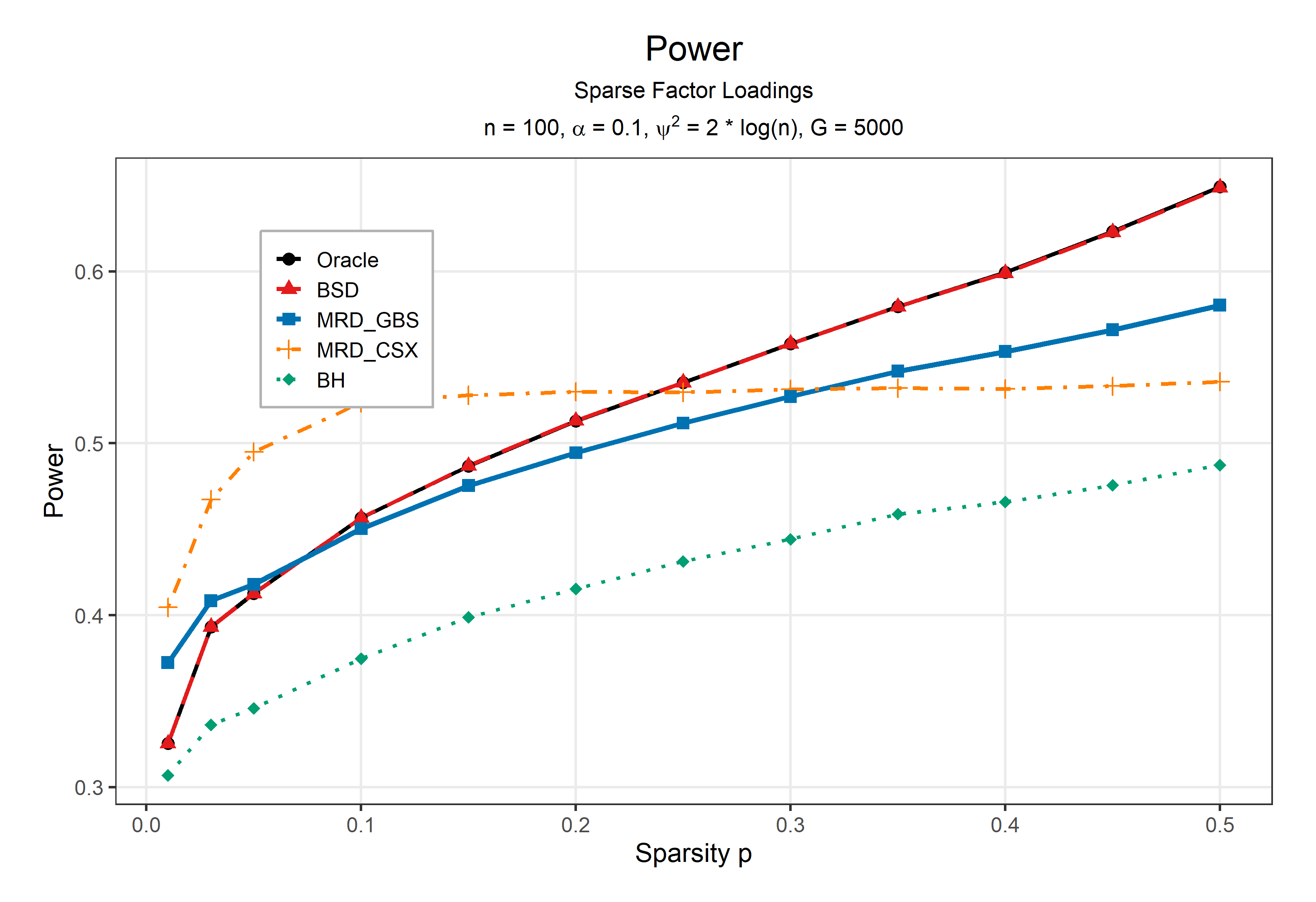}
		\caption{Sparse Factor Dependence}
	\end{subfigure}
	\hfill
	\begin{subfigure}{0.48\textwidth}
		\centering
		\includegraphics[width=\linewidth]{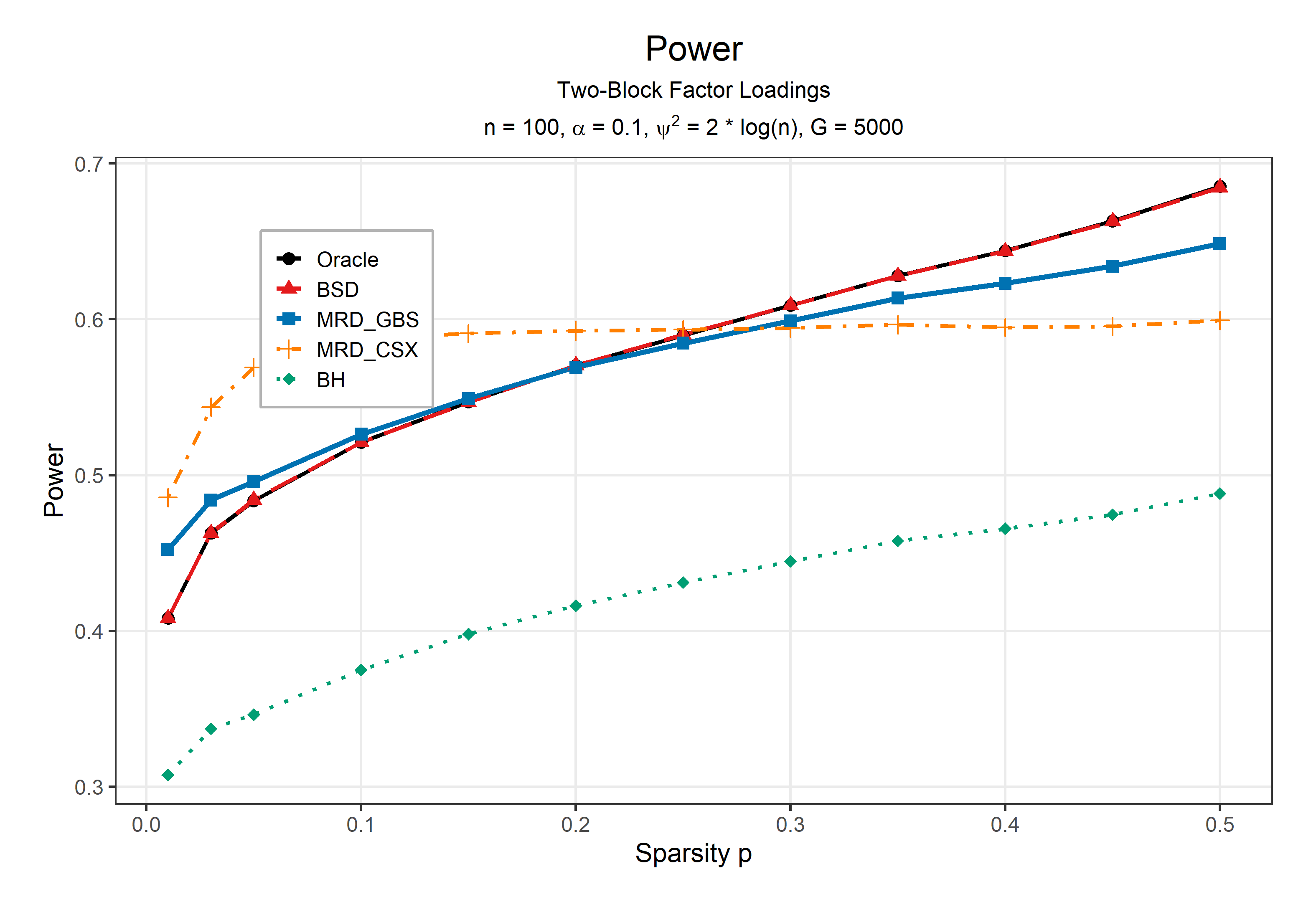}
		\caption{Two-Block Factor Dependence}
	\end{subfigure}
	
	\caption{Empirical powers under the six one-factor dependence structures when $n=100$, \(\alpha=0.1\), \(\psi^2=2\log n\), and \(G=5000\) Monte Carlo replications.}
	\label{fig:power-n100}
\end{figure}

A striking feature of the results is once again the remarkable agreement between
BSD and the Bayes Oracle. Across all dimensions, sparsity levels, and dependence
structures considered in this study, the power curves of BSD are nearly
indistinguishable from those of the Oracle. This close agreement mirrors the
corresponding FNR results and provides further evidence that the posterior
model-pursuit mechanism underlying BSD is able to recover essentially the same
set of active signals identified by the Bayes-optimal decision rule. The agreement
is particularly noteworthy because it persists across a broad spectrum of covariance
structures, including heterogeneous and clustered factor models.

The MRD--GBS procedure also exhibits highly competitive power performance.
Across most configurations, its power remains remarkably close to that of BSD and
the Bayes Oracle, often differing only slightly even in the more challenging sparse
dependence settings. Combined with the stable FDR behavior observed in the
previous subsection, these results suggest that MRD--GBS achieves an effective
balance between false discovery control and signal recovery, thereby explaining its
strong overall Bayes-risk performance.

The power behavior of MRD–CSX exhibits an interesting contrast with its FDR performance. Across several dependence structures, MRD–CSX frequently attains power values that are comparable to, and often exceed, those of BSD, the Bayes Oracle, and MRD–GBS. However, these power gains are accompanied by substantially larger false discovery rates, as observed in the previous subsection. Consequently, the increased signal-detection ability of MRD–CSX comes at the expense of a less favorable balance between Type I and Type II errors, which helps explain its comparatively larger Bayes risks despite its strong power performance.

Overall, the power results reinforce the conclusions drawn from the Bayes risk,
FDR, and FNR analyses. BSD consistently reproduces the signal-detection
performance of the Bayes Oracle across a broad collection of dependence
structures, while MRD--GBS frequently attains comparable power despite arising
from a fundamentally different inferential framework. These findings provide
further evidence that both procedures are highly effective at exploiting dependence
information for sparse signal recovery, with BSD exhibiting near-Oracle behavior
throughout the simulation settings considered.

\subsubsection{Average Number of Rejections}

We next examine the average number of rejections produced by the competing procedures. While the FDR, FNR, and power metrics considered in the previous subsections summarize different aspects of testing performance, the average number of rejections provides a more direct description of the overall aggressiveness of a multiple testing procedure. In particular, it reveals how frequently a procedure chooses to declare discoveries across varying levels of sparsity and dependence.

\begin{figure}[p]
	\centering
	
	\begin{subfigure}{0.48\textwidth}
		\centering
		\includegraphics[width=\linewidth]{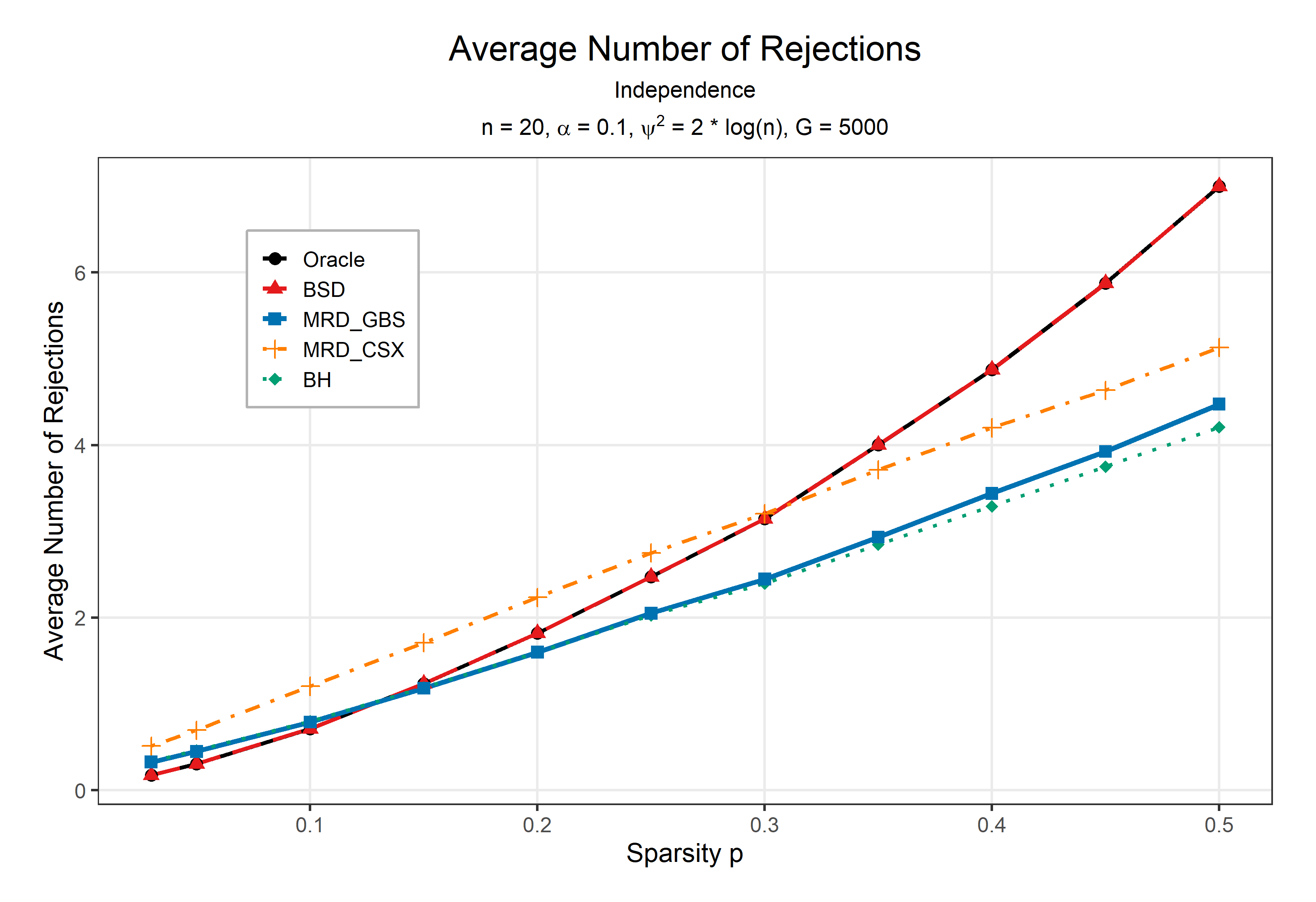}
		\caption{Independence}
	\end{subfigure}
	\hfill
	\begin{subfigure}{0.48\textwidth}
		\centering
		\includegraphics[width=\linewidth]{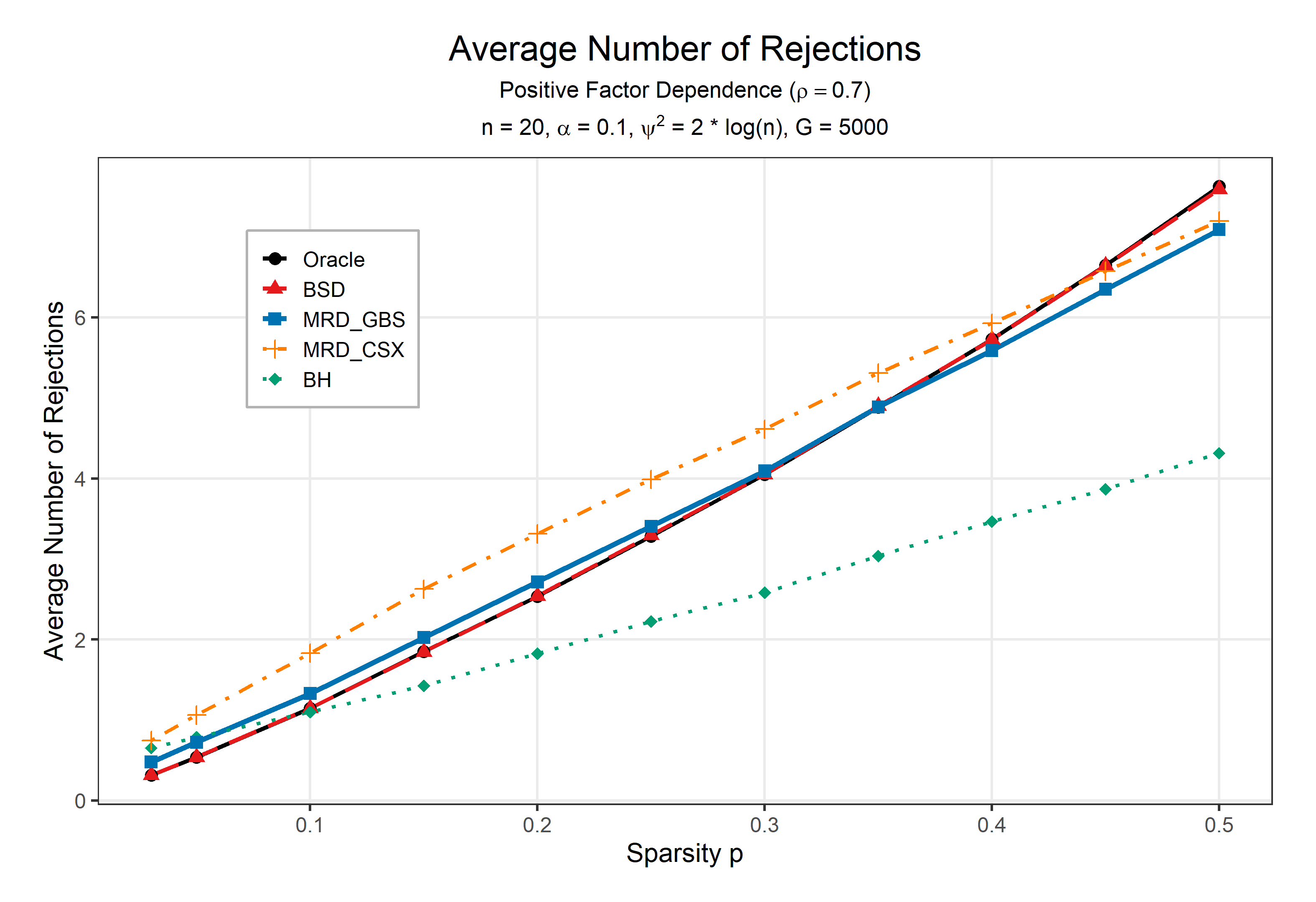}
		\caption{Positive Factor Dependence}
	\end{subfigure}
	
	\vspace{0.3cm}
	
	\begin{subfigure}{0.48\textwidth}
		\centering
		\includegraphics[width=\linewidth]{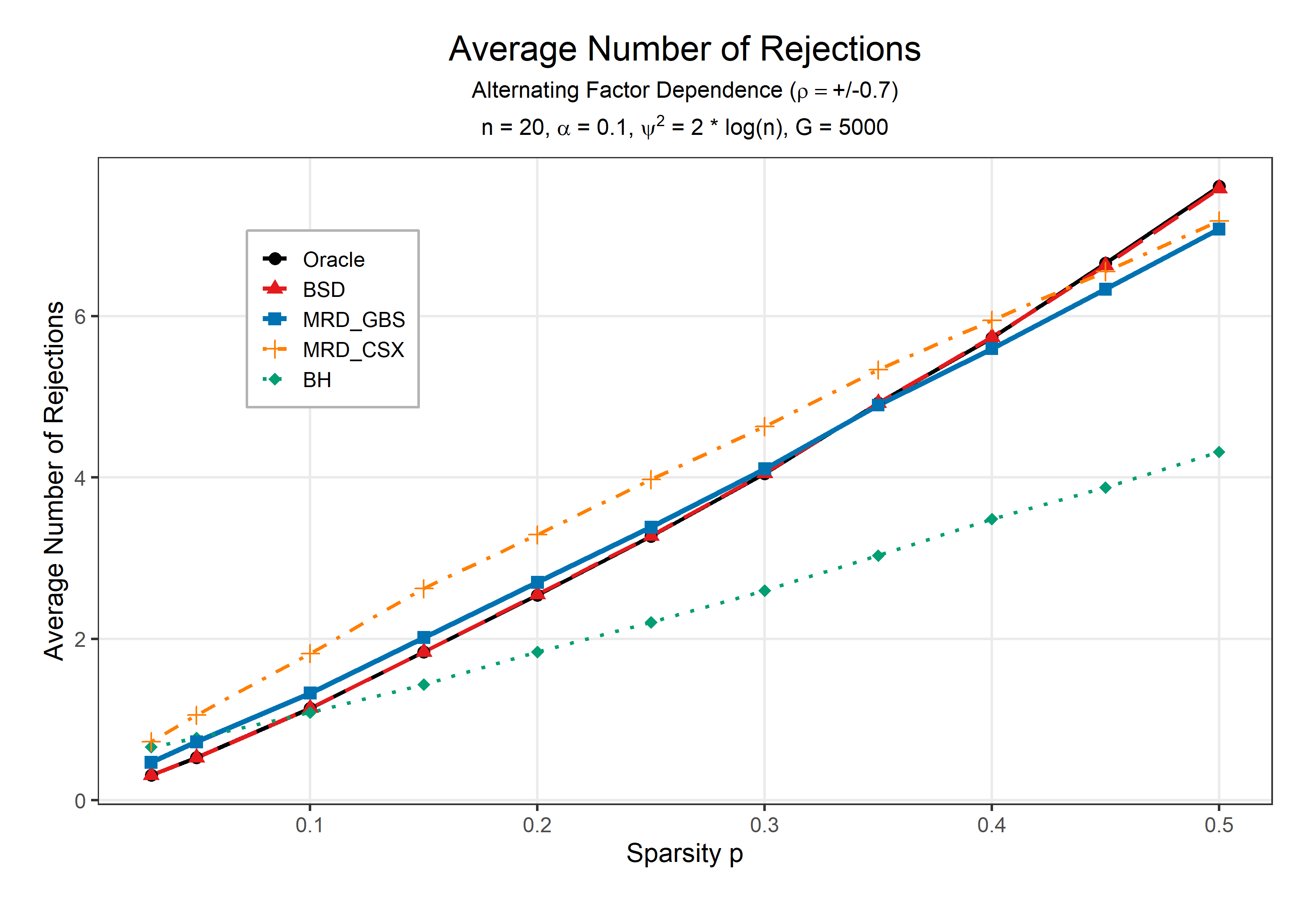}
		\caption{Alternating Factor Dependence}
	\end{subfigure}
	\hfill
	\begin{subfigure}{0.48\textwidth}
		\centering
		\includegraphics[width=\linewidth]{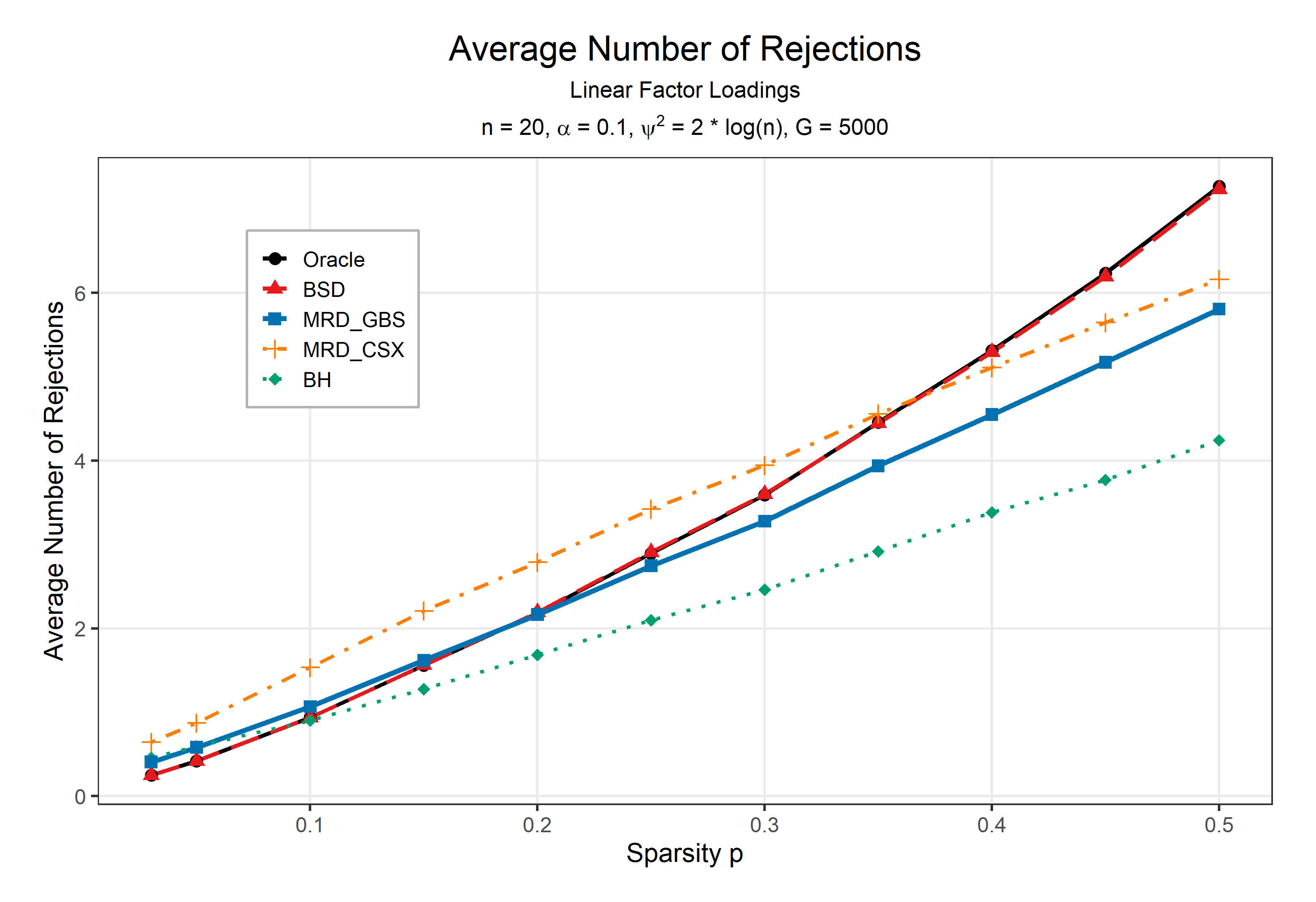}
		\caption{Linear Loadings}
	\end{subfigure}
	
	\vspace{0.3cm}
	
	\begin{subfigure}{0.48\textwidth}
		\centering
		\includegraphics[width=\linewidth]{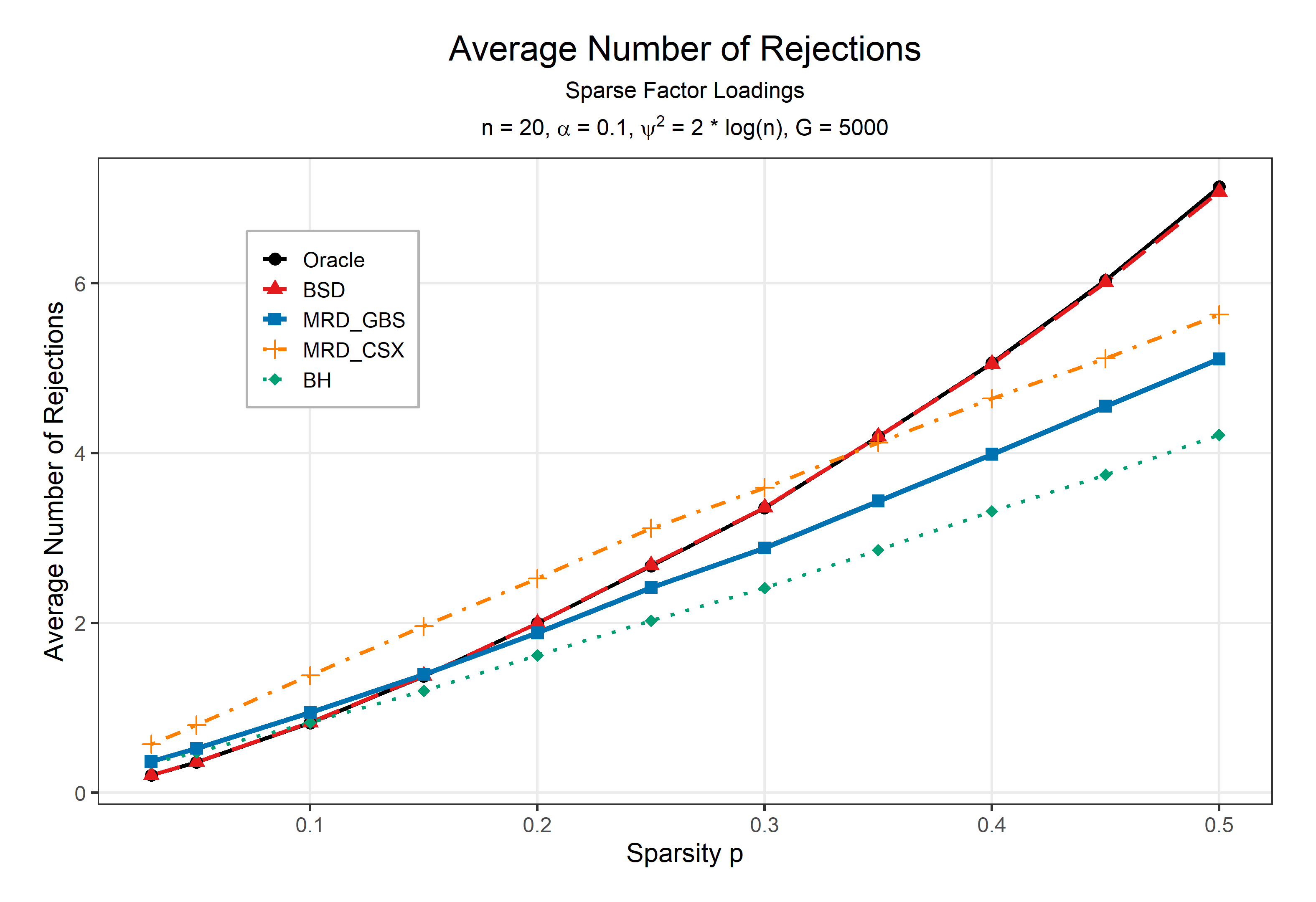}
		\caption{Sparse Factor Dependence}
	\end{subfigure}
	\hfill
	\begin{subfigure}{0.48\textwidth}
		\centering
		\includegraphics[width=\linewidth]{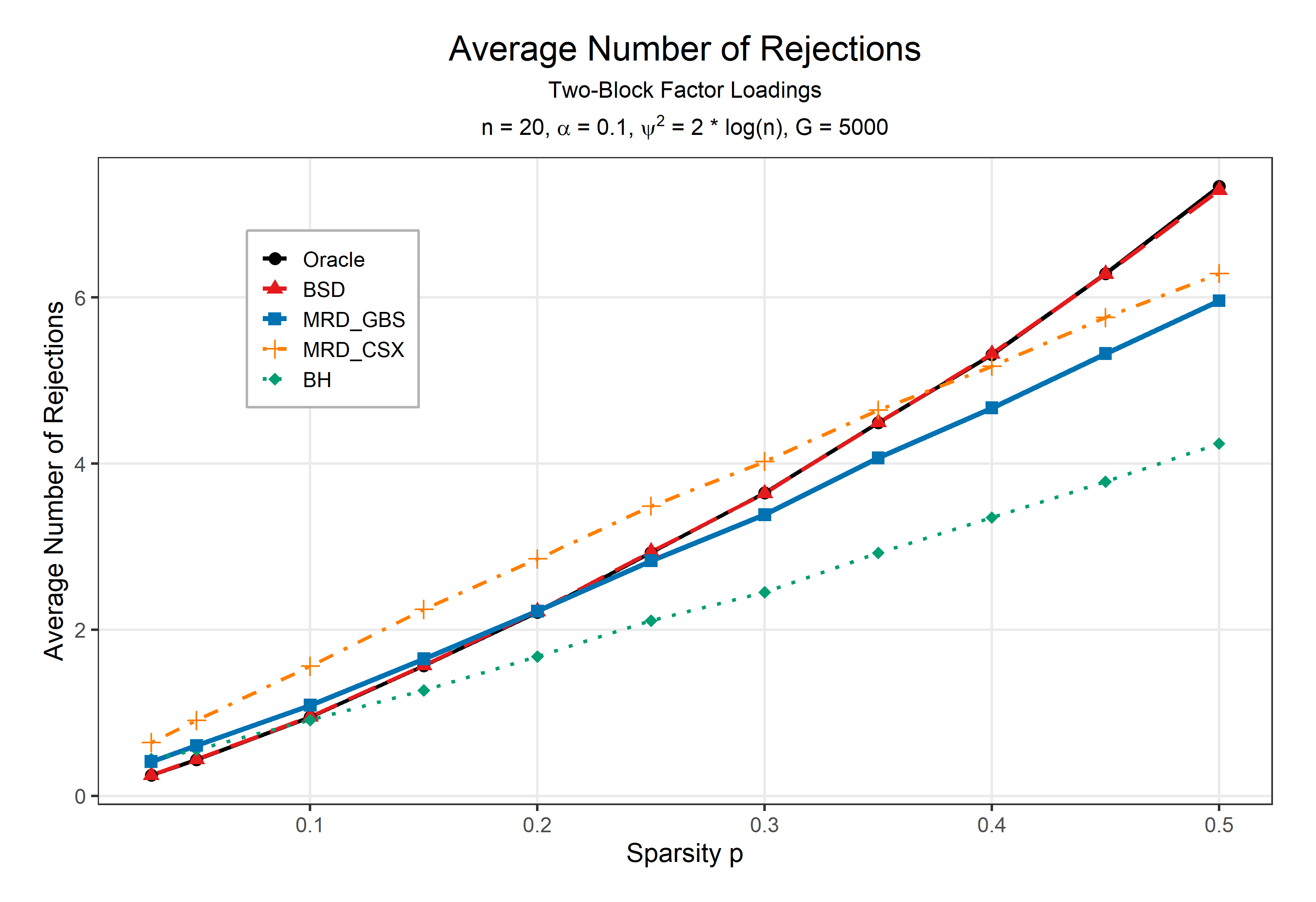}
		\caption{Two-Block Factor Dependence}
	\end{subfigure}
	
	\caption{Average number of rejections under the six one-factor dependence structures when $n=20$, \(\alpha=0.1\), \(\psi^2=2\log n\), and \(G=5000\) Monte Carlo replications.}
	\label{fig:anr-n20}
\end{figure}

\begin{figure}[p]
	\centering
	
	\begin{subfigure}{0.48\textwidth}
		\centering
		\includegraphics[width=\linewidth]{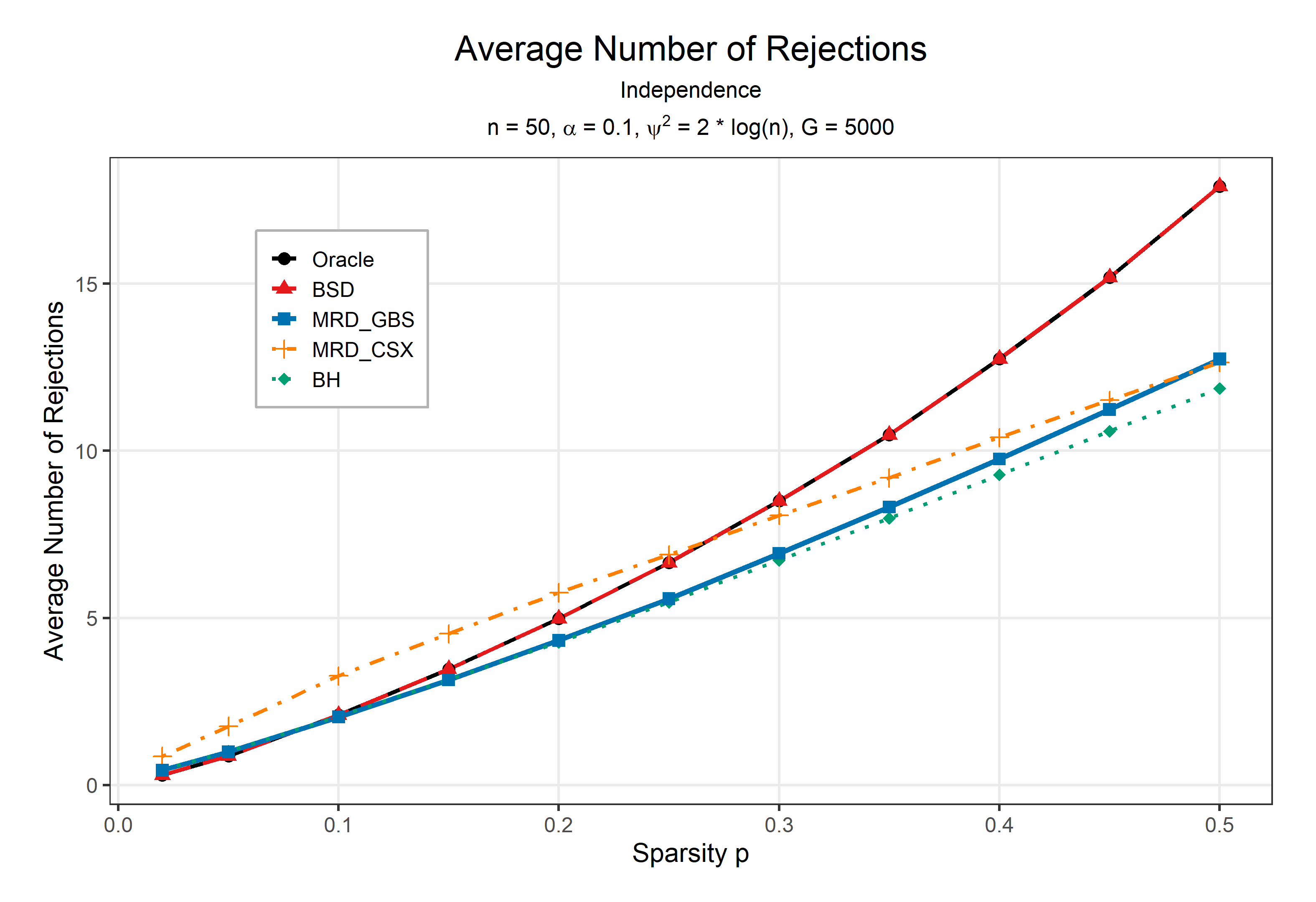}
		\caption{Independence}
	\end{subfigure}
	\hfill
	\begin{subfigure}{0.48\textwidth}
		\centering
		\includegraphics[width=\linewidth]{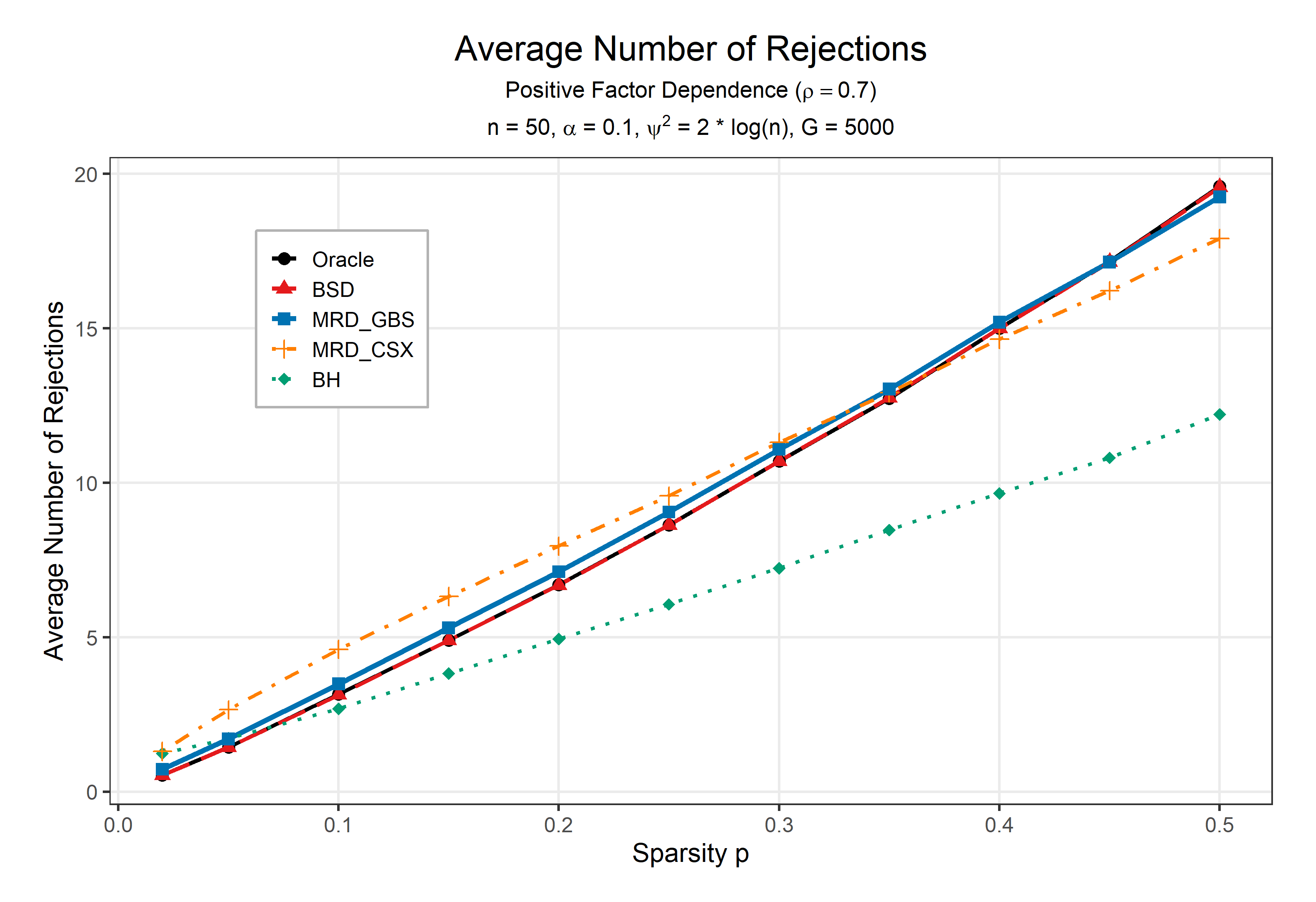}
		\caption{Positive Factor Dependence}
	\end{subfigure}
	
	\vspace{0.3cm}
	
	\begin{subfigure}{0.48\textwidth}
		\centering
		\includegraphics[width=\linewidth]{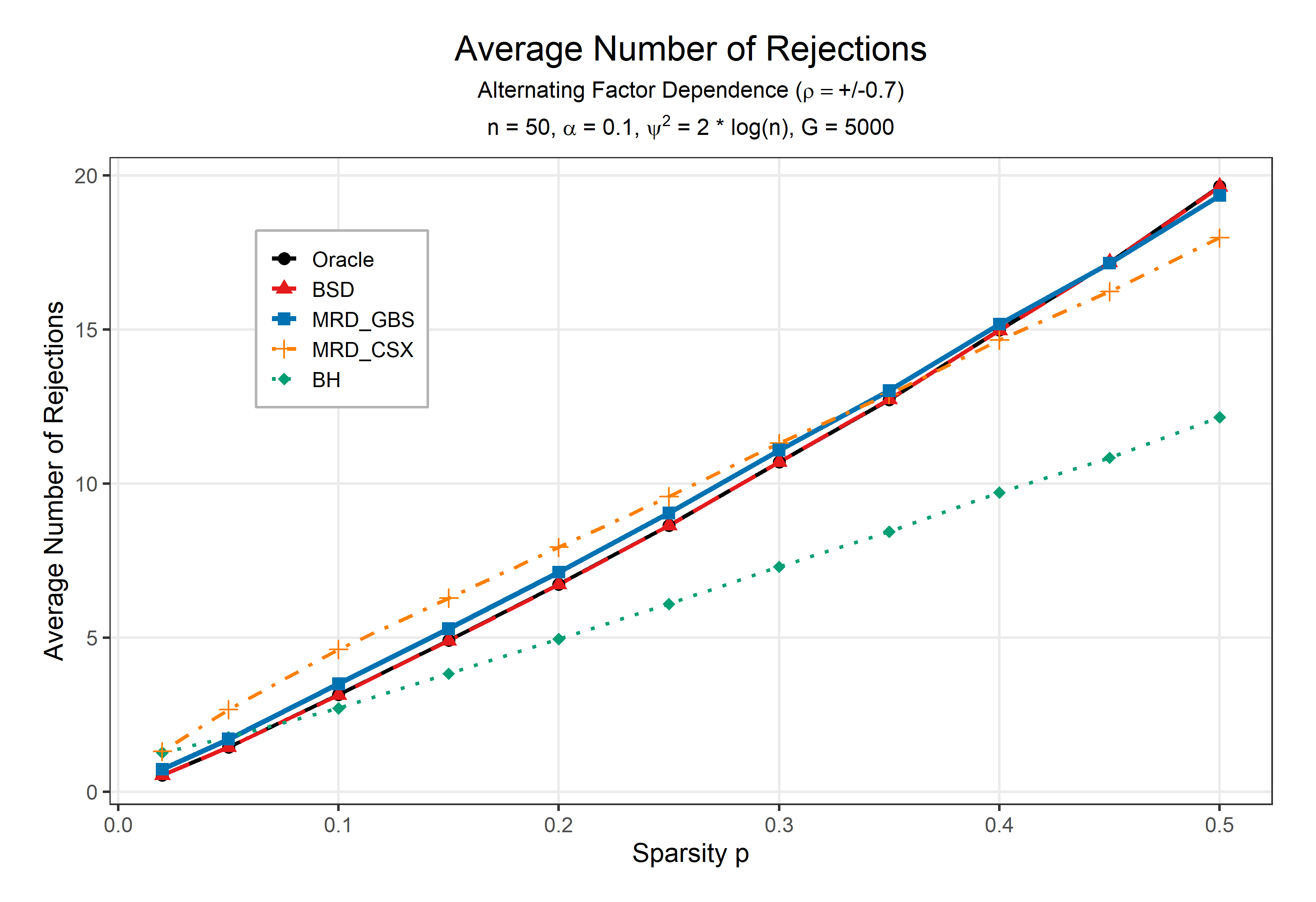}
		\caption{Alternating Factor Dependence}
	\end{subfigure}
	\hfill
	\begin{subfigure}{0.48\textwidth}
		\centering
		\includegraphics[width=\linewidth]{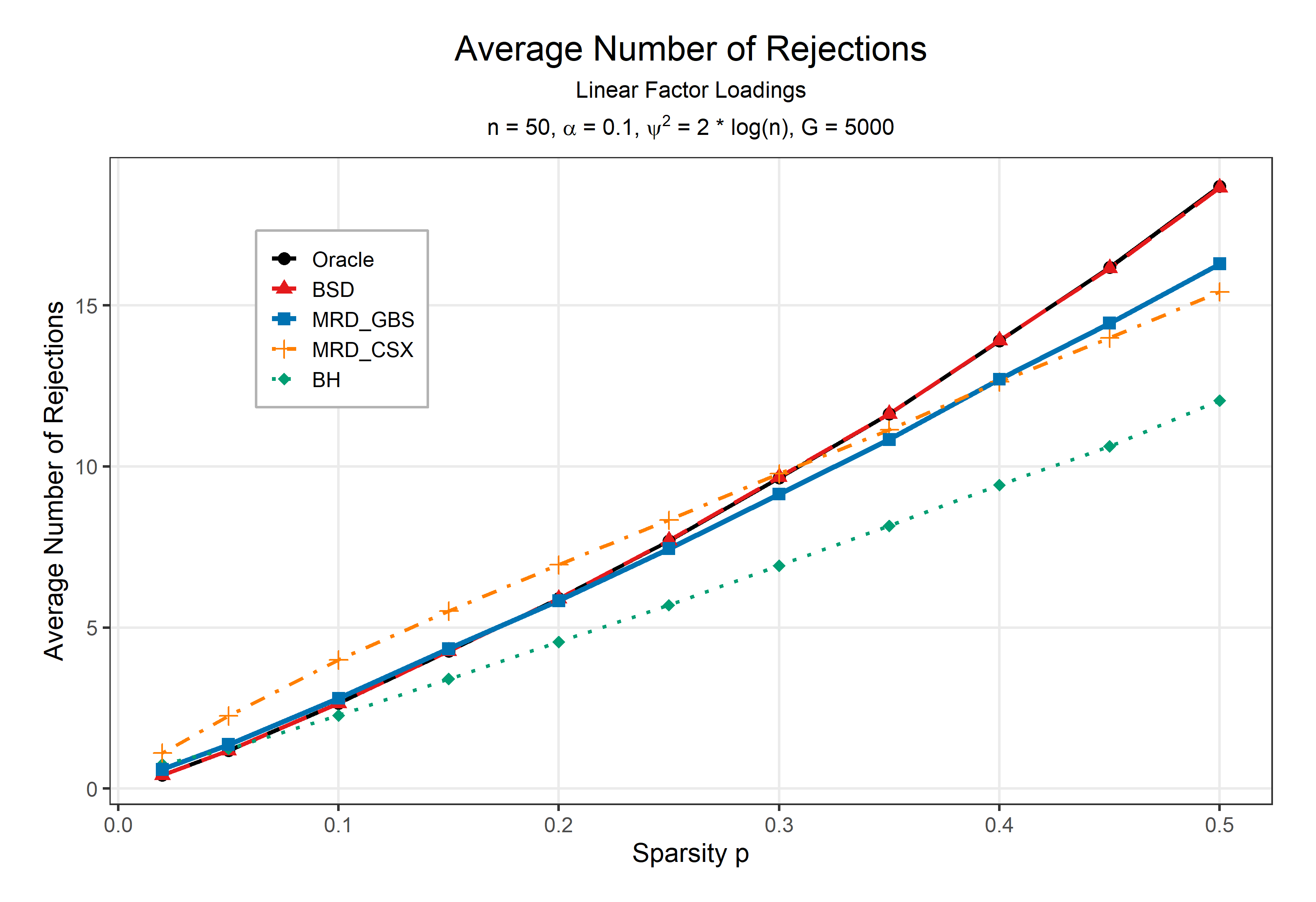}
		\caption{Linear Loadings}
	\end{subfigure}
	
	\vspace{0.3cm}
	
	\begin{subfigure}{0.48\textwidth}
		\centering
		\includegraphics[width=\linewidth]{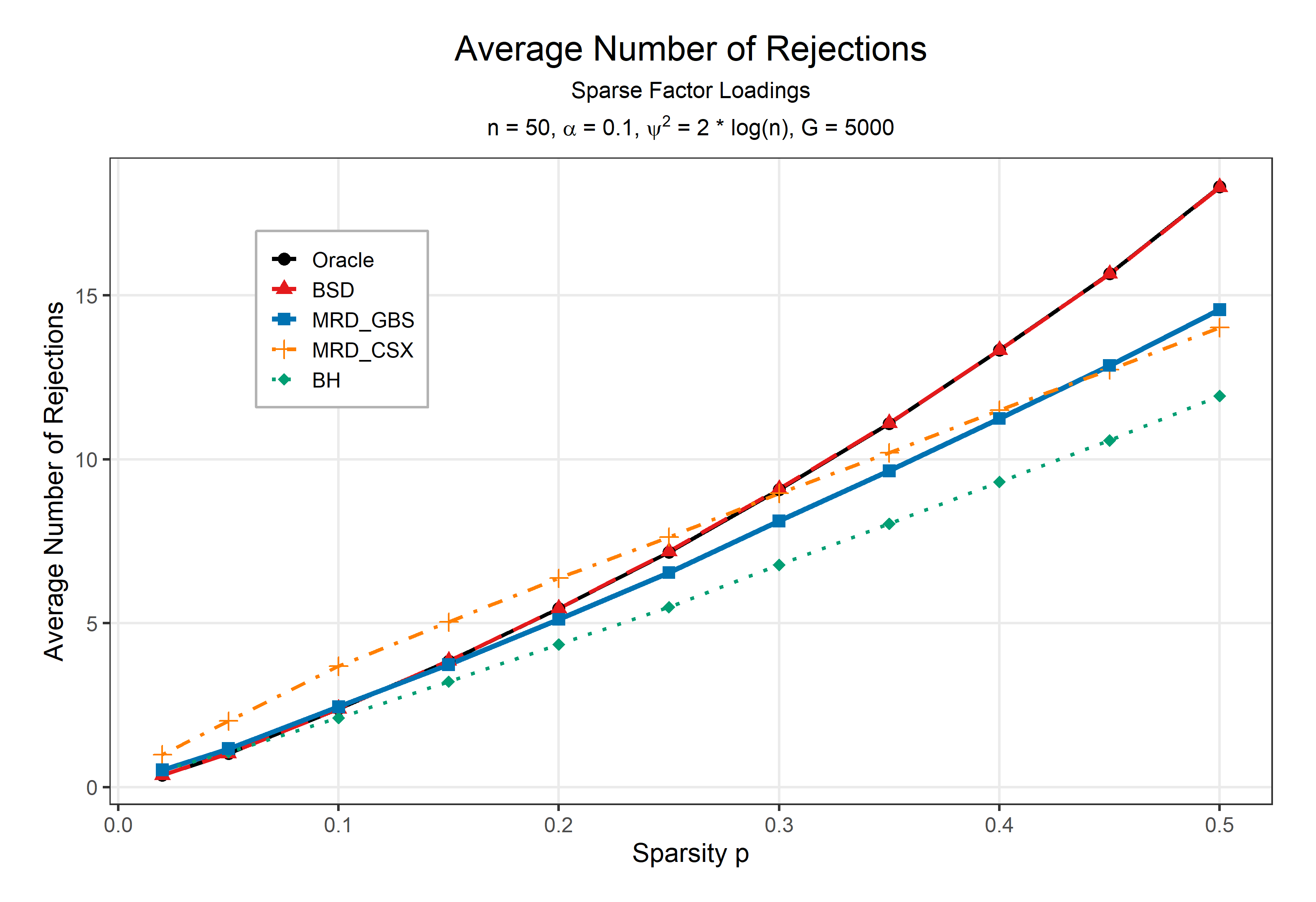}
		\caption{Sparse Factor Dependence}
	\end{subfigure}
	\hfill
	\begin{subfigure}{0.48\textwidth}
		\centering
		\includegraphics[width=\linewidth]{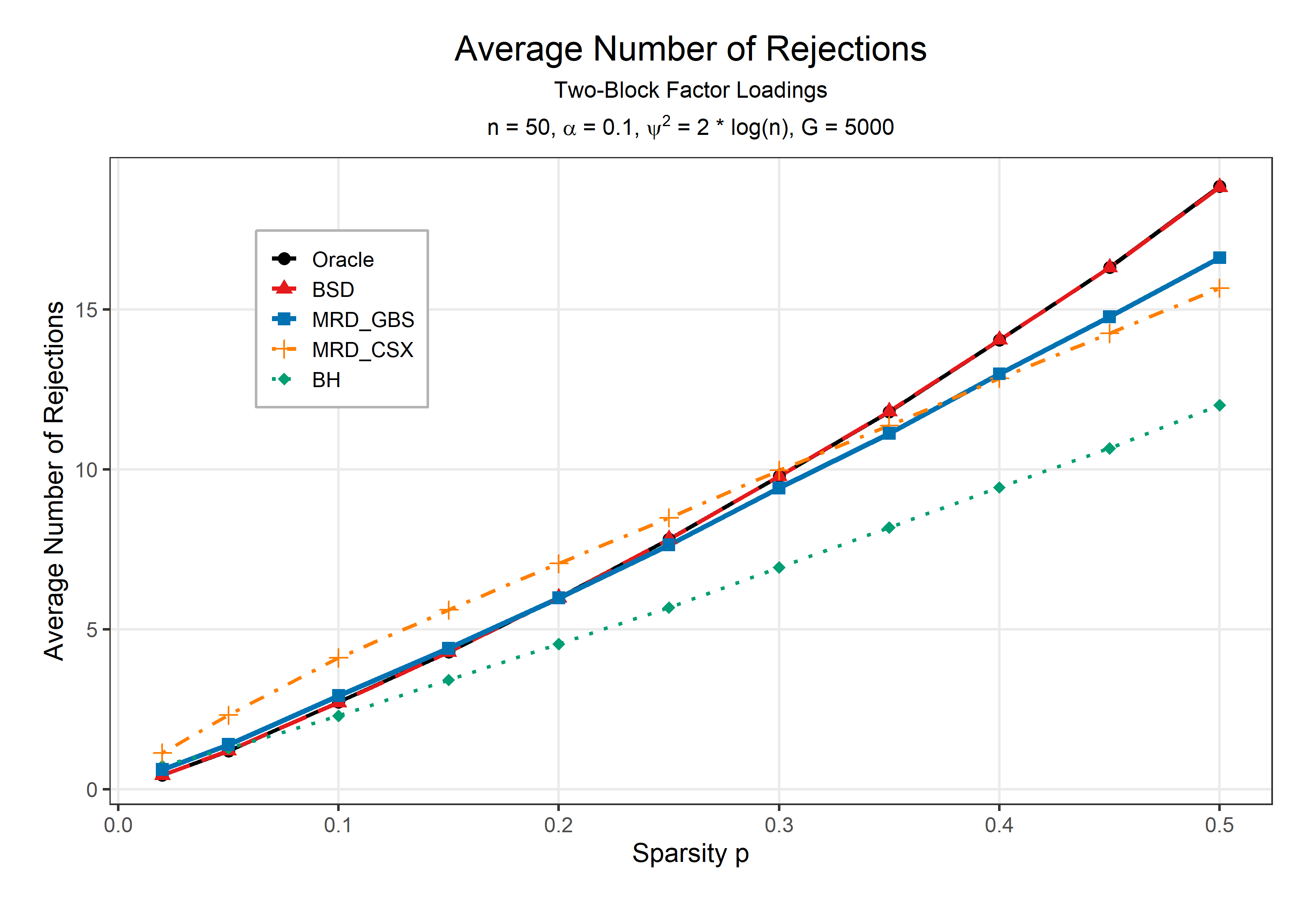}
		\caption{Two-Block Factor Dependence}
	\end{subfigure}
	
	\caption{Average number of rejections under the six one-factor dependence structures when $n=50$, \(\alpha=0.1\), \(\psi^2=2\log n\), and \(G=5000\) Monte Carlo replications.}
	\label{fig:anr-n50}
\end{figure}

\begin{figure}[p]
	\centering
	
	\begin{subfigure}{0.48\textwidth}
		\centering
		\includegraphics[width=\linewidth]{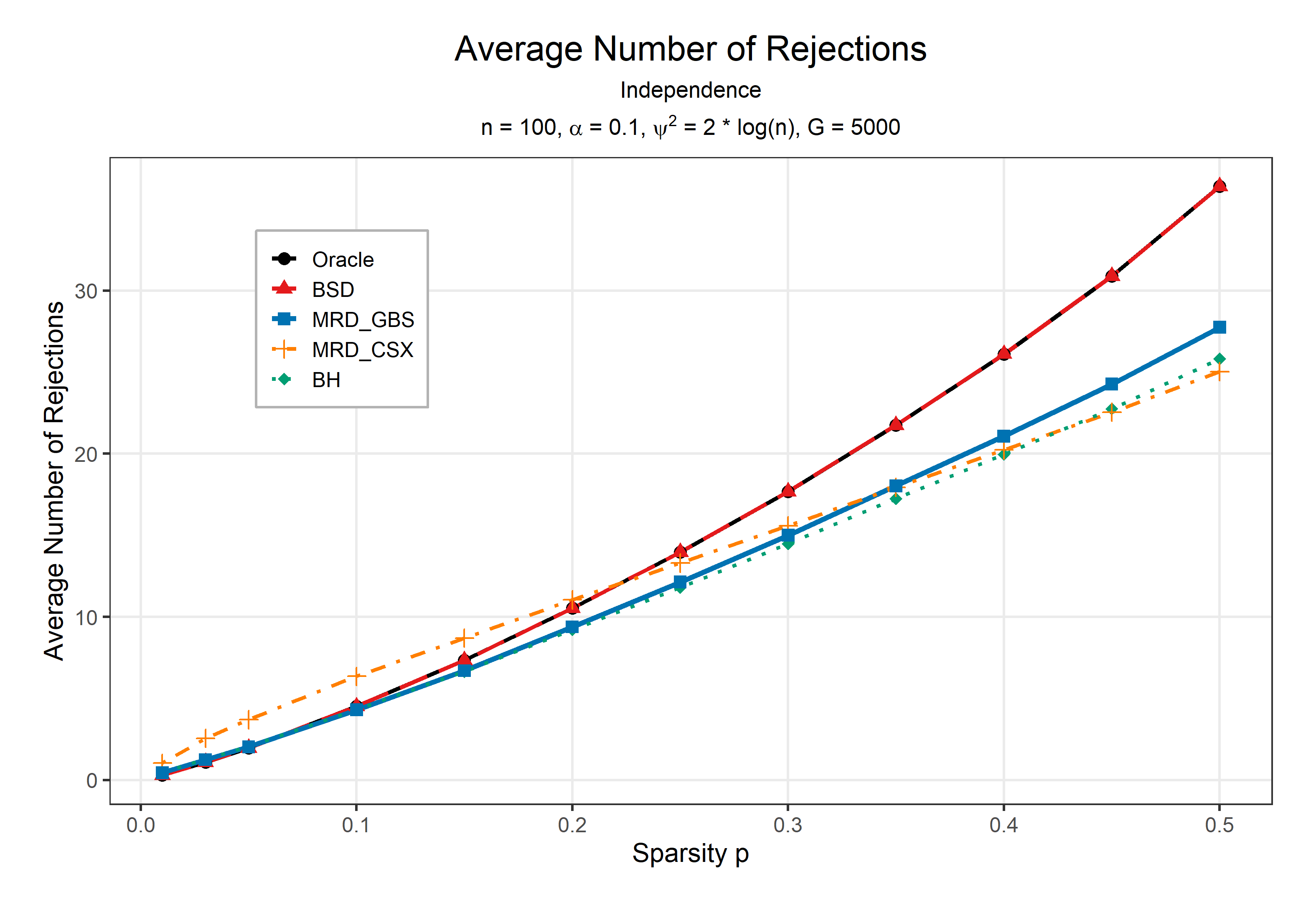}
		\caption{Independence}
	\end{subfigure}
	\hfill
	\begin{subfigure}{0.48\textwidth}
		\centering
		\includegraphics[width=\linewidth]{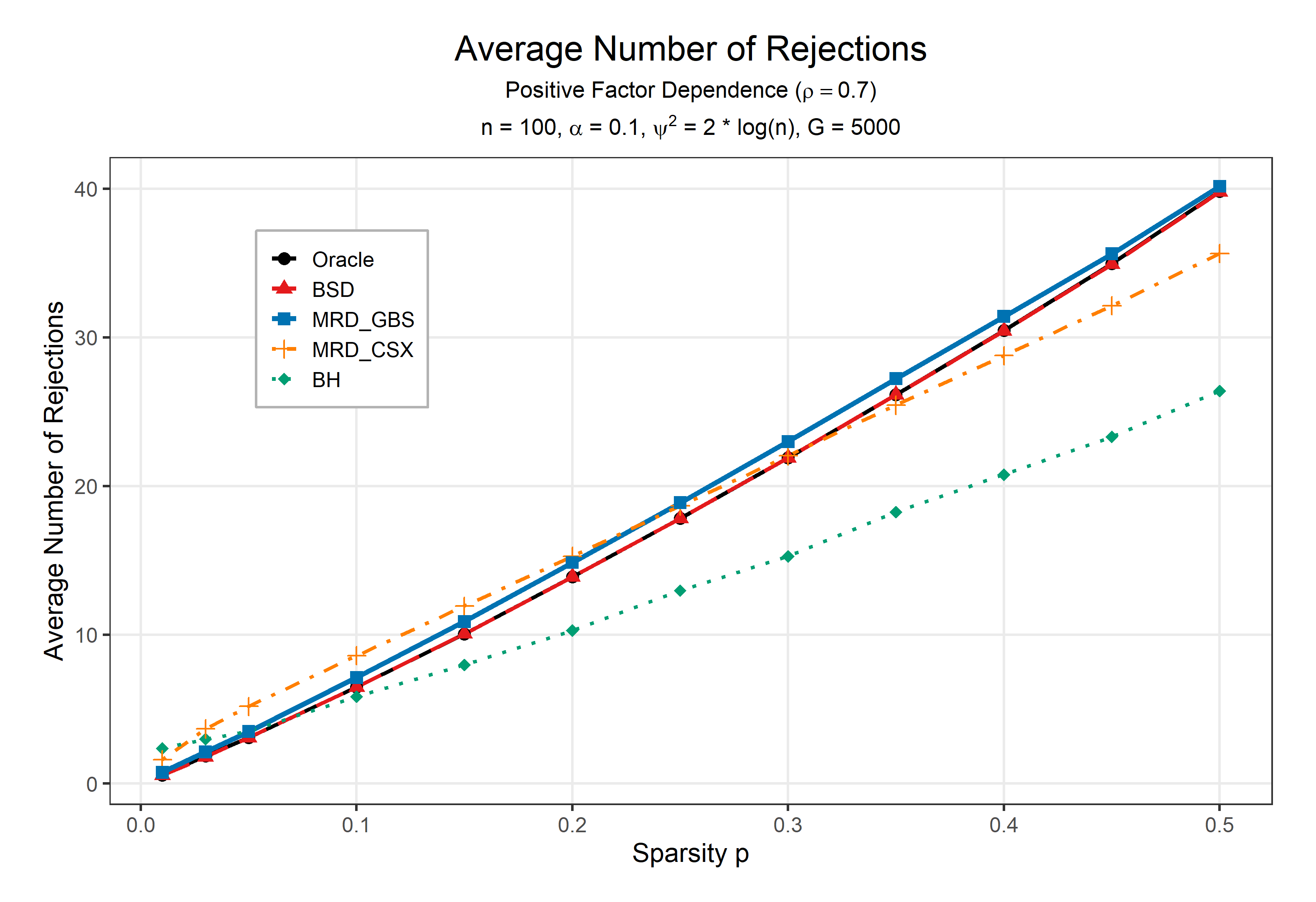}
		\caption{Positive Factor Dependence}
	\end{subfigure}
	
	\vspace{0.3cm}
	
	\begin{subfigure}{0.48\textwidth}
		\centering
		\includegraphics[width=\linewidth]{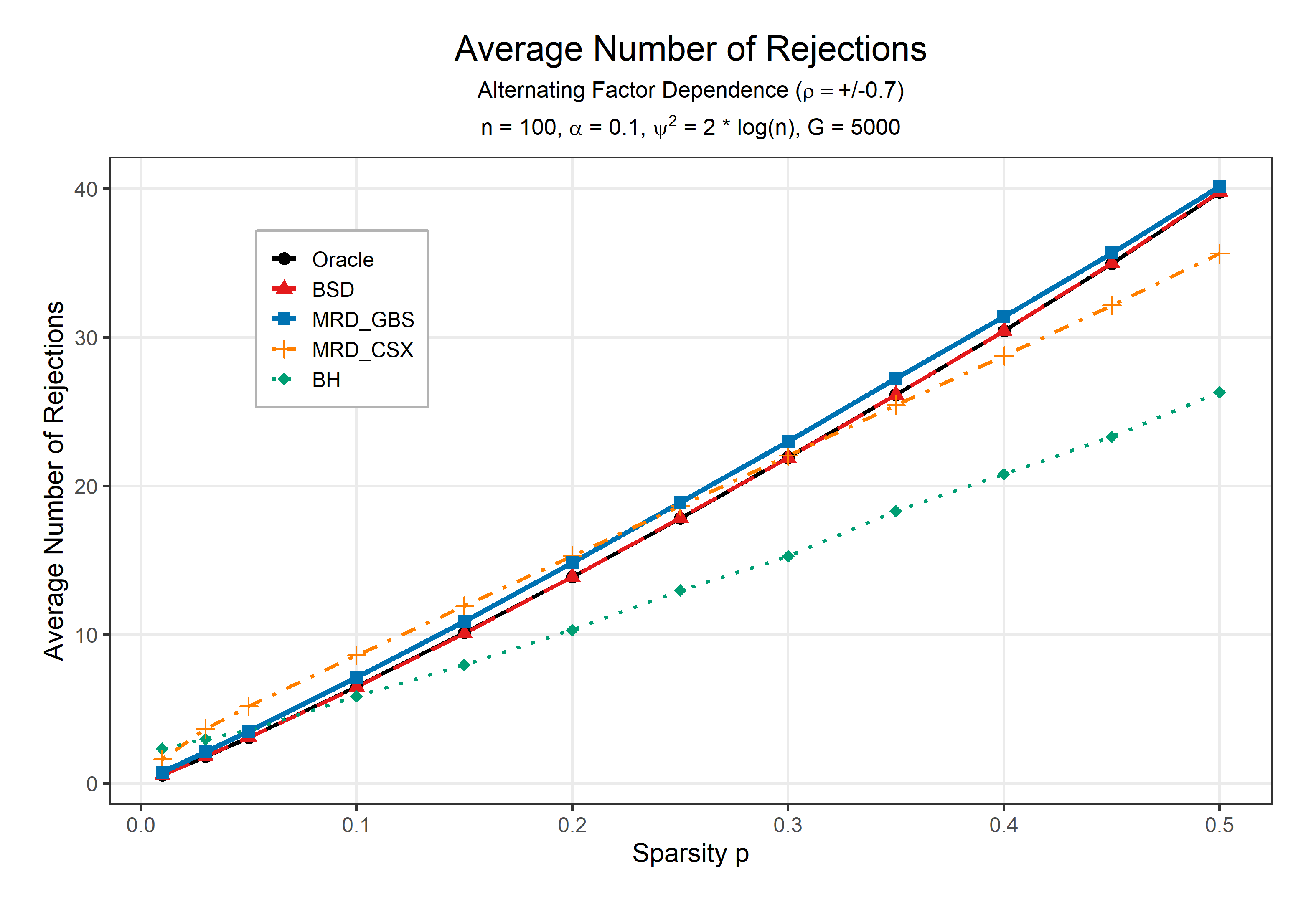}
		\caption{Alternating Factor Dependence}
	\end{subfigure}
	\hfill
	\begin{subfigure}{0.48\textwidth}
		\centering
		\includegraphics[width=\linewidth]{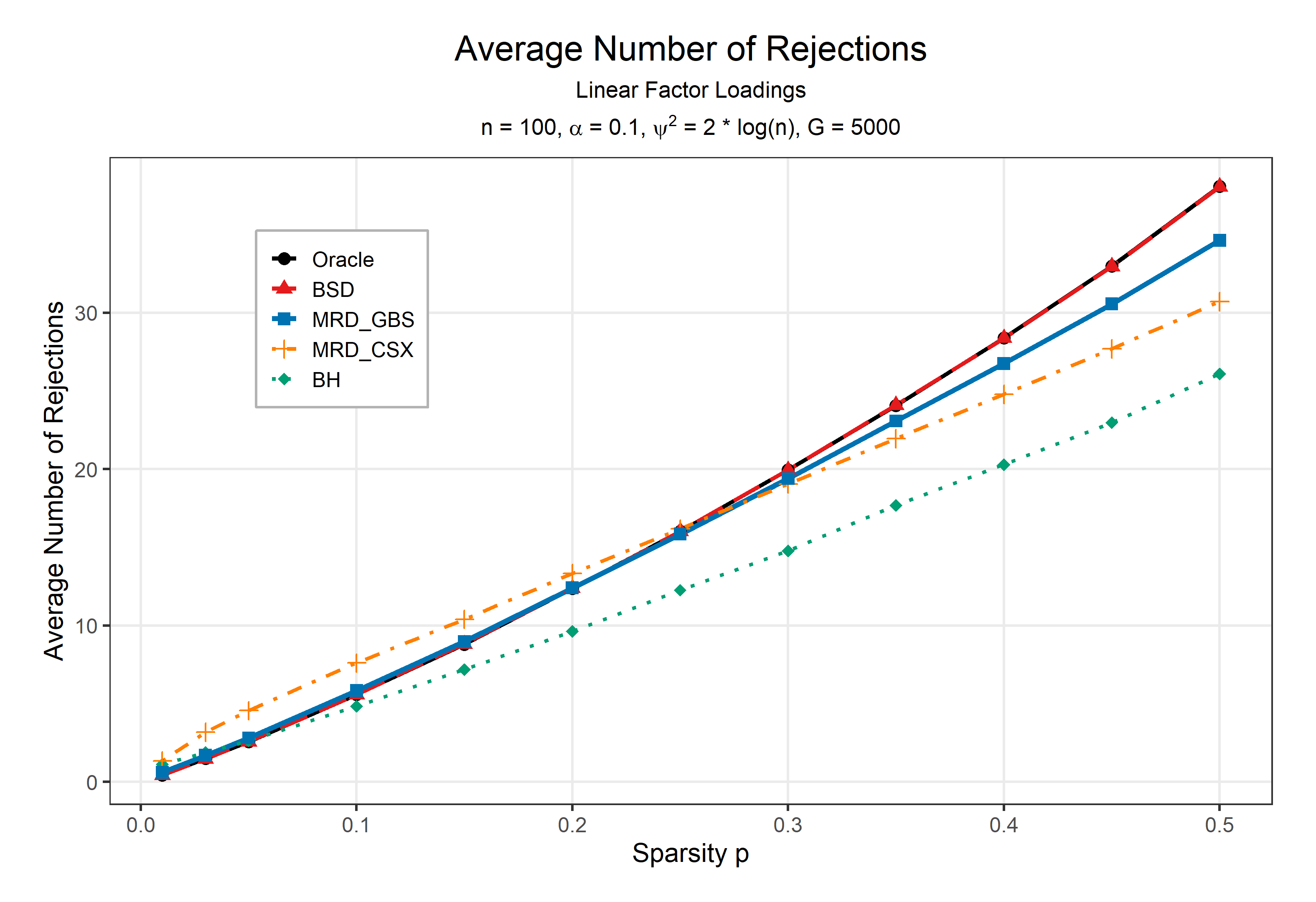}
		\caption{Linear Loadings}
	\end{subfigure}
	
	\vspace{0.3cm}
	
	\begin{subfigure}{0.48\textwidth}
		\centering
		\includegraphics[width=\linewidth]{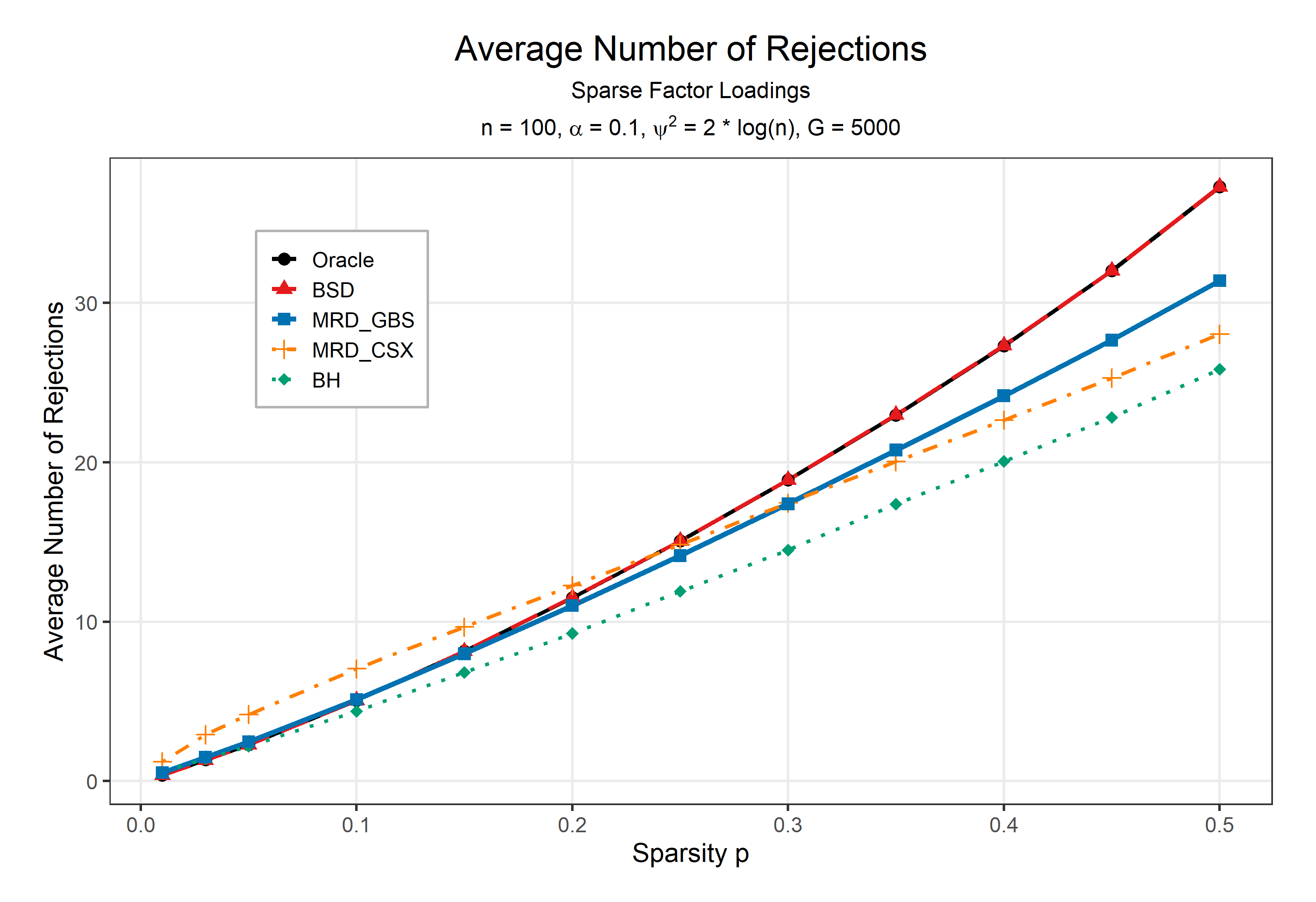}
		\caption{Sparse Factor Dependence}
	\end{subfigure}
	\hfill
	\begin{subfigure}{0.48\textwidth}
		\centering
		\includegraphics[width=\linewidth]{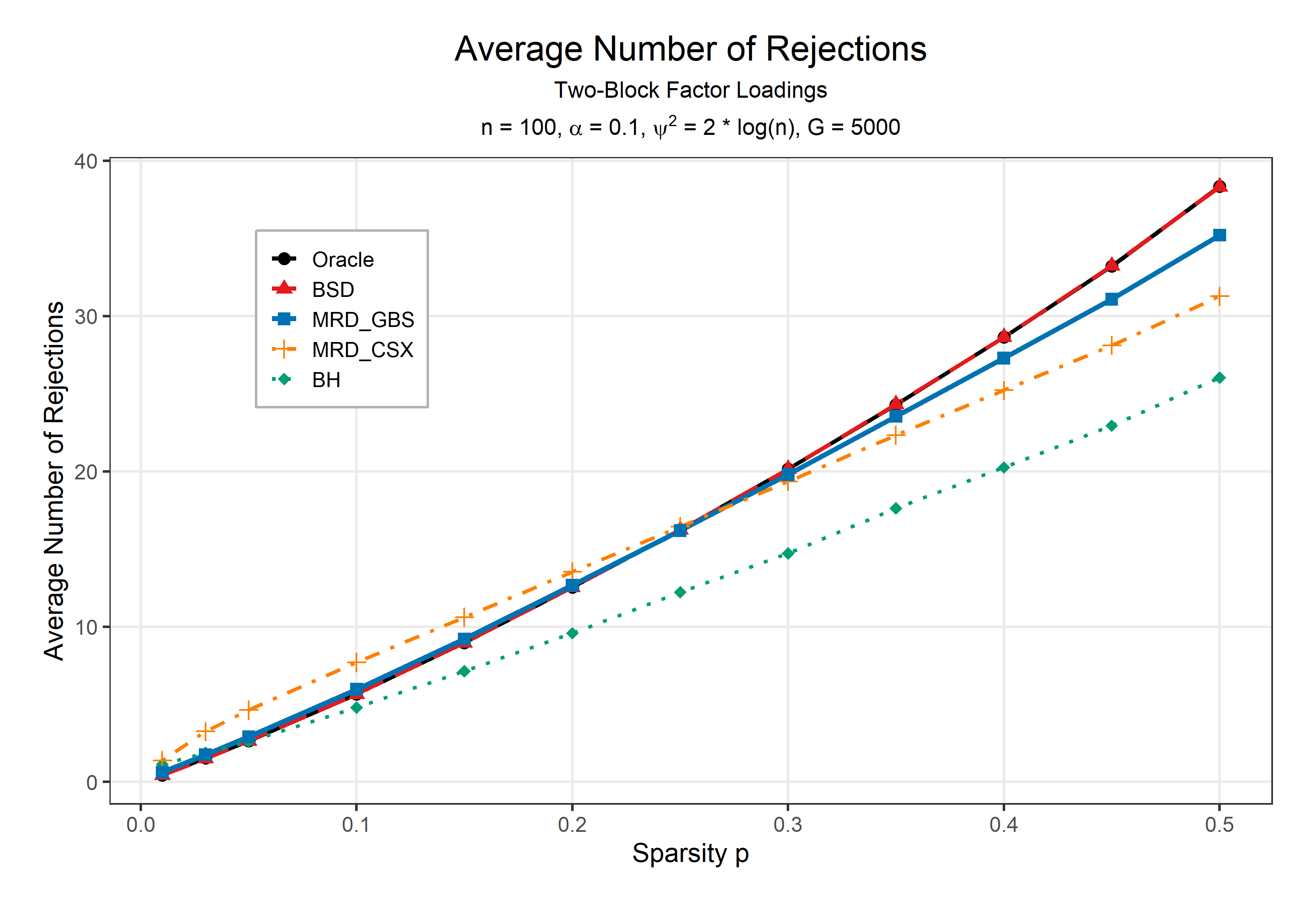}
		\caption{Two-Block Factor Dependence}
	\end{subfigure}
	
	\caption{Average number of rejections under the six one-factor dependence structures when $n=100$, \(\alpha=0.1\), \(\psi^2=2\log n\), and \(G=5000\) Monte Carlo replications.}
	\label{fig:anr-n100}
\end{figure}

Figures~\ref{fig:anr-n20}--\ref{fig:anr-n100} display the average numbers of rejections under the six one-factor dependence structures considered in this study. A striking feature of the results is the persistent agreement between BSD and the Bayes Oracle. Across all dimensions, sparsity levels, and dependence structures examined, the rejection curves of BSD are nearly indistinguishable from those of the Oracle. This finding complements the Bayes risk, FDR, FNR, and power results presented earlier and provides further evidence that BSD successfully reproduces the overall decision behavior of the Bayes-optimal rule under dependence.

The behavior of MRD--GBS is also noteworthy. Across most configurations, the procedure produces numbers of rejections that remain remarkably close to those of BSD and the Bayes Oracle despite arising from a fundamentally different inferential framework. In many settings, the rejection curves of MRD--GBS closely parallel those of BSD, suggesting that the covariance-aware search mechanism underlying the procedure successfully identifies a similar collection of signals while maintaining stable false discovery behavior.

\subsection*{Summary of Simulation Findings}

Taken together, the simulation results reveal a remarkably consistent pattern. Across all six one-factor dependence structures, a broad range of sparsity levels, and dimensions ranging from $n=20$ to $n=100$, the BSD procedure exhibits operating characteristics that are nearly indistinguishable from those of the Bayes Oracle. The close agreement is evident not only in terms of Bayes misclassification risk, but also across the individual components of testing performance, including FDR, FNR, power, and the average number of rejections. These findings suggest that the posterior model-pursuit mechanism underlying BSD is able to recover essentially the same signal-selection behavior as the Bayes-optimal decision rule while remaining computationally tractable under dependence.

An equally noteworthy finding is the strong performance of the MRD--GBS procedure. Despite arising from a fundamentally different inferential framework, MRD--GBS consistently produces operating characteristics that remain remarkably close to those of BSD and the Bayes Oracle across most configurations considered. Collectively, these observations suggest that effective utilization of covariance information may play a more fundamental role in sparse signal recovery under dependence than the Bayesian--frequentist distinction itself. They also indicate that dependence-aware sequential information accumulation, whether implemented through posterior-guided model pursuit or covariance-adaptive residual search, can lead to performance remarkably close to the Bayes-optimal frontier in sparse multiple testing problems.

\section{Discussion}
\label{sec:DISCUSSION}

The developments of this paper reveal that BSD possesses a considerably richer structure than what is apparent from its original posterior-odds formulation. While the procedure was initially motivated as a computationally feasible Bayesian alternative to the Bayes Oracle under dependence, the present analysis shows that several of its statistical, decision-theoretic, and computational properties are closely interconnected. In particular, the model-pursuit interpretation developed in Section~\ref{sec:BSD_METHOD}, together with the residual representation established in Section~\ref{sec:THEORETICAL_PROPERTIES}, provides a unified perspective from which the behavior, admissibility, and computational scalability of the procedure can be understood. Viewed in this way, BSD is not merely a Bayesian multiple testing algorithm, but a dependence-aware sequential search mechanism whose properties are shaped by the covariance-adaptive residual geometry underlying the procedure.

A central theme emerging from the present work is the interpretation of BSD as a posterior-guided model-pursuit strategy for sparse signal discovery under dependence. This perspective is conceptually related to the broader Bayesian model selection and stochastic search literature \citep{GM1993,GM1997,OS2009}, although BSD is developed in the distinct context of dependent multiple testing and sparse signal recovery. Rather than attempting to evaluate posterior probabilities over the entire configuration space simultaneously, BSD sequentially explores candidate sparse configurations by repeatedly extending the current active configuration using local posterior evidence. This perspective provides a natural explanation for both the statistical behavior and the computational feasibility of the procedure. In sparse settings, where substantial posterior mass is often concentrated on a relatively small collection of signal configurations, such a search strategy focuses computational effort on regions of the model space most likely to contain the underlying signal configuration. Consequently, BSD may be viewed as an adaptive mechanism for navigating the configuration space, balancing statistical evidence accumulation and computational tractability throughout the search process.

A second theme emerging from the present analysis is the central role of covariance-adaptive residual geometry. The representation theorem established in Section~\ref{sec:THEORETICAL_PROPERTIES} shows that the Bayesian posterior-odds statistics driving BSD are completely characterized by the corresponding MRD residual quantities. Consequently, the accumulation of posterior evidence throughout the BSD search process admits an equivalent geometric interpretation in terms of the evolution of covariance-adjusted residual systems. This connection reveals that Bayesian model pursuit and residual-based multiple testing are far more closely related than might initially appear, with the residual structure providing the geometric mechanism through which dependence information is incorporated into the sequential search process.

The present analysis also clarifies the precise nature of the relationship between BSD and MRD. Earlier interpretations suggested that BSD could be viewed as a monotone transformation of the corresponding MRD residual statistics. The results obtained here reveal a more nuanced picture. The transformation linking BSD and MRD depends explicitly on both the stage of the procedure and the accumulated active configuration through the adaptive residual variances. Consequently, BSD is more appropriately viewed as a member of a broader class of locally monotone residual-based procedures whose scoring rules evolve throughout the search process. This local monotonicity perspective provides a more refined description of the structural relationship between Bayesian posterior evidence and covariance-adaptive residual geometry.

The local monotone structure identified in Section~\ref{sec:THEORETICAL_PROPERTIES} also provides a natural explanation for the admissibility of BSD under arbitrary covariance dependence. Although BSD was originally derived from Bayesian posterior probability calculations, its admissibility does not depend directly on the Bayesian formulation itself. Rather, admissibility emerges as a consequence of the ordering structure induced by the covariance-adaptive residual system. Viewed from this perspective, the Bayesian and decision-theoretic aspects of BSD are closely intertwined: the posterior-guided search mechanism determines how evidence is accumulated, while the residual geometry underlying that mechanism governs the resulting admissibility properties. The present work therefore illustrates how Bayesian evidence accumulation and frequentist decision-theoretic optimality can coexist within a common dependence-aware testing framework.

The computational implications of these structural connections are equally noteworthy. The Bayes Oracle requires exploration of an exponentially large configuration space containing $2^n$ possible signal configurations, rendering exact implementation infeasible except for very small problems. BSD avoids this combinatorial explosion by replacing exhaustive posterior exploration with a sequential model-pursuit strategy requiring only a quadratic number of local posterior comparisons. Furthermore, through the BSD--MRD connection established in this paper, BSD inherits the computational simplifications available for the corresponding residual system. In particular, recent work by \citet{ghosh2026covariance} shows that the MRD residual statistics admit an alternative precision-matrix representation that substantially reduces the computational burden associated with residual evaluation. Thus, the computational scalability of BSD emerges not from algorithmic approximation, but from a sequence of structural simplifications rooted in the statistical formulation itself.

It is worth emphasizing that the present paper is not primarily concerned with empirical Bayes estimation of the hyperparameters governing the underlying sparse signal model. Rather, our objective is to understand the statistical, geometric, decision-theoretic, and computational properties of BSD and to investigate its relationship to the Bayes Oracle under covariance dependence. From this perspective, BSD serves both as a dependence-aware Bayesian model-pursuit procedure and as a useful benchmark for studying the attainable limits of sparse signal recovery under dependence. The development of robust data-driven strategies for estimating the underlying sparsity and signal-strength parameters, and the investigation of their impact on the resulting search dynamics, constitute important directions for future research.

It is also important to distinguish between the role of the Bayes Oracle and that of BSD. The Oracle serves primarily as an idealized benchmark for assessing the attainable limits of sparse signal recovery under dependence. Although Oracle risks can be evaluated under the controlled simulation settings considered in this paper, such calculations rely on complete knowledge of the underlying data-generating mechanism and are not generally available in practice. The value of BSD, therefore, lies not in reproducing an implementable Oracle procedure, but in providing a computationally tractable, dependence-aware search mechanism whose behavior can be studied relative to an ideal benchmark. Indeed, the primary purpose of the Oracle in the present work is to provide a benchmark for quantifying the extent to which a computationally tractable dependence-aware procedure can approach Bayes-optimal sparse signal recovery. From this perspective, the Oracle comparisons reported in this paper should be viewed as a tool for understanding the effectiveness of dependence-aware model pursuit rather than as a practical alternative to BSD itself.

One of the most striking empirical findings of this paper is the extent to which BSD approximates the Bayes Oracle under dependence. Across all one-factor structures considered, the discrepancies between BSD and the Oracle remain negligible relative to the overall Bayes risk, and the two procedures exhibit remarkably similar operating characteristics in terms of FDR, FNR, power, and support recovery. Although the present work does not provide a theoretical explanation for this phenomenon, the results suggest that the sequential posterior-guided search mechanism underlying BSD is able to capture a substantial portion of the information utilized by the Bayes-optimal decision rule. Understanding the theoretical origins of this near-oracle behavior remains an important direction for future research.

Equally noteworthy is the close empirical agreement among BSD, MRD--GBS, and the Bayes Oracle. Across a broad collection of one-factor dependence structures, dimensions, and sparsity regimes, BSD and MRD--GBS frequently exhibit remarkably similar operating characteristics and often achieve performance that is strikingly close to that of the Oracle itself. This observation is particularly intriguing because the two procedures originate from fundamentally different inferential paradigms. BSD arises from a Bayesian model-pursuit framework motivated by posterior evidence accumulation, whereas MRD--GBS was developed from a frequentist perspective and was not derived through Bayes-risk minimization considerations. To our knowledge, documented examples of computationally tractable frequentist multiple-testing procedures exhibiting such behavior under dependent sparse-signal models remain relatively limited. Taken together, these findings suggest that the effective exploitation of covariance information may play a more fundamental role in sparse signal recovery than the Bayesian--frequentist distinction itself. More broadly, they indicate that dependence-aware sequential information accumulation may constitute a key ingredient underlying successful large-scale inference under dependence and raise the possibility that covariance-adaptive residual-based procedures possess substantially stronger optimality properties than are presently understood.

Several interesting directions emerge from the present work. From a methodological perspective, it would be of interest to develop calibration strategies that explicitly account for covariance dependence within the model-pursuit framework. Such investigations may lead to more refined notions of multiplicity under dependence and a better understanding of how dependence influences the evolution of active configurations during sequential testing. From a theoretical standpoint, the model-pursuit interpretation developed here raises broader questions concerning information accumulation, evidence propagation, and testing complexity in high-dimensional dependent inference problems. A particularly interesting theoretical question raised by the present work concerns the near-oracle behavior of BSD. While the simulation results demonstrate a consistently close agreement between BSD and the Bayes Oracle across a broad collection of sparse dependence structures, a rigorous theoretical explanation for this phenomenon remains unavailable. Establishing conditions under which posterior-guided model-pursuit procedures can asymptotically approximate Bayes-optimal sparse signal recovery rules under dependence represents an important open problem. The striking empirical agreement between MRD--GBS and the Bayes Oracle raises the possibility that asymptotic Bayes-optimality under dependence may be attainable by suitably calibrated covariance-adaptive residual procedures. Establishing rigorous theoretical foundations for this phenomenon remains an important open problem for future research. 
The present work assumes that the covariance structure is known. While this assumption is standard in much of the dependent multiple-testing literature, extending Bayesian model-pursuit procedures to settings with unknown dependence remains an important challenge. A substantial literature now exists on high-dimensional covariance and precision-matrix estimation, including shrinkage estimators, thresholding methods, and sparse graphical modeling approaches \citep{LedoitWolf2004,BickelLevina2008,CaiLiu2011,FriedmanHastieTibshirani2008}. In such settings, estimation of the covariance structure is intrinsically intertwined with estimation of sparsity and signal-strength characteristics, and the impact of these uncertainties on posterior-guided sequential search is currently not well understood. It would also be of considerable interest to extend the present framework beyond the Gaussian setting and to investigate whether similar residual-geometric structures arise in more general classes of dependence models. We hope that the perspective developed in this paper provides a useful foundation for such future investigations.


\appendix  
\section*{Appendix}
\label{app}

\section{Proofs of the Main Theoretical Results}

%
%

%
%
%

This appendix contains proofs of the principal theoretical results established in the paper. In particular, we provide proofs of Theorems~\ref{THM:BSD_MRD_CONNECTION} and \ref{THM:BSD_Admissibility}.

\begin{flushleft}
	\textbf{Proof of Theorem~\ref{THM:BSD_MRD_CONNECTION}}
\end{flushleft}

\begin{proof}
	Fix a stage \(t\in\{1,\ldots,n\}\) and a conditioning history
	\((j_1,\ldots,j_{t-1})\). By definition,
	\begin{align}
		S_{tj}^{(j_1,\ldots,j_{t-1})}(\mathbf X)
		=
		\frac{
			\pi(\nu_j=1,\boldsymbol{\nu}^{(j_1,\ldots,j_{t-1},j)}=\mathbf 0)
			f(\mathbf X^{(j_1,\ldots,j_{t-1})}
			\mid
			\nu_j=1,
			\boldsymbol{\nu}^{(j_1,\ldots,j_{t-1},j)}=\mathbf 0)
		}{
			\pi(\nu_j=0,\boldsymbol{\nu}^{(j_1,\ldots,j_{t-1},j)}=\mathbf 0)
			f(\mathbf X^{(j_1,\ldots,j_{t-1})}
			\mid
			\nu_j=0,
			\boldsymbol{\nu}^{(j_1,\ldots,j_{t-1},j)}=\mathbf 0)
		}.
		\label{BSD_MRD_RELATION_PROOF_1}
	\end{align}
	
	Since the prior indicators are independent Bernoulli\((p)\),
	\[
	\frac{
		\pi(\nu_j=1,\boldsymbol{\nu}^{(j_1,\ldots,j_{t-1},j)}=\mathbf 0)
	}{
		\pi(\nu_j=0,\boldsymbol{\nu}^{(j_1,\ldots,j_{t-1},j)}=\mathbf 0)
	}
	=
	\frac{p}{1-p}.
	\]
	
	Therefore,
	\begin{align}
		S_{tj}^{(j_1,\ldots,j_{t-1})}(\mathbf X)
		=
		\frac{p}{1-p}
		\,
		\frac{
			f(\mathbf X^{(j_1,\ldots,j_{t-1})}
			\mid
			\nu_j=1,
			\boldsymbol{\nu}^{(j_1,\ldots,j_{t-1},j)}=\mathbf 0)
		}{
			f(\mathbf X^{(j_1,\ldots,j_{t-1})}
			\mid
			\nu_j=0,
			\boldsymbol{\nu}^{(j_1,\ldots,j_{t-1},j)}=\mathbf 0)
		}.
		\label{BSD_MRD_RELATION_PROOF_2}
	\end{align}
	
	Under both models appearing in
	\eqref{BSD_MRD_RELATION_PROOF_2},
	all coordinates other than the \(j\)th coordinate are null.
	Consequently, the two distributions differ only through the variance
	of the \(j\)th coordinate.
	
	Using the standard conditional decomposition of a multivariate normal distribution, we obtain
	\begin{align}
		&
		f(\mathbf X^{(j_1,\ldots,j_{t-1})}
		\mid
		\nu_j=0,
		\boldsymbol{\nu}^{(j_1,\ldots,j_{t-1},j)}=\mathbf 0)
		\nonumber\\
		&=
		N\!\left(
		u_{j\cdot(j_1,\ldots,j_{t-1})}(\mathbf X),
		\sigma_{j\cdot(j_1,\ldots,j_{t-1})}
		\right)(X_j)
		\nonumber\\
		&\qquad\times
		N\!\left(
		\mathbf 0,
		\Sigma_{(j_1,\ldots,j_{t-1},j)}
		\right)
		\left(
		\mathbf X^{(j_1,\ldots,j_{t-1},j)}
		\right),
		\label{BSD_MRD_RELATION_PROOF_3}
	\end{align}
	where
	\[
	u_{j\cdot(j_1,\ldots,j_{t-1})}(\mathbf X)
	=
	\bigl(
	\boldsymbol{\sigma}_{(j)}^{(j_1,\ldots,j_{t-1})}
	\bigr)^T
	\Sigma_{(j_1,\ldots,j_{t-1},j)}^{-1}
	\mathbf X^{(j_1,\ldots,j_{t-1},j)}.
	\]
	
	Similarly,
	\begin{align}
		&
		f(\mathbf X^{(j_1,\ldots,j_{t-1})}
		\mid
		\nu_j=1,
		\boldsymbol{\nu}^{(j_1,\ldots,j_{t-1},j)}=\mathbf 0)
		\nonumber\\
		&=
		N\!\left(
		u_{j\cdot(j_1,\ldots,j_{t-1})}(\mathbf X),
		\psi^2+\sigma_{j\cdot(j_1,\ldots,j_{t-1})}
		\right)(X_j)
		\nonumber\\
		&\qquad\times
		N\!\left(
		\mathbf 0,
		\Sigma_{(j_1,\ldots,j_{t-1},j)}
		\right)
		\left(
		\mathbf X^{(j_1,\ldots,j_{t-1},j)}
		\right).
		\label{BSD_MRD_RELATION_PROOF_4}
	\end{align}
	
	Substituting \eqref{BSD_MRD_RELATION_PROOF_3} and
	\eqref{BSD_MRD_RELATION_PROOF_4}
	into \eqref{BSD_MRD_RELATION_PROOF_2},
	the common density
	\[
	N\!\left(
	\mathbf 0,
	\Sigma_{(j_1,\ldots,j_{t-1},j)}
	\right)
	\left(
	\mathbf X^{(j_1,\ldots,j_{t-1},j)}
	\right)
	\]
	cancels from the numerator and denominator.
	A straightforward calculation using the normal density formula then yields
	\[
	S_{tj}^{(j_1,\ldots,j_{t-1})}(\mathbf X)
	=
	\frac{p}{1-p}
	\sqrt{
		\frac{
			\sigma_{j\cdot(j_1,\ldots,j_{t-1})}
		}{
			\psi^2+\sigma_{j\cdot(j_1,\ldots,j_{t-1})}
		}
	}
	\exp\!\left\{
	\frac{\psi^2}
	{2\{\psi^2+\sigma_{j\cdot(j_1,\ldots,j_{t-1})}\}}
	\left(
	U_{tj}^{(j_1,\ldots,j_{t-1})}(\mathbf X)
	\right)^2
	\right\},
	\]
	which coincides with the representation stated in the theorem.
	This completes the proof of Theorem~\ref{THM:BSD_MRD_CONNECTION}.
\end{proof}

\begin{flushleft}
	\textbf{Proof of Theorem~\ref{THM:BSD_Admissibility}}
\end{flushleft}

\begin{proof}
Fix a stage \(t\in\{1,\ldots,n\}\), a conditioning history
	\((j_1,\ldots,j_{t-1})\), and a candidate coordinate
	\(j\notin\{j_1,\ldots,j_{t-1}\}\). By Theorem~\ref{THM:BSD_MRD_CONNECTION},
	\[
	S_{tj}^{(j_1,\ldots,j_{t-1})}(\mathbf X)
	=
	\frac{p}{1-p}
	\sqrt{
		\frac{
			\sigma_{j\cdot(j_1,\ldots,j_{t-1})}
		}{
			\psi^2+\sigma_{j\cdot(j_1,\ldots,j_{t-1})}
		}
	}
	\exp\!\left\{
	\frac{\psi^2}
	{2\{\psi^2+\sigma_{j\cdot(j_1,\ldots,j_{t-1})}\}}
	\left(
	U_{tj}^{(j_1,\ldots,j_{t-1})}(\mathbf X)
	\right)^2
	\right\}.
	\]
	
	For the fixed stage, history, and coordinate under consideration, define
	\[
	A_{tj}^{(j_1,\ldots,j_{t-1})}
	=
	\frac{p}{1-p}
	\sqrt{
		\frac{
			\sigma_{j\cdot(j_1,\ldots,j_{t-1})}
		}{
			\psi^2+\sigma_{j\cdot(j_1,\ldots,j_{t-1})}
		}
	}
	\]
	and
	\[
	B_{tj}^{(j_1,\ldots,j_{t-1})}
	=
	\frac{\psi^2}
	{2\{\psi^2+\sigma_{j\cdot(j_1,\ldots,j_{t-1})}\}}.
	\]
	Since \(p\in(0,1)\), \(\psi^2>0\), and
	\(\sigma_{j\cdot(j_1,\ldots,j_{t-1})}>0\), we have
	\[
	A_{tj}^{(j_1,\ldots,j_{t-1})}>0,
	\qquad
	B_{tj}^{(j_1,\ldots,j_{t-1})}>0.
	\]
	
	Consequently,
	\[
	S_{tj}^{(j_1,\ldots,j_{t-1})}(\mathbf X)
	=
	A_{tj}^{(j_1,\ldots,j_{t-1})}
	\exp\!\left\{
	B_{tj}^{(j_1,\ldots,j_{t-1})}
	\left(
	U_{tj}^{(j_1,\ldots,j_{t-1})}(\mathbf X)
	\right)^2
	\right\}.
	\]
	Thus, for every fixed stage \(t\), conditioning history
	\((j_1,\ldots,j_{t-1})\), and candidate coordinate \(j\), the BSD statistic
	\(S_{tj}^{(j_1,\ldots,j_{t-1})}(\mathbf X)\) is a strictly increasing function of the magnitude of the corresponding MRD statistics, given by
	\[
	\left|U_{tj}^{(j_1,\ldots,j_{t-1})}(\mathbf X)\right|.
	\]
	
	It follows that the BSD procedure is a locally monotone residual-based
	step-down procedure in the sense of
	Lemma~\ref{lem:LEM_ADMISSIBILITY_MONOTONE_TRANSFORM}. Therefore, by
	Lemma~\ref{lem:LEM_ADMISSIBILITY_MONOTONE_TRANSFORM}, which invokes the
	general admissibility theorem of \citet{GC_ADMISSIBILITY_2026}, the BSD
	procedure is admissible with respect to the vector loss function
	\eqref{eq:VECTOR_LOSS}. This completes the proof of Theorem~\ref{THM:BSD_Admissibility}.
\end{proof}

\section{Additional Simulation Results Under One-factor Dependence}

This appendix provides a comprehensive numerical summary of the simulation experiments presented in the main simulation study. For each dependence structure, sample size, and sparsity level considered, we report the Bayes misclassification risk, relative Bayes misclassification risk, false discovery rate, false non-discovery rate, statistical power, and average number of rejections for all competing procedures. Results are presented for \(n=20, 50,\) and \(100\), sparsity levels ranging from very sparse to moderately dense regimes, nominal level \(\alpha=0.10\), signal strength \(\psi^2 = 2\log n\), and \(G=5000\) Monte Carlo replications. For ease of reference, the tables are organized by performance metric, with results grouped according to the underlying dependence structure and sparsity level. Together, these tables provide the detailed numerical evidence supporting the graphical summaries and principal conclusions presented in the main text.


	\subsection*{Bayes Misclassification Risks}
	
	Table~\ref{tab:appendix-risk-onefactor} reports the Bayes misclassification risks of the competing procedures across all one-factor dependence structures, sample sizes, and sparsity levels considered in the main simulation study. These values provide the complete numerical counterpart to the risk curves displayed in Figures~\ref{fig:onefactor-risk-n20}--\ref{fig:onefactor-risk-n100} and serve as the primary quantitative summary of overall decision performance under the additive misclassification loss.
	
	\small
	\setlength{\tabcolsep}{3.5pt}
	\renewcommand{\arraystretch}{1.08}
	\begin{longtable}{cc l rrrrr}
		\caption{Bayes misclassification risks of the competing procedures under the six one-factor dependence structures for $n=20,50,100$, $\alpha=0.10$, $\psi^2=2\log n$, and $G=5000$ Monte Carlo replications.}\label{tab:appendix-risk-onefactor}\\
		\toprule
		$n$ & $p$ & Dependence structure & Oracle & BSD & MRD--GBS & MRD--CSX & BH \\
		\midrule
		\endfirsthead
		\caption[]{Bayes misclassification risks of the competing procedures under the six one-factor dependence structures (continued).}\\
		\toprule
		$n$ & $p$ & Dependence structure & Oracle & BSD & MRD--GBS & MRD--CSX & BH \\
		\midrule
		\endhead
		\midrule
		\multicolumn{8}{r}{\emph{Continued on next page}}\\
		\endfoot
		\bottomrule
		\endlastfoot
		20 & 0.03 & Independence & 0.0240 & 0.0240 & 0.0270 & 0.0351 & 0.0273 \\
		& 0.05 &  & 0.0390 & 0.0390 & 0.0410 & 0.0498 & 0.0413 \\
		& 0.10 &  & 0.0743 & 0.0743 & 0.0754 & 0.0861 & 0.0755 \\
		& 0.15 &  & 0.1077 & 0.1077 & 0.1090 & 0.1168 & 0.1084 \\
		& 0.20 &  & 0.1407 & 0.1407 & 0.1438 & 0.1484 & 0.1432 \\
		& 0.25 &  & 0.1689 & 0.1689 & 0.1741 & 0.1740 & 0.1733 \\
		& 0.30 &  & 0.1985 & 0.1985 & 0.2070 & 0.2032 & 0.2061 \\
		& 0.35 &  & 0.2235 & 0.2235 & 0.2349 & 0.2273 & 0.2347 \\
		& 0.40 &  & 0.2447 & 0.2447 & 0.2608 & 0.2496 & 0.2620 \\
		& 0.45 &  & 0.2667 & 0.2667 & 0.2896 & 0.2762 & 0.2918 \\
		& 0.50 &  & 0.2826 & 0.2826 & 0.3126 & 0.2973 & 0.3166 \\
		\addlinespace[2pt]
		& 0.03 & Positive factor & 0.0169 & 0.0169 & 0.0210 & 0.0326 & 0.0431 \\
		& 0.05 &  & 0.0268 & 0.0268 & 0.0304 & 0.0436 & 0.0576 \\
		& 0.10 &  & 0.0505 & 0.0506 & 0.0536 & 0.0693 & 0.0903 \\
		& 0.15 &  & 0.0722 & 0.0723 & 0.0748 & 0.0882 & 0.1203 \\
		& 0.20 &  & 0.0962 & 0.0964 & 0.0993 & 0.1093 & 0.1540 \\
		& 0.25 &  & 0.1157 & 0.1165 & 0.1188 & 0.1250 & 0.1837 \\
		& 0.30 &  & 0.1366 & 0.1370 & 0.1401 & 0.1427 & 0.2162 \\
		& 0.35 &  & 0.1542 & 0.1546 & 0.1573 & 0.1572 & 0.2440 \\
		& 0.40 &  & 0.1699 & 0.1719 & 0.1741 & 0.1726 & 0.2719 \\
		& 0.45 &  & 0.1868 & 0.1895 & 0.1934 & 0.1897 & 0.2993 \\
		& 0.50 &  & 0.2013 & 0.2063 & 0.2102 & 0.2065 & 0.3244 \\
		\addlinespace[2pt]
		& 0.03 & Alternating factor & 0.0170 & 0.0170 & 0.0204 & 0.0318 & 0.0436 \\
		& 0.05 &  & 0.0267 & 0.0267 & 0.0303 & 0.0430 & 0.0570 \\
		& 0.10 &  & 0.0506 & 0.0508 & 0.0538 & 0.0690 & 0.0902 \\
		& 0.15 &  & 0.0728 & 0.0729 & 0.0746 & 0.0886 & 0.1223 \\
		& 0.20 &  & 0.0968 & 0.0972 & 0.1004 & 0.1101 & 0.1556 \\
		& 0.25 &  & 0.1164 & 0.1166 & 0.1189 & 0.1248 & 0.1837 \\
		& 0.30 &  & 0.1369 & 0.1370 & 0.1395 & 0.1427 & 0.2166 \\
		& 0.35 &  & 0.1539 & 0.1554 & 0.1577 & 0.1585 & 0.2429 \\
		& 0.40 &  & 0.1696 & 0.1719 & 0.1745 & 0.1727 & 0.2715 \\
		& 0.45 &  & 0.1874 & 0.1898 & 0.1936 & 0.1903 & 0.2996 \\
		& 0.50 &  & 0.2010 & 0.2051 & 0.2087 & 0.2050 & 0.3234 \\
		\addlinespace[2pt]
		& 0.03 & Linear loadings & 0.0205 & 0.0205 & 0.0240 & 0.0341 & 0.0333 \\
		& 0.05 &  & 0.0332 & 0.0332 & 0.0359 & 0.0466 & 0.0479 \\
		& 0.10 &  & 0.0624 & 0.0625 & 0.0642 & 0.0769 & 0.0804 \\
		& 0.15 &  & 0.0895 & 0.0896 & 0.0911 & 0.1022 & 0.1131 \\
		& 0.20 &  & 0.1186 & 0.1188 & 0.1210 & 0.1290 & 0.1465 \\
		& 0.25 &  & 0.1425 & 0.1430 & 0.1457 & 0.1495 & 0.1773 \\
		& 0.30 &  & 0.1681 & 0.1692 & 0.1730 & 0.1733 & 0.2100 \\
		& 0.35 &  & 0.1888 & 0.1899 & 0.1951 & 0.1924 & 0.2376 \\
		& 0.40 &  & 0.2091 & 0.2106 & 0.2173 & 0.2116 & 0.2662 \\
		& 0.45 &  & 0.2291 & 0.2313 & 0.2407 & 0.2334 & 0.2953 \\
		& 0.50 &  & 0.2439 & 0.2495 & 0.2619 & 0.2546 & 0.3199 \\
		\addlinespace[2pt]
		& 0.03 & Sparse factor & 0.0221 & 0.0221 & 0.0253 & 0.0342 & 0.0282 \\
		& 0.05 &  & 0.0361 & 0.0361 & 0.0386 & 0.0481 & 0.0425 \\
		& 0.10 &  & 0.0680 & 0.0681 & 0.0699 & 0.0817 & 0.0761 \\
		& 0.15 &  & 0.0988 & 0.0989 & 0.1004 & 0.1097 & 0.1095 \\
		& 0.20 &  & 0.1303 & 0.1307 & 0.1328 & 0.1389 & 0.1434 \\
		& 0.25 &  & 0.1562 & 0.1566 & 0.1596 & 0.1619 & 0.1743 \\
		& 0.30 &  & 0.1840 & 0.1850 & 0.1905 & 0.1879 & 0.2067 \\
		& 0.35 &  & 0.2082 & 0.2092 & 0.2164 & 0.2117 & 0.2353 \\
		& 0.40 &  & 0.2289 & 0.2310 & 0.2423 & 0.2338 & 0.2629 \\
		& 0.45 &  & 0.2496 & 0.2530 & 0.2690 & 0.2592 & 0.2924 \\
		& 0.50 &  & 0.2665 & 0.2716 & 0.2914 & 0.2816 & 0.3176 \\
		\addlinespace[2pt]
		& 0.03 & Two-block factor & 0.0199 & 0.0199 & 0.0236 & 0.0334 & 0.0328 \\
		& 0.05 &  & 0.0321 & 0.0321 & 0.0351 & 0.0462 & 0.0463 \\
		& 0.10 &  & 0.0607 & 0.0610 & 0.0635 & 0.0769 & 0.0806 \\
		& 0.15 &  & 0.0876 & 0.0878 & 0.0897 & 0.1001 & 0.1127 \\
		& 0.20 &  & 0.1162 & 0.1166 & 0.1191 & 0.1273 & 0.1465 \\
		& 0.25 &  & 0.1389 & 0.1395 & 0.1425 & 0.1459 & 0.1766 \\
		& 0.30 &  & 0.1645 & 0.1650 & 0.1687 & 0.1689 & 0.2092 \\
		& 0.35 &  & 0.1855 & 0.1867 & 0.1909 & 0.1885 & 0.2375 \\
		& 0.40 &  & 0.2040 & 0.2050 & 0.2119 & 0.2065 & 0.2658 \\
		& 0.45 &  & 0.2233 & 0.2271 & 0.2370 & 0.2295 & 0.2943 \\
		& 0.50 &  & 0.2413 & 0.2451 & 0.2553 & 0.2497 & 0.3206 \\
		\addlinespace[2pt]
		\midrule
		50 & 0.02 & Independence & 0.0155 & 0.0155 & 0.0167 & 0.0236 & 0.0169 \\
		& 0.05 &  & 0.0370 & 0.0370 & 0.0376 & 0.0471 & 0.0377 \\
		& 0.10 &  & 0.0704 & 0.0704 & 0.0711 & 0.0784 & 0.0710 \\
		& 0.15 &  & 0.1012 & 0.1012 & 0.1031 & 0.1065 & 0.1029 \\
		& 0.20 &  & 0.1276 & 0.1276 & 0.1304 & 0.1300 & 0.1300 \\
		& 0.25 &  & 0.1558 & 0.1558 & 0.1595 & 0.1564 & 0.1595 \\
		& 0.30 &  & 0.1827 & 0.1827 & 0.1880 & 0.1829 & 0.1885 \\
		& 0.35 &  & 0.2048 & 0.2048 & 0.2125 & 0.2067 & 0.2140 \\
		& 0.40 &  & 0.2253 & 0.2253 & 0.2369 & 0.2312 & 0.2396 \\
		& 0.45 &  & 0.2429 & 0.2429 & 0.2596 & 0.2558 & 0.2643 \\
		& 0.50 &  & 0.2607 & 0.2607 & 0.2817 & 0.2806 & 0.2883 \\
		\addlinespace[2pt]
		& 0.02 & Positive factor & 0.0104 & 0.0104 & 0.0122 & 0.0222 & 0.0323 \\
		& 0.05 &  & 0.0248 & 0.0248 & 0.0265 & 0.0398 & 0.0521 \\
		& 0.10 &  & 0.0468 & 0.0464 & 0.0482 & 0.0590 & 0.0840 \\
		& 0.15 &  & 0.0672 & 0.0671 & 0.0690 & 0.0761 & 0.1151 \\
		& 0.20 &  & 0.0838 & 0.0836 & 0.0859 & 0.0895 & 0.1429 \\
		& 0.25 &  & 0.1024 & 0.1025 & 0.1042 & 0.1054 & 0.1718 \\
		& 0.30 &  & 0.1199 & 0.1202 & 0.1216 & 0.1209 & 0.1997 \\
		& 0.35 &  & 0.1350 & 0.1352 & 0.1367 & 0.1352 & 0.2232 \\
		& 0.40 &  & 0.1490 & 0.1497 & 0.1517 & 0.1500 & 0.2483 \\
		& 0.45 &  & 0.1635 & 0.1643 & 0.1659 & 0.1655 & 0.2701 \\
		& 0.50 &  & 0.1754 & 0.1771 & 0.1790 & 0.1800 & 0.2954 \\
		\addlinespace[2pt]
		& 0.02 & Alternating factor & 0.0106 & 0.0105 & 0.0121 & 0.0220 & 0.0330 \\
		& 0.05 &  & 0.0250 & 0.0249 & 0.0265 & 0.0399 & 0.0528 \\
		& 0.10 &  & 0.0467 & 0.0463 & 0.0483 & 0.0589 & 0.0838 \\
		& 0.15 &  & 0.0671 & 0.0670 & 0.0688 & 0.0758 & 0.1154 \\
		& 0.20 &  & 0.0837 & 0.0837 & 0.0858 & 0.0897 & 0.1439 \\
		& 0.25 &  & 0.1024 & 0.1025 & 0.1038 & 0.1053 & 0.1716 \\
		& 0.30 &  & 0.1195 & 0.1198 & 0.1213 & 0.1207 & 0.1997 \\
		& 0.35 &  & 0.1352 & 0.1356 & 0.1371 & 0.1357 & 0.2227 \\
		& 0.40 &  & 0.1491 & 0.1496 & 0.1513 & 0.1495 & 0.2480 \\
		& 0.45 &  & 0.1632 & 0.1642 & 0.1655 & 0.1655 & 0.2691 \\
		& 0.50 &  & 0.1749 & 0.1764 & 0.1780 & 0.1791 & 0.2952 \\
		\addlinespace[2pt]
		& 0.02 & Linear loadings & 0.0130 & 0.0130 & 0.0143 & 0.0229 & 0.0226 \\
		& 0.05 &  & 0.0308 & 0.0308 & 0.0319 & 0.0432 & 0.0422 \\
		& 0.10 &  & 0.0580 & 0.0581 & 0.0591 & 0.0690 & 0.0754 \\
		& 0.15 &  & 0.0833 & 0.0834 & 0.0845 & 0.0903 & 0.1072 \\
		& 0.20 &  & 0.1046 & 0.1047 & 0.1061 & 0.1086 & 0.1350 \\
		& 0.25 &  & 0.1280 & 0.1283 & 0.1296 & 0.1295 & 0.1644 \\
		& 0.30 &  & 0.1503 & 0.1506 & 0.1526 & 0.1507 & 0.1929 \\
		& 0.35 &  & 0.1690 & 0.1695 & 0.1720 & 0.1696 & 0.2174 \\
		& 0.40 &  & 0.1868 & 0.1876 & 0.1909 & 0.1896 & 0.2431 \\
		& 0.45 &  & 0.2026 & 0.2038 & 0.2084 & 0.2085 & 0.2664 \\
		& 0.50 &  & 0.2173 & 0.2197 & 0.2258 & 0.2288 & 0.2916 \\
		\addlinespace[2pt]
		& 0.02 & Sparse factor & 0.0142 & 0.0142 & 0.0155 & 0.0233 & 0.0179 \\
		& 0.05 &  & 0.0338 & 0.0338 & 0.0348 & 0.0453 & 0.0385 \\
		& 0.10 &  & 0.0638 & 0.0639 & 0.0647 & 0.0733 & 0.0719 \\
		& 0.15 &  & 0.0923 & 0.0923 & 0.0938 & 0.0986 & 0.1032 \\
		& 0.20 &  & 0.1159 & 0.1160 & 0.1177 & 0.1189 & 0.1310 \\
		& 0.25 &  & 0.1420 & 0.1420 & 0.1442 & 0.1428 & 0.1607 \\
		& 0.30 &  & 0.1658 & 0.1662 & 0.1694 & 0.1666 & 0.1895 \\
		& 0.35 &  & 0.1867 & 0.1870 & 0.1918 & 0.1884 & 0.2145 \\
		& 0.40 &  & 0.2066 & 0.2067 & 0.2132 & 0.2107 & 0.2406 \\
		& 0.45 &  & 0.2235 & 0.2247 & 0.2347 & 0.2339 & 0.2647 \\
		& 0.50 &  & 0.2395 & 0.2406 & 0.2539 & 0.2559 & 0.2889 \\
		\addlinespace[2pt]
		& 0.02 & Two-block factor & 0.0127 & 0.0125 & 0.0138 & 0.0227 & 0.0222 \\
		& 0.05 &  & 0.0300 & 0.0299 & 0.0312 & 0.0432 & 0.0427 \\
		& 0.10 &  & 0.0564 & 0.0562 & 0.0573 & 0.0670 & 0.0756 \\
		& 0.15 &  & 0.0815 & 0.0817 & 0.0831 & 0.0891 & 0.1067 \\
		& 0.20 &  & 0.1026 & 0.1027 & 0.1041 & 0.1064 & 0.1351 \\
		& 0.25 &  & 0.1250 & 0.1251 & 0.1265 & 0.1267 & 0.1644 \\
		& 0.30 &  & 0.1459 & 0.1463 & 0.1484 & 0.1472 & 0.1927 \\
		& 0.35 &  & 0.1646 & 0.1648 & 0.1674 & 0.1659 & 0.2170 \\
		& 0.40 &  & 0.1824 & 0.1828 & 0.1856 & 0.1844 & 0.2427 \\
		& 0.45 &  & 0.1977 & 0.1983 & 0.2032 & 0.2040 & 0.2661 \\
		& 0.50 &  & 0.2128 & 0.2143 & 0.2207 & 0.2237 & 0.2915 \\
		\addlinespace[2pt]
		\midrule
		100 & 0.01 & Independence & 0.0079 & 0.0079 & 0.0085 & 0.0135 & 0.0086 \\
		& 0.03 &  & 0.0220 & 0.0220 & 0.0223 & 0.0311 & 0.0225 \\
		& 0.05 &  & 0.0353 & 0.0353 & 0.0357 & 0.0437 & 0.0356 \\
		& 0.10 &  & 0.0661 & 0.0661 & 0.0668 & 0.0713 & 0.0667 \\
		& 0.15 &  & 0.0949 & 0.0949 & 0.0961 & 0.0970 & 0.0960 \\
		& 0.20 &  & 0.1223 & 0.1223 & 0.1241 & 0.1225 & 0.1242 \\
		& 0.25 &  & 0.1478 & 0.1478 & 0.1505 & 0.1480 & 0.1512 \\
		& 0.30 &  & 0.1718 & 0.1718 & 0.1758 & 0.1736 & 0.1769 \\
		& 0.35 &  & 0.1934 & 0.1934 & 0.1991 & 0.1985 & 0.2012 \\
		& 0.40 &  & 0.2146 & 0.2146 & 0.2223 & 0.2244 & 0.2257 \\
		& 0.45 &  & 0.2319 & 0.2319 & 0.2431 & 0.2493 & 0.2487 \\
		& 0.50 &  & 0.2462 & 0.2462 & 0.2621 & 0.2736 & 0.2699 \\
		\addlinespace[2pt]
		& 0.01 & Positive factor & 0.0054 & 0.0052 & 0.0062 & 0.0138 & 0.0274 \\
		& 0.03 &  & 0.0148 & 0.0144 & 0.0154 & 0.0270 & 0.0394 \\
		& 0.05 &  & 0.0228 & 0.0228 & 0.0239 & 0.0347 & 0.0501 \\
		& 0.10 &  & 0.0428 & 0.0429 & 0.0443 & 0.0507 & 0.0824 \\
		& 0.15 &  & 0.0612 & 0.0612 & 0.0627 & 0.0656 & 0.1091 \\
		& 0.20 &  & 0.0789 & 0.0788 & 0.0804 & 0.0809 & 0.1354 \\
		& 0.25 &  & 0.0953 & 0.0953 & 0.0967 & 0.0959 & 0.1626 \\
		& 0.30 &  & 0.1109 & 0.1106 & 0.1122 & 0.1108 & 0.1848 \\
		& 0.35 &  & 0.1252 & 0.1253 & 0.1263 & 0.1256 & 0.2111 \\
		& 0.40 &  & 0.1399 & 0.1401 & 0.1410 & 0.1412 & 0.2347 \\
		& 0.45 &  & 0.1519 & 0.1521 & 0.1533 & 0.1557 & 0.2554 \\
		& 0.50 &  & 0.1632 & 0.1636 & 0.1641 & 0.1701 & 0.2757 \\
		\addlinespace[2pt]
		& 0.01 & Alternating factor & 0.0052 & 0.0052 & 0.0061 & 0.0139 & 0.0270 \\
		& 0.03 &  & 0.0144 & 0.0144 & 0.0154 & 0.0270 & 0.0394 \\
		& 0.05 &  & 0.0228 & 0.0228 & 0.0240 & 0.0345 & 0.0501 \\
		& 0.10 &  & 0.0429 & 0.0428 & 0.0443 & 0.0506 & 0.0826 \\
		& 0.15 &  & 0.0616 & 0.0613 & 0.0630 & 0.0658 & 0.1089 \\
		& 0.20 &  & 0.0788 & 0.0787 & 0.0802 & 0.0808 & 0.1358 \\
		& 0.25 &  & 0.0950 & 0.0950 & 0.0967 & 0.0957 & 0.1626 \\
		& 0.30 &  & 0.1111 & 0.1110 & 0.1123 & 0.1109 & 0.1858 \\
		& 0.35 &  & 0.1249 & 0.1250 & 0.1264 & 0.1252 & 0.2114 \\
		& 0.40 &  & 0.1395 & 0.1398 & 0.1405 & 0.1412 & 0.2343 \\
		& 0.45 &  & 0.1517 & 0.1522 & 0.1528 & 0.1558 & 0.2552 \\
		& 0.50 &  & 0.1631 & 0.1638 & 0.1647 & 0.1698 & 0.2755 \\
		\addlinespace[2pt]
		& 0.01 & Linear loadings & 0.0067 & 0.0066 & 0.0073 & 0.0137 & 0.0150 \\
		& 0.03 &  & 0.0183 & 0.0181 & 0.0187 & 0.0290 & 0.0284 \\
		& 0.05 &  & 0.0288 & 0.0288 & 0.0294 & 0.0392 & 0.0410 \\
		& 0.10 &  & 0.0541 & 0.0539 & 0.0546 & 0.0604 & 0.0723 \\
		& 0.15 &  & 0.0774 & 0.0775 & 0.0779 & 0.0803 & 0.1010 \\
		& 0.20 &  & 0.0991 & 0.0991 & 0.0998 & 0.1001 & 0.1282 \\
		& 0.25 &  & 0.1202 & 0.1203 & 0.1209 & 0.1204 & 0.1559 \\
		& 0.30 &  & 0.1394 & 0.1394 & 0.1405 & 0.1401 & 0.1797 \\
		& 0.35 &  & 0.1575 & 0.1576 & 0.1587 & 0.1598 & 0.2056 \\
		& 0.40 &  & 0.1751 & 0.1756 & 0.1774 & 0.1805 & 0.2296 \\
		& 0.45 &  & 0.1896 & 0.1903 & 0.1929 & 0.1993 & 0.2514 \\
		& 0.50 &  & 0.2030 & 0.2038 & 0.2072 & 0.2184 & 0.2719 \\
		\addlinespace[2pt]
		& 0.01 & Sparse factor & 0.0071 & 0.0071 & 0.0078 & 0.0137 & 0.0100 \\
		& 0.03 &  & 0.0200 & 0.0198 & 0.0203 & 0.0302 & 0.0237 \\
		& 0.05 &  & 0.0317 & 0.0317 & 0.0321 & 0.0413 & 0.0367 \\
		& 0.10 &  & 0.0594 & 0.0594 & 0.0601 & 0.0655 & 0.0679 \\
		& 0.15 &  & 0.0852 & 0.0852 & 0.0858 & 0.0879 & 0.0971 \\
		& 0.20 &  & 0.1100 & 0.1100 & 0.1112 & 0.1107 & 0.1255 \\
		& 0.25 &  & 0.1330 & 0.1330 & 0.1344 & 0.1333 & 0.1524 \\
		& 0.30 &  & 0.1544 & 0.1545 & 0.1566 & 0.1557 & 0.1774 \\
		& 0.35 &  & 0.1742 & 0.1744 & 0.1772 & 0.1782 & 0.2019 \\
		& 0.40 &  & 0.1936 & 0.1940 & 0.1976 & 0.2010 & 0.2268 \\
		& 0.45 &  & 0.2094 & 0.2097 & 0.2161 & 0.2229 & 0.2491 \\
		& 0.50 &  & 0.2237 & 0.2240 & 0.2326 & 0.2447 & 0.2706 \\
		\addlinespace[2pt]
		& 0.01 & Two-block factor & 0.0063 & 0.0063 & 0.0071 & 0.0138 & 0.0152 \\
		& 0.03 &  & 0.0177 & 0.0176 & 0.0182 & 0.0290 & 0.0283 \\
		& 0.05 &  & 0.0280 & 0.0280 & 0.0286 & 0.0386 & 0.0407 \\
		& 0.10 &  & 0.0525 & 0.0525 & 0.0535 & 0.0594 & 0.0720 \\
		& 0.15 &  & 0.0754 & 0.0753 & 0.0761 & 0.0786 & 0.1006 \\
		& 0.20 &  & 0.0973 & 0.0971 & 0.0979 & 0.0983 & 0.1283 \\
		& 0.25 &  & 0.1172 & 0.1172 & 0.1183 & 0.1176 & 0.1555 \\
		& 0.30 &  & 0.1363 & 0.1363 & 0.1375 & 0.1369 & 0.1795 \\
		& 0.35 &  & 0.1536 & 0.1539 & 0.1551 & 0.1559 & 0.2050 \\
		& 0.40 &  & 0.1710 & 0.1713 & 0.1732 & 0.1761 & 0.2290 \\
		& 0.45 &  & 0.1854 & 0.1859 & 0.1891 & 0.1952 & 0.2512 \\
		& 0.50 &  & 0.1980 & 0.1986 & 0.2023 & 0.2133 & 0.2718 \\
		        \addlinespace[2pt]
		\end{longtable}

\normalsize
\subsection*{Relative Bayes Misclassification Risks}

Table~\ref{tab:appendix-relative-risk-onefactor} reports the Bayes misclassification risks of the competing procedures relative to the corresponding Bayes Oracle risk across all one-factor dependence structures, sample sizes, and sparsity levels considered in the study. These relative risk values facilitate direct comparisons with the Bayes Oracle and provide a quantitative measure of the extent to which each procedure approximates the Bayes-optimal decision rule under dependence.

\small
	\setlength{\tabcolsep}{3.5pt}
	\renewcommand{\arraystretch}{1.08}
	\begin{longtable}{cc l rrrrr}
		\caption{Relative Bayes misclassification risks of the competing procedures, expressed as percentages of the corresponding Bayes Oracle risk, under the six one-factor dependence structures for $n=20,50,100$, $\alpha=0.10$, $\psi^2=2\log n$, and $G=5000$ Monte Carlo replications. The Oracle column is therefore equal to $100\%$ throughout. Values slightly below $100\%$ occasionally occur because all Bayes risks are estimated by Monte Carlo simulation and therefore reflect finite-simulation variability. Such values should not be interpreted as evidence that a procedure uniformly improves upon the Bayes Oracle.}\label{tab:appendix-relative-risk-onefactor}\\
		\toprule
		$n$ & $p$ & Dependence structure & Oracle & BSD & MRD--GBS & MRD--CSX & BH \\
		\midrule
		\endfirsthead
		\caption[]{Relative Bayes misclassification risks of the competing procedures, expressed as percentages of the corresponding Bayes Oracle risk (continued).}\\
		\toprule
		$n$ & $p$ & Dependence structure & Oracle & BSD & MRD--GBS & MRD--CSX & BH \\
		\midrule
		\endhead
		\midrule
		\multicolumn{8}{r}{\emph{Continued on next page}}\\
		\endfoot
		\bottomrule
		\endlastfoot
		20 & 0.03 & Independence & 100.00\% & 100.00\% & 112.14\% & 145.95\% & 113.43\% \\
		& 0.05 &  & 100.00\% & 100.00\% & 105.28\% & 127.66\% & 106.08\% \\
		& 0.10 &  & 100.00\% & 100.00\% & 101.49\% & 115.96\% & 101.70\% \\
		& 0.15 &  & 100.00\% & 100.00\% & 101.19\% & 108.41\% & 100.63\% \\
		& 0.20 &  & 100.00\% & 100.00\% & 102.17\% & 105.46\% & 101.72\% \\
		& 0.25 &  & 100.00\% & 100.00\% & 103.09\% & 103.02\% & 102.62\% \\
		& 0.30 &  & 100.00\% & 100.00\% & 104.24\% & 102.34\% & 103.78\% \\
		& 0.35 &  & 100.00\% & 100.00\% & 105.12\% & 101.73\% & 105.04\% \\
		& 0.40 &  & 100.00\% & 100.00\% & 106.59\% & 102.02\% & 107.06\% \\
		& 0.45 &  & 100.00\% & 100.00\% & 108.61\% & 103.57\% & 109.40\% \\
		& 0.50 &  & 100.00\% & 100.00\% & 110.60\% & 105.21\% & 112.03\% \\
		\addlinespace[2pt]
		& 0.03 & Positive factor & 100.00\% & 99.82\% & 123.88\% & 192.38\% & 254.67\% \\
		& 0.05 &  & 100.00\% & 99.89\% & 113.31\% & 162.68\% & 214.84\% \\
		& 0.10 &  & 100.00\% & 100.08\% & 106.02\% & 137.19\% & 178.69\% \\
		& 0.15 &  & 100.00\% & 100.10\% & 103.49\% & 122.03\% & 166.53\% \\
		& 0.20 &  & 100.00\% & 100.21\% & 103.21\% & 113.61\% & 160.07\% \\
		& 0.25 &  & 100.00\% & 100.63\% & 102.66\% & 108.01\% & 158.75\% \\
		& 0.30 &  & 100.00\% & 100.26\% & 102.53\% & 104.47\% & 158.30\% \\
		& 0.35 &  & 100.00\% & 100.27\% & 102.00\% & 101.96\% & 158.28\% \\
		& 0.40 &  & 100.00\% & 101.17\% & 102.45\% & 101.62\% & 160.01\% \\
		& 0.45 &  & 100.00\% & 101.46\% & 103.52\% & 101.57\% & 160.21\% \\
		& 0.50 &  & 100.00\% & 102.48\% & 104.44\% & 102.61\% & 161.14\% \\
		\addlinespace[2pt]
		& 0.03 & Alternating factor & 100.00\% & 99.94\% & 119.71\% & 187.06\% & 256.71\% \\
		& 0.05 &  & 100.00\% & 100.07\% & 113.48\% & 160.97\% & 213.52\% \\
		& 0.10 &  & 100.00\% & 100.28\% & 106.30\% & 136.33\% & 178.27\% \\
		& 0.15 &  & 100.00\% & 100.18\% & 102.51\% & 121.65\% & 168.05\% \\
		& 0.20 &  & 100.00\% & 100.40\% & 103.72\% & 113.73\% & 160.73\% \\
		& 0.25 &  & 100.00\% & 100.15\% & 102.19\% & 107.24\% & 157.79\% \\
		& 0.30 &  & 100.00\% & 100.10\% & 101.94\% & 104.28\% & 158.22\% \\
		& 0.35 &  & 100.00\% & 101.03\% & 102.52\% & 103.04\% & 157.84\% \\
		& 0.40 &  & 100.00\% & 101.37\% & 102.94\% & 101.88\% & 160.09\% \\
		& 0.45 &  & 100.00\% & 101.28\% & 103.31\% & 101.55\% & 159.83\% \\
		& 0.50 &  & 100.00\% & 102.03\% & 103.87\% & 102.03\% & 160.94\% \\
		\addlinespace[2pt]
		& 0.03 & Linear loadings & 100.00\% & 99.85\% & 117.01\% & 166.33\% & 162.38\% \\
		& 0.05 &  & 100.00\% & 100.12\% & 108.29\% & 140.33\% & 144.39\% \\
		& 0.10 &  & 100.00\% & 100.02\% & 102.85\% & 123.15\% & 128.79\% \\
		& 0.15 &  & 100.00\% & 100.10\% & 101.75\% & 114.18\% & 126.36\% \\
		& 0.20 &  & 100.00\% & 100.19\% & 101.99\% & 108.73\% & 123.53\% \\
		& 0.25 &  & 100.00\% & 100.35\% & 102.27\% & 104.91\% & 124.45\% \\
		& 0.30 &  & 100.00\% & 100.64\% & 102.88\% & 103.08\% & 124.92\% \\
		& 0.35 &  & 100.00\% & 100.57\% & 103.32\% & 101.88\% & 125.84\% \\
		& 0.40 &  & 100.00\% & 100.70\% & 103.92\% & 101.19\% & 127.33\% \\
		& 0.45 &  & 100.00\% & 100.97\% & 105.06\% & 101.91\% & 128.91\% \\
		& 0.50 &  & 100.00\% & 102.31\% & 107.38\% & 104.39\% & 131.16\% \\
		\addlinespace[2pt]
		& 0.03 & Sparse factor & 100.00\% & 99.95\% & 114.27\% & 154.56\% & 127.19\% \\
		& 0.05 &  & 100.00\% & 100.14\% & 106.98\% & 133.45\% & 117.74\% \\
		& 0.10 &  & 100.00\% & 100.12\% & 102.77\% & 120.19\% & 111.91\% \\
		& 0.15 &  & 100.00\% & 100.03\% & 101.55\% & 110.97\% & 110.77\% \\
		& 0.20 &  & 100.00\% & 100.25\% & 101.93\% & 106.58\% & 110.00\% \\
		& 0.25 &  & 100.00\% & 100.28\% & 102.22\% & 103.69\% & 111.58\% \\
		& 0.30 &  & 100.00\% & 100.54\% & 103.50\% & 102.10\% & 112.31\% \\
		& 0.35 &  & 100.00\% & 100.47\% & 103.96\% & 101.67\% & 113.01\% \\
		& 0.40 &  & 100.00\% & 100.94\% & 105.86\% & 102.16\% & 114.89\% \\
		& 0.45 &  & 100.00\% & 101.35\% & 107.77\% & 103.85\% & 117.17\% \\
		& 0.50 &  & 100.00\% & 101.92\% & 109.35\% & 105.66\% & 119.18\% \\
		\addlinespace[2pt]
		& 0.03 & Two-block factor & 100.00\% & 99.90\% & 118.22\% & 167.67\% & 164.51\% \\
		& 0.05 &  & 100.00\% & 100.12\% & 109.39\% & 144.18\% & 144.50\% \\
		& 0.10 &  & 100.00\% & 100.36\% & 104.49\% & 126.60\% & 132.76\% \\
		& 0.15 &  & 100.00\% & 100.33\% & 102.40\% & 114.30\% & 128.69\% \\
		& 0.20 &  & 100.00\% & 100.31\% & 102.43\% & 109.52\% & 126.03\% \\
		& 0.25 &  & 100.00\% & 100.43\% & 102.61\% & 105.05\% & 127.17\% \\
		& 0.30 &  & 100.00\% & 100.29\% & 102.58\% & 102.71\% & 127.22\% \\
		& 0.35 &  & 100.00\% & 100.63\% & 102.94\% & 101.65\% & 128.06\% \\
		& 0.40 &  & 100.00\% & 100.52\% & 103.86\% & 101.22\% & 130.29\% \\
		& 0.45 &  & 100.00\% & 101.71\% & 106.13\% & 102.80\% & 131.82\% \\
		& 0.50 &  & 100.00\% & 101.58\% & 105.80\% & 103.48\% & 132.85\% \\
		\midrule
		50 & 0.02 & Independence & 100.00\% & 100.00\% & 107.55\% & 151.78\% & 108.92\% \\
		& 0.05 &  & 100.00\% & 100.00\% & 101.70\% & 127.25\% & 101.77\% \\
		& 0.10 &  & 100.00\% & 100.00\% & 101.02\% & 111.38\% & 100.87\% \\
		& 0.15 &  & 100.00\% & 100.00\% & 101.91\% & 105.28\% & 101.68\% \\
		& 0.20 &  & 100.00\% & 100.00\% & 102.17\% & 101.89\% & 101.91\% \\
		& 0.25 &  & 100.00\% & 100.00\% & 102.36\% & 100.35\% & 102.37\% \\
		& 0.30 &  & 100.00\% & 100.00\% & 102.92\% & 100.15\% & 103.22\% \\
		& 0.35 &  & 100.00\% & 100.00\% & 103.76\% & 100.93\% & 104.48\% \\
		& 0.40 &  & 100.00\% & 100.00\% & 105.14\% & 102.60\% & 106.35\% \\
		& 0.45 &  & 100.00\% & 100.00\% & 106.88\% & 105.32\% & 108.81\% \\
		& 0.50 &  & 100.00\% & 100.00\% & 108.05\% & 107.61\% & 110.59\% \\
		\addlinespace[2pt]
		& 0.02 & Positive factor & 100.00\% & 99.77\% & 116.59\% & 212.22\% & 309.73\% \\
		& 0.05 &  & 100.00\% & 100.05\% & 106.76\% & 160.40\% & 210.04\% \\
		& 0.10 &  & 100.00\% & 99.22\% & 103.15\% & 126.13\% & 179.61\% \\
		& 0.15 &  & 100.00\% & 99.90\% & 102.74\% & 113.18\% & 171.26\% \\
		& 0.20 &  & 100.00\% & 99.82\% & 102.49\% & 106.76\% & 170.51\% \\
		& 0.25 &  & 100.00\% & 100.14\% & 101.80\% & 102.93\% & 167.83\% \\
		& 0.30 &  & 100.00\% & 100.23\% & 101.40\% & 100.80\% & 166.52\% \\
		& 0.35 &  & 100.00\% & 100.19\% & 101.30\% & 100.17\% & 165.41\% \\
		& 0.40 &  & 100.00\% & 100.46\% & 101.77\% & 100.62\% & 166.57\% \\
		& 0.45 &  & 100.00\% & 100.49\% & 101.50\% & 101.23\% & 165.22\% \\
		& 0.50 &  & 100.00\% & 100.96\% & 102.04\% & 102.60\% & 168.40\% \\
		\addlinespace[2pt]
		& 0.02 & Alternating factor & 100.00\% & 98.38\% & 113.91\% & 207.14\% & 310.30\% \\
		& 0.05 &  & 100.00\% & 99.49\% & 105.87\% & 159.72\% & 211.21\% \\
		& 0.10 &  & 100.00\% & 99.03\% & 103.30\% & 125.97\% & 179.35\% \\
		& 0.15 &  & 100.00\% & 99.84\% & 102.56\% & 112.92\% & 171.96\% \\
		& 0.20 &  & 100.00\% & 100.01\% & 102.54\% & 107.18\% & 171.93\% \\
		& 0.25 &  & 100.00\% & 100.07\% & 101.30\% & 102.80\% & 167.52\% \\
		& 0.30 &  & 100.00\% & 100.31\% & 101.50\% & 101.01\% & 167.13\% \\
		& 0.35 &  & 100.00\% & 100.28\% & 101.40\% & 100.37\% & 164.71\% \\
		& 0.40 &  & 100.00\% & 100.30\% & 101.47\% & 100.26\% & 166.30\% \\
		& 0.45 &  & 100.00\% & 100.60\% & 101.41\% & 101.38\% & 164.90\% \\
		& 0.50 &  & 100.00\% & 100.85\% & 101.77\% & 102.42\% & 168.82\% \\
		\addlinespace[2pt]
		& 0.02 & Linear loadings & 100.00\% & 100.06\% & 110.30\% & 176.16\% & 173.61\% \\
		& 0.05 &  & 100.00\% & 100.00\% & 103.80\% & 140.47\% & 137.27\% \\
		& 0.10 &  & 100.00\% & 100.08\% & 101.83\% & 118.89\% & 129.92\% \\
		& 0.15 &  & 100.00\% & 100.11\% & 101.42\% & 108.33\% & 128.66\% \\
		& 0.20 &  & 100.00\% & 100.12\% & 101.43\% & 103.83\% & 129.11\% \\
		& 0.25 &  & 100.00\% & 100.24\% & 101.23\% & 101.21\% & 128.46\% \\
		& 0.30 &  & 100.00\% & 100.22\% & 101.54\% & 100.29\% & 128.36\% \\
		& 0.35 &  & 100.00\% & 100.30\% & 101.73\% & 100.31\% & 128.61\% \\
		& 0.40 &  & 100.00\% & 100.46\% & 102.22\% & 101.52\% & 130.15\% \\
		& 0.45 &  & 100.00\% & 100.60\% & 102.87\% & 102.95\% & 131.53\% \\
		& 0.50 &  & 100.00\% & 101.10\% & 103.90\% & 105.31\% & 134.17\% \\
		\addlinespace[2pt]
		& 0.02 & Sparse factor & 100.00\% & 100.03\% & 109.04\% & 164.31\% & 126.03\% \\
		& 0.05 &  & 100.00\% & 99.89\% & 102.86\% & 133.96\% & 113.87\% \\
		& 0.10 &  & 100.00\% & 100.12\% & 101.36\% & 114.82\% & 112.69\% \\
		& 0.15 &  & 100.00\% & 99.99\% & 101.64\% & 106.88\% & 111.82\% \\
		& 0.20 &  & 100.00\% & 100.07\% & 101.57\% & 102.58\% & 112.97\% \\
		& 0.25 &  & 100.00\% & 99.99\% & 101.54\% & 100.60\% & 113.22\% \\
		& 0.30 &  & 100.00\% & 100.22\% & 102.18\% & 100.47\% & 114.30\% \\
		& 0.35 &  & 100.00\% & 100.18\% & 102.72\% & 100.91\% & 114.87\% \\
		& 0.40 &  & 100.00\% & 100.05\% & 103.22\% & 102.02\% & 116.47\% \\
		& 0.45 &  & 100.00\% & 100.52\% & 105.02\% & 104.67\% & 118.43\% \\
		& 0.50 &  & 100.00\% & 100.45\% & 105.99\% & 106.83\% & 120.61\% \\
		\addlinespace[2pt]
		& 0.02 & Two-block factor & 100.00\% & 98.33\% & 108.89\% & 179.17\% & 174.69\% \\
		& 0.05 &  & 100.00\% & 99.97\% & 104.17\% & 144.24\% & 142.57\% \\
		& 0.10 &  & 100.00\% & 99.65\% & 101.59\% & 118.74\% & 133.97\% \\
		& 0.15 &  & 100.00\% & 100.22\% & 101.96\% & 109.27\% & 130.93\% \\
		& 0.20 &  & 100.00\% & 100.09\% & 101.49\% & 103.68\% & 131.65\% \\
		& 0.25 &  & 100.00\% & 100.04\% & 101.21\% & 101.33\% & 131.51\% \\
		& 0.30 &  & 100.00\% & 100.22\% & 101.69\% & 100.84\% & 132.05\% \\
		& 0.35 &  & 100.00\% & 100.17\% & 101.70\% & 100.81\% & 131.87\% \\
		& 0.40 &  & 100.00\% & 100.23\% & 101.75\% & 101.13\% & 133.10\% \\
		& 0.45 &  & 100.00\% & 100.29\% & 102.81\% & 103.19\% & 134.63\% \\
		& 0.50 &  & 100.00\% & 100.74\% & 103.74\% & 105.14\% & 136.99\% \\
		\midrule
		100 & 0.01 & Independence & 100.00\% & 100.00\% & 107.73\% & 172.08\% & 109.26\% \\
		& 0.03 &  & 100.00\% & 100.00\% & 101.70\% & 141.45\% & 102.17\% \\
		& 0.05 &  & 100.00\% & 100.00\% & 101.14\% & 124.03\% & 101.02\% \\
		& 0.10 &  & 100.00\% & 100.00\% & 101.07\% & 107.83\% & 100.87\% \\
		& 0.15 &  & 100.00\% & 100.00\% & 101.23\% & 102.23\% & 101.15\% \\
		& 0.20 &  & 100.00\% & 100.00\% & 101.46\% & 100.16\% & 101.55\% \\
		& 0.25 &  & 100.00\% & 100.00\% & 101.81\% & 100.13\% & 102.30\% \\
		& 0.30 &  & 100.00\% & 100.00\% & 102.33\% & 101.10\% & 102.98\% \\
		& 0.35 &  & 100.00\% & 100.00\% & 102.93\% & 102.59\% & 103.99\% \\
		& 0.40 &  & 100.00\% & 100.00\% & 103.59\% & 104.57\% & 105.18\% \\
		& 0.45 &  & 100.00\% & 100.00\% & 104.85\% & 107.50\% & 107.25\% \\
		& 0.50 &  & 100.00\% & 100.00\% & 106.49\% & 111.13\% & 109.66\% \\
		\addlinespace[2pt]
		& 0.01 & Positive factor & 100.00\% & 96.23\% & 114.78\% & 256.94\% & 511.27\% \\
		& 0.03 &  & 100.00\% & 97.53\% & 104.07\% & 182.43\% & 266.59\% \\
		& 0.05 &  & 100.00\% & 99.98\% & 104.47\% & 151.84\% & 219.45\% \\
		& 0.10 &  & 100.00\% & 100.06\% & 103.37\% & 118.33\% & 192.35\% \\
		& 0.15 &  & 100.00\% & 100.00\% & 102.45\% & 107.21\% & 178.27\% \\
		& 0.20 &  & 100.00\% & 99.92\% & 101.87\% & 102.54\% & 171.58\% \\
		& 0.25 &  & 100.00\% & 99.99\% & 101.48\% & 100.66\% & 170.72\% \\
		& 0.30 &  & 100.00\% & 99.80\% & 101.19\% & 99.91\% & 166.74\% \\
		& 0.35 &  & 100.00\% & 100.05\% & 100.89\% & 100.30\% & 168.61\% \\
		& 0.40 &  & 100.00\% & 100.18\% & 100.80\% & 100.99\% & 167.79\% \\
		& 0.45 &  & 100.00\% & 100.11\% & 100.86\% & 102.50\% & 168.08\% \\
		& 0.50 &  & 100.00\% & 100.21\% & 100.54\% & 104.18\% & 168.92\% \\
		\addlinespace[2pt]
		& 0.01 & Alternating factor & 100.00\% & 100.12\% & 117.55\% & 268.40\% & 521.57\% \\
		& 0.03 &  & 100.00\% & 100.15\% & 106.86\% & 187.31\% & 273.94\% \\
		& 0.05 &  & 100.00\% & 100.08\% & 105.11\% & 151.43\% & 219.96\% \\
		& 0.10 &  & 100.00\% & 99.73\% & 103.36\% & 117.86\% & 192.63\% \\
		& 0.15 &  & 100.00\% & 99.46\% & 102.32\% & 106.83\% & 176.77\% \\
		& 0.20 &  & 100.00\% & 99.96\% & 101.84\% & 102.54\% & 172.42\% \\
		& 0.25 &  & 100.00\% & 100.06\% & 101.85\% & 100.77\% & 171.25\% \\
		& 0.30 &  & 100.00\% & 99.90\% & 101.13\% & 99.86\% & 167.23\% \\
		& 0.35 &  & 100.00\% & 100.12\% & 101.18\% & 100.27\% & 169.25\% \\
		& 0.40 &  & 100.00\% & 100.22\% & 100.76\% & 101.26\% & 167.98\% \\
		& 0.45 &  & 100.00\% & 100.30\% & 100.71\% & 102.67\% & 168.20\% \\
		& 0.50 &  & 100.00\% & 100.42\% & 100.96\% & 104.11\% & 168.91\% \\
		\addlinespace[2pt]
		& 0.01 & Linear loadings & 100.00\% & 97.24\% & 107.80\% & 202.61\% & 222.31\% \\
		& 0.03 &  & 100.00\% & 98.95\% & 102.40\% & 158.73\% & 155.18\% \\
		& 0.05 &  & 100.00\% & 99.98\% & 102.25\% & 136.14\% & 142.59\% \\
		& 0.10 &  & 100.00\% & 99.59\% & 100.87\% & 111.58\% & 133.68\% \\
		& 0.15 &  & 100.00\% & 100.09\% & 100.64\% & 103.75\% & 130.48\% \\
		& 0.20 &  & 100.00\% & 100.00\% & 100.74\% & 100.99\% & 129.37\% \\
		& 0.25 &  & 100.00\% & 100.04\% & 100.58\% & 100.14\% & 129.64\% \\
		& 0.30 &  & 100.00\% & 100.02\% & 100.80\% & 100.53\% & 128.94\% \\
		& 0.35 &  & 100.00\% & 100.08\% & 100.76\% & 101.43\% & 130.53\% \\
		& 0.40 &  & 100.00\% & 100.29\% & 101.30\% & 103.06\% & 131.07\% \\
		& 0.45 &  & 100.00\% & 100.40\% & 101.77\% & 105.14\% & 132.62\% \\
		& 0.50 &  & 100.00\% & 100.40\% & 102.11\% & 107.59\% & 133.97\% \\
		\addlinespace[2pt]
		& 0.01 & Sparse factor & 100.00\% & 100.08\% & 109.22\% & 192.08\% & 140.55\% \\
		& 0.03 &  & 100.00\% & 99.04\% & 101.85\% & 151.12\% & 118.76\% \\
		& 0.05 &  & 100.00\% & 99.97\% & 101.29\% & 130.26\% & 115.77\% \\
		& 0.10 &  & 100.00\% & 99.98\% & 101.22\% & 110.17\% & 114.23\% \\
		& 0.15 &  & 100.00\% & 99.98\% & 100.69\% & 103.20\% & 113.96\% \\
		& 0.20 &  & 100.00\% & 100.01\% & 101.12\% & 100.59\% & 114.04\% \\
		& 0.25 &  & 100.00\% & 100.05\% & 101.10\% & 100.23\% & 114.61\% \\
		& 0.30 &  & 100.00\% & 100.02\% & 101.40\% & 100.79\% & 114.87\% \\
		& 0.35 &  & 100.00\% & 100.08\% & 101.73\% & 102.25\% & 115.86\% \\
		& 0.40 &  & 100.00\% & 100.19\% & 102.08\% & 103.81\% & 117.13\% \\
		& 0.45 &  & 100.00\% & 100.17\% & 103.22\% & 106.46\% & 118.99\% \\
		& 0.50 &  & 100.00\% & 100.16\% & 103.97\% & 109.39\% & 120.97\% \\
		\addlinespace[2pt]
		& 0.01 & Two-block factor & 100.00\% & 100.13\% & 112.95\% & 220.57\% & 241.75\% \\
		& 0.03 &  & 100.00\% & 98.96\% & 102.69\% & 163.24\% & 159.21\% \\
		& 0.05 &  & 100.00\% & 99.94\% & 102.27\% & 137.76\% & 145.19\% \\
		& 0.10 &  & 100.00\% & 100.02\% & 101.96\% & 113.21\% & 137.23\% \\
		& 0.15 &  & 100.00\% & 99.82\% & 100.83\% & 104.18\% & 133.41\% \\
		& 0.20 &  & 100.00\% & 99.87\% & 100.62\% & 101.05\% & 131.91\% \\
		& 0.25 &  & 100.00\% & 99.94\% & 100.89\% & 100.32\% & 132.69\% \\
		& 0.30 &  & 100.00\% & 99.95\% & 100.83\% & 100.42\% & 131.66\% \\
		& 0.35 &  & 100.00\% & 100.19\% & 100.96\% & 101.50\% & 133.44\% \\
		& 0.40 &  & 100.00\% & 100.16\% & 101.26\% & 102.98\% & 133.89\% \\
		& 0.45 &  & 100.00\% & 100.25\% & 101.95\% & 105.25\% & 135.46\% \\
		& 0.50 &  & 100.00\% & 100.28\% & 102.14\% & 107.71\% & 137.23\% \\
	\addlinespace[2pt]
	\end{longtable}

\subsection*{False Discovery Rates}
Table~\ref{tab:appendix-fdr-onefactor} reports the empirical false discovery rates of the competing procedures across all one-factor dependence structures, sample sizes, and sparsity levels considered in the study. These results provide a detailed numerical assessment of Type~I error behavior under dependence and complement the graphical summaries presented in the main text. In particular, the table facilitates direct comparisons between the FDR-controlling procedures and the Bayes Oracle, thereby illustrating the extent to which the various methods differ in their allocation of false discoveries across sparsity regimes.
\small
	\setlength{\tabcolsep}{3.5pt}
	\renewcommand{\arraystretch}{1.08}
	\begin{longtable}{cc l rrrrr}
		\caption{False discovery rates of the competing procedures under the six one-factor dependence structures for $n=20,50,100$, $\alpha=0.10$, $\psi^2=2\log n$, and $G=5000$ Monte Carlo replications.}\label{tab:appendix-fdr-onefactor}\\
		\toprule
		$n$ & $p$ & Dependence structure & Oracle & BSD & MRD--GBS & MRD--CSX & BH \\
		\midrule
		\endfirsthead
		\caption[]{False discovery rates of the competing procedures under the six one-factor dependence structures (continued).}\\
		\toprule
		$n$ & $p$ & Dependence structure & Oracle & BSD & MRD--GBS & MRD--CSX & BH \\
		\midrule
		\endhead
		\midrule
		\multicolumn{8}{r}{\emph{Continued on next page}}\\
		\endfoot
		\bottomrule
		\endlastfoot
		20 & 0.03 & Independence & 0.0290 & 0.0290 & 0.0968 & 0.1366 & 0.1010 \\
		& 0.05 &  & 0.0437 & 0.0437 & 0.0937 & 0.1445 & 0.0986 \\
		& 0.10 &  & 0.0814 & 0.0814 & 0.0832 & 0.1673 & 0.0871 \\
		& 0.15 &  & 0.1200 & 0.1200 & 0.0796 & 0.1694 & 0.0821 \\
		& 0.20 &  & 0.1485 & 0.1485 & 0.0833 & 0.1717 & 0.0833 \\
		& 0.25 &  & 0.1564 & 0.1564 & 0.0755 & 0.1584 & 0.0753 \\
		& 0.30 &  & 0.1682 & 0.1682 & 0.0742 & 0.1529 & 0.0710 \\
		& 0.35 &  & 0.1783 & 0.1783 & 0.0720 & 0.1384 & 0.0682 \\
		& 0.40 &  & 0.1797 & 0.1797 & 0.0678 & 0.1238 & 0.0613 \\
		& 0.45 &  & 0.1821 & 0.1821 & 0.0610 & 0.1065 & 0.0550 \\
		& 0.50 &  & 0.1896 & 0.1896 & 0.0617 & 0.0973 & 0.0534 \\
		\addlinespace[2pt]
		& 0.03 & Positive factor & 0.0266 & 0.0266 & 0.1021 & 0.1677 & 0.0563 \\
		& 0.05 &  & 0.0362 & 0.0360 & 0.0966 & 0.1785 & 0.0567 \\
		& 0.10 &  & 0.0542 & 0.0546 & 0.0861 & 0.2043 & 0.0564 \\
		& 0.15 &  & 0.0714 & 0.0726 & 0.0871 & 0.2019 & 0.0521 \\
		& 0.20 &  & 0.0815 & 0.0832 & 0.0892 & 0.1851 & 0.0511 \\
		& 0.25 &  & 0.0834 & 0.0868 & 0.0820 & 0.1625 & 0.0524 \\
		& 0.30 &  & 0.0906 & 0.0913 & 0.0825 & 0.1414 & 0.0517 \\
		& 0.35 &  & 0.0973 & 0.0991 & 0.0840 & 0.1280 & 0.0493 \\
		& 0.40 &  & 0.0973 & 0.0997 & 0.0789 & 0.1111 & 0.0485 \\
		& 0.45 &  & 0.0991 & 0.1028 & 0.0769 & 0.0963 & 0.0427 \\
		& 0.50 &  & 0.1076 & 0.1120 & 0.0787 & 0.0895 & 0.0428 \\
		\addlinespace[2pt]
		& 0.03 & Alternating factor & 0.0257 & 0.0260 & 0.0937 & 0.1578 & 0.0553 \\
		& 0.05 &  & 0.0325 & 0.0330 & 0.0953 & 0.1774 & 0.0567 \\
		& 0.10 &  & 0.0545 & 0.0559 & 0.0871 & 0.2008 & 0.0564 \\
		& 0.15 &  & 0.0733 & 0.0740 & 0.0857 & 0.2033 & 0.0529 \\
		& 0.20 &  & 0.0838 & 0.0877 & 0.0905 & 0.1846 & 0.0550 \\
		& 0.25 &  & 0.0838 & 0.0853 & 0.0817 & 0.1619 & 0.0509 \\
		& 0.30 &  & 0.0907 & 0.0925 & 0.0834 & 0.1444 & 0.0508 \\
		& 0.35 &  & 0.0979 & 0.1022 & 0.0849 & 0.1318 & 0.0478 \\
		& 0.40 &  & 0.0975 & 0.1012 & 0.0814 & 0.1127 & 0.0501 \\
		& 0.45 &  & 0.1003 & 0.1021 & 0.0758 & 0.0951 & 0.0421 \\
		& 0.50 &  & 0.1067 & 0.1104 & 0.0767 & 0.0867 & 0.0415 \\
		\addlinespace[2pt]
		& 0.03 & Linear loadings & 0.0300 & 0.0299 & 0.1006 & 0.1545 & 0.0810 \\
		& 0.05 &  & 0.0409 & 0.0419 & 0.0948 & 0.1603 & 0.0769 \\
		& 0.10 &  & 0.0677 & 0.0685 & 0.0843 & 0.1866 & 0.0740 \\
		& 0.15 &  & 0.0935 & 0.0957 & 0.0852 & 0.1927 & 0.0684 \\
		& 0.20 &  & 0.1119 & 0.1147 & 0.0854 & 0.1804 & 0.0675 \\
		& 0.25 &  & 0.1170 & 0.1203 & 0.0791 & 0.1671 & 0.0637 \\
		& 0.30 &  & 0.1248 & 0.1286 & 0.0800 & 0.1522 & 0.0632 \\
		& 0.35 &  & 0.1338 & 0.1352 & 0.0808 & 0.1375 & 0.0583 \\
		& 0.40 &  & 0.1393 & 0.1404 & 0.0774 & 0.1215 & 0.0582 \\
		& 0.45 &  & 0.1406 & 0.1420 & 0.0727 & 0.1054 & 0.0500 \\
		& 0.50 &  & 0.1469 & 0.1524 & 0.0753 & 0.0984 & 0.0488 \\
		\addlinespace[2pt]
		& 0.03 & Sparse factor & 0.0255 & 0.0257 & 0.0977 & 0.1443 & 0.0930 \\
		& 0.05 &  & 0.0419 & 0.0430 & 0.0984 & 0.1555 & 0.0913 \\
		& 0.10 &  & 0.0720 & 0.0741 & 0.0880 & 0.1799 & 0.0832 \\
		& 0.15 &  & 0.1027 & 0.1044 & 0.0815 & 0.1821 & 0.0775 \\
		& 0.20 &  & 0.1307 & 0.1332 & 0.0859 & 0.1790 & 0.0777 \\
		& 0.25 &  & 0.1350 & 0.1372 & 0.0808 & 0.1673 & 0.0709 \\
		& 0.30 &  & 0.1474 & 0.1510 & 0.0826 & 0.1564 & 0.0666 \\
		& 0.35 &  & 0.1582 & 0.1605 & 0.0809 & 0.1438 & 0.0642 \\
		& 0.40 &  & 0.1591 & 0.1624 & 0.0797 & 0.1275 & 0.0595 \\
		& 0.45 &  & 0.1621 & 0.1667 & 0.0764 & 0.1127 & 0.0533 \\
		& 0.50 &  & 0.1725 & 0.1765 & 0.0741 & 0.1073 & 0.0515 \\
		\addlinespace[2pt]
		& 0.03 & Two-block factor & 0.0248 & 0.0246 & 0.1008 & 0.1530 & 0.0814 \\
		& 0.05 &  & 0.0387 & 0.0398 & 0.0968 & 0.1641 & 0.0745 \\
		& 0.10 &  & 0.0628 & 0.0648 & 0.0864 & 0.1912 & 0.0751 \\
		& 0.15 &  & 0.0860 & 0.0889 & 0.0841 & 0.1910 & 0.0688 \\
		& 0.20 &  & 0.1063 & 0.1103 & 0.0865 & 0.1850 & 0.0664 \\
		& 0.25 &  & 0.1099 & 0.1134 & 0.0807 & 0.1655 & 0.0645 \\
		& 0.30 &  & 0.1226 & 0.1236 & 0.0791 & 0.1500 & 0.0590 \\
		& 0.35 &  & 0.1302 & 0.1336 & 0.0822 & 0.1376 & 0.0599 \\
		& 0.40 &  & 0.1296 & 0.1318 & 0.0766 & 0.1172 & 0.0546 \\
		& 0.45 &  & 0.1338 & 0.1387 & 0.0767 & 0.1058 & 0.0503 \\
		& 0.50 &  & 0.1460 & 0.1490 & 0.0742 & 0.0982 & 0.0498 \\
		\addlinespace[2pt]
		\midrule
		50 & 0.02 & Independence & 0.0345 & 0.0345 & 0.0936 & 0.1753 & 0.0992 \\
		& 0.05 &  & 0.0774 & 0.0774 & 0.0882 & 0.2331 & 0.0923 \\
		& 0.10 &  & 0.1213 & 0.1213 & 0.0863 & 0.2472 & 0.0892 \\
		& 0.15 &  & 0.1399 & 0.1399 & 0.0867 & 0.2223 & 0.0891 \\
		& 0.20 &  & 0.1448 & 0.1448 & 0.0824 & 0.1891 & 0.0796 \\
		& 0.25 &  & 0.1496 & 0.1496 & 0.0775 & 0.1573 & 0.0741 \\
		& 0.30 &  & 0.1543 & 0.1543 & 0.0798 & 0.1338 & 0.0738 \\
		& 0.35 &  & 0.1533 & 0.1533 & 0.0735 & 0.1087 & 0.0650 \\
		& 0.40 &  & 0.1570 & 0.1570 & 0.0714 & 0.0926 & 0.0612 \\
		& 0.45 &  & 0.1602 & 0.1602 & 0.0689 & 0.0788 & 0.0577 \\
		& 0.50 &  & 0.1648 & 0.1648 & 0.0626 & 0.0645 & 0.0487 \\
		\addlinespace[2pt]
		& 0.02 & Positive factor & 0.0249 & 0.0245 & 0.0951 & 0.2268 & 0.0453 \\
		& 0.05 &  & 0.0502 & 0.0511 & 0.0942 & 0.2821 & 0.0467 \\
		& 0.10 &  & 0.0639 & 0.0637 & 0.0905 & 0.2470 & 0.0451 \\
		& 0.15 &  & 0.0727 & 0.0729 & 0.0944 & 0.1992 & 0.0534 \\
		& 0.20 &  & 0.0713 & 0.0714 & 0.0892 & 0.1553 & 0.0548 \\
		& 0.25 &  & 0.0751 & 0.0752 & 0.0877 & 0.1232 & 0.0538 \\
		& 0.30 &  & 0.0790 & 0.0799 & 0.0878 & 0.1029 & 0.0542 \\
		& 0.35 &  & 0.0773 & 0.0781 & 0.0824 & 0.0828 & 0.0481 \\
		& 0.40 &  & 0.0811 & 0.0822 & 0.0842 & 0.0721 & 0.0480 \\
		& 0.45 &  & 0.0836 & 0.0844 & 0.0807 & 0.0617 & 0.0420 \\
		& 0.50 &  & 0.0847 & 0.0861 & 0.0769 & 0.0514 & 0.0406 \\
		\addlinespace[2pt]
		& 0.02 & Alternating factor & 0.0256 & 0.0256 & 0.0960 & 0.2282 & 0.0461 \\
		& 0.05 &  & 0.0500 & 0.0513 & 0.0942 & 0.2835 & 0.0475 \\
		& 0.10 &  & 0.0647 & 0.0641 & 0.0932 & 0.2465 & 0.0463 \\
		& 0.15 &  & 0.0721 & 0.0719 & 0.0925 & 0.1964 & 0.0528 \\
		& 0.20 &  & 0.0728 & 0.0739 & 0.0902 & 0.1556 & 0.0568 \\
		& 0.25 &  & 0.0743 & 0.0749 & 0.0863 & 0.1228 & 0.0529 \\
		& 0.30 &  & 0.0775 & 0.0787 & 0.0870 & 0.1023 & 0.0551 \\
		& 0.35 &  & 0.0773 & 0.0783 & 0.0826 & 0.0831 & 0.0473 \\
		& 0.40 &  & 0.0808 & 0.0812 & 0.0832 & 0.0716 & 0.0488 \\
		& 0.45 &  & 0.0832 & 0.0849 & 0.0803 & 0.0619 & 0.0420 \\
		& 0.50 &  & 0.0850 & 0.0864 & 0.0775 & 0.0519 & 0.0404 \\
		\addlinespace[2pt]
		& 0.02 & Linear loadings & 0.0292 & 0.0293 & 0.0952 & 0.2053 & 0.0700 \\
		& 0.05 &  & 0.0668 & 0.0670 & 0.0933 & 0.2623 & 0.0681 \\
		& 0.10 &  & 0.0900 & 0.0898 & 0.0886 & 0.2563 & 0.0650 \\
		& 0.15 &  & 0.1039 & 0.1041 & 0.0931 & 0.2140 & 0.0694 \\
		& 0.20 &  & 0.1022 & 0.1031 & 0.0859 & 0.1724 & 0.0673 \\
		& 0.25 &  & 0.1066 & 0.1076 & 0.0838 & 0.1388 & 0.0652 \\
		& 0.30 &  & 0.1117 & 0.1133 & 0.0848 & 0.1168 & 0.0650 \\
		& 0.35 &  & 0.1106 & 0.1115 & 0.0785 & 0.0933 & 0.0571 \\
		& 0.40 &  & 0.1158 & 0.1176 & 0.0800 & 0.0819 & 0.0554 \\
		& 0.45 &  & 0.1186 & 0.1200 & 0.0764 & 0.0691 & 0.0490 \\
		& 0.50 &  & 0.1207 & 0.1231 & 0.0720 & 0.0585 & 0.0465 \\
		\addlinespace[2pt]
		& 0.02 & Sparse factor & 0.0320 & 0.0324 & 0.0938 & 0.1905 & 0.0892 \\
		& 0.05 &  & 0.0679 & 0.0681 & 0.0888 & 0.2491 & 0.0848 \\
		& 0.10 &  & 0.1041 & 0.1052 & 0.0883 & 0.2544 & 0.0811 \\
		& 0.15 &  & 0.1180 & 0.1188 & 0.0896 & 0.2194 & 0.0802 \\
		& 0.20 &  & 0.1217 & 0.1225 & 0.0843 & 0.1793 & 0.0768 \\
		& 0.25 &  & 0.1260 & 0.1266 & 0.0827 & 0.1477 & 0.0700 \\
		& 0.30 &  & 0.1299 & 0.1316 & 0.0840 & 0.1256 & 0.0711 \\
		& 0.35 &  & 0.1313 & 0.1321 & 0.0782 & 0.1023 & 0.0621 \\
		& 0.40 &  & 0.1367 & 0.1368 & 0.0753 & 0.0869 & 0.0597 \\
		& 0.45 &  & 0.1392 & 0.1410 & 0.0752 & 0.0762 & 0.0545 \\
		& 0.50 &  & 0.1433 & 0.1445 & 0.0692 & 0.0627 & 0.0486 \\
		\addlinespace[2pt]
		& 0.02 & Two-block factor & 0.0279 & 0.0270 & 0.0927 & 0.2067 & 0.0707 \\
		& 0.05 &  & 0.0613 & 0.0610 & 0.0923 & 0.2693 & 0.0718 \\
		& 0.10 &  & 0.0850 & 0.0848 & 0.0893 & 0.2521 & 0.0673 \\
		& 0.15 &  & 0.0959 & 0.0968 & 0.0912 & 0.2123 & 0.0711 \\
		& 0.20 &  & 0.1002 & 0.1012 & 0.0871 & 0.1690 & 0.0696 \\
		& 0.25 &  & 0.1024 & 0.1027 & 0.0846 & 0.1367 & 0.0642 \\
		& 0.30 &  & 0.1061 & 0.1069 & 0.0865 & 0.1150 & 0.0655 \\
		& 0.35 &  & 0.1064 & 0.1072 & 0.0790 & 0.0937 & 0.0576 \\
		& 0.40 &  & 0.1122 & 0.1130 & 0.0794 & 0.0793 & 0.0557 \\
		& 0.45 &  & 0.1142 & 0.1149 & 0.0769 & 0.0689 & 0.0491 \\
		& 0.50 &  & 0.1178 & 0.1193 & 0.0729 & 0.0571 & 0.0460 \\
		\addlinespace[2pt]
		\midrule
		100 & 0.01 & Independence & 0.0333 & 0.0333 & 0.0968 & 0.1993 & 0.1020 \\
		& 0.03 &  & 0.0790 & 0.0790 & 0.0922 & 0.3105 & 0.0962 \\
		& 0.05 &  & 0.1068 & 0.1068 & 0.0919 & 0.3131 & 0.0963 \\
		& 0.10 &  & 0.1251 & 0.1251 & 0.0902 & 0.2594 & 0.0914 \\
		& 0.15 &  & 0.1309 & 0.1309 & 0.0853 & 0.1967 & 0.0835 \\
		& 0.20 &  & 0.1320 & 0.1320 & 0.0840 & 0.1486 & 0.0793 \\
		& 0.25 &  & 0.1330 & 0.1330 & 0.0797 & 0.1147 & 0.0734 \\
		& 0.30 &  & 0.1383 & 0.1383 & 0.0796 & 0.0953 & 0.0709 \\
		& 0.35 &  & 0.1409 & 0.1409 & 0.0770 & 0.0784 & 0.0659 \\
		& 0.40 &  & 0.1447 & 0.1447 & 0.0728 & 0.0650 & 0.0595 \\
		& 0.45 &  & 0.1465 & 0.1465 & 0.0691 & 0.0543 & 0.0547 \\
		& 0.50 &  & 0.1495 & 0.1495 & 0.0661 & 0.0455 & 0.0507 \\
		\addlinespace[2pt]
		& 0.01 & Positive factor & 0.0251 & 0.0248 & 0.0978 & 0.2709 & 0.0447 \\
		& 0.03 &  & 0.0496 & 0.0499 & 0.0958 & 0.3594 & 0.0456 \\
		& 0.05 &  & 0.0556 & 0.0555 & 0.0938 & 0.3227 & 0.0452 \\
		& 0.10 &  & 0.0607 & 0.0614 & 0.0942 & 0.2119 & 0.0538 \\
		& 0.15 &  & 0.0631 & 0.0629 & 0.0918 & 0.1489 & 0.0531 \\
		& 0.20 &  & 0.0649 & 0.0648 & 0.0902 & 0.1102 & 0.0514 \\
		& 0.25 &  & 0.0653 & 0.0654 & 0.0877 & 0.0861 & 0.0518 \\
		& 0.30 &  & 0.0683 & 0.0684 & 0.0877 & 0.0710 & 0.0482 \\
		& 0.35 &  & 0.0707 & 0.0710 & 0.0861 & 0.0595 & 0.0515 \\
		& 0.40 &  & 0.0724 & 0.0727 & 0.0839 & 0.0496 & 0.0492 \\
		& 0.45 &  & 0.0729 & 0.0732 & 0.0803 & 0.0417 & 0.0426 \\
		& 0.50 &  & 0.0758 & 0.0760 & 0.0780 & 0.0356 & 0.0419 \\
		\addlinespace[2pt]
		& 0.01 & Alternating factor & 0.0243 & 0.0246 & 0.0965 & 0.2719 & 0.0424 \\
		& 0.03 &  & 0.0490 & 0.0495 & 0.0943 & 0.3591 & 0.0450 \\
		& 0.05 &  & 0.0543 & 0.0545 & 0.0948 & 0.3224 & 0.0450 \\
		& 0.10 &  & 0.0614 & 0.0615 & 0.0948 & 0.2116 & 0.0541 \\
		& 0.15 &  & 0.0652 & 0.0642 & 0.0942 & 0.1494 & 0.0521 \\
		& 0.20 &  & 0.0642 & 0.0643 & 0.0898 & 0.1103 & 0.0519 \\
		& 0.25 &  & 0.0645 & 0.0650 & 0.0877 & 0.0853 & 0.0520 \\
		& 0.30 &  & 0.0687 & 0.0690 & 0.0880 & 0.0713 & 0.0494 \\
		& 0.35 &  & 0.0699 & 0.0704 & 0.0864 & 0.0587 & 0.0516 \\
		& 0.40 &  & 0.0716 & 0.0720 & 0.0830 & 0.0492 & 0.0490 \\
		& 0.45 &  & 0.0731 & 0.0740 & 0.0802 & 0.0420 & 0.0427 \\
		& 0.50 &  & 0.0751 & 0.0761 & 0.0786 & 0.0351 & 0.0410 \\
		\addlinespace[2pt]
		& 0.01 & Linear loadings & 0.0298 & 0.0301 & 0.0966 & 0.2377 & 0.0701 \\
		& 0.03 &  & 0.0654 & 0.0653 & 0.0940 & 0.3425 & 0.0697 \\
		& 0.05 &  & 0.0771 & 0.0769 & 0.0936 & 0.3281 & 0.0706 \\
		& 0.10 &  & 0.0873 & 0.0870 & 0.0922 & 0.2349 & 0.0744 \\
		& 0.15 &  & 0.0918 & 0.0928 & 0.0891 & 0.1679 & 0.0703 \\
		& 0.20 &  & 0.0923 & 0.0927 & 0.0876 & 0.1250 & 0.0664 \\
		& 0.25 &  & 0.0942 & 0.0941 & 0.0851 & 0.0978 & 0.0650 \\
		& 0.30 &  & 0.0978 & 0.0980 & 0.0840 & 0.0810 & 0.0594 \\
		& 0.35 &  & 0.1007 & 0.1013 & 0.0819 & 0.0672 & 0.0608 \\
		& 0.40 &  & 0.1035 & 0.1042 & 0.0795 & 0.0563 & 0.0560 \\
		& 0.45 &  & 0.1046 & 0.1054 & 0.0757 & 0.0468 & 0.0491 \\
		& 0.50 &  & 0.1084 & 0.1093 & 0.0729 & 0.0394 & 0.0465 \\
		\addlinespace[2pt]
		& 0.01 & Sparse factor & 0.0310 & 0.0315 & 0.0954 & 0.2220 & 0.0927 \\
		& 0.03 &  & 0.0715 & 0.0706 & 0.0925 & 0.3324 & 0.0873 \\
		& 0.05 &  & 0.0915 & 0.0917 & 0.0936 & 0.3248 & 0.0854 \\
		& 0.10 &  & 0.1020 & 0.1020 & 0.0906 & 0.2473 & 0.0837 \\
		& 0.15 &  & 0.1072 & 0.1070 & 0.0886 & 0.1817 & 0.0791 \\
		& 0.20 &  & 0.1097 & 0.1097 & 0.0871 & 0.1351 & 0.0753 \\
		& 0.25 &  & 0.1104 & 0.1110 & 0.0830 & 0.1046 & 0.0710 \\
		& 0.30 &  & 0.1156 & 0.1158 & 0.0824 & 0.0867 & 0.0653 \\
		& 0.35 &  & 0.1181 & 0.1186 & 0.0802 & 0.0722 & 0.0646 \\
		& 0.40 &  & 0.1218 & 0.1227 & 0.0766 & 0.0599 & 0.0591 \\
		& 0.45 &  & 0.1239 & 0.1244 & 0.0733 & 0.0502 & 0.0530 \\
		& 0.50 &  & 0.1279 & 0.1283 & 0.0693 & 0.0427 & 0.0491 \\
		\addlinespace[2pt]
		& 0.01 & Two-block factor & 0.0282 & 0.0286 & 0.0971 & 0.2461 & 0.0769 \\
		& 0.03 &  & 0.0597 & 0.0593 & 0.0928 & 0.3467 & 0.0713 \\
		& 0.05 &  & 0.0748 & 0.0747 & 0.0934 & 0.3280 & 0.0689 \\
		& 0.10 &  & 0.0819 & 0.0822 & 0.0927 & 0.2326 & 0.0729 \\
		& 0.15 &  & 0.0873 & 0.0876 & 0.0896 & 0.1661 & 0.0683 \\
		& 0.20 &  & 0.0896 & 0.0897 & 0.0881 & 0.1242 & 0.0662 \\
		& 0.25 &  & 0.0895 & 0.0898 & 0.0856 & 0.0961 & 0.0639 \\
		& 0.30 &  & 0.0944 & 0.0946 & 0.0848 & 0.0791 & 0.0591 \\
		& 0.35 &  & 0.0966 & 0.0974 & 0.0826 & 0.0661 & 0.0601 \\
		& 0.40 &  & 0.0998 & 0.1002 & 0.0803 & 0.0554 & 0.0552 \\
		& 0.45 &  & 0.1013 & 0.1022 & 0.0767 & 0.0463 & 0.0493 \\
		& 0.50 &  & 0.1047 & 0.1053 & 0.0730 & 0.0396 & 0.0466 \\
		\addlinespace[2pt]
	\end{longtable}

\normalsize
\subsection*{False Non-Discovery Rate}

Table~\ref{tab:appendix-fnr-onefactor} reports the empirical false non-discovery rates of the competing procedures across all one-factor dependence structures, sample sizes, and sparsity levels considered in the study. These results provide a detailed numerical assessment of missed-signal behavior under dependence and complement the graphical summaries presented in the main text. Since the false non-discovery rate quantifies the proportion of true signals among the accepted hypotheses, this table provides additional insight into the signal recovery properties of the competing procedures and helps explain the differences in their overall Bayes misclassification risks.

\small
	\setlength{\tabcolsep}{3.5pt}
	\renewcommand{\arraystretch}{1.08}
	\begin{longtable}{cc l rrrrr}
		\caption{False non-discovery rates of the competing procedures under the six one-factor dependence structures for $n=20,50,100$, $\alpha=0.10$, $\psi^2=2\log n$, and $G=5000$ Monte Carlo replications.}\label{tab:appendix-fnr-onefactor}\\
		\toprule
		$n$ & $p$ & Dependence structure & Oracle & BSD & MRD--GBS & MRD--CSX & BH \\
		\midrule
		\endfirsthead
		\caption[]{False non-discovery rates of the competing procedures under the six one-factor dependence structures (continued).}\\
		\toprule
		$n$ & $p$ & Dependence structure & Oracle & BSD & MRD--GBS & MRD--CSX & BH \\
		\midrule
		\endhead
		\midrule
		\multicolumn{8}{r}{\emph{Continued on next page}}\\
		\endfoot
		\bottomrule
		\endlastfoot
		20 & 0.03 & Independence & 0.0227 & 0.0227 & 0.0204 & 0.0200 & 0.0203 \\
		& 0.05 &  & 0.0371 & 0.0371 & 0.0347 & 0.0331 & 0.0345 \\
		& 0.10 &  & 0.0711 & 0.0711 & 0.0698 & 0.0657 & 0.0697 \\
		& 0.15 &  & 0.1040 & 0.1040 & 0.1054 & 0.0978 & 0.1050 \\
		& 0.20 &  & 0.1376 & 0.1376 & 0.1430 & 0.1325 & 0.1428 \\
		& 0.25 &  & 0.1691 & 0.1691 & 0.1789 & 0.1656 & 0.1792 \\
		& 0.30 &  & 0.2034 & 0.2034 & 0.2189 & 0.2040 & 0.2194 \\
		& 0.35 &  & 0.2342 & 0.2342 & 0.2558 & 0.2398 & 0.2577 \\
		& 0.40 &  & 0.2655 & 0.2655 & 0.2936 & 0.2774 & 0.2970 \\
		& 0.45 &  & 0.3013 & 0.3013 & 0.3371 & 0.3222 & 0.3412 \\
		& 0.50 &  & 0.3327 & 0.3327 & 0.3763 & 0.3629 & 0.3824 \\
		\addlinespace[2pt]
		& 0.03 & Positive factor & 0.0157 & 0.0157 & 0.0137 & 0.0129 & 0.0212 \\
		& 0.05 &  & 0.0253 & 0.0253 & 0.0225 & 0.0209 & 0.0362 \\
		& 0.10 &  & 0.0487 & 0.0488 & 0.0458 & 0.0416 & 0.0726 \\
		& 0.15 &  & 0.0712 & 0.0712 & 0.0680 & 0.0607 & 0.1089 \\
		& 0.20 &  & 0.0973 & 0.0973 & 0.0943 & 0.0860 & 0.1474 \\
		& 0.25 &  & 0.1214 & 0.1217 & 0.1197 & 0.1102 & 0.1848 \\
		& 0.30 &  & 0.1479 & 0.1482 & 0.1480 & 0.1387 & 0.2253 \\
		& 0.35 &  & 0.1730 & 0.1731 & 0.1739 & 0.1656 & 0.2651 \\
		& 0.40 &  & 0.1995 & 0.2014 & 0.2041 & 0.1974 & 0.3047 \\
		& 0.45 &  & 0.2302 & 0.2329 & 0.2395 & 0.2341 & 0.3486 \\
		& 0.50 &  & 0.2595 & 0.2651 & 0.2748 & 0.2719 & 0.3908 \\
		\addlinespace[2pt]
		& 0.03 & Alternating factor & 0.0158 & 0.0158 & 0.0135 & 0.0130 & 0.0214 \\
		& 0.05 &  & 0.0254 & 0.0254 & 0.0224 & 0.0207 & 0.0360 \\
		& 0.10 &  & 0.0489 & 0.0489 & 0.0459 & 0.0418 & 0.0721 \\
		& 0.15 &  & 0.0716 & 0.0718 & 0.0681 & 0.0610 & 0.1092 \\
		& 0.20 &  & 0.0975 & 0.0976 & 0.0953 & 0.0871 & 0.1472 \\
		& 0.25 &  & 0.1220 & 0.1221 & 0.1201 & 0.1103 & 0.1846 \\
		& 0.30 &  & 0.1480 & 0.1480 & 0.1473 & 0.1383 & 0.2257 \\
		& 0.35 &  & 0.1723 & 0.1733 & 0.1741 & 0.1661 & 0.2638 \\
		& 0.40 &  & 0.1992 & 0.2012 & 0.2043 & 0.1972 & 0.3032 \\
		& 0.45 &  & 0.2308 & 0.2337 & 0.2406 & 0.2352 & 0.3489 \\
		& 0.50 &  & 0.2594 & 0.2642 & 0.2739 & 0.2710 & 0.3903 \\
		\addlinespace[2pt]
		& 0.03 & Linear loadings & 0.0191 & 0.0191 & 0.0170 & 0.0163 & 0.0205 \\
		& 0.05 &  & 0.0314 & 0.0314 & 0.0289 & 0.0272 & 0.0348 \\
		& 0.10 &  & 0.0600 & 0.0599 & 0.0576 & 0.0531 & 0.0701 \\
		& 0.15 &  & 0.0872 & 0.0872 & 0.0864 & 0.0788 & 0.1059 \\
		& 0.20 &  & 0.1178 & 0.1177 & 0.1189 & 0.1098 & 0.1432 \\
		& 0.25 &  & 0.1458 & 0.1457 & 0.1498 & 0.1381 & 0.1806 \\
		& 0.30 &  & 0.1769 & 0.1775 & 0.1848 & 0.1726 & 0.2207 \\
		& 0.35 &  & 0.2044 & 0.2055 & 0.2164 & 0.2042 & 0.2589 \\
		& 0.40 &  & 0.2347 & 0.2363 & 0.2515 & 0.2397 & 0.2987 \\
		& 0.45 &  & 0.2691 & 0.2715 & 0.2914 & 0.2813 & 0.3435 \\
		& 0.50 &  & 0.2995 & 0.3053 & 0.3305 & 0.3230 & 0.3851 \\
		\addlinespace[2pt]
		& 0.03 & Sparse factor & 0.0210 & 0.0210 & 0.0186 & 0.0180 & 0.0205 \\
		& 0.05 &  & 0.0343 & 0.0343 & 0.0317 & 0.0298 & 0.0347 \\
		& 0.10 &  & 0.0654 & 0.0653 & 0.0634 & 0.0593 & 0.0694 \\
		& 0.15 &  & 0.0963 & 0.0962 & 0.0964 & 0.0884 & 0.1054 \\
		& 0.20 &  & 0.1283 & 0.1285 & 0.1314 & 0.1212 & 0.1425 \\
		& 0.25 &  & 0.1581 & 0.1582 & 0.1642 & 0.1517 & 0.1797 \\
		& 0.30 &  & 0.1911 & 0.1916 & 0.2023 & 0.1882 & 0.2194 \\
		& 0.35 &  & 0.2213 & 0.2221 & 0.2377 & 0.2240 & 0.2580 \\
		& 0.40 &  & 0.2523 & 0.2541 & 0.2756 & 0.2616 & 0.2972 \\
		& 0.45 &  & 0.2871 & 0.2900 & 0.3180 & 0.3060 & 0.3415 \\
		& 0.50 &  & 0.3189 & 0.3240 & 0.3582 & 0.3482 & 0.3832 \\
		\addlinespace[2pt]
		& 0.03 & Two-block factor & 0.0188 & 0.0188 & 0.0166 & 0.0159 & 0.0205 \\
		& 0.05 &  & 0.0304 & 0.0304 & 0.0278 & 0.0260 & 0.0348 \\
		& 0.10 &  & 0.0587 & 0.0588 & 0.0567 & 0.0524 & 0.0699 \\
		& 0.15 &  & 0.0859 & 0.0860 & 0.0849 & 0.0767 & 0.1058 \\
		& 0.20 &  & 0.1159 & 0.1159 & 0.1166 & 0.1075 & 0.1433 \\
		& 0.25 &  & 0.1429 & 0.1430 & 0.1462 & 0.1344 & 0.1799 \\
		& 0.30 &  & 0.1735 & 0.1740 & 0.1803 & 0.1681 & 0.2205 \\
		& 0.35 &  & 0.2016 & 0.2025 & 0.2117 & 0.2003 & 0.2588 \\
		& 0.40 &  & 0.2310 & 0.2318 & 0.2457 & 0.2349 & 0.2985 \\
		& 0.45 &  & 0.2639 & 0.2676 & 0.2869 & 0.2772 & 0.3429 \\
		& 0.50 &  & 0.2966 & 0.3009 & 0.3242 & 0.3182 & 0.3851 \\
		\addlinespace[2pt]
		\midrule
		50 & 0.02 & Independence & 0.0148 & 0.0148 & 0.0139 & 0.0133 & 0.0139 \\
		& 0.05 &  & 0.0353 & 0.0353 & 0.0345 & 0.0318 & 0.0343 \\
		& 0.10 &  & 0.0675 & 0.0675 & 0.0682 & 0.0607 & 0.0680 \\
		& 0.15 &  & 0.0980 & 0.0980 & 0.1015 & 0.0913 & 0.1014 \\
		& 0.20 &  & 0.1257 & 0.1257 & 0.1322 & 0.1203 & 0.1325 \\
		& 0.25 &  & 0.1568 & 0.1568 & 0.1668 & 0.1551 & 0.1677 \\
		& 0.30 &  & 0.1884 & 0.1884 & 0.2022 & 0.1917 & 0.2041 \\
		& 0.35 &  & 0.2185 & 0.2185 & 0.2369 & 0.2285 & 0.2402 \\
		& 0.40 &  & 0.2486 & 0.2486 & 0.2735 & 0.2671 & 0.2781 \\
		& 0.45 &  & 0.2790 & 0.2790 & 0.3110 & 0.3084 & 0.3176 \\
		& 0.50 &  & 0.3141 & 0.3141 & 0.3525 & 0.3535 & 0.3608 \\
		\addlinespace[2pt]
		& 0.02 & Positive factor & 0.0099 & 0.0099 & 0.0088 & 0.0081 & 0.0147 \\
		& 0.05 &  & 0.0236 & 0.0236 & 0.0218 & 0.0192 & 0.0356 \\
		& 0.10 &  & 0.0450 & 0.0450 & 0.0424 & 0.0371 & 0.0715 \\
		& 0.15 &  & 0.0663 & 0.0663 & 0.0633 & 0.0573 & 0.1045 \\
		& 0.20 &  & 0.0854 & 0.0855 & 0.0824 & 0.0765 & 0.1374 \\
		& 0.25 &  & 0.1081 & 0.1082 & 0.1048 & 0.1008 & 0.1733 \\
		& 0.30 &  & 0.1310 & 0.1312 & 0.1281 & 0.1259 & 0.2104 \\
		& 0.35 &  & 0.1546 & 0.1547 & 0.1524 & 0.1531 & 0.2465 \\
		& 0.40 &  & 0.1781 & 0.1786 & 0.1773 & 0.1818 & 0.2853 \\
		& 0.45 &  & 0.2052 & 0.2060 & 0.2063 & 0.2148 & 0.3238 \\
		& 0.50 &  & 0.2338 & 0.2355 & 0.2385 & 0.2511 & 0.3689 \\
		\addlinespace[2pt]
		& 0.02 & Alternating factor & 0.0099 & 0.0099 & 0.0088 & 0.0080 & 0.0146 \\
		& 0.05 &  & 0.0236 & 0.0236 & 0.0218 & 0.0192 & 0.0358 \\
		& 0.10 &  & 0.0450 & 0.0449 & 0.0423 & 0.0369 & 0.0711 \\
		& 0.15 &  & 0.0662 & 0.0663 & 0.0633 & 0.0575 & 0.1047 \\
		& 0.20 &  & 0.0852 & 0.0852 & 0.0823 & 0.0769 & 0.1386 \\
		& 0.25 &  & 0.1080 & 0.1081 & 0.1046 & 0.1007 & 0.1730 \\
		& 0.30 &  & 0.1308 & 0.1310 & 0.1278 & 0.1257 & 0.2101 \\
		& 0.35 &  & 0.1549 & 0.1551 & 0.1528 & 0.1537 & 0.2464 \\
		& 0.40 &  & 0.1783 & 0.1787 & 0.1773 & 0.1815 & 0.2847 \\
		& 0.45 &  & 0.2050 & 0.2057 & 0.2060 & 0.2148 & 0.3229 \\
		& 0.50 &  & 0.2329 & 0.2343 & 0.2367 & 0.2499 & 0.3685 \\
		\addlinespace[2pt]
		& 0.02 & Linear loadings & 0.0124 & 0.0124 & 0.0113 & 0.0105 & 0.0139 \\
		& 0.05 &  & 0.0292 & 0.0292 & 0.0279 & 0.0250 & 0.0345 \\
		& 0.10 &  & 0.0558 & 0.0559 & 0.0549 & 0.0487 & 0.0685 \\
		& 0.15 &  & 0.0812 & 0.0813 & 0.0810 & 0.0733 & 0.1017 \\
		& 0.20 &  & 0.1049 & 0.1049 & 0.1059 & 0.0975 & 0.1334 \\
		& 0.25 &  & 0.1319 & 0.1320 & 0.1345 & 0.1271 & 0.1689 \\
		& 0.30 &  & 0.1595 & 0.1596 & 0.1645 & 0.1587 & 0.2054 \\
		& 0.35 &  & 0.1867 & 0.1872 & 0.1942 & 0.1911 & 0.2414 \\
		& 0.40 &  & 0.2140 & 0.2145 & 0.2251 & 0.2256 & 0.2799 \\
		& 0.45 &  & 0.2428 & 0.2439 & 0.2582 & 0.2625 & 0.3191 \\
		& 0.50 &  & 0.2749 & 0.2772 & 0.2960 & 0.3044 & 0.3631 \\
		\addlinespace[2pt]
		& 0.02 & Sparse factor & 0.0136 & 0.0136 & 0.0126 & 0.0119 & 0.0138 \\
		& 0.05 &  & 0.0323 & 0.0322 & 0.0313 & 0.0285 & 0.0343 \\
		& 0.10 &  & 0.0614 & 0.0614 & 0.0611 & 0.0540 & 0.0680 \\
		& 0.15 &  & 0.0899 & 0.0898 & 0.0914 & 0.0824 & 0.1011 \\
		& 0.20 &  & 0.1152 & 0.1152 & 0.1188 & 0.1087 & 0.1324 \\
		& 0.25 &  & 0.1446 & 0.1445 & 0.1506 & 0.1412 & 0.1682 \\
		& 0.30 &  & 0.1737 & 0.1739 & 0.1828 & 0.1752 & 0.2043 \\
		& 0.35 &  & 0.2024 & 0.2026 & 0.2155 & 0.2102 & 0.2401 \\
		& 0.40 &  & 0.2319 & 0.2320 & 0.2497 & 0.2473 & 0.2787 \\
		& 0.45 &  & 0.2619 & 0.2629 & 0.2863 & 0.2875 & 0.3180 \\
		& 0.50 &  & 0.2951 & 0.2960 & 0.3256 & 0.3309 & 0.3609 \\
		\addlinespace[2pt]
		& 0.02 & Two-block factor & 0.0119 & 0.0119 & 0.0108 & 0.0101 & 0.0139 \\
		& 0.05 &  & 0.0286 & 0.0285 & 0.0273 & 0.0244 & 0.0343 \\
		& 0.10 &  & 0.0543 & 0.0543 & 0.0528 & 0.0465 & 0.0683 \\
		& 0.15 &  & 0.0800 & 0.0801 & 0.0796 & 0.0718 & 0.1013 \\
		& 0.20 &  & 0.1030 & 0.1029 & 0.1035 & 0.0952 & 0.1333 \\
		& 0.25 &  & 0.1292 & 0.1292 & 0.1312 & 0.1242 & 0.1690 \\
		& 0.30 &  & 0.1557 & 0.1559 & 0.1597 & 0.1548 & 0.2053 \\
		& 0.35 &  & 0.1825 & 0.1826 & 0.1891 & 0.1870 & 0.2411 \\
		& 0.40 &  & 0.2098 & 0.2101 & 0.2194 & 0.2205 & 0.2797 \\
		& 0.45 &  & 0.2381 & 0.2386 & 0.2524 & 0.2576 & 0.3189 \\
		& 0.50 &  & 0.2701 & 0.2716 & 0.2903 & 0.2994 & 0.3629 \\
		\addlinespace[2pt]
		\midrule
		100 & 0.01 & Independence & 0.0075 & 0.0075 & 0.0070 & 0.0067 & 0.0070 \\
		& 0.03 &  & 0.0209 & 0.0209 & 0.0204 & 0.0184 & 0.0203 \\
		& 0.05 &  & 0.0335 & 0.0335 & 0.0333 & 0.0294 & 0.0332 \\
		& 0.10 &  & 0.0632 & 0.0632 & 0.0644 & 0.0573 & 0.0644 \\
		& 0.15 &  & 0.0922 & 0.0922 & 0.0955 & 0.0872 & 0.0957 \\
		& 0.20 &  & 0.1212 & 0.1212 & 0.1268 & 0.1192 & 0.1277 \\
		& 0.25 &  & 0.1502 & 0.1502 & 0.1588 & 0.1530 & 0.1605 \\
		& 0.30 &  & 0.1789 & 0.1789 & 0.1910 & 0.1879 & 0.1936 \\
		& 0.35 &  & 0.2081 & 0.2081 & 0.2244 & 0.2247 & 0.2284 \\
		& 0.40 &  & 0.2394 & 0.2394 & 0.2604 & 0.2648 & 0.2661 \\
		& 0.45 &  & 0.2701 & 0.2701 & 0.2970 & 0.3060 & 0.3048 \\
		& 0.50 &  & 0.3014 & 0.3014 & 0.3355 & 0.3497 & 0.3453 \\
		\addlinespace[2pt]
		& 0.01 & Positive factor & 0.0049 & 0.0049 & 0.0044 & 0.0039 & 0.0073 \\
		& 0.03 &  & 0.0136 & 0.0136 & 0.0125 & 0.0107 & 0.0218 \\
		& 0.05 &  & 0.0217 & 0.0217 & 0.0202 & 0.0173 & 0.0353 \\
		& 0.10 &  & 0.0415 & 0.0415 & 0.0391 & 0.0352 & 0.0675 \\
		& 0.15 &  & 0.0611 & 0.0611 & 0.0578 & 0.0543 & 0.1001 \\
		& 0.20 &  & 0.0812 & 0.0811 & 0.0772 & 0.0755 & 0.1320 \\
		& 0.25 &  & 0.1018 & 0.1018 & 0.0973 & 0.0981 & 0.1663 \\
		& 0.30 &  & 0.1224 & 0.1224 & 0.1178 & 0.1218 & 0.1985 \\
		& 0.35 &  & 0.1445 & 0.1446 & 0.1397 & 0.1481 & 0.2356 \\
		& 0.40 &  & 0.1693 & 0.1697 & 0.1652 & 0.1781 & 0.2733 \\
		& 0.45 &  & 0.1941 & 0.1945 & 0.1916 & 0.2097 & 0.3121 \\
		& 0.50 &  & 0.2211 & 0.2216 & 0.2198 & 0.2446 & 0.3521 \\
		\addlinespace[2pt]
		& 0.01 & Alternating factor & 0.0049 & 0.0049 & 0.0043 & 0.0039 & 0.0072 \\
		& 0.03 &  & 0.0136 & 0.0136 & 0.0125 & 0.0107 & 0.0218 \\
		& 0.05 &  & 0.0217 & 0.0217 & 0.0202 & 0.0172 & 0.0348 \\
		& 0.10 &  & 0.0414 & 0.0414 & 0.0391 & 0.0351 & 0.0677 \\
		& 0.15 &  & 0.0609 & 0.0610 & 0.0579 & 0.0544 & 0.0997 \\
		& 0.20 &  & 0.0812 & 0.0811 & 0.0771 & 0.0754 & 0.1323 \\
		& 0.25 &  & 0.1016 & 0.1016 & 0.0973 & 0.0980 & 0.1662 \\
		& 0.30 &  & 0.1225 & 0.1226 & 0.1179 & 0.1219 & 0.1991 \\
		& 0.35 &  & 0.1444 & 0.1444 & 0.1397 & 0.1479 & 0.2357 \\
		& 0.40 &  & 0.1692 & 0.1695 & 0.1649 & 0.1782 & 0.2727 \\
		& 0.45 &  & 0.1941 & 0.1943 & 0.1910 & 0.2096 & 0.3121 \\
		& 0.50 &  & 0.2213 & 0.2219 & 0.2203 & 0.2444 & 0.3521 \\
		\addlinespace[2pt]
		& 0.01 & Linear loadings & 0.0062 & 0.0062 & 0.0056 & 0.0052 & 0.0070 \\
		& 0.03 &  & 0.0171 & 0.0171 & 0.0163 & 0.0142 & 0.0204 \\
		& 0.05 &  & 0.0273 & 0.0273 & 0.0264 & 0.0228 & 0.0333 \\
		& 0.10 &  & 0.0518 & 0.0518 & 0.0509 & 0.0454 & 0.0649 \\
		& 0.15 &  & 0.0761 & 0.0761 & 0.0755 & 0.0701 & 0.0962 \\
		& 0.20 &  & 0.1001 & 0.1000 & 0.1002 & 0.0962 & 0.1283 \\
		& 0.25 &  & 0.1252 & 0.1253 & 0.1262 & 0.1247 & 0.1617 \\
		& 0.30 &  & 0.1497 & 0.1497 & 0.1524 & 0.1539 & 0.1946 \\
		& 0.35 &  & 0.1755 & 0.1755 & 0.1801 & 0.1857 & 0.2300 \\
		& 0.40 &  & 0.2036 & 0.2040 & 0.2113 & 0.2213 & 0.2679 \\
		& 0.45 &  & 0.2314 & 0.2321 & 0.2426 & 0.2577 & 0.3067 \\
		& 0.50 &  & 0.2612 & 0.2619 & 0.2764 & 0.2977 & 0.3468 \\
		\addlinespace[2pt]
		& 0.01 & Sparse factor & 0.0068 & 0.0068 & 0.0063 & 0.0058 & 0.0070 \\
		& 0.03 &  & 0.0188 & 0.0188 & 0.0181 & 0.0161 & 0.0203 \\
		& 0.05 &  & 0.0301 & 0.0301 & 0.0295 & 0.0258 & 0.0331 \\
		& 0.10 &  & 0.0571 & 0.0571 & 0.0572 & 0.0509 & 0.0646 \\
		& 0.15 &  & 0.0833 & 0.0833 & 0.0842 & 0.0778 & 0.0955 \\
		& 0.20 &  & 0.1101 & 0.1101 & 0.1128 & 0.1071 & 0.1281 \\
		& 0.25 &  & 0.1370 & 0.1370 & 0.1415 & 0.1382 & 0.1608 \\
		& 0.30 &  & 0.1634 & 0.1634 & 0.1705 & 0.1700 & 0.1939 \\
		& 0.35 &  & 0.1910 & 0.1911 & 0.2011 & 0.2047 & 0.2283 \\
		& 0.40 &  & 0.2207 & 0.2209 & 0.2345 & 0.2423 & 0.2665 \\
		& 0.45 &  & 0.2497 & 0.2500 & 0.2689 & 0.2813 & 0.3050 \\
		& 0.50 &  & 0.2806 & 0.2809 & 0.3053 & 0.3234 & 0.3458 \\
		\addlinespace[2pt]
		& 0.01 & Two-block factor & 0.0059 & 0.0059 & 0.0055 & 0.0050 & 0.0070 \\
		& 0.03 &  & 0.0167 & 0.0167 & 0.0158 & 0.0138 & 0.0203 \\
		& 0.05 &  & 0.0266 & 0.0265 & 0.0256 & 0.0222 & 0.0333 \\
		& 0.10 &  & 0.0507 & 0.0506 & 0.0497 & 0.0444 & 0.0650 \\
		& 0.15 &  & 0.0742 & 0.0742 & 0.0733 & 0.0682 & 0.0961 \\
		& 0.20 &  & 0.0982 & 0.0982 & 0.0978 & 0.0942 & 0.1285 \\
		& 0.25 &  & 0.1226 & 0.1225 & 0.1231 & 0.1218 & 0.1616 \\
		& 0.30 &  & 0.1467 & 0.1467 & 0.1488 & 0.1506 & 0.1945 \\
		& 0.35 &  & 0.1720 & 0.1721 & 0.1759 & 0.1818 & 0.2298 \\
		& 0.40 &  & 0.1997 & 0.1999 & 0.2063 & 0.2168 & 0.2677 \\
		& 0.45 &  & 0.2273 & 0.2276 & 0.2378 & 0.2534 & 0.3065 \\
		& 0.50 &  & 0.2561 & 0.2566 & 0.2705 & 0.2922 & 0.3466 \\
		\addlinespace[2pt]
	\end{longtable}
\normalsize

\subsection*{Powers}

Table~\ref{tab:appendix-power-onefactor} reports the empirical powers of the competing procedures across all one-factor dependence structures, sample sizes, and sparsity levels considered in the study. These results provide a detailed numerical assessment of signal detection performance under dependence and complement the graphical summaries presented in the main text. Since power measures the proportion of truly non-null hypotheses that are correctly identified, the table offers a direct quantitative summary of the signal recovery capabilities of the competing procedures and helps explain the observed differences in false non-discovery rates and overall decision performance.
\small
	\setlength{\tabcolsep}{3.5pt}
	\renewcommand{\arraystretch}{1.08}
	\begin{longtable}{cc l rrrrr}
		\caption{Empirical powers of the competing procedures under the six one-factor dependence structures for $n=20,50,100$, $\alpha=0.10$, $\psi^2=2\log n$, and $G=5000$ Monte Carlo replications.}\label{tab:appendix-power-onefactor}\\
		\toprule
		$n$ & $p$ & Dependence structure & Oracle & BSD & MRD--GBS & MRD--CSX & BH \\
		\midrule
		\endfirsthead
		\caption[]{Empirical powers of the competing procedures under the six one-factor dependence structures (continued).}\\
		\toprule
		$n$ & $p$ & Dependence structure & Oracle & BSD & MRD--GBS & MRD--CSX & BH \\
		\midrule
		\endhead
		\midrule
		\multicolumn{8}{r}{\emph{Continued on next page}}\\
		\endfoot
		\bottomrule
		\endlastfoot
		20 & 0.03 & Independence & 0.2435 & 0.2435 & 0.3177 & 0.3328 & 0.3231 \\
		& 0.05 &  & 0.2536 & 0.2536 & 0.2987 & 0.3259 & 0.3014 \\
		& 0.10 &  & 0.3100 & 0.3100 & 0.3143 & 0.3544 & 0.3180 \\
		& 0.15 &  & 0.3461 & 0.3461 & 0.3239 & 0.3777 & 0.3280 \\
		& 0.20 &  & 0.3780 & 0.3780 & 0.3338 & 0.3964 & 0.3359 \\
		& 0.25 &  & 0.4099 & 0.4099 & 0.3488 & 0.4168 & 0.3494 \\
		& 0.30 &  & 0.4321 & 0.4321 & 0.3515 & 0.4207 & 0.3513 \\
		& 0.35 &  & 0.4680 & 0.4680 & 0.3678 & 0.4368 & 0.3635 \\
		& 0.40 &  & 0.4993 & 0.4993 & 0.3820 & 0.4464 & 0.3728 \\
		& 0.45 &  & 0.5296 & 0.5296 & 0.3901 & 0.4466 & 0.3800 \\
		& 0.50 &  & 0.5668 & 0.5668 & 0.4039 & 0.4556 & 0.3891 \\
		\addlinespace[2pt]
		& 0.03 & Positive factor & 0.4788 & 0.4790 & 0.5491 & 0.5705 & 0.3209 \\
		& 0.05 &  & 0.4998 & 0.5005 & 0.5511 & 0.5794 & 0.3027 \\
		& 0.10 &  & 0.5386 & 0.5383 & 0.5585 & 0.5968 & 0.3184 \\
		& 0.15 &  & 0.5705 & 0.5694 & 0.5813 & 0.6321 & 0.3279 \\
		& 0.20 &  & 0.5785 & 0.5788 & 0.5834 & 0.6330 & 0.3365 \\
		& 0.25 &  & 0.5956 & 0.5955 & 0.5929 & 0.6425 & 0.3483 \\
		& 0.30 &  & 0.6105 & 0.6098 & 0.6008 & 0.6439 & 0.3500 \\
		& 0.35 &  & 0.6290 & 0.6293 & 0.6174 & 0.6529 & 0.3625 \\
		& 0.40 &  & 0.6457 & 0.6434 & 0.6250 & 0.6531 & 0.3720 \\
		& 0.45 &  & 0.6604 & 0.6578 & 0.6317 & 0.6526 & 0.3778 \\
		& 0.50 &  & 0.6805 & 0.6745 & 0.6401 & 0.6534 & 0.3866 \\
		\addlinespace[2pt]
		& 0.03 & Alternating factor & 0.4740 & 0.4754 & 0.5534 & 0.5685 & 0.3176 \\
		& 0.05 &  & 0.4959 & 0.4969 & 0.5551 & 0.5829 & 0.3025 \\
		& 0.10 &  & 0.5369 & 0.5370 & 0.5578 & 0.5952 & 0.3152 \\
		& 0.15 &  & 0.5690 & 0.5680 & 0.5818 & 0.6314 & 0.3231 \\
		& 0.20 &  & 0.5790 & 0.5793 & 0.5791 & 0.6285 & 0.3346 \\
		& 0.25 &  & 0.5932 & 0.5934 & 0.5908 & 0.6416 & 0.3468 \\
		& 0.30 &  & 0.6100 & 0.6102 & 0.6027 & 0.6450 & 0.3504 \\
		& 0.35 &  & 0.6326 & 0.6306 & 0.6180 & 0.6529 & 0.3651 \\
		& 0.40 &  & 0.6468 & 0.6446 & 0.6262 & 0.6548 & 0.3736 \\
		& 0.45 &  & 0.6606 & 0.6569 & 0.6312 & 0.6509 & 0.3792 \\
		& 0.50 &  & 0.6794 & 0.6746 & 0.6411 & 0.6536 & 0.3878 \\
		\addlinespace[2pt]
		& 0.03 & Linear loadings & 0.3622 & 0.3627 & 0.4340 & 0.4551 & 0.3212 \\
		& 0.05 &  & 0.3705 & 0.3710 & 0.4196 & 0.4506 & 0.3037 \\
		& 0.10 &  & 0.4245 & 0.4249 & 0.4385 & 0.4801 & 0.3174 \\
		& 0.15 &  & 0.4623 & 0.4622 & 0.4553 & 0.5109 & 0.3262 \\
		& 0.20 &  & 0.4783 & 0.4791 & 0.4598 & 0.5150 & 0.3387 \\
		& 0.25 &  & 0.5052 & 0.5056 & 0.4742 & 0.5336 & 0.3491 \\
		& 0.30 &  & 0.5202 & 0.5200 & 0.4773 & 0.5349 & 0.3499 \\
		& 0.35 &  & 0.5490 & 0.5477 & 0.4966 & 0.5488 & 0.3643 \\
		& 0.40 &  & 0.5706 & 0.5683 & 0.5057 & 0.5519 & 0.3742 \\
		& 0.45 &  & 0.5910 & 0.5871 & 0.5141 & 0.5524 & 0.3768 \\
		& 0.50 &  & 0.6196 & 0.6131 & 0.5234 & 0.5524 & 0.3875 \\
		\addlinespace[2pt]
		& 0.03 & Sparse factor & 0.3017 & 0.3023 & 0.3794 & 0.3970 & 0.3169 \\
		& 0.05 &  & 0.3113 & 0.3119 & 0.3607 & 0.3937 & 0.3010 \\
		& 0.10 &  & 0.3697 & 0.3709 & 0.3827 & 0.4214 & 0.3206 \\
		& 0.15 &  & 0.4003 & 0.4013 & 0.3885 & 0.4449 & 0.3265 \\
		& 0.20 &  & 0.4262 & 0.4259 & 0.3969 & 0.4572 & 0.3372 \\
		& 0.25 &  & 0.4546 & 0.4554 & 0.4138 & 0.4797 & 0.3466 \\
		& 0.30 &  & 0.4734 & 0.4724 & 0.4165 & 0.4812 & 0.3514 \\
		& 0.35 &  & 0.5043 & 0.5032 & 0.4305 & 0.4902 & 0.3624 \\
		& 0.40 &  & 0.5313 & 0.5285 & 0.4405 & 0.4957 & 0.3734 \\
		& 0.45 &  & 0.5580 & 0.5534 & 0.4487 & 0.4939 & 0.3790 \\
		& 0.50 &  & 0.5901 & 0.5827 & 0.4579 & 0.4979 & 0.3884 \\
		\addlinespace[2pt]
		& 0.03 & Two-block factor & 0.3760 & 0.3759 & 0.4483 & 0.4685 & 0.3180 \\
		& 0.05 &  & 0.3920 & 0.3928 & 0.4414 & 0.4729 & 0.2996 \\
		& 0.10 &  & 0.4372 & 0.4373 & 0.4505 & 0.4905 & 0.3190 \\
		& 0.15 &  & 0.4724 & 0.4725 & 0.4686 & 0.5259 & 0.3275 \\
		& 0.20 &  & 0.4885 & 0.4892 & 0.4733 & 0.5288 & 0.3369 \\
		& 0.25 &  & 0.5149 & 0.5151 & 0.4885 & 0.5497 & 0.3511 \\
		& 0.30 &  & 0.5306 & 0.5293 & 0.4937 & 0.5495 & 0.3502 \\
		& 0.35 &  & 0.5569 & 0.5564 & 0.5117 & 0.5612 & 0.3636 \\
		& 0.40 &  & 0.5780 & 0.5774 & 0.5208 & 0.5638 & 0.3724 \\
		& 0.45 &  & 0.6008 & 0.5969 & 0.5270 & 0.5634 & 0.3786 \\
		& 0.50 &  & 0.6261 & 0.6206 & 0.5374 & 0.5642 & 0.3870 \\
		\midrule
		50 & 0.02 & Independence & 0.2553 & 0.2553 & 0.3006 & 0.3263 & 0.3038 \\
		& 0.05 &  & 0.3101 & 0.3101 & 0.3203 & 0.3675 & 0.3245 \\
		& 0.10 &  & 0.3585 & 0.3585 & 0.3410 & 0.4220 & 0.3439 \\
		& 0.15 &  & 0.3953 & 0.3953 & 0.3599 & 0.4412 & 0.3619 \\
		& 0.20 &  & 0.4291 & 0.4291 & 0.3831 & 0.4594 & 0.3817 \\
		& 0.25 &  & 0.4545 & 0.4545 & 0.3991 & 0.4617 & 0.3958 \\
		& 0.30 &  & 0.4791 & 0.4791 & 0.4136 & 0.4635 & 0.4068 \\
		& 0.35 &  & 0.5069 & 0.5069 & 0.4305 & 0.4668 & 0.4198 \\
		& 0.40 &  & 0.5374 & 0.5374 & 0.4449 & 0.4710 & 0.4307 \\
		& 0.45 &  & 0.5676 & 0.5676 & 0.4579 & 0.4710 & 0.4392 \\
		& 0.50 &  & 0.5973 & 0.5973 & 0.4704 & 0.4720 & 0.4473 \\
		\addlinespace[2pt]
		& 0.02 & Positive factor & 0.5056 & 0.5061 & 0.5539 & 0.5834 & 0.3043 \\
		& 0.05 &  & 0.5455 & 0.5456 & 0.5722 & 0.6200 & 0.3256 \\
		& 0.10 &  & 0.5803 & 0.5806 & 0.5979 & 0.6577 & 0.3425 \\
		& 0.15 &  & 0.6033 & 0.6031 & 0.6181 & 0.6662 & 0.3658 \\
		& 0.20 &  & 0.6268 & 0.6265 & 0.6374 & 0.6748 & 0.3835 \\
		& 0.25 &  & 0.6416 & 0.6412 & 0.6506 & 0.6726 & 0.3959 \\
		& 0.30 &  & 0.6568 & 0.6565 & 0.6633 & 0.6751 & 0.4054 \\
		& 0.35 &  & 0.6713 & 0.6713 & 0.6741 & 0.6757 & 0.4209 \\
		& 0.40 &  & 0.6889 & 0.6881 & 0.6882 & 0.6787 & 0.4291 \\
		& 0.45 &  & 0.7001 & 0.6988 & 0.6946 & 0.6765 & 0.4378 \\
		& 0.50 &  & 0.7163 & 0.7141 & 0.7038 & 0.6776 & 0.4473 \\
		\addlinespace[2pt]
		& 0.02 & Alternating factor & 0.5053 & 0.5045 & 0.5551 & 0.5859 & 0.3016 \\
		& 0.05 &  & 0.5455 & 0.5452 & 0.5717 & 0.6202 & 0.3247 \\
		& 0.10 &  & 0.5803 & 0.5813 & 0.5991 & 0.6588 & 0.3457 \\
		& 0.15 &  & 0.6040 & 0.6037 & 0.6174 & 0.6641 & 0.3661 \\
		& 0.20 &  & 0.6280 & 0.6278 & 0.6379 & 0.6729 & 0.3810 \\
		& 0.25 &  & 0.6414 & 0.6412 & 0.6508 & 0.6728 & 0.3971 \\
		& 0.30 &  & 0.6575 & 0.6569 & 0.6640 & 0.6754 & 0.4074 \\
		& 0.35 &  & 0.6708 & 0.6706 & 0.6731 & 0.6746 & 0.4207 \\
		& 0.40 &  & 0.6883 & 0.6874 & 0.6878 & 0.6793 & 0.4309 \\
		& 0.45 &  & 0.7006 & 0.6998 & 0.6953 & 0.6767 & 0.4397 \\
		& 0.50 &  & 0.7177 & 0.7160 & 0.7067 & 0.6798 & 0.4466 \\
		\addlinespace[2pt]
		& 0.02 & Linear loadings & 0.3774 & 0.3777 & 0.4284 & 0.4598 & 0.3048 \\
		& 0.05 &  & 0.4313 & 0.4310 & 0.4500 & 0.5030 & 0.3259 \\
		& 0.10 &  & 0.4745 & 0.4744 & 0.4755 & 0.5438 & 0.3427 \\
		& 0.15 &  & 0.5061 & 0.5061 & 0.5012 & 0.5633 & 0.3639 \\
		& 0.20 &  & 0.5335 & 0.5331 & 0.5211 & 0.5747 & 0.3830 \\
		& 0.25 &  & 0.5526 & 0.5523 & 0.5351 & 0.5742 & 0.3956 \\
		& 0.30 &  & 0.5712 & 0.5716 & 0.5468 & 0.5744 & 0.4060 \\
		& 0.35 &  & 0.5909 & 0.5903 & 0.5608 & 0.5755 & 0.4200 \\
		& 0.40 &  & 0.6142 & 0.6135 & 0.5763 & 0.5785 & 0.4299 \\
		& 0.45 &  & 0.6346 & 0.6328 & 0.5870 & 0.5789 & 0.4378 \\
		& 0.50 &  & 0.6564 & 0.6535 & 0.5977 & 0.5792 & 0.4477 \\
		\addlinespace[2pt]
		& 0.02 & Sparse factor & 0.3207 & 0.3207 & 0.3682 & 0.3956 & 0.3052 \\
		& 0.05 &  & 0.3700 & 0.3711 & 0.3815 & 0.4332 & 0.3272 \\
		& 0.10 &  & 0.4196 & 0.4196 & 0.4135 & 0.4917 & 0.3448 \\
		& 0.15 &  & 0.4500 & 0.4504 & 0.4298 & 0.5019 & 0.3645 \\
		& 0.20 &  & 0.4823 & 0.4822 & 0.4547 & 0.5193 & 0.3835 \\
		& 0.25 &  & 0.5032 & 0.5039 & 0.4692 & 0.5191 & 0.3943 \\
		& 0.30 &  & 0.5266 & 0.5266 & 0.4844 & 0.5206 & 0.4067 \\
		& 0.35 &  & 0.5501 & 0.5499 & 0.4979 & 0.5218 & 0.4206 \\
		& 0.40 &  & 0.5755 & 0.5753 & 0.5121 & 0.5242 & 0.4299 \\
		& 0.45 &  & 0.5997 & 0.5984 & 0.5220 & 0.5227 & 0.4386 \\
		& 0.50 &  & 0.6266 & 0.6253 & 0.5349 & 0.5243 & 0.4481 \\
		\addlinespace[2pt]
		& 0.02 & Two-block factor & 0.4040 & 0.4042 & 0.4559 & 0.4824 & 0.3055 \\
		& 0.05 &  & 0.4437 & 0.4444 & 0.4614 & 0.5138 & 0.3291 \\
		& 0.10 &  & 0.4904 & 0.4907 & 0.4961 & 0.5656 & 0.3455 \\
		& 0.15 &  & 0.5150 & 0.5146 & 0.5100 & 0.5731 & 0.3665 \\
		& 0.20 &  & 0.5429 & 0.5429 & 0.5333 & 0.5858 & 0.3826 \\
		& 0.25 &  & 0.5633 & 0.5633 & 0.5485 & 0.5863 & 0.3950 \\
		& 0.30 &  & 0.5829 & 0.5823 & 0.5628 & 0.5865 & 0.4065 \\
		& 0.35 &  & 0.6022 & 0.6022 & 0.5752 & 0.5873 & 0.4214 \\
		& 0.40 &  & 0.6237 & 0.6232 & 0.5906 & 0.5909 & 0.4305 \\
		& 0.45 &  & 0.6432 & 0.6424 & 0.6001 & 0.5900 & 0.4388 \\
		& 0.50 &  & 0.6640 & 0.6621 & 0.6092 & 0.5893 & 0.4470 \\
		\midrule
		100 & 0.01 & Independence & 0.2544 & 0.2544 & 0.2990 & 0.3269 & 0.3027 \\
		& 0.03 &  & 0.3237 & 0.3237 & 0.3348 & 0.3939 & 0.3370 \\
		& 0.05 &  & 0.3445 & 0.3445 & 0.3408 & 0.4224 & 0.3445 \\
		& 0.10 &  & 0.3945 & 0.3945 & 0.3751 & 0.4586 & 0.3765 \\
		& 0.15 &  & 0.4273 & 0.4273 & 0.3976 & 0.4652 & 0.3966 \\
		& 0.20 &  & 0.4579 & 0.4579 & 0.4213 & 0.4698 & 0.4171 \\
		& 0.25 &  & 0.4837 & 0.4837 & 0.4392 & 0.4700 & 0.4317 \\
		& 0.30 &  & 0.5087 & 0.5087 & 0.4551 & 0.4705 & 0.4451 \\
		& 0.35 &  & 0.5344 & 0.5344 & 0.4714 & 0.4729 & 0.4578 \\
		& 0.40 &  & 0.5580 & 0.5580 & 0.4838 & 0.4722 & 0.4660 \\
		& 0.45 &  & 0.5857 & 0.5857 & 0.4980 & 0.4735 & 0.4755 \\
		& 0.50 &  & 0.6177 & 0.6177 & 0.5142 & 0.4770 & 0.4878 \\
		\addlinespace[2pt]
		& 0.01 & Positive factor & 0.5188 & 0.5190 & 0.5654 & 0.5995 & 0.3107 \\
		& 0.03 &  & 0.5625 & 0.5625 & 0.5893 & 0.6472 & 0.3362 \\
		& 0.05 &  & 0.5822 & 0.5821 & 0.6055 & 0.6670 & 0.3445 \\
		& 0.10 &  & 0.6105 & 0.6107 & 0.6307 & 0.6770 & 0.3763 \\
		& 0.15 &  & 0.6316 & 0.6316 & 0.6506 & 0.6788 & 0.3972 \\
		& 0.20 &  & 0.6507 & 0.6508 & 0.6682 & 0.6800 & 0.4167 \\
		& 0.25 &  & 0.6657 & 0.6658 & 0.6820 & 0.6809 & 0.4318 \\
		& 0.30 &  & 0.6810 & 0.6810 & 0.6951 & 0.6830 & 0.4448 \\
		& 0.35 &  & 0.6953 & 0.6951 & 0.7075 & 0.6844 & 0.4582 \\
		& 0.40 &  & 0.7059 & 0.7054 & 0.7150 & 0.6830 & 0.4652 \\
		& 0.45 &  & 0.7197 & 0.7191 & 0.7245 & 0.6840 & 0.4743 \\
		& 0.50 &  & 0.7347 & 0.7341 & 0.7362 & 0.6862 & 0.4877 \\
		\addlinespace[2pt]
		& 0.01 & Alternating factor & 0.5164 & 0.5161 & 0.5667 & 0.5985 & 0.3172 \\
		& 0.03 &  & 0.5633 & 0.5633 & 0.5895 & 0.6488 & 0.3391 \\
		& 0.05 &  & 0.5825 & 0.5823 & 0.6050 & 0.6679 & 0.3498 \\
		& 0.10 &  & 0.6116 & 0.6115 & 0.6307 & 0.6779 & 0.3760 \\
		& 0.15 &  & 0.6323 & 0.6320 & 0.6501 & 0.6784 & 0.3984 \\
		& 0.20 &  & 0.6507 & 0.6509 & 0.6688 & 0.6808 & 0.4160 \\
		& 0.25 &  & 0.6663 & 0.6664 & 0.6820 & 0.6811 & 0.4320 \\
		& 0.30 &  & 0.6808 & 0.6805 & 0.6948 & 0.6827 & 0.4433 \\
		& 0.35 &  & 0.6955 & 0.6954 & 0.7078 & 0.6849 & 0.4581 \\
		& 0.40 &  & 0.7062 & 0.7056 & 0.7154 & 0.6827 & 0.4662 \\
		& 0.45 &  & 0.7199 & 0.7196 & 0.7255 & 0.6841 & 0.4744 \\
		& 0.50 &  & 0.7342 & 0.7336 & 0.7355 & 0.6862 & 0.4869 \\
		\addlinespace[2pt]
		& 0.01 & Linear loadings & 0.3877 & 0.3875 & 0.4377 & 0.4704 & 0.3082 \\
		& 0.03 &  & 0.4473 & 0.4475 & 0.4660 & 0.5312 & 0.3364 \\
		& 0.05 &  & 0.4700 & 0.4702 & 0.4802 & 0.5557 & 0.3462 \\
		& 0.10 &  & 0.5088 & 0.5091 & 0.5139 & 0.5773 & 0.3760 \\
		& 0.15 &  & 0.5341 & 0.5343 & 0.5342 & 0.5780 & 0.3976 \\
		& 0.20 &  & 0.5616 & 0.5619 & 0.5580 & 0.5830 & 0.4177 \\
		& 0.25 &  & 0.5799 & 0.5796 & 0.5726 & 0.5822 & 0.4310 \\
		& 0.30 &  & 0.6002 & 0.6002 & 0.5873 & 0.5841 & 0.4453 \\
		& 0.35 &  & 0.6191 & 0.6192 & 0.6016 & 0.5855 & 0.4579 \\
		& 0.40 &  & 0.6358 & 0.6350 & 0.6110 & 0.5840 & 0.4654 \\
		& 0.45 &  & 0.6558 & 0.6547 & 0.6240 & 0.5861 & 0.4748 \\
		& 0.50 &  & 0.6776 & 0.6766 & 0.6377 & 0.5886 & 0.4884 \\
		\addlinespace[2pt]
		& 0.01 & Sparse factor & 0.3254 & 0.3254 & 0.3724 & 0.4046 & 0.3069 \\
		& 0.03 &  & 0.3933 & 0.3933 & 0.4084 & 0.4674 & 0.3363 \\
		& 0.05 &  & 0.4126 & 0.4128 & 0.4179 & 0.4951 & 0.3459 \\
		& 0.10 &  & 0.4566 & 0.4567 & 0.4503 & 0.5236 & 0.3747 \\
		& 0.15 &  & 0.4868 & 0.4869 & 0.4754 & 0.5280 & 0.3989 \\
		& 0.20 &  & 0.5130 & 0.5131 & 0.4945 & 0.5300 & 0.4154 \\
		& 0.25 &  & 0.5352 & 0.5353 & 0.5117 & 0.5297 & 0.4313 \\
		& 0.30 &  & 0.5579 & 0.5579 & 0.5272 & 0.5314 & 0.4441 \\
		& 0.35 &  & 0.5794 & 0.5794 & 0.5420 & 0.5321 & 0.4589 \\
		& 0.40 &  & 0.5993 & 0.5990 & 0.5534 & 0.5317 & 0.4659 \\
		& 0.45 &  & 0.6233 & 0.6228 & 0.5659 & 0.5333 & 0.4756 \\
		& 0.50 &  & 0.6492 & 0.6488 & 0.5803 & 0.5359 & 0.4874 \\
		\addlinespace[2pt]
		& 0.01 & Two-block factor & 0.4082 & 0.4081 & 0.4523 & 0.4857 & 0.3075 \\
		& 0.03 &  & 0.4630 & 0.4628 & 0.4840 & 0.5437 & 0.3373 \\
		& 0.05 &  & 0.4835 & 0.4840 & 0.4960 & 0.5692 & 0.3465 \\
		& 0.10 &  & 0.5210 & 0.5211 & 0.5263 & 0.5873 & 0.3750 \\
		& 0.15 &  & 0.5470 & 0.5469 & 0.5491 & 0.5910 & 0.3980 \\
		& 0.20 &  & 0.5704 & 0.5704 & 0.5692 & 0.5926 & 0.4163 \\
		& 0.25 &  & 0.5897 & 0.5901 & 0.5847 & 0.5933 & 0.4311 \\
		& 0.30 &  & 0.6090 & 0.6090 & 0.5991 & 0.5945 & 0.4448 \\
		& 0.35 &  & 0.6281 & 0.6279 & 0.6136 & 0.5966 & 0.4578 \\
		& 0.40 &  & 0.6440 & 0.6437 & 0.6230 & 0.5948 & 0.4657 \\
		& 0.45 &  & 0.6631 & 0.6628 & 0.6341 & 0.5954 & 0.4747 \\
		& 0.50 &  & 0.6852 & 0.6845 & 0.6486 & 0.5993 & 0.4882 \\
	\end{longtable}
\normalsize

\subsection*{Average Number of Rejections}

Table~\ref{tab:appendix-anr-onefactor} reports the average numbers of rejections produced by the competing procedures across all one-factor dependence structures, sample sizes, and sparsity levels considered in the study. These results provide a detailed numerical summary of the overall rejection behavior of the various methods and complement the graphical summaries presented in the main text. Since the average number of rejections reflects the overall level of signal discovery, this table provides additional insight into the operating characteristics of the competing procedures and complements the corresponding FDR, FNR, and power summaries presented above.
\small
	\setlength{\tabcolsep}{3.5pt}
	\renewcommand{\arraystretch}{1.08}
	\begin{longtable}{cc l rrrrr}
		\caption{Average numbers of rejections of the competing procedures under the six one-factor dependence structures for $n=20,50,100$, $\alpha=0.10$, $\psi^2=2\log n$, and $G=5000$ Monte Carlo replications.}\label{tab:appendix-anr-onefactor}\\
		\toprule
		$n$ & $p$ & Dependence structure & Oracle & BSD & MRD--GBS & MRD--CSX & BH \\
		\midrule
		\endfirsthead
		\caption[]{Average numbers of rejections of the competing procedures under the six one-factor dependence structures (continued).}\\
		\toprule
		$n$ & $p$ & Dependence structure & Oracle & BSD & MRD--GBS & MRD--CSX & BH \\
		\midrule
		\endhead
		\midrule
		\multicolumn{8}{r}{\emph{Continued on next page}}\\
		\endfoot
		\bottomrule
		\endlastfoot
		20 & 0.03 & Independence & 0.1738 & 0.1738 & 0.3278 & 0.5128 & 0.3404 \\
		& 0.05 &  & 0.3050 & 0.3050 & 0.4498 & 0.6962 & 0.4620 \\
		& 0.10 &  & 0.7124 & 0.7124 & 0.7894 & 1.2054 & 0.8008 \\
		& 0.15 &  & 1.2368 & 1.2368 & 1.1820 & 1.7116 & 1.1904 \\
		& 0.20 &  & 1.8204 & 1.8204 & 1.5992 & 2.2354 & 1.5988 \\
		& 0.25 &  & 2.4750 & 2.4750 & 2.0508 & 2.7518 & 2.0276 \\
		& 0.30 &  & 3.1474 & 3.1474 & 2.4466 & 3.2046 & 2.3960 \\
		& 0.35 &  & 4.0022 & 4.0022 & 2.9338 & 3.7136 & 2.8474 \\
		& 0.40 &  & 4.8748 & 4.8748 & 3.4404 & 4.2012 & 3.2890 \\
		& 0.45 &  & 5.8732 & 5.8732 & 3.9256 & 4.6340 & 3.7502 \\
		& 0.50 &  & 6.9972 & 6.9972 & 4.4742 & 5.1316 & 4.2082 \\
		\addlinespace[2pt]
		& 0.03 & Positive factor & 0.3132 & 0.3138 & 0.4784 & 0.7466 & 0.6518 \\
		& 0.05 &  & 0.5382 & 0.5384 & 0.7224 & 1.0612 & 0.7870 \\
		& 0.10 &  & 1.1458 & 1.1442 & 1.3282 & 1.8320 & 1.0982 \\
		& 0.15 &  & 1.8480 & 1.8466 & 2.0228 & 2.6264 & 1.4258 \\
		& 0.20 &  & 2.5350 & 2.5386 & 2.7156 & 3.3124 & 1.8266 \\
		& 0.25 &  & 3.2832 & 3.2954 & 3.4060 & 3.9894 & 2.2250 \\
		& 0.30 &  & 4.0476 & 4.0474 & 4.0926 & 4.6126 & 2.5828 \\
		& 0.35 &  & 4.8868 & 4.8998 & 4.8886 & 5.3116 & 3.0372 \\
		& 0.40 &  & 5.7296 & 5.7270 & 5.5888 & 5.9278 & 3.4648 \\
		& 0.45 &  & 6.6460 & 6.6444 & 6.3486 & 6.5762 & 3.8660 \\
		& 0.50 &  & 7.6280 & 7.5936 & 7.0904 & 7.1990 & 4.3142 \\
		\addlinespace[2pt]
		& 0.03 & Alternating factor & 0.3108 & 0.3118 & 0.4722 & 0.7264 & 0.6620 \\
		& 0.05 &  & 0.5310 & 0.5322 & 0.7258 & 1.0558 & 0.7744 \\
		& 0.10 &  & 1.1396 & 1.1432 & 1.3286 & 1.8202 & 1.0912 \\
		& 0.15 &  & 1.8410 & 1.8384 & 2.0164 & 2.6238 & 1.4350 \\
		& 0.20 &  & 2.5406 & 2.5508 & 2.7014 & 3.2920 & 1.8372 \\
		& 0.25 &  & 3.2720 & 3.2764 & 3.3854 & 3.9762 & 2.2044 \\
		& 0.30 &  & 4.0508 & 4.0548 & 4.1068 & 4.6336 & 2.5974 \\
		& 0.35 &  & 4.9208 & 4.9228 & 4.8970 & 5.3354 & 3.0328 \\
		& 0.40 &  & 5.7324 & 5.7394 & 5.5980 & 5.9466 & 3.4870 \\
		& 0.45 &  & 6.6564 & 6.6284 & 6.3344 & 6.5520 & 3.8768 \\
		& 0.50 &  & 7.6082 & 7.5810 & 7.0790 & 7.1804 & 4.3176 \\
		\addlinespace[2pt]
		& 0.03 & Linear loadings & 0.2476 & 0.2478 & 0.4054 & 0.6414 & 0.4576 \\
		& 0.05 &  & 0.4182 & 0.4202 & 0.5804 & 0.8706 & 0.5944 \\
		& 0.10 &  & 0.9350 & 0.9368 & 1.0666 & 1.5322 & 0.9014 \\
		& 0.15 &  & 1.5582 & 1.5616 & 1.6168 & 2.2080 & 1.2744 \\
		& 0.20 &  & 2.1822 & 2.1930 & 2.1642 & 2.7880 & 1.6844 \\
		& 0.25 &  & 2.8918 & 2.9082 & 2.7458 & 3.4220 & 2.0996 \\
		& 0.30 &  & 3.5918 & 3.6012 & 3.2776 & 3.9430 & 2.4622 \\
		& 0.35 &  & 4.4560 & 4.4492 & 3.9388 & 4.5598 & 2.9176 \\
		& 0.40 &  & 5.3134 & 5.2952 & 4.5494 & 5.1104 & 3.3844 \\
		& 0.45 &  & 6.2374 & 6.1980 & 5.1730 & 5.6468 & 3.7710 \\
		& 0.50 &  & 7.2720 & 7.2382 & 5.8080 & 6.1636 & 4.2440 \\
		\addlinespace[2pt]
		& 0.03 & Sparse factor & 0.2052 & 0.2058 & 0.3688 & 0.5736 & 0.3504 \\
		& 0.05 &  & 0.3598 & 0.3620 & 0.5214 & 0.7972 & 0.4794 \\
		& 0.10 &  & 0.8230 & 0.8278 & 0.9446 & 1.3804 & 0.8274 \\
		& 0.15 &  & 1.3742 & 1.3796 & 1.3920 & 1.9618 & 1.2032 \\
		& 0.20 &  & 1.9962 & 1.9984 & 1.8832 & 2.5238 & 1.6168 \\
		& 0.25 &  & 2.6712 & 2.6812 & 2.4170 & 3.1162 & 2.0254 \\
		& 0.30 &  & 3.3568 & 3.3614 & 2.8828 & 3.5934 & 2.4102 \\
		& 0.35 &  & 4.1960 & 4.1972 & 3.4348 & 4.1240 & 2.8590 \\
		& 0.40 &  & 5.0622 & 5.0512 & 3.9866 & 4.6424 & 3.3156 \\
		& 0.45 &  & 6.0350 & 6.0134 & 4.5522 & 5.1162 & 3.7462 \\
		& 0.50 &  & 7.1378 & 7.0792 & 5.1096 & 5.6332 & 4.2138 \\
		\addlinespace[2pt]
		& 0.03 & Two-block factor & 0.2496 & 0.2488 & 0.4134 & 0.6436 & 0.4434 \\
		& 0.05 &  & 0.4376 & 0.4392 & 0.6074 & 0.9094 & 0.5582 \\
		& 0.10 &  & 0.9506 & 0.9530 & 1.0920 & 1.5618 & 0.9122 \\
		& 0.15 &  & 1.5686 & 1.5732 & 1.6490 & 2.2462 & 1.2722 \\
		& 0.20 &  & 2.2112 & 2.2212 & 2.2248 & 2.8548 & 1.6786 \\
		& 0.25 &  & 2.9280 & 2.9392 & 2.8306 & 3.4900 & 2.1090 \\
		& 0.30 &  & 3.6498 & 3.6416 & 3.3854 & 4.0278 & 2.4500 \\
		& 0.35 &  & 4.4922 & 4.4954 & 4.0696 & 4.6454 & 2.9258 \\
		& 0.40 &  & 5.3148 & 5.3200 & 4.6712 & 5.1738 & 3.3522 \\
		& 0.45 &  & 6.2896 & 6.2826 & 5.3242 & 5.7594 & 3.7832 \\
		& 0.50 &  & 7.3378 & 7.2922 & 5.9604 & 6.2892 & 4.2410 \\
		\addlinespace[2pt]
		\midrule
		50 & 0.02 & Independence & 0.2950 & 0.2950 & 0.4480 & 0.8596 & 0.4638 \\
		& 0.05 &  & 0.8750 & 0.8750 & 0.9932 & 1.7636 & 1.0170 \\
		& 0.10 &  & 2.0964 & 2.0964 & 2.0422 & 3.2670 & 2.0590 \\
		& 0.15 &  & 3.4732 & 3.4732 & 3.1478 & 4.5308 & 3.1544 \\
		& 0.20 &  & 4.9826 & 4.9826 & 4.3312 & 5.7562 & 4.2758 \\
		& 0.25 &  & 6.6522 & 6.6522 & 5.5760 & 6.8920 & 5.4670 \\
		& 0.30 &  & 8.5056 & 8.5056 & 6.9322 & 8.0726 & 6.7308 \\
		& 0.35 &  & 10.4848 & 10.4848 & 8.3184 & 9.1932 & 7.9822 \\
		& 0.40 &  & 12.7590 & 12.7590 & 9.7570 & 10.4022 & 9.2848 \\
		& 0.45 &  & 15.1890 & 15.1890 & 11.2382 & 11.5122 & 10.5870 \\
		& 0.50 &  & 17.9120 & 17.9120 & 12.7442 & 12.6474 & 11.8666 \\
		\addlinespace[2pt]
		& 0.02 & Positive factor & 0.5350 & 0.5350 & 0.7308 & 1.3108 & 1.2394 \\
		& 0.05 &  & 1.4440 & 1.4450 & 1.7110 & 2.6594 & 1.7358 \\
		& 0.10 &  & 3.1582 & 3.1420 & 3.4902 & 4.6144 & 2.6860 \\
		& 0.15 &  & 4.9104 & 4.9022 & 5.3078 & 6.3246 & 3.8288 \\
		& 0.20 &  & 6.7096 & 6.6954 & 7.1238 & 7.9570 & 4.9492 \\
		& 0.25 &  & 8.6396 & 8.6366 & 9.0572 & 9.5774 & 6.0690 \\
		& 0.30 &  & 10.7058 & 10.7094 & 11.0830 & 11.3120 & 7.2406 \\
		& 0.35 &  & 12.7388 & 12.7536 & 13.0348 & 12.9172 & 8.4754 \\
		& 0.40 &  & 15.0020 & 15.0010 & 15.2030 & 14.6528 & 9.6592 \\
		& 0.45 &  & 17.1742 & 17.1550 & 17.1480 & 16.2254 & 10.8114 \\
		& 0.50 &  & 19.5986 & 19.5696 & 19.2556 & 17.9088 & 12.2166 \\
		\addlinespace[2pt]
		& 0.02 & Alternating factor & 0.5462 & 0.5356 & 0.7306 & 1.3098 & 1.2758 \\
		& 0.05 &  & 1.4516 & 1.4452 & 1.7090 & 2.6670 & 1.7756 \\
		& 0.10 &  & 3.1532 & 3.1386 & 3.5016 & 4.6198 & 2.7056 \\
		& 0.15 &  & 4.9120 & 4.9010 & 5.2918 & 6.2868 & 3.8346 \\
		& 0.20 &  & 6.7268 & 6.7252 & 7.1264 & 7.9330 & 4.9534 \\
		& 0.25 &  & 8.6476 & 8.6420 & 9.0438 & 9.5794 & 6.0916 \\
		& 0.30 &  & 10.6966 & 10.6998 & 11.0842 & 11.3072 & 7.3004 \\
		& 0.35 &  & 12.7244 & 12.7380 & 13.0136 & 12.8954 & 8.4380 \\
		& 0.40 &  & 14.9842 & 14.9698 & 15.1740 & 14.6542 & 9.7102 \\
		& 0.45 &  & 17.1744 & 17.1846 & 17.1528 & 16.2302 & 10.8396 \\
		& 0.50 &  & 19.6444 & 19.6270 & 19.3462 & 17.9746 & 12.1584 \\
		\addlinespace[2pt]
		& 0.02 & Linear loadings & 0.4128 & 0.4136 & 0.5934 & 1.1084 & 0.7514 \\
		& 0.05 &  & 1.1794 & 1.1798 & 1.3706 & 2.2564 & 1.2392 \\
		& 0.10 &  & 2.6504 & 2.6512 & 2.8016 & 3.9956 & 2.2656 \\
		& 0.15 &  & 4.2640 & 4.2632 & 4.3428 & 5.5160 & 3.4036 \\
		& 0.20 &  & 5.8968 & 5.8964 & 5.8354 & 6.9482 & 4.5482 \\
		& 0.25 &  & 7.6964 & 7.7028 & 7.4494 & 8.3372 & 5.6966 \\
		& 0.30 &  & 9.6476 & 9.6716 & 9.1448 & 9.7732 & 6.9218 \\
		& 0.35 &  & 11.6386 & 11.6374 & 10.8382 & 11.1350 & 8.1564 \\
		& 0.40 &  & 13.9060 & 13.9144 & 12.7080 & 12.6264 & 9.4258 \\
		& 0.45 &  & 16.1864 & 16.1614 & 14.4532 & 13.9928 & 10.6282 \\
		& 0.50 &  & 18.6992 & 18.6664 & 16.2944 & 15.4258 & 12.0466 \\
		\addlinespace[2pt]
		& 0.02 & Sparse factor & 0.3578 & 0.3580 & 0.5212 & 0.9884 & 0.5186 \\
		& 0.05 &  & 1.0218 & 1.0232 & 1.1686 & 2.0152 & 1.0628 \\
		& 0.10 &  & 2.3830 & 2.3872 & 2.4500 & 3.6888 & 2.1124 \\
		& 0.15 &  & 3.8462 & 3.8560 & 3.7372 & 5.0346 & 3.2150 \\
		& 0.20 &  & 5.4470 & 5.4516 & 5.1168 & 6.3714 & 4.3488 \\
		& 0.25 &  & 7.1724 & 7.1866 & 6.5430 & 7.6280 & 5.4878 \\
		& 0.30 &  & 9.0822 & 9.0968 & 8.1174 & 8.9608 & 6.7790 \\
		& 0.35 &  & 11.0882 & 11.0982 & 9.6456 & 10.1974 & 8.0326 \\
		& 0.40 &  & 13.3340 & 13.3290 & 11.2418 & 11.4982 & 9.3086 \\
		& 0.45 &  & 15.6602 & 15.6572 & 12.8620 & 12.7356 & 10.5760 \\
		& 0.50 &  & 18.3134 & 18.3042 & 14.5616 & 14.0226 & 11.9338 \\
		\addlinespace[2pt]
		& 0.02 & Two-block factor & 0.4464 & 0.4358 & 0.6168 & 1.1372 & 0.7328 \\
		& 0.05 &  & 1.2036 & 1.2048 & 1.3976 & 2.3158 & 1.2812 \\
		& 0.10 &  & 2.7264 & 2.7178 & 2.9252 & 4.1102 & 2.2954 \\
		& 0.15 &  & 4.2964 & 4.2974 & 4.4070 & 5.6050 & 3.4188 \\
		& 0.20 &  & 5.9836 & 5.9908 & 5.9810 & 7.0550 & 4.5384 \\
		& 0.25 &  & 7.8146 & 7.8186 & 7.6406 & 8.4876 & 5.6818 \\
		& 0.30 &  & 9.7846 & 9.7846 & 9.4200 & 9.9704 & 6.9354 \\
		& 0.35 &  & 11.7980 & 11.8112 & 11.1248 & 11.3646 & 8.1792 \\
		& 0.40 &  & 14.0488 & 14.0506 & 12.9866 & 12.8496 & 9.4346 \\
		& 0.45 &  & 16.3164 & 16.3102 & 14.7720 & 14.2542 & 10.6594 \\
		& 0.50 &  & 18.8488 & 18.8264 & 16.6178 & 15.6696 & 12.0048 \\
		\addlinespace[2pt]
		\midrule
		100 & 0.01 & Independence & 0.2934 & 0.2934 & 0.4514 & 1.0364 & 0.4706 \\
		& 0.03 &  & 1.1056 & 1.1056 & 1.2506 & 2.5640 & 1.2728 \\
		& 0.05 &  & 1.9646 & 1.9646 & 2.0384 & 3.7078 & 2.0676 \\
		& 0.10 &  & 4.5160 & 4.5160 & 4.3044 & 6.3624 & 4.3086 \\
		& 0.15 &  & 7.3460 & 7.3460 & 6.7026 & 8.7088 & 6.6446 \\
		& 0.20 &  & 10.5440 & 10.5440 & 9.3864 & 11.0488 & 9.2106 \\
		& 0.25 &  & 13.9728 & 13.9728 & 12.1336 & 13.3090 & 11.8040 \\
		& 0.30 &  & 17.6852 & 17.6852 & 15.0066 & 15.5954 & 14.4932 \\
		& 0.35 &  & 21.7452 & 21.7452 & 18.0282 & 17.9440 & 17.2426 \\
		& 0.40 &  & 26.1136 & 26.1136 & 21.0762 & 20.2304 & 19.9566 \\
		& 0.45 &  & 30.8952 & 30.8952 & 24.2652 & 22.5530 & 22.7564 \\
		& 0.50 &  & 36.3894 & 36.3894 & 27.7470 & 25.0374 & 25.8282 \\
		\addlinespace[2pt]
		& 0.01 & Positive factor & 0.5716 & 0.5522 & 0.7528 & 1.6124 & 2.3648 \\
		& 0.03 &  & 1.8474 & 1.8104 & 2.1272 & 3.6776 & 2.9876 \\
		& 0.05 &  & 3.0888 & 3.0872 & 3.4936 & 5.1892 & 3.5192 \\
		& 0.10 &  & 6.4822 & 6.4890 & 7.1252 & 8.6090 & 5.8458 \\
		& 0.15 &  & 10.0692 & 10.0684 & 10.8856 & 11.9270 & 7.9734 \\
		& 0.20 &  & 13.9142 & 13.9150 & 14.8570 & 15.2874 & 10.2900 \\
		& 0.25 &  & 17.8380 & 17.8406 & 18.8886 & 18.6720 & 12.9738 \\
		& 0.30 &  & 21.9202 & 21.9020 & 22.9942 & 22.0394 & 15.2666 \\
		& 0.35 &  & 26.1596 & 26.1532 & 27.2238 & 25.4504 & 18.2614 \\
		& 0.40 &  & 30.4640 & 30.4438 & 31.4094 & 28.7772 & 20.7786 \\
		& 0.45 &  & 34.9582 & 34.9264 & 35.6268 & 32.1438 & 23.3220 \\
		& 0.50 &  & 39.8322 & 39.8072 & 40.1634 & 35.6550 & 26.3982 \\
		\addlinespace[2pt]
		& 0.01 & Alternating factor & 0.5496 & 0.5498 & 0.7498 & 1.6230 & 2.3350 \\
		& 0.03 &  & 1.8128 & 1.8138 & 2.1228 & 3.6800 & 2.9952 \\
		& 0.05 &  & 3.0868 & 3.0874 & 3.4972 & 5.1880 & 3.5696 \\
		& 0.10 &  & 6.5128 & 6.4962 & 7.1280 & 8.6102 & 5.8736 \\
		& 0.15 &  & 10.1346 & 10.0900 & 10.9072 & 11.9318 & 7.9758 \\
		& 0.20 &  & 13.9010 & 13.9074 & 14.8640 & 15.3132 & 10.3238 \\
		& 0.25 &  & 17.8406 & 17.8518 & 18.8926 & 18.6640 & 12.9806 \\
		& 0.30 &  & 21.9338 & 21.8998 & 22.9950 & 22.0376 & 15.2804 \\
		& 0.35 &  & 26.1364 & 26.1492 & 27.2434 & 25.4430 & 18.3060 \\
		& 0.40 &  & 30.4518 & 30.4372 & 31.3996 & 28.7554 & 20.8104 \\
		& 0.45 &  & 34.9616 & 34.9842 & 35.6700 & 32.1618 & 23.3112 \\
		& 0.50 &  & 39.7758 & 39.7838 & 40.1560 & 35.6418 & 26.3118 \\
		\addlinespace[2pt]
		& 0.01 & Linear loadings & 0.4420 & 0.4230 & 0.6094 & 1.3420 & 1.1196 \\
		& 0.03 &  & 1.5002 & 1.4810 & 1.7004 & 3.1872 & 1.8796 \\
		& 0.05 &  & 2.5646 & 2.5632 & 2.8110 & 4.5620 & 2.6312 \\
		& 0.10 &  & 5.5908 & 5.5764 & 5.8324 & 7.6068 & 4.8450 \\
		& 0.15 &  & 8.7924 & 8.8028 & 8.9632 & 10.3998 & 7.1812 \\
		& 0.20 &  & 12.3688 & 12.3796 & 12.4034 & 13.3294 & 9.6212 \\
		& 0.25 &  & 16.0332 & 16.0286 & 15.8488 & 16.1768 & 12.2604 \\
		& 0.30 &  & 19.9360 & 19.9368 & 19.3874 & 19.0528 & 14.7770 \\
		& 0.35 &  & 24.0656 & 24.0836 & 23.0726 & 21.9532 & 17.6892 \\
		& 0.40 &  & 28.3838 & 28.3710 & 26.7470 & 24.7846 & 20.2888 \\
		& 0.45 &  & 32.9704 & 32.9498 & 30.5602 & 27.6900 & 22.9702 \\
		& 0.50 &  & 38.0752 & 38.0570 & 34.6252 & 30.7074 & 26.0800 \\
		\addlinespace[2pt]
		& 0.01 & Sparse factor & 0.3658 & 0.3664 & 0.5338 & 1.2136 & 0.6164 \\
		& 0.03 &  & 1.3336 & 1.3140 & 1.5006 & 2.9186 & 1.4110 \\
		& 0.05 &  & 2.2990 & 2.2992 & 2.4694 & 4.1866 & 2.1910 \\
		& 0.10 &  & 5.0748 & 5.0768 & 5.1220 & 7.0516 & 4.3940 \\
		& 0.15 &  & 8.1464 & 8.1468 & 7.9900 & 9.6658 & 6.8240 \\
		& 0.20 &  & 11.5236 & 11.5258 & 11.0172 & 12.2686 & 9.2684 \\
		& 0.25 &  & 15.0714 & 15.0814 & 14.1536 & 14.8294 & 11.9092 \\
		& 0.30 &  & 18.9028 & 18.9060 & 17.3948 & 17.4494 & 14.4854 \\
		& 0.35 &  & 22.9680 & 22.9786 & 20.7642 & 20.0578 & 17.3854 \\
		& 0.40 &  & 27.3202 & 27.3296 & 24.1726 & 22.6556 & 20.0604 \\
		& 0.45 &  & 32.0264 & 32.0216 & 27.6726 & 25.2890 & 22.8240 \\
		& 0.50 &  & 37.2918 & 37.2850 & 31.3936 & 28.0442 & 25.8400 \\
		\addlinespace[2pt]
		& 0.01 & Two-block factor & 0.4470 & 0.4474 & 0.6234 & 1.3930 & 1.1326 \\
		& 0.03 &  & 1.5334 & 1.5130 & 1.7616 & 3.2544 & 1.8774 \\
		& 0.05 &  & 2.6352 & 2.6368 & 2.9012 & 4.6324 & 2.5910 \\
		& 0.10 &  & 5.6564 & 5.6592 & 5.9628 & 7.7040 & 4.7916 \\
		& 0.15 &  & 8.9692 & 8.9542 & 9.2034 & 10.6098 & 7.1438 \\
		& 0.20 &  & 12.5448 & 12.5320 & 12.6638 & 13.5422 & 9.5802 \\
		& 0.25 &  & 16.2234 & 16.2374 & 16.1896 & 16.4518 & 12.2240 \\
		& 0.30 &  & 20.1592 & 20.1558 & 19.7922 & 19.3574 & 14.7308 \\
		& 0.35 &  & 24.3038 & 24.3152 & 23.5446 & 22.3368 & 17.6290 \\
		& 0.40 &  & 28.6452 & 28.6440 & 27.2972 & 25.2240 & 20.2622 \\
		& 0.45 &  & 33.2172 & 33.2366 & 31.0892 & 28.1190 & 22.9490 \\
		& 0.50 &  & 38.3472 & 38.3314 & 35.2140 & 31.2716 & 26.0442 \\
		\addlinespace[2pt]
	\end{longtable}
\normalsize


\section{Additional Simulation Study Under General Covariance Dependence}

The experiments reported in Section~\ref{sec:SIMULATION} focused on one-factor dependence structures because exact Bayes Oracle risks can be computed efficiently in that setting. We now move beyond the one-factor framework and investigate the behavior of BSD and competing procedures under a broader collection of covariance structures. Throughout this appendix, we fix $n=200$, $\alpha=0.10$, $\psi^2=2\log n$, and $G=5000$ Monte Carlo replications. Since exact Oracle evaluation becomes computationally infeasible under these general covariance models, our comparisons focus on the implementable procedures BSD, MRD--GBS, MRD--CSX, and BH.

\subsection*{Simulation Design}

To assess the robustness of the competing procedures under a broad range of dependence settings, we consider six covariance structures spanning both global and localized forms of dependence.

\paragraph{Equicorrelation Model.}
The covariance matrix is given by
\[
\Sigma_{ij}
=
\rho +(1-\rho)I(i=j),
\qquad \rho =0.7.
\]

\paragraph{Toeplitz Model.}
The covariance matrix is specified by
\[
\Sigma_{ij}
=
\rho^{|i-j|},
\qquad \rho =0.9.
\]

\paragraph{Fractional Gaussian Noise Model.}
The covariance matrix is generated from a fractional Gaussian noise process with Hurst parameter
\[
H=0.9.
\]
This model exhibits long-range dependence whose strength increases with \(H\).

\paragraph{Heterogeneous Block Model.}
The coordinates are partitioned into blocks of varying sizes. Within each block, observations follow an equicorrelated covariance structure, while observations belonging to different blocks are independent. The within-block correlations vary across blocks.

\paragraph{Factor Model.}
Observations are generated according to
\[
X_i
=
\theta_i+\lambda_iF+\varepsilon_i,
\qquad
i=1,\ldots,n,
\]
where \(F\sim N(0,1)\) is a common factor,
\(\varepsilon_i\stackrel{\mathrm{ind}}{\sim}N(0,1)\),
and \(\lambda_i\) are independently generated from a
\(\mathrm{Uniform}(0.5,1)\) distribution.

\paragraph{Sparse Precision Model.}
The precision matrix
\[
\boldsymbol{\Omega}
=
\boldsymbol{\Sigma}^{-1}
\]
is generated to be sparse, with each coordinate directly connected to only a small number of neighboring coordinates. The covariance matrix is obtained by inverting \(\boldsymbol{\Omega}\).

Collectively, these covariance structures range from highly global forms of dependence to substantially more localized conditional dependence patterns. This diversity allows us to investigate the robustness of BSD and competing procedures across a broad spectrum of covariance geometries extending well beyond the one-factor models considered in Section~\ref{sec:SIMULATION}.

\subsection*{Bayes Misclassification Risks}

Figure~\ref{fig:general-cov-risk} displays the Bayes misclassification risks of BSD, MRD--GBS, MRD--CSX, and BH under the six covariance structures considered.

\begin{figure}[!htbp]
	\centering
	
	\begin{subfigure}[b]{0.48\textwidth}
		\centering
		\includegraphics[width=\textwidth]{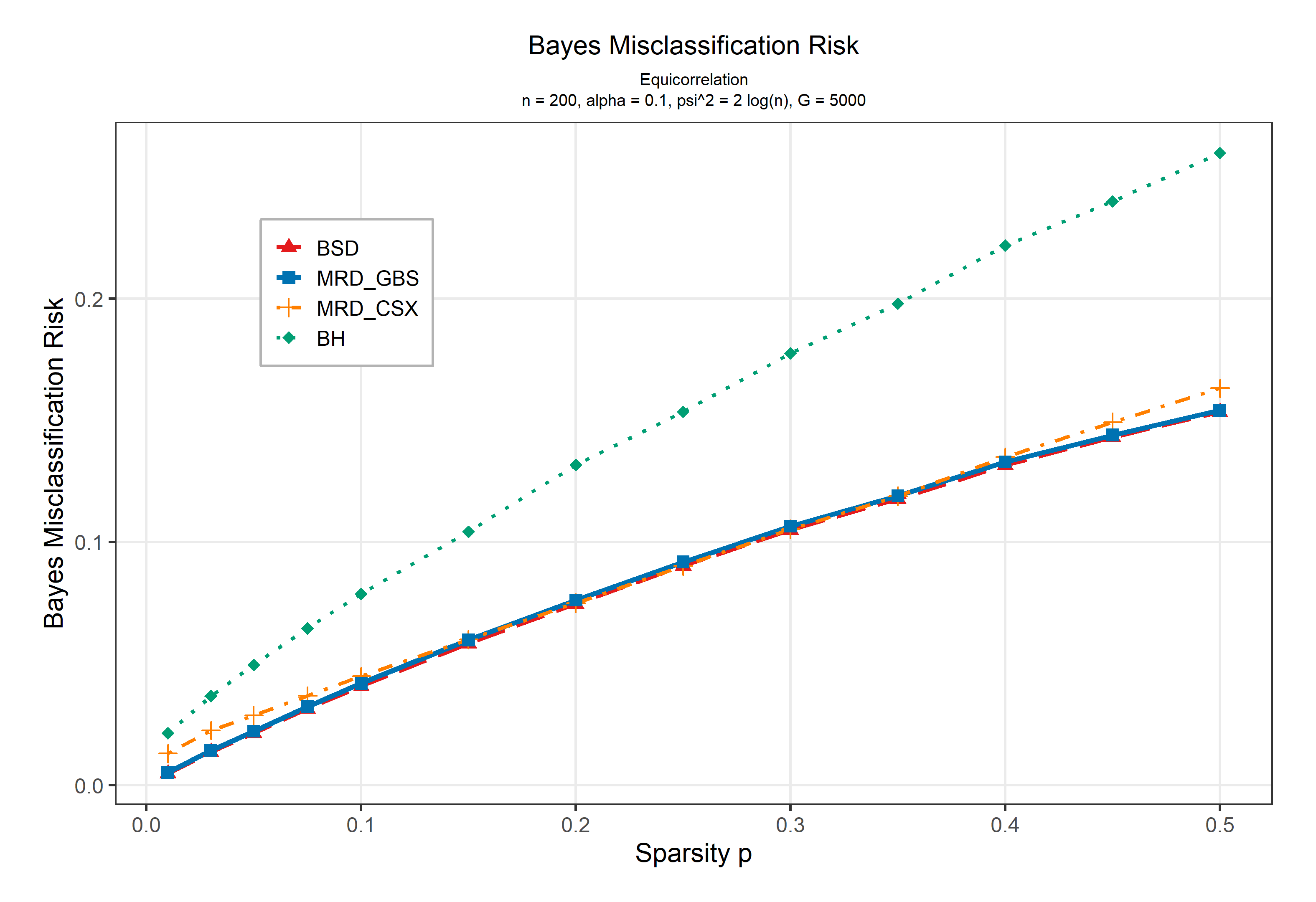}
		\caption{Equicorrelation}
	\end{subfigure}
	\hfill
	\begin{subfigure}[b]{0.48\textwidth}
		\centering
		\includegraphics[width=\textwidth]{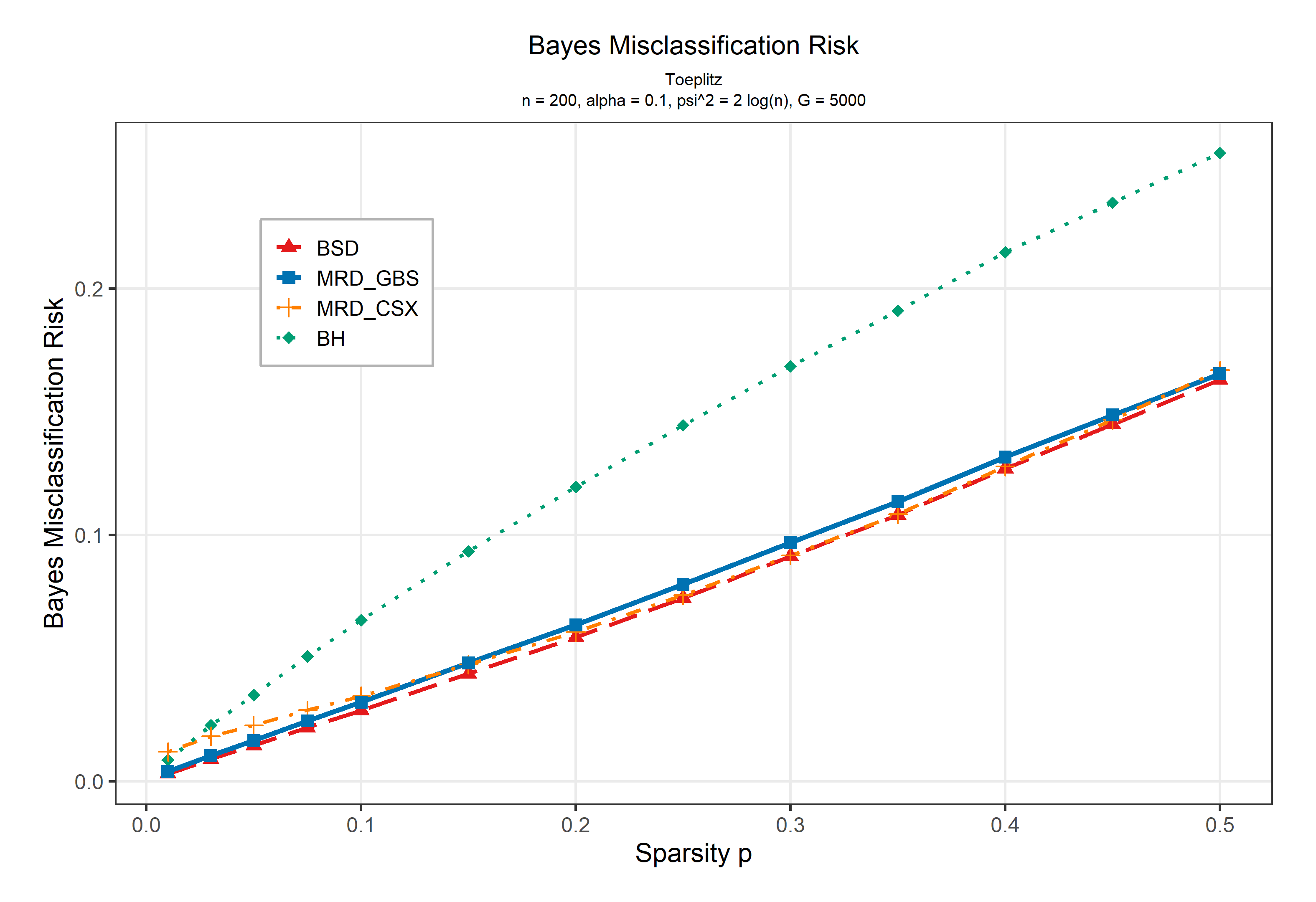}
		\caption{Toeplitz}
	\end{subfigure}
	
	\vspace{0.25cm}
	
	\begin{subfigure}[b]{0.48\textwidth}
		\centering
		\includegraphics[width=\textwidth]{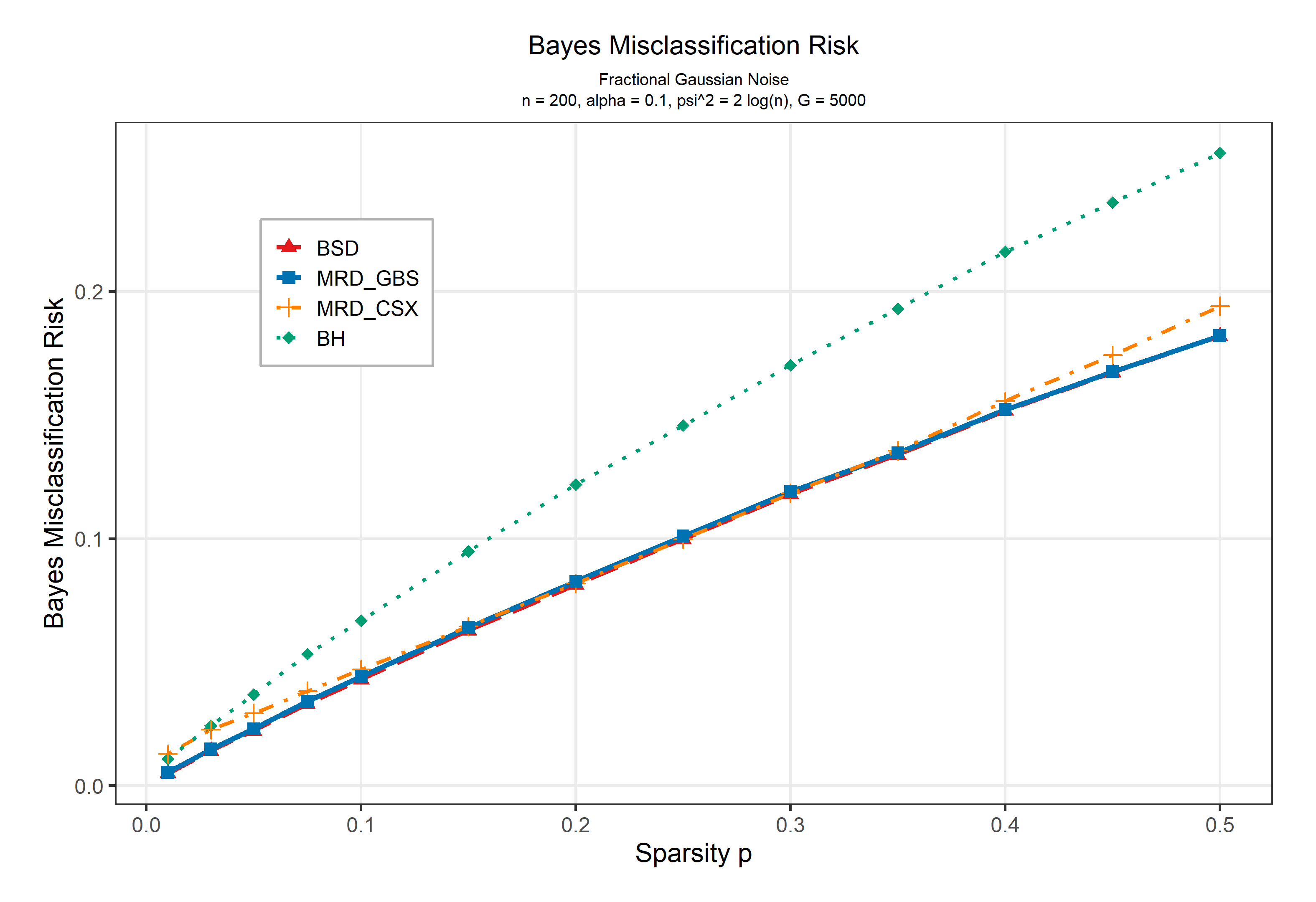}
		\caption{Fractional Gaussian Noise}
	\end{subfigure}
	\hfill
	\begin{subfigure}[b]{0.48\textwidth}
		\centering
		\includegraphics[width=\textwidth]{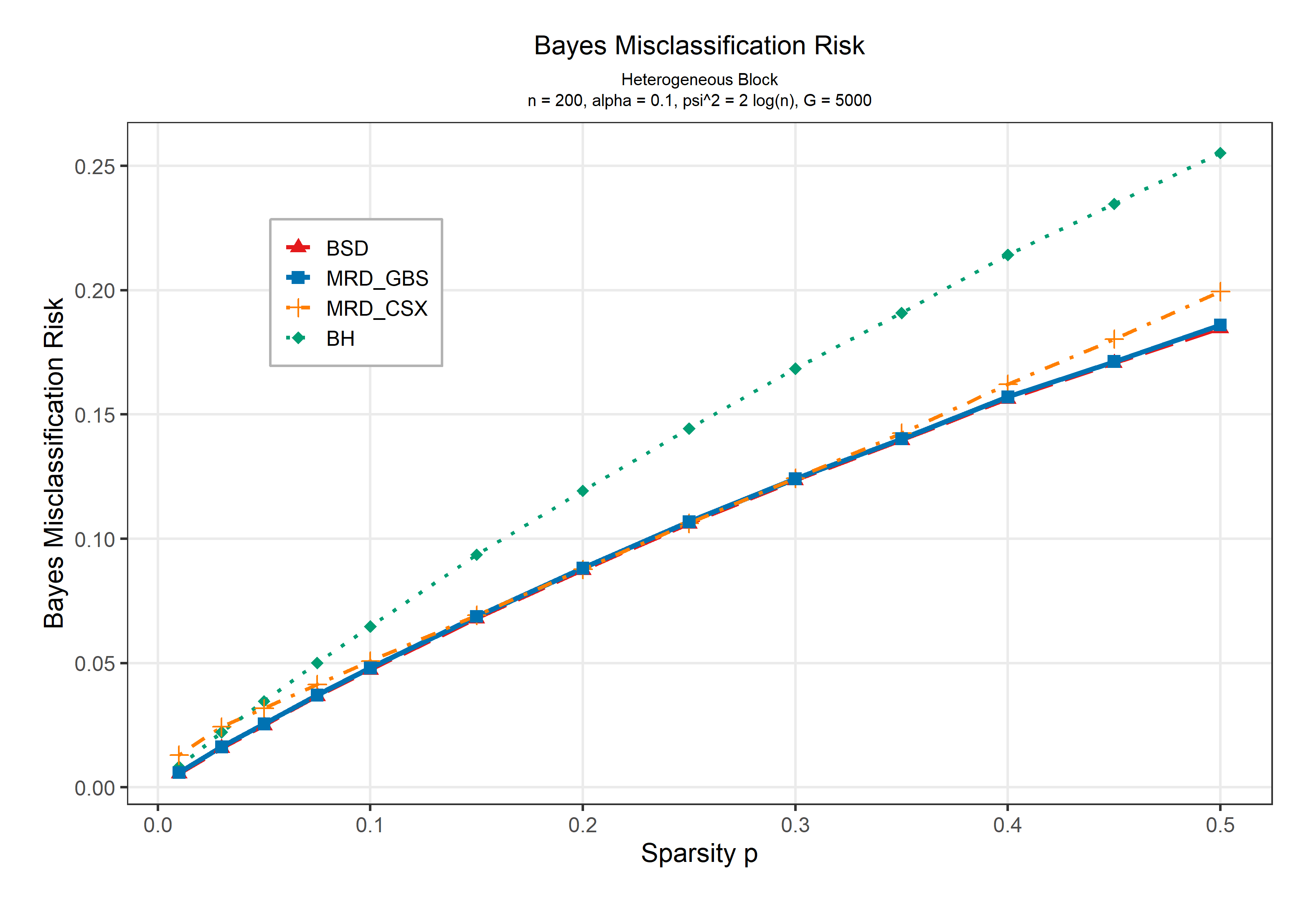}
		\caption{Heterogeneous Block}
	\end{subfigure}
	
	\vspace{0.25cm}
	
	\begin{subfigure}[b]{0.48\textwidth}
		\centering
		\includegraphics[width=\textwidth]{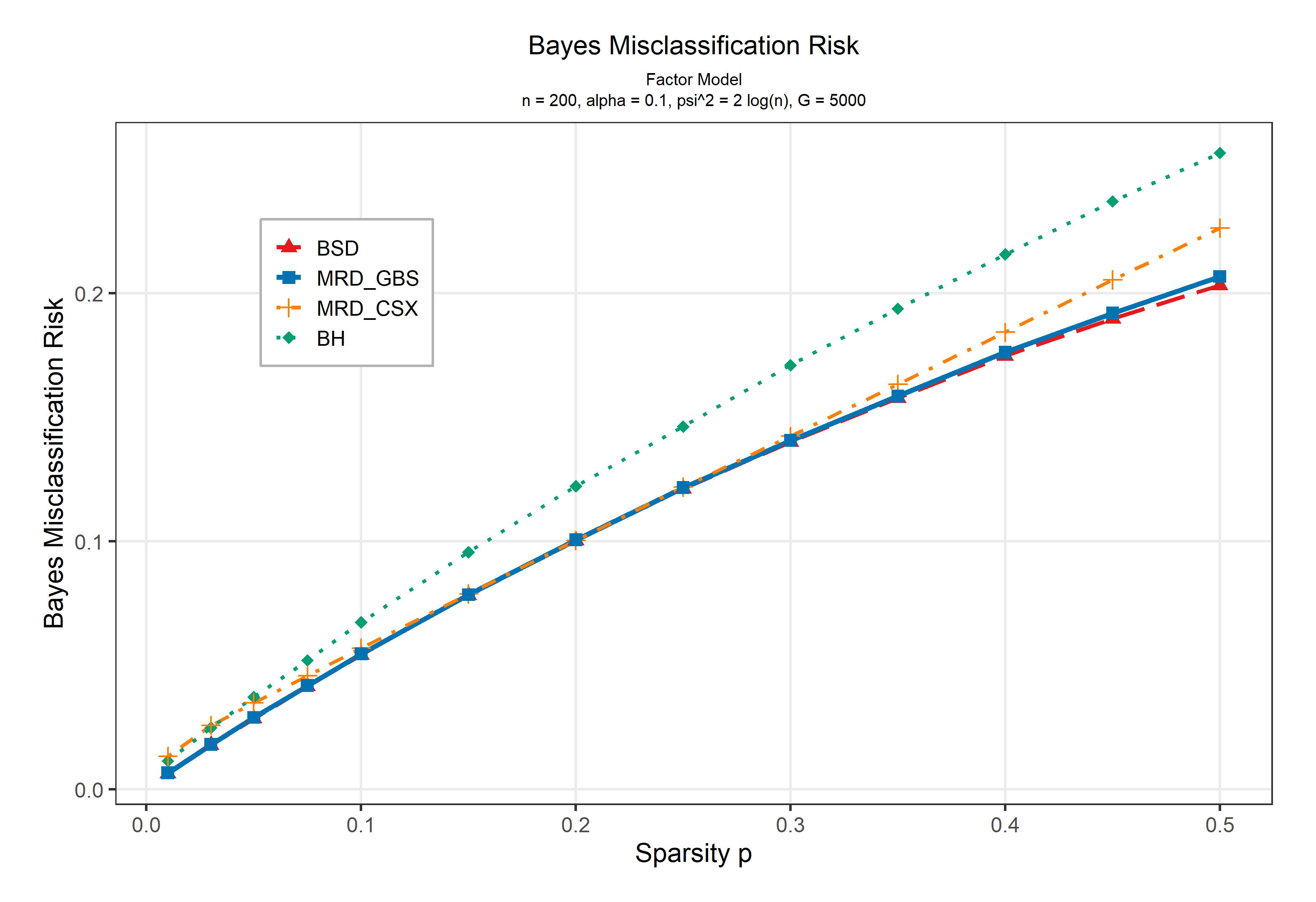}
		\caption{Factor Model}
	\end{subfigure}
	\hfill
	\begin{subfigure}[b]{0.48\textwidth}
		\centering
		\includegraphics[width=\textwidth]{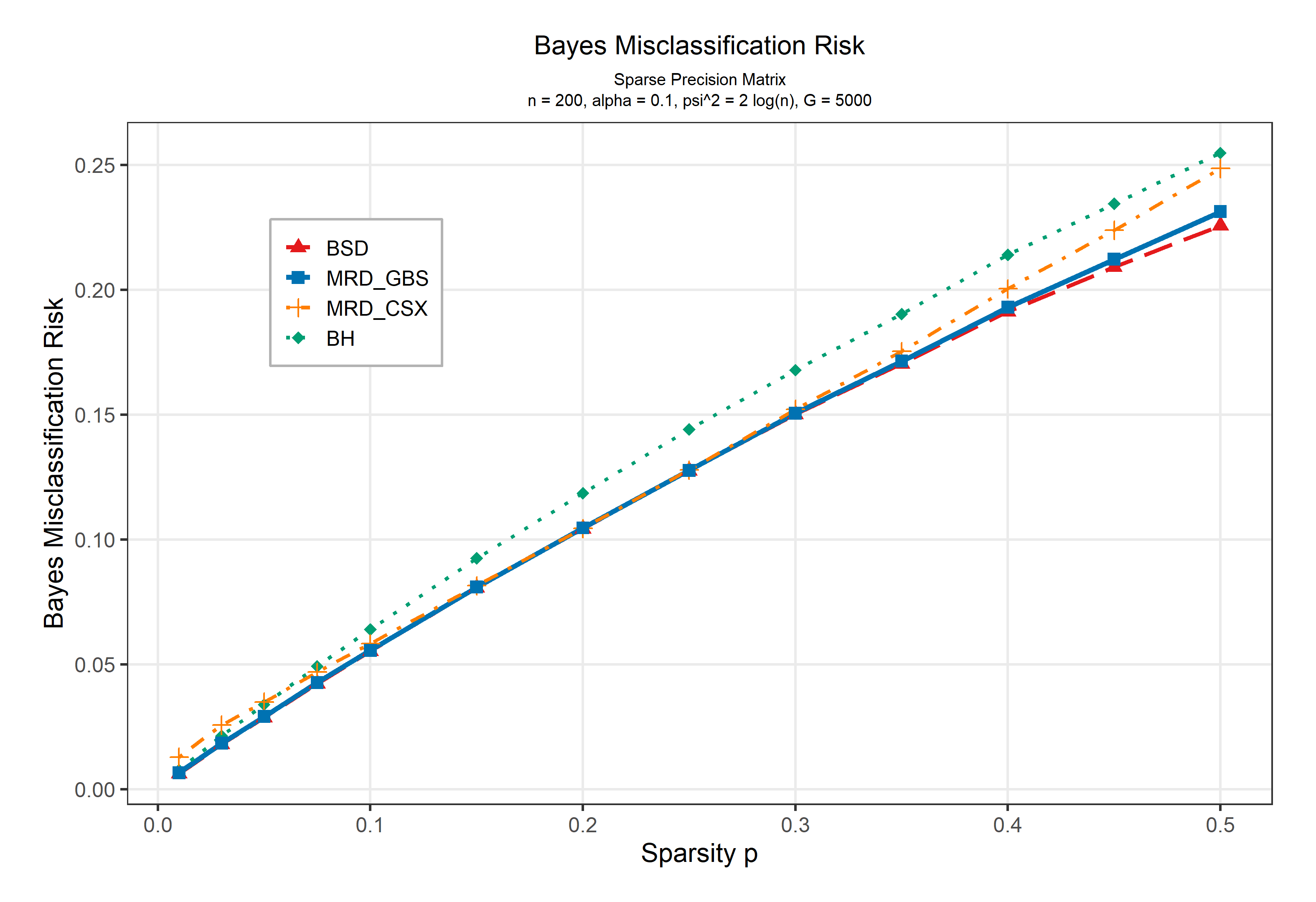}
		\caption{Sparse Precision Matrix}
	\end{subfigure}
	
	\caption{
		Bayes misclassification risks of BSD, MRD--GBS, MRD--CSX, and BH under six general covariance structures with $n=200$, $\alpha=0.10$, $\psi^2=2\log n$, and $G=5000$ Monte Carlo replications. Exact Oracle risks are unavailable in these settings because evaluating posterior inclusion probabilities under general covariance dependence is computationally infeasible.
	}
	\label{fig:general-cov-risk}
\end{figure}

Table~\ref{tab:appendix-risk-general-cov} below reports the Bayes misclassification risks of the competing procedures under the six general covariance structures considered above. The experiments are conducted with $n=200$, $\alpha=0.10$, $\psi^2=2\log n$, and $G=5000$ Monte Carlo replications.

\small
\setlength{\tabcolsep}{4.2pt}
\renewcommand{\arraystretch}{1.08}

\normalsize

\small
\setlength{\tabcolsep}{4.2pt}
\renewcommand{\arraystretch}{1.08}
\begin{longtable}{c l rrrr}
	\caption{Bayes misclassification risks of the competing procedures under six general covariance structures for $n=200$, $\alpha=0.10$, $\psi^2=2\log n$, and $G=5000$ Monte Carlo replications.}
	\label{tab:appendix-risk-general-cov}\\
	\toprule
	$p$ & Dependence structure & BSD & MRD--GBS & MRD--CSX & BH \\
	\midrule
	\endfirsthead
	\caption[]{Bayes misclassification risks of the competing procedures under six general covariance structures (continued).}\\
	\toprule
	$p$ & Dependence structure & BSD & MRD--GBS & MRD--CSX & BH \\
	\midrule
	\endhead
	\midrule
	\multicolumn{6}{r}{\emph{Continued on next page}}\\
	\endfoot
	\bottomrule
	\endlastfoot
	0.01 & Equicorrelation & 0.0047 & 0.0053 & 0.0131 & 0.0215 \\
	0.03 &  & 0.0136 & 0.0143 & 0.0226 & 0.0367 \\
	0.05 &  & 0.0213 & 0.0222 & 0.0287 & 0.0495 \\
	0.07 &  & 0.0314 & 0.0324 & 0.0367 & 0.0645 \\
	0.10 &  & 0.0407 & 0.0418 & 0.0448 & 0.0786 \\
	0.15 &  & 0.0582 & 0.0597 & 0.0600 & 0.1042 \\
	0.20 &  & 0.0745 & 0.0761 & 0.0749 & 0.1317 \\
	0.25 &  & 0.0902 & 0.0918 & 0.0902 & 0.1535 \\
	0.30 &  & 0.1049 & 0.1065 & 0.1053 & 0.1776 \\
	0.35 &  & 0.1177 & 0.1190 & 0.1189 & 0.1980 \\
	0.40 &  & 0.1316 & 0.1329 & 0.1349 & 0.2219 \\
	0.45 &  & 0.1431 & 0.1439 & 0.1492 & 0.2401 \\
	0.50 &  & 0.1533 & 0.1541 & 0.1632 & 0.2600 \\
	\addlinespace[2pt] 
	0.01 & Toeplitz & 0.0031 & 0.0040 & 0.0120 & 0.0087 \\
	0.03 &  & 0.0091 & 0.0104 & 0.0183 & 0.0228 \\
	0.05 &  & 0.0145 & 0.0165 & 0.0227 & 0.0350 \\
	0.07 &  & 0.0219 & 0.0245 & 0.0289 & 0.0507 \\
	0.10 &  & 0.0289 & 0.0322 & 0.0346 & 0.0655 \\
	0.15 &  & 0.0436 & 0.0481 & 0.0474 & 0.0934 \\
	0.20 &  & 0.0585 & 0.0635 & 0.0606 & 0.1194 \\
	0.25 &  & 0.0744 & 0.0799 & 0.0755 & 0.1446 \\
	0.30 &  & 0.0913 & 0.0970 & 0.0917 & 0.1685 \\
	0.35 &  & 0.1082 & 0.1135 & 0.1085 & 0.1911 \\
	0.40 &  & 0.1269 & 0.1316 & 0.1277 & 0.2148 \\
	0.45 &  & 0.1449 & 0.1487 & 0.1468 & 0.2349 \\
	0.50 &  & 0.1631 & 0.1655 & 0.1669 & 0.2551 \\
	\addlinespace[2pt]
	0.01 & Fractional Gaussian Noise & 0.0049 & 0.0054 & 0.0129 & 0.0108 \\
	0.03 &  & 0.0141 & 0.0148 & 0.0227 & 0.0243 \\
	0.05 &  & 0.0222 & 0.0230 & 0.0293 & 0.0369 \\
	0.07 &  & 0.0331 & 0.0341 & 0.0383 & 0.0533 \\
	0.10 &  & 0.0430 & 0.0442 & 0.0471 & 0.0668 \\
	0.15 &  & 0.0627 & 0.0640 & 0.0644 & 0.0948 \\
	0.20 &  & 0.0813 & 0.0827 & 0.0818 & 0.1219 \\
	0.25 &  & 0.0998 & 0.1010 & 0.0997 & 0.1458 \\
	0.30 &  & 0.1179 & 0.1190 & 0.1183 & 0.1702 \\
	0.35 &  & 0.1339 & 0.1346 & 0.1355 & 0.1930 \\
	0.40 &  & 0.1517 & 0.1522 & 0.1556 & 0.2160 \\
	0.45 &  & 0.1672 & 0.1675 & 0.1742 & 0.2360 \\
	0.50 &  & 0.1819 & 0.1821 & 0.1941 & 0.2560 \\
	\addlinespace[2pt]
	0.01 & Heterogeneous block & 0.0055 & 0.0060 & 0.0130 & 0.0084 \\
	0.03 &  & 0.0158 & 0.0163 & 0.0244 & 0.0222 \\
	0.05 &  & 0.0249 & 0.0256 & 0.0318 & 0.0347 \\
	0.07 &  & 0.0367 & 0.0372 & 0.0414 & 0.0501 \\
	0.10 &  & 0.0473 & 0.0481 & 0.0508 & 0.0648 \\
	0.15 &  & 0.0680 & 0.0688 & 0.0693 & 0.0936 \\
	0.20 &  & 0.0875 & 0.0882 & 0.0877 & 0.1193 \\
	0.25 &  & 0.1061 & 0.1068 & 0.1062 & 0.1444 \\
	0.30 &  & 0.1235 & 0.1242 & 0.1244 & 0.1685 \\
	0.35 &  & 0.1396 & 0.1402 & 0.1426 & 0.1908 \\
	0.40 &  & 0.1563 & 0.1571 & 0.1622 & 0.2142 \\
	0.45 &  & 0.1707 & 0.1713 & 0.1804 & 0.2347 \\
	0.50 &  & 0.1847 & 0.1860 & 0.1995 & 0.2552 \\
	\addlinespace[2pt]
	0.01 & Factor model & 0.0064 & 0.0067 & 0.0133 & 0.0115 \\
	0.03 &  & 0.0179 & 0.0182 & 0.0259 & 0.0249 \\
	0.05 &  & 0.0286 & 0.0290 & 0.0350 & 0.0373 \\
	0.07 &  & 0.0416 & 0.0419 & 0.0458 & 0.0521 \\
	0.10 &  & 0.0542 & 0.0546 & 0.0570 & 0.0674 \\
	0.15 &  & 0.0783 & 0.0785 & 0.0789 & 0.0956 \\
	0.20 &  & 0.1003 & 0.1006 & 0.1004 & 0.1223 \\
	0.25 &  & 0.1212 & 0.1216 & 0.1219 & 0.1463 \\
	0.30 &  & 0.1401 & 0.1407 & 0.1424 & 0.1710 \\
	0.35 &  & 0.1578 & 0.1585 & 0.1633 & 0.1938 \\
	0.40 &  & 0.1747 & 0.1762 & 0.1843 & 0.2157 \\
	0.45 &  & 0.1897 & 0.1920 & 0.2054 & 0.2370 \\
	0.50 &  & 0.2032 & 0.2067 & 0.2263 & 0.2566 \\
	\addlinespace[2pt]
	0.01 & Sparse precision matrix & 0.0062 & 0.0066 & 0.0129 & 0.0076 \\
	0.03 &  & 0.0181 & 0.0184 & 0.0257 & 0.0213 \\
	0.05 &  & 0.0287 & 0.0291 & 0.0349 & 0.0339 \\
	0.07 &  & 0.0423 & 0.0427 & 0.0469 & 0.0493 \\
	0.10 &  & 0.0552 & 0.0556 & 0.0582 & 0.0640 \\
	0.15 &  & 0.0806 & 0.0809 & 0.0815 & 0.0925 \\
	0.20 &  & 0.1043 & 0.1046 & 0.1044 & 0.1186 \\
	0.25 &  & 0.1276 & 0.1277 & 0.1278 & 0.1441 \\
	0.30 &  & 0.1502 & 0.1506 & 0.1521 & 0.1678 \\
	0.35 &  & 0.1704 & 0.1714 & 0.1753 & 0.1903 \\
	0.40 &  & 0.1912 & 0.1930 & 0.2004 & 0.2141 \\
	0.45 &  & 0.2092 & 0.2122 & 0.2239 & 0.2344 \\
	0.50 &  & 0.2257 & 0.2313 & 0.2486 & 0.2548 \\
\end{longtable}

Several conclusions emerge from Figure~\ref{fig:general-cov-risk} and Table~\ref{tab:appendix-risk-general-cov}. First, BSD consistently attains the smallest Bayes misclassification risks across all six covariance structures considered. Second, MRD--GBS remains remarkably close to BSD throughout the entire sparsity range despite arising from a fundamentally different inferential framework. Third, the gap between BSD and BH is substantial across all dependence structures, illustrating the benefits of incorporating covariance information into the testing procedure. Finally, the close agreement between BSD and MRD--GBS persists even under covariance geometries substantially more complex than the one-factor models considered in Section~\ref{sec:SIMULATION}, including Toeplitz, heterogeneous block, and sparse precision matrix structures.

Taken together, these experiments suggest that the favorable behavior of BSD extends well beyond the one-factor settings for which Oracle benchmarking is available. Equally noteworthy is the persistent proximity between BSD and MRD--GBS across all dependence structures considered. Combined with the Oracle comparisons of Section~\ref{sec:SIMULATION}, these findings provide further empirical evidence that suitably calibrated covariance-adaptive residual procedures can closely mimic the decision-theoretic performance of substantially more computationally intensive Bayesian model-selection approaches.

%
%
%


\bibliography{BSD_Reference.bib}

\end{document}